\numberwithin{equation}{section}
\newcommand{\car}{\mathbf{1}}
\newcommand{\R}{\mathbb{R}}
\newcommand{\N}{\mathbb{N}}
\newcommand{\Z}{\mathbb{Z}}
\newcommand{\C}{\mathbb{C}}
\newcommand{\vers}{\operatornamewithlimits{\to}}
\newcommand{\equ}{\operatornamewithlimits{\sim}}
\newcommand{\D}{\displaystyle}
\newcommand{\Sc}{{\mathcal S}}
\newcommand{\Coi}{{\mathcal C}_0^{\infty}}
\newcommand{\esp}{\mathbb{E}}
\newcommand{\pro}{\mathbb{P}}
\newcommand{\tr}{\text{tr}}
\def\im{\operatorname{Im}}
\theoremstyle{plain}
\newtheorem{Th}{Theorem}[section]
\newtheorem{Le}{Lemma}[section]
\newtheorem{Pro}{Proposition}[section]
\newtheorem{Cor}{Corollary}[section]
\theoremstyle{definition}
\newtheorem{Rem}{Remark}[section]
\newtheorem{Def}{Definition}[section]
\title{Resonances for large one-dimensional ``ergodic'' systems}
\author{Fr{\'e}d{\'e}ric Klopp} 
\address[Fr{\'e}d{\'e}ric Klopp]{ \vskip.1cm Sorbonne Universit{\'e}s,
  UPMC Univ. Paris 06, UMR 7586, IMJ-PRG, F-75005, Paris, France
  \vskip.1cm Univ. Paris Diderot, Sorbonne Paris Cit{\'e}, UMR 7586,
  IMJ-PRG, F-75205 Paris, France
  \vskip.1cm CNRS, UMR 7586, IMJ-PRG, F-75005, Paris, France}
\email{\href{mailto:frederic.klopp@imj-prg.fr}{frederic.klopp@imj-prg.fr}}
\keywords{Resonances; random operators; periodic operators} 
\subjclass[2010]{47B80, 47H40, 60H25, 82B44, 35B34} 
\thanks{It is a pleasure to thank N. Filonov for interesting
  discussions at the early stages of this work and, T.T. Phong,
  C. Shirley and M. Vogel for pointing out misprints in previous
  versions of the article. \\ This work was partially supported by the
  grant ANR-08-BLAN-0261-01.}
\begin{document}
\dedicatory{Dedicated to Johannes Sj{\"o}strand on the occasion of his
  seventieth birthday.}
\begin{abstract}
  The present paper is devoted to the study of resonances for
  one-dimensional quantum systems with a potential that is the
  restriction to some large box of an ergodic potential. For discrete
  models both on a half-line and on the whole line, we study the
  distributions of the resonances in the limit when the size of the
  box where the potential does not vanish goes to infinity. For
  periodic and random potentials, we analyze how the spectral theory
  of the limit operator influences the distribution of the resonances.
  \vskip.5cm
  \par\noindent \textsc{R{\'e}sum{\'e}.}
  Dans cet article, nous {\'e}tudions les r{\'e}sonances d'un syst{\`e}me
  unidimensionnel plong{\'e} dans un potentiel qui est la restriction {\`a} un
  grand intervalle d'un potentiel ergodique. Pour des mod{\`e}les discrets
  sur la droite et la demie droite, nous {\'e}tudions la distribution des
  r{\'e}sonances dans la limite de la taille de bo{\^\i}te infinie. Pour des
  potentiels p{\'e}riodiques et al{\'e}atoires, nous analysons l'influence de
  la th{\'e}orie spectrale de l'op{\'e}ra\-teur limite sur la distribution des
  r{\'e}sonances.
\end{abstract}
\setcounter{section}{-1}
\maketitle
\section{Introduction}
\label{sec:introduction}
Consider $V:\ \Z\to\R$ a bounded potential and, on $\ell^2(\Z)$, the
Schr{\"o}dinger operator $H=-\Delta+V$ defined by
\begin{equation*}
  (H u)(n)=u(n+1)+u(n-1)+V(n)u(n), \quad\forall n\in\Z,
\end{equation*}
for $u\in\ell^2(\Z)$.\\
The potentials $V$ we will deal with are of two types:
\begin{itemize}
\item $V$ periodic;
\item $V=V_\omega$, the random Anderson model, i.e., the entries of
  the diagonal matrix $V$ are independent identically distributed non
  constant random variable.
\end{itemize}
The spectral theory of such models has been studied extensively (see,
e.g.,~\cite{MR2509110}) and it is well known that
\begin{itemize}
\item when $V$ is periodic, the spectrum of $H$ is purely absolutely
  continuous;
\item when $V=V_\omega$ is random, the spectrum of $H$ is almost
  surely pure point, i.e., the operator only has eigenvalues; moreover,
  the eigenfunctions decay exponentially at infinity.
\end{itemize}
Pick $L\in\N^*$. The main object of our study is the operator
\begin{equation}
  \label{eq:2}
  H_L=-\Delta+V\car_{\llbracket-L+1,L\rrbracket}  
\end{equation}
when $L$ is large. Here, $\llbracket-L+1,L\rrbracket$ is the integer
interval $\{-L+1,\cdots,L\}$ and $\car_{\llbracket
  a,b\rrbracket}(n)=1$ if $a\leq n\leq b$ and $0$ if not.  \\
For $L$ large, the operator $H_L$ is a simple Hamiltonian modeling a
large sample of periodic or random material in the void. It is well
known in this case (see, e.g.,~\cite{MR1957536}) that not only does
the spectrum of $H_L$ be of importance but also its (quantum)
resonances that we will now define.\\
As $V\car_{\llbracket-L+1,L\rrbracket}$ has finite rank, the essential
spectrum of $H_L$ is the same as that of the discrete Laplace
operator, that is, $[-2,2]$, and it is purely absolutely
continuous. Outside this absolutely continuous spectrum, $H_L$ has
only discrete eigenvalues associated to exponentially
decaying eigenfunctions.\\
We are interested in the resonances of the operator $H_L$ in the limit
when $L\to+\infty$. They are defined to be the poles of the
meromorphic continuation of the resolvent of $H_L$ through $(-2,2)$,
the continuous spectrum of $H_L$ (see Theorem~\ref{thr:1} and,
e.g.,~\cite{MR1957536}). The resonances widths, that is, their
imaginary part, play an important role in the large time behavior of
$\D e^{-itH_L}$, especially the resonances of smallest width that give
the leading order contribution (see~\cite{MR1957536}).
\begin{figure}[h]
  \centering
  \includegraphics[height=1.7cm]{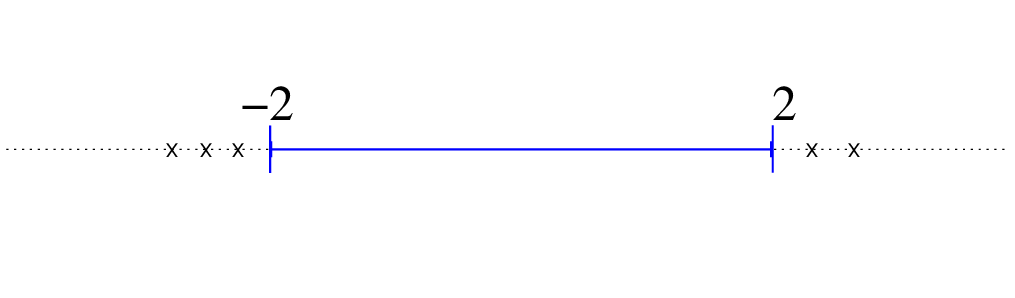}
  \hskip.1cm
  \includegraphics[height=1.6cm]{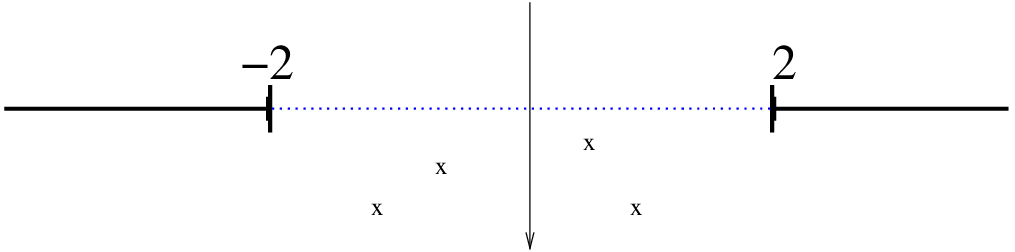}
  \caption{The meromorphic continuation}
  \label{fig:1}
\end{figure}
\\\noindent Quantum resonances are basic objects in quantum
theory. They have been the focus of vast number of studies both
mathematical and physical (see, e.g.,~\cite{MR1957536} and references
therein). Our purpose here is to study the resonances of $H_L$ in the
asymptotic regime $L\to+\infty$. As $L\to+\infty$, $H_L$ converges to
$H$ in the strong resolvent sense. Thus, it is natural to expect that
the differences in the spectral nature between the cases $V$ periodic
and $V$ random should reflect into differences in the behavior of the
resonances in both cases. We shall see below that this is the case. To
illustrate this as simply as possible, we begin with stating three
theorems, one for periodic potentials, two for random potentials, that
underline these different behaviors. These results can be considered
as paradigmatic for our main results presented in
section~\ref{sec:main-results}.\\
The scattering theory or the closely related questions of resonances
for the operator~\eqref{eq:2} or for closely related one-dimensional
models has already been discussed in various works both in the
mathematical and physical literature (see,
e.g.,~\cite{MR1265236,MR1025232,citeulike:6565491,MR2252707,1999PhRvL..82.4220T,MR1618595,PhysRevB.77.054203,MR1687454,MR2197681,PhysRevB.61.R2444}).
We will make more comments on the literature as we will develop our
results in section~\ref{sec:main-results}.
\subsection{When $V$ is periodic}
\label{sec:when-v-periodic}
Assume that $V$ is $p$-periodic ($p\in\N^*$) and does not vanish
identically. Consider $H=-\Delta+V$ and let $\Sigma_\Z$ be its
spectrum, $\overset{\circ}{\Sigma}_\Z$ be its interior and $E\mapsto
N(E)$ be its integrated density of states, i.e., the number of states
of the system per unit of volume below energy $E$ (see
section~\ref{sec:periodic-case} and, e.g.,~\cite{MR1711536} for precise
definitions and details).
\begin{Th}
  \label{thr:22}
  There exist
  \begin{itemize}
  \item $\mathcal{D}$, a discrete (possibly empty) set of energies in
    $(-2,2)\cap\overset{\circ}{\Sigma}_\Z$,
  \item a function $h$ that is real analytic in a complex neighborhood
    of $(-2,2)$ and that does vanish on $(-2,2)\setminus\mathcal{D}$
  \end{itemize}
  such that, for $I\subset (-2,2)\setminus\mathcal{D}$, a compact interval
  such that either $I\cap \Sigma_\Z=\emptyset$ or
  $I\subset\overset{\circ}{\Sigma}_\Z$, there exists $c_0>0$ such that
  for $L$ sufficiently large s.t. $2L\in p\N$, one has
  \begin{itemize}
  \item if $I\cap \Sigma_\Z=\emptyset$, then $H_L$ has no resonance in
    $I+i[-c_0,0]$
  \item if $I\subset\overset{\circ}{\Sigma}_\Z$, one has
    \begin{itemize}
    \item there are plenty of resonances in $I+i[-c_0,0]$ ; more
      precisely,
      \begin{equation}
        \label{eq:7}
        \frac{\#\{z\in I+i[-c_0,0],\ z\text{ resonance of
          }H_L\}}{2L}=\int_I dN(E)+o(1)
      \end{equation}
      where $o(1)\to0$ as $L\to+\infty$;
    \item let $(z_j)_j$ the resonances of $H_L$ in $I+i[-c_0,0]$
      ordered by increasing real part; then,
      \begin{equation}
        \label{eq:6}
        L\cdot\text{Re}\,(z_{j+1}-z_j)\asymp 1\quad\text{ and }\quad
        L\cdot\text{Im}\,z_j=h(\text{Re}\,z_j)+o(1),
      \end{equation}
      the estimates in~\eqref{eq:6} being uniform for all the
      resonances in $I+i[-c_0,0]$ when $L\to+\infty$.
    \end{itemize}
  \end{itemize}
\end{Th}
\noindent After rescaling their width by $L$, resonances are nicely
inter-spaced points lying on an analytic curve (see
Fig.~\ref{fig:2-3}). We give a more precise description of the
resonances in Theorem~\ref{thr:2} and Propositions~\ref{pro:2}
and~\ref{pro:3}. In particular, we describe the set of energies
$\mathcal{D}$ and the resonances near these energies: they lie further
away from the real axis, the maximal distance being of order
$L^{-1}\log L$ (see Fig.~\ref{fig:10}). Theorem~\ref{thr:22} only
describes the resonances closest to the real axis. In
section~\ref{sec:periodic-case}, we also give results on the
resonances located deeper into the lower half of the complex plane.
\begin{figure}[h]
  \centering
  \includegraphics[width=.42\textwidth]{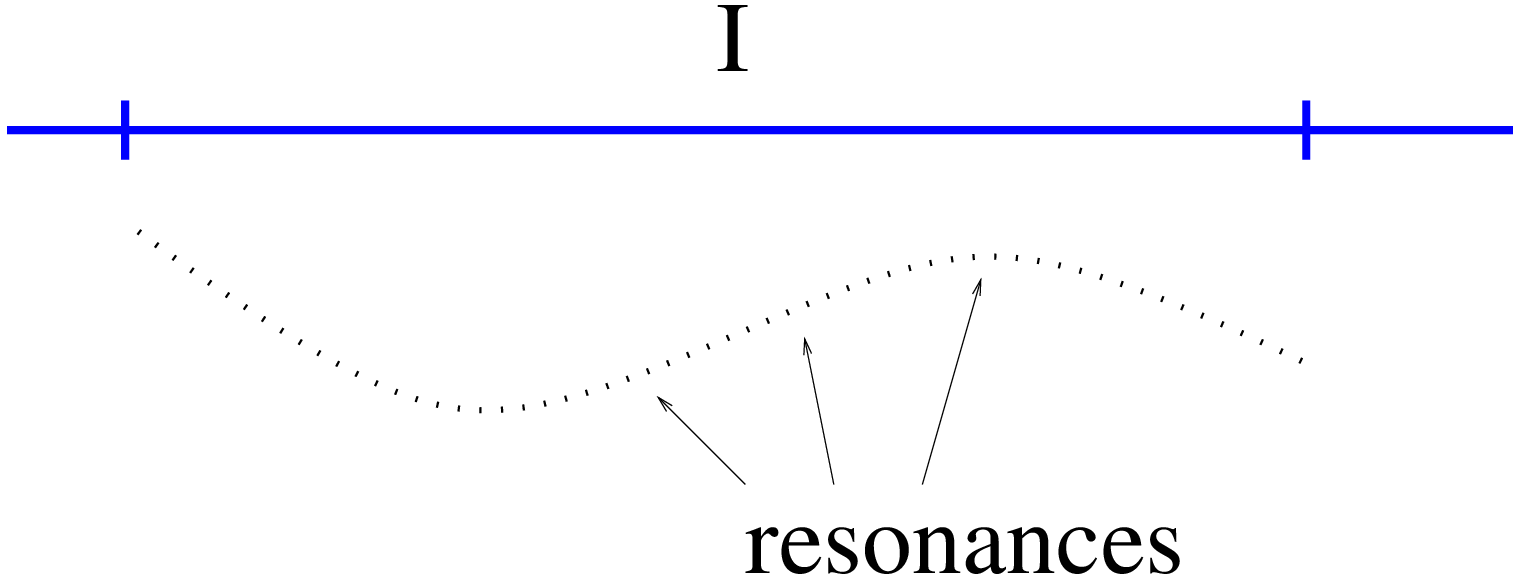}
  \hskip.5cm
  \includegraphics[width=.42\textwidth]{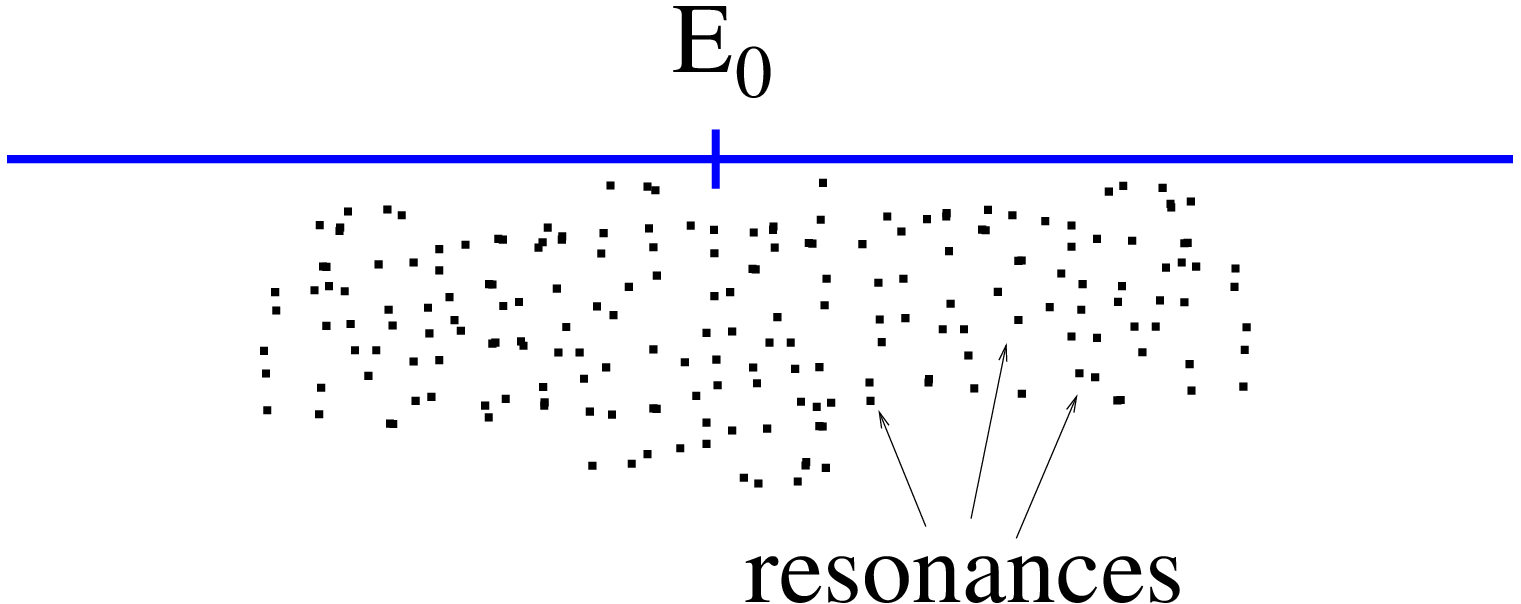}
  \caption{The rescaled resonances for the periodic (left part) and
    the random (right part) potential}
  \label{fig:2-3}
\end{figure}
\subsection{When $V$ is random}
\label{sec:when-v-random}
Assume now that $V=V_\omega$ is the Anderson potential, i.e., its
entries are i.i.d. distributed uniformly on $[0,1]$ to fix
ideas. Consider $H=-\Delta+V_\omega$. Let $\Sigma$ be its almost sure
spectrum (see, e.g.,~\cite{MR94h:47068}), $E\mapsto n(E)$, its density
of states (i.e. the derivative of the integrated density of states,
see section~\ref{sec:periodic-case} and, e.g.,~\cite{MR94h:47068}) and
$E\mapsto \rho(E)$, its Lyapunov exponent (see
section~\ref{sec:random-case} and, e.g.,~\cite{MR94h:47068}). The
Lyapunov exponent is known to be continuous and positive (see,
e.g.,~\cite{MR88f:60013}); the density of states satisfies $n(E)>0$
for
a.e. $E\in\Sigma$ (see, e.g.,~\cite{MR88f:60013}).\\
Define
$H_{\omega,L}:=-\Delta+V_\omega\car_{\llbracket-L+1,L\rrbracket}$. We
prove
\begin{Th}
  \label{thr:29}
  Pick $I\subset(-2,2)$, a compact interval. Then,
  \begin{itemize}
  \item if $I\cap\Sigma=\emptyset$ then, there exists $c_I>0$ such
    that, $\omega$-a.s., for $L$ sufficiently large,
    \begin{equation*}
      \left\{z\text{ resonance of }H_{\omega,L}
        \text{ in }I+i\left(-c_I,0\right]
      \right\}=\emptyset;
    \end{equation*}
  \item if $I\subset\overset{\circ}{\Sigma}$ then, for any $c>0$,
    $\omega$-a.s., one has
    \begin{equation*}
      \lim_{L\to+\infty}
      \frac1L\#\left\{z\text{ resonance of }H_{\omega,L}
        \text{ in }I+i\left(-\infty,-e^{-cL}\right]
      \right\}=\int_I\min\left(\frac{c}{\rho(E)},1\right)n(E)dE.
    \end{equation*}
  \end{itemize}
 
\end{Th}
\noindent As the first statement of Theorem~\ref{thr:29} is clear, let
us discuss the second. Define $\D c_+:=\max_{E\in I}\rho(E)$. For $c\geq
c_+$, $\omega$-a.s., for $L$ large, the number of resonances in the
strip $\{\text{Re}\,z\in I,\ \text{Im}\,z\leq -e^{-cL}\}$ is approximately
$\D L\int_In(E)dE$; thus, in $\{\text{Re}\,z\in I,\
-e^{c_+L}\leq\text{Im}\,z<0\}$, one finds at most $o(L)$
resonances. We shall see that, for $\delta>0$, $\omega$-a.s., for $L$
large, the strip $\{\text{Re}\,z\in I,\
-e^{(c_++\delta)L}\leq\text{Im}\,z<0\}$ actually
contains no resonance (see Theorem~\ref{thr:4}).\\
Define $\D c_-:=\min_{E\in I}\rho(E)$. For $c\leq c_-$, $\omega$-a.s., for
$L$ large, the strip $\{\text{Re}\,z\in I,\ \text{Im}\,z\leq -e^{-cL}\}$
contains approximately $\D c\,L\int_I\frac{n(E)}{\rho(E)}dE$
resonances. We shall see that, for $\kappa\in[0,1)$, the number of
resonances in the strip $\{\text{Re}\,z\in I,\
\text{Im}\,z\leq -e^{-L^{\kappa}}\}$ is
$O(L^\kappa)$, thus, $o(L)$ (cf. Theorem~\ref{thr:27}).\\
One can also describe the resonances locally. Therefore, fix $E_0\in
(-2,2)\cap\overset{\circ}{\Sigma}$ such that $n(E_0)>0$.  Let
$(z_l^L(\omega))_l$ be the resonances of $H_{\omega,L}$. We first
rescale them. Define
\begin{equation}
  \label{eq:148}
  x_l^L(\omega) =2n(E_0)\,L(\text{Re}\,z_l^L(\omega)-E_0)
  \quad\text{and} \quad
  y_l^L(\omega)=-\frac1{L\rho(E_0)}\log|\text{Im}\,z_l^L(\omega)|.
\end{equation}
Consider now the two-dimensional point process
\begin{equation*}
  \xi_L(E_0,\omega)=\sum_{z_l^L \text{ resonances of }H_{\omega,L}}
  \delta_{(x_l^L(\omega),y_l^L(\omega))}.  
\end{equation*}
We prove
\begin{Th}
  \label{thr:23}
  The point process $\xi_L$ converges weakly to a Poisson process of
  intensity $1$ in $\R\times[0,1]$.
\end{Th}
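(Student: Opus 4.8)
The plan is to reduce the convergence of the two-dimensional point process $\xi_L$ to a combination of the Minami-type estimates controlling the real parts of the resonances and a separate, essentially deterministic, analysis of the imaginary parts conditioned on the real parts. First I would recall from the earlier analysis (the one underlying Theorem~\ref{thr:29} and the local rescaling) that near $E_0$ a resonance $z_l^L(\omega)$ of $H_{\omega,L}$ is, to leading order, produced by an eigenvalue $E_l^L(\omega)$ of the restricted operator on $\llbracket -L+1,L\rrbracket$ with Dirichlet-type conditions, together with a transmission/reflection coefficient whose modulus governs $\operatorname{Im} z_l^L$. Concretely, one writes the resonance condition as the vanishing of a Jost-type determinant $\mathcal{E}_L(z)=0$, splits $z=E-i\eta$, and shows that the real part $\operatorname{Re} z_l^L$ is an $O(e^{-cL})$ perturbation of an eigenvalue $E_l^L(\omega)$ of the finite box operator, while the imaginary part satisfies $|\operatorname{Im} z_l^L|\asymp |\psi_l^L(-L)|^2+|\psi_l^L(L)|^2$ up to subexponential factors, where $\psi_l^L$ is the corresponding normalized eigenfunction. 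Hence $x_l^L(\omega)=2n(E_0)L(E_l^L(\omega)-E_0)+o(1)$ and $y_l^L(\omega)=-\tfrac1{L\rho(E_0)}\log\big(|\psi_l^L(-L)|^2+|\psi_l^L(L)|^2\big)+o(1)$.

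The second step is to invoke the known spectral statistics of the one-dimensional Anderson model at $E_0$ with $n(E_0)>0$: by Molchanov/Minami (see the references in the introduction) the rescaled eigenvalues $x_l^L(\omega)$ converge to a Poisson process of intensity $1$ on $\R$, and, crucially, the associated eigenfunctions localize. The key quantitative input I would use is that, conditionally on having an eigenvalue near $E_0$, the eigenfunction's localization center is (asymptotically) uniformly distributed in $\llbracket -L+1,L\rrbracket$ and, away from a localization center $x_c$, one has $|\psi_l^L(k)|\approx e^{-\rho(E_0)|k-x_c|}$ with fluctuations that are $o(L)$ in the exponent (a consequence of the positivity and continuity of $\rho$ and of a large-deviation/ergodic-theorem estimate for the transfer matrices, e.g.\ via the Thouless formula and the Furstenberg–Kesten theorem). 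Therefore $|\psi_l^L(\pm L)|^2\approx e^{-2\rho(E_0)\,\mathrm{dist}(x_c,\pm L)}$, and the dominant boundary is the nearer one, so $y_l^L(\omega)\to \tfrac{2}{L}\min(L-x_c, x_c+L)$ in distribution; writing $x_c = (2t-1)L$ with $t$ uniform on $[0,1]$ this gives $y_l^L$ asymptotically $2\min(t,1-t)\cdot\frac{1}{?}$—after the correct normalization one checks $y_l^L$ becomes uniform on $[0,1]$, independently of $x_l^L$ and independently across distinct resonances.

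The third step is to assemble these into weak convergence of the joint process. I would prove convergence of Laplace functionals: for $f\in C_c(\R\times[0,1])$ nonnegative, write $\esp[e^{-\xi_L(f)}]$ and condition on the point process of real parts $\{x_l^L\}$ and the localization centers. The Poisson convergence on $\R$ for the $x$-marginal gives the outer structure; the asymptotic independence and uniformity of the $y_l^L$'s, given the $x_l^L$'s, supplies the mark distribution; since a Poisson process with independent uniform $[0,1]$ marks is exactly a Poisson process of intensity $1$ on $\R\times[0,1]$ (the Marking Theorem), we conclude. Care is needed to justify that distinct resonances have asymptotically independent localization centers (this follows from a Minami/Wegner-type estimate ruling out two eigenvalues with nearby localization centers in the same energy window) and that the subexponential errors in both $x$ and $y$ are negligible after rescaling.

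The main obstacle I expect is the control of the imaginary parts, i.e.\ establishing that $-\tfrac1{L\rho(E_0)}\log|\operatorname{Im} z_l^L(\omega)|$ is, up to $o(1)$, a deterministic function of the (uniformly distributed) localization center, with error uniform over the $O(L)$ relevant resonances. This requires a uniform large-deviation estimate for the finite-volume transfer matrices: the naive Furstenberg–Kesten convergence $\tfrac1n\log\|T_n\|\to\rho(E)$ holds a.s.\ for fixed $n\to\infty$ and fixed energy, but here one needs it simultaneously at the random energies $E_l^L$ and over all localization centers, with the error $o(L)$ in the exponent—equivalently $e^{o(L)}$ multiplicative error on $|\psi_l^L(\pm L)|^2$, which after dividing by $L\rho(E_0)$ is $o(1)$. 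One obtains this from the exponential large-deviation bounds for $\tfrac1n\log\|T_n\|$ together with a union bound over an $L$-dependent net of energies and the $2L$ possible centers; the exponential decay of the large-deviation probabilities beats the polynomial-in-$L$ size of the net. A secondary technical point is the a.s.\ statement: the events must be shown to hold for all large $L$ simultaneously, which again follows by Borel–Cantelli from the exponential large-deviation bounds.
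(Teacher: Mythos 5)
Your overall architecture is the same as the paper's: Theorem~\ref{thr:23} is Theorem~\ref{thr:7} for $\bullet=\Z$, and the paper's proof reduces the resonances in the relevant window to the Dirichlet spectral data via the characteristic equation and Theorem~\ref{thr:12} (one simple resonance per eigenvalue, with width $\asymp a_j\asymp|\varphi_{j}(0)|^2+|\varphi_{j}(L)|^2$ up to the polynomial factors controlled by the Minami estimate), identifies $-\log a_j$ with $2\rho(E_{j,\omega})$ times the distance of the localization center to the nearer edge (Theorem~\ref{thr:10}, \eqref{eq:104}), and then reads off the limit from the joint Poisson law of (rescaled eigenvalues, rescaled localization centers). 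Your only structural deviation is that you would rebuild this joint law from eigenvalue Poisson statistics plus independent uniform marks and a marking theorem, whereas the paper imports it directly from \cite{Ge-Kl:10} (Theorem 1.13 there); that difference is essentially cosmetic. The normalization bookkeeping you leave open (``$\cdot\frac1?$'') is resolved by the $\eta_\bullet$ factors in \eqref{eq:10}, where the theorem is actually proved for the model on $\llbracket0,L\rrbracket$.

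The genuine gap is in the step you yourself single out: the claim that $\log\bigl(|\psi^L_l(0)|^2+|\psi^L_l(L)|^2\bigr)=-2\rho(E^L_l)\,d_{l}\,(1+o(1))$ simultaneously for all $O(L)$ relevant eigenpairs ``follows from the exponential large-deviation bounds together with a union bound over an $L$-dependent net of energies and the $2L$ possible centers; the exponential decay of the large-deviation probabilities beats the polynomial-in-$L$ size of the net.'' This does not work. The map $E\mapsto T_n(E,\omega)$ is only $e^{Cn}$-Lipschitz, so a polynomially fine net gives no control between net points, while a net fine enough to beat $e^{CL}$ has exponentially many points and the union bound is then lost against the fixed, small large-deviation exponent $\eta$ of Lemma~\ref{le:7}. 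The two directions of the estimate in fact require two different, nontrivial devices. For the lower bound on the boundary values (upper bound on $y^\bullet_l$) one needs an upper bound on the cocycle norms that is uniform over all energies of a compact interval (because $E^L_l$ is random); the paper obtains this in Lemma~\ref{le:6} not by a crude net argument but by expanding $T_L(E)$ in powers of $E-E_j$ around an exponentially fine net and controlling each term of the expansion by the large-deviation bound for shorter blocks. For the upper bound on the boundary values (lower bound on $y^\bullet_l$) one needs a lower bound on the cocycle growth at the random energies $E^L_l$ and starting vectors attached to the random centers; a bound uniform in energy is false in general (the same phenomenon that makes the Lyapunov exponent only upper semicontinuous, as pointed out before Lemma~\ref{le:5} and in Remark~\ref{rem:1}), so no union bound over any net can deliver it. The paper circumvents this by the spectral-averaging (Kotani-trick) argument of Lemma~\ref{le:5}: the bad set of energies has exponentially small Lebesgue measure by the pointwise large deviations, and spectral averaging converts this into an exponentially small probability that some eigenvalue of $H_{\omega,L}$ falls into it. Without these two ingredients (or equivalents), your identification of $y^\bullet_l$ with the rescaled distance of the localization center to the nearer edge, uniformly over the relevant resonances, and hence the whole marking argument, is not justified; the remaining steps of your proposal are sound and coincide with the paper's.
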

\noindent In the random case, the structure of the (properly rescaled)
resonances is quite different from that in the periodic case (see
Fig.~\ref{fig:2-3}). The real parts of the resonances are scaled in
such a way that that their average spacing becomes of order one. By
Theorem~\ref{thr:29}, the imaginary parts are typically exponentially
small (in $L$); when the resonances are rescaled as in~\eqref{eq:148},
their imaginary parts are rewritten on a logarithmic scale so as to
become of order $1$ too. Once rescaled in this way, the local picture
of the resonances of $H_{\omega,L}$ is that of a two-dimensional cloud
of Poisson points (see the right hand side of
fig.~\ref{fig:2-3}).\\
Theorem~\ref{thr:23} is the analogue for resonances of the well known
result on the distribution of eigenvalues and localization centers for
the Anderson model in the localized phase
(see, e.g.,~\cite{MR97d:82046,MR2299191,Ge-Kl:10}).\\
As in the case of the periodic potential, Theorem~\ref{thr:23} only
describes the resonances closest to the real axis. In
section~\ref{sec:random-case}, we also give results on resonances
located deeper into the lower half of the complex plane. Up to
distances of order $L^{-\infty}$ to the real axis, the cloud of
resonances (once properly rescaled) will have the same Poissonian
behavior as described above (see Theorem~\ref{thr:3}).
\vskip.1cm\noindent Besides proving Theorems~\ref{thr:22}
and~\ref{thr:23}, the goal of the paper is to describe the statistical
properties of the resonances and relate them (the distribution of the
resonances, the distribution of the widths) to the spectral
characteristics of $H=-\Delta+V$, possibly to the distribution of its
eigenvalues (see, e.g.,~\cite{MR2885251}).
\vskip.2cm\noindent As they can be analyzed in a very similar way, we
will discuss three models:
\begin{itemize}
\item the model $H_L$ defined above,
\item its analogue on the half-line $\N$, i.e., on $H_L$, we impose an
  additional Dirichlet boundary condition at $0$,
\item the ``half-infinite'' model on $\ell^2(\Z)$, that is,
  \begin{equation}
    \label{eq:128}
    H^\infty=-\Delta+W\text{ where }
    \begin{cases}
      W(n)=0  \text{ for }n\geq 0\\
      W(n)=V(n) \text{ for }n\leq -1
    \end{cases}
  \end{equation}
  where $V$ is chosen as above, periodic or random.
\end{itemize}
Though in the present paper we restrict ourselves to discrete models,
it is clear that continuous one-dimensional models can be dealt with
essentially using the methods developed in the present paper.
\section{The main results}
\label{sec:main-results}
We now turn to our main results, a number of which were announced
in~\cite{Kl:11d}. Pick $V:\Z\to\R$ a bounded potential and, for
$L\in\N$, consider the following operators:
\begin{itemize}
\item $H^\Z_L=-\Delta+V\car_{\llbracket 0,L\rrbracket}$ on
  $\ell^2(\Z)$;
\item $H^{\N}_L=-\Delta+V\car_{\llbracket0,L\rrbracket}$ on
  $\ell^2(\N)$ with Dirichlet boundary conditions at $0$;
\item $H^\infty$ defined in~\eqref{eq:128}.
\end{itemize}
\begin{Rem}
  \label{rem:3}
  Here, with ``Dirichlet boundary condition at $0$'', we mean that
  $H^{\N}_L$ is the operator $H^\Z_L$ restricted to the subspace
  $\ell^2(\N)$, i.e., if $\Pi: \ell^2(\Z)\to \ell^2(\N)$ is the
  orthogonal projector on $\ell^2(\N)$, one has $H^{\N}_L=\Pi H^{\Z}_L
  \Pi$. In the literature, this is sometime called ``Dirichlet
  boundary condition at $-1$'' (see, e.g.,~\cite{MR1711536}).\\
  For the sake of simplicity, in the half line case, we only consider
  Dirichlet boundary conditions at $0$. But the proofs show that these
  are not crucial; any self-adjoint boundary condition at $0$ would do
  and, mutandi mutandis, the results would be the same.\\
  Note also that, by a shift of the potential $V$, replacing $L$ by
  $L+L'$, we see that studying $H^\Z_L$ is equivalent to studying
  $H_{L,L'}=-\Delta+V\car_{\llbracket -L',L\rrbracket}$ on
  $\ell^2(\Z)$. Thus, to derive the results of
  section~\ref{sec:introduction} from those in the present section, it
  suffices to consider the models above, in particular, $H_L^\Z$.
\end{Rem}
For the models $H^{\N}_L$ and $H^\Z_L$, we start with a discussion of
the existence of a meromorphic continuation of the resolvent, then,
study the resonances when $V$ is periodic and finally turn to the case
when $V$ is random.\\
As $H^\infty$ is not a relatively compact perturbation of the
Laplacian, the existence of a meromorphic continuation of its
resolvent depends on the nature of $V$; so, it will be discussed when
specializing to $V$ periodic or random.
\begin{Rem}[Notations]
  \label{rem:8}
  In the sequel, we write $a\lesssim b$ if for some $C>0$ (independent
  of the parameters coming into $a$ or $b$), one has $a\leq C b$. We write
  $a\asymp b$ if $a\lesssim b$ and $b\lesssim a$.
\end{Rem}
\subsection{The meromorphic continuation of the resolvent}
\label{sec:merom-cont-resolv}
One proves the well known and simple
\begin{Th}
  \label{thr:1}
  The operator valued functions $z\in\C^+\mapsto (z-H^{\N}_L)^{-1}$
  and $z\in\C^+\mapsto (z-H^\Z_L)^{-1}$) admit a meromorphic
  continuation from $\C^+$ to $\D\C\setminus\left((-\infty,-2]
    \cup[2,+\infty)\right)$ through $(-2,2)$ (see Fig.~\ref{fig:1})
  with values in the operators from $l^2_\text{comp}$ to
  $l^2_\text{loc}$.\\
  Moreover, the number of poles of each of these meromorphic
  continuations in the lower half-plane is at most equal to $L$.
\end{Th}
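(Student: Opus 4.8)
The plan is to reduce the meromorphic continuation to an explicit finite-dimensional problem by exploiting the fact that outside the integer interval $\llbracket0,L\rrbracket$ the operators $H^{\N}_L$ and $H^\Z_L$ agree with the free Laplacian $-\Delta$. First I would recall the explicit kernel of the free resolvent: setting $z=\lambda+\lambda^{-1}$ with $|\lambda|<1$ (the uniformization of $\C\setminus((-\infty,-2]\cup[2,+\infty))$ by the unit disk, since $z\mapsto z$ restricted to $\C^+$ corresponds to $0<|\lambda|<1$), one has $(z-(-\Delta))^{-1}(m,n)=\frac{\lambda^{|m-n|+1}}{\lambda^2-1}$ on $\ell^2(\Z)$, and a similar half-line formula with an image charge for $\ell^2(\N)$ with Dirichlet condition at $0$. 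These kernels are entire (indeed rational) in $\lambda$ on the disk, hence give the continuation of the free resolvent as a map $\ell^2_{\mathrm{comp}}\to\ell^2_{\mathrm{loc}}$. Then I would write the resolvent identity in the form
\begin{equation*}
  (z-H_L)^{-1}=(z-(-\Delta))^{-1}\bigl(I-V\car_{\llbracket0,L\rrbracket}(z-(-\Delta))^{-1}\bigr)^{-1},
\end{equation*}
and observe that $K(z):=V\car_{\llbracket0,L\rrbracket}(z-(-\Delta))^{-1}\car_{\llbracket0,L\rrbracket}$ is a finite-rank operator acting on the $(L+1)$- (resp. $(L+1)$-) dimensional space $\ell^2(\llbracket0,L\rrbracket)$, whose matrix entries are polynomials in $\lambda$ (times the rational prefactor $(\lambda^2-1)^{-1}$). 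The key algebraic point is that $I-V\car_{\llbracket0,L\rrbracket}(z+\Delta)^{-1}$ is invertible on all of $\ell^2$ if and only if the finite matrix $I-K(z)$ is invertible on $\ell^2(\llbracket0,L\rrbracket)$; this is the standard Schur-complement/Feshbach reduction using that $V\car_{\llbracket0,L\rrbracket}$ is supported in $\llbracket0,L\rrbracket$.

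Next I would define the continuation of $(z-H_L)^{-1}$ by the formula above with $(I-K(z))^{-1}$ interpreted via Cramer's rule: $(I-K(z))^{-1}=\mathrm{adj}(I-K(z))/\det(I-K(z))$. Multiplying through by the appropriate power of $(\lambda^2-1)$ to clear denominators, $\det(I-K(z))$ becomes, after the substitution $z=\lambda+\lambda^{-1}$, a polynomial in $\lambda$; call it $D(\lambda)$. The continued resolvent is then manifestly meromorphic on the disk (equivalently on $\C\setminus((-\infty,-2]\cup[2,+\infty))$), with poles only at zeros of $D(\lambda)$, and it agrees with the original resolvent on $\C^+$ by uniqueness of analytic continuation (the two coincide where both are defined, e.g. for $\im z$ large). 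This establishes the first assertion.

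For the bound on the number of poles in the lower half-plane, the plan is to compute the degree of $D(\lambda)$. Each of the $(L+1)\times(L+1)$ matrix entries of $\car_{\llbracket0,L\rrbracket}(z+\Delta)^{-1}\car_{\llbracket0,L\rrbracket}$ is $\frac{\lambda^{|m-n|+1}}{\lambda^2-1}$ (or the half-line analogue), so $(\lambda^2-1)^{L+1}\det(I-K(z))$ is a polynomial whose degree I would bound by expanding the determinant and tracking the maximal power of $\lambda$: the leading contributions come from the identity term and the off-diagonal products, and a careful count gives $\deg D\le 2(L+1)$ with the factor $(\lambda^2-1)^{L+1}$ (which carries $2(L+1)$ of those zeros, at $\lambda=\pm1$, i.e. $z=\pm2$, the branch points — not genuine resonances) accounting for the bulk, leaving at most $L$ remaining zeros in $0<|\lambda|<1$; since the lower half-plane in $z$ corresponds (together with part of the real axis) to a region in the disk, and the zeros come in the symmetric pattern dictated by the reality of $V$ (so $\overline{D(\bar\lambda)}$ vanishes when $D(\lambda)$ does, pairing a zero in $\C^+$ with one in $\C^-$), one gets at most $L$ poles in $\{\im z<0\}$. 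The main obstacle is the bookkeeping in this degree count: one must be careful that the naive bound $2L$ really halves to $L$, which uses either the symmetry of the free kernel (it depends only on $|m-n|$, forcing a Toeplitz-like structure whose determinant has a lower-than-generic degree) or the explicit transfer-matrix representation of $D(\lambda)$ as (a polynomial multiple of) the Jost-type function, whose order in $\lambda$ is exactly $L$. I would organize this step by passing to the transfer matrix: writing the continued resolvent's poles as the $z$ for which there is a solution $u$ of $(H_L-z)u=0$ that is outgoing ($u(n)\propto\lambda^{-n}$ for $n>L$ and $\propto\lambda^{n}$ for $n<0$, with $|\lambda|<1$), this solution is determined by propagating a boundary vector through the product of $L+1$ transfer matrices, and the outgoing condition becomes the vanishing of one entry of that product — a polynomial in $\lambda$ of degree exactly $L$ (or $L+1$ with a removable factor), which is the cleanest route to the sharp bound.
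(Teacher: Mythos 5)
Your first half --- continuing the free resolvent explicitly and reducing the perturbed resolvent to the invertibility of a finite matrix via $(z-H^\bullet_L)^{-1}=(z+\Delta)^{-1}\bigl(1-V\car_{\llbracket0,L\rrbracket}(z+\Delta)^{-1}\bigr)^{-1}$ together with Cramer's rule / analytic Fredholm theory --- is essentially the paper's proof of the continuation and is sound. One slip: under $z=\lambda+\lambda^{-1}$ the punctured disk $\{0<|\lambda|<1\}$ uniformizes $\C\setminus[-2,2]$, not $\C\setminus\left((-\infty,-2]\cup[2,+\infty)\right)$; the physical sheet $\C^+$ corresponds to only half of that disk, and the continuation through $(-2,2)$ is precisely the passage of $\lambda$ across the unit circle (the correct chart for the cut plane is a half-plane in the uniformizing variable, $\lambda=e^{i\theta(E)}$ with $\text{Re}\,\theta(E)\in(-\pi,0)$). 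Taken literally, ``restrict the kernel formula to $|\lambda|<1$'' returns the genuine resolvent on the lower half-plane (which has no poles off the real axis), not its continuation from above; your later outgoing-solution picture is the right one, but the sheet bookkeeping must be straightened out because the pole count depends on knowing which $\lambda$-region carries the resonances.

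The genuine gap is in the bound by $L$. First, the degree bookkeeping you flag does fail as sketched: the entries of $K(z)$ carry $\lambda^{|m-n|+1}/(\lambda^2-1)$, so a term-by-term expansion of the determinant produces powers of $\lambda$ of order $L^2$, and getting down to degree $\sim 2L+2$ needs a structural identity, not tracking of maximal powers; moreover multiplying by $(\lambda^2-1)^{L+1}$ only creates zeros at $\lambda=\pm1$ and cannot ``account for the bulk'' of the zeros of $\det(1-K)$. Second, even granting the correct degree $2L+2$, reality of the coefficients pairs zeros across the real axis and yields at most $L+1$ zeros in the relevant open half-plane, not $L$. Third, the proposed rescue --- that the transfer-matrix (Jost-type) determinant is a polynomial of degree exactly $L$ in $\lambda$ --- is false: each of the $L+1$ transfer matrices is affine in $E=\lambda+\lambda^{-1}$, so after clearing denominators the outgoing condition has degree about $2(L+1)$; already for $L=0$ and $V_0\neq0$ it reads $w(w-V_0)=0$ in $w=e^{-i\theta(E)}$, of degree $2$. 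The paper closes exactly this hole by a different mechanism: via rank-one perturbation of the Dirichlet restriction of $H^\bullet_L$ to $\llbracket0,L\rrbracket$ it derives the characteristic equation $\sum_{j=0}^L a_j(\lambda_j-E)^{-1}=-e^{-i\theta(E)}$ with $a_j>0$ and $\sum_j a_j=1$ (and a $2\times2$ analogue on $\Z$), rewrites it as a real-coefficient polynomial of degree $2L+2$ (resp. $2L+3$) in $w$, and uses the normalization $\sum_j a_j=1$ (resp. the sum rules $\sum_j b_j=0$, $\sum_j\lambda_j b_j=-1$) to exhibit a root (resp. a double root) at $w=0$; only then does conjugation symmetry leave at most $L$ roots in the open half-plane containing the resonances. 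Your proposal has no substitute for this normalization input (equivalently, for computing the extreme coefficients of $D(\lambda)$), so as written it stops at $L+1$.
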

\noindent The resonances are defined to be the poles of this
meromorphic continuation (see Fig.~\ref{fig:1}).
\subsection{The periodic case}
\label{sec:periodic-case}
We assume that, for some $p>0$, one has
\begin{equation}
  \label{eq:3}
  V_{n+p}=V_n\quad\text{for all}\quad n\geq0  .
\end{equation}
Let $\Sigma_\N$ be the spectrum of $H^\N=-\Delta+V$ acting on
$\ell^2(\N)$ with Dirichlet boundary condition at $0$ and $\Sigma_\Z$
be the spectrum of $H^\Z=-\Delta+V$ acting on $\ell^2(\Z)$. One has
the following description for these spectra:
\begin{itemize}
\item $\Sigma_\Z$ is a union of intervals, i.e., $\displaystyle
  \Sigma_\Z:=\sigma(H)=\bigcup_{j=1}^p[E_j^-,E_j^+]$ where
  $E_j^-<E_j^+$ ($1\leq j\leq p$) and $a^+_{j-1}\leq E_j^-$ ($2\leq j\leq p$) (see
 , e.g.,~\cite{MR0650253}); the spectrum of $H^\Z$ is purely absolutely
  continuous and the spectral resolution can be obtained via a
  Bloch-Floquet decomposition (see, e.g.,~\cite{MR0650253});
\item on $\ell^2(\N)$ (see, e.g.,~\cite{MR1301837}), one has
  \begin{itemize}
  \item $\Sigma_\N=\Sigma_\Z\cup\{v_j; 1\leq j\leq n\}$ and $\Sigma_\Z$ is the
    a.c. spectrum of $H$;
  \item the $(v_j)_{0\leq j \leq n}$ are isolated simple eigenvalues
    associated to exponentially decaying eigenfunctions.
  \end{itemize}
\end{itemize}
It may happen that some of the gaps are closed, i.e., that the number
of connected components of $\Sigma_\Z$ be strictly less than
$p$. There still is a natural way to write $\displaystyle
\Sigma_\Z:=\sigma(H)=\bigcup_{j=1}^p[E_j^-,E_j^+]$ (see
section~\ref{sec:spectral-theory-Z}), but in this case, for some
$j$'s, one has $E_{j-1}^+=E_j^-$; the energies $E_{j-1}^+=E_j^-$, we
shall call {\it closed gaps} (see Definition~\ref{def:1}). The
existence of closed gaps is non generic (see~\cite{MR0650253}).\\
The operators $H^\bullet$ (for $\bullet\in\{\N,\Z\}$) admit an
integrated density of states defined by
\begin{equation}
  \label{eq:144}
  N(E)=\lim_{L\to+\infty}\frac{\#\{\text{eigenvalues of
    }(-\Delta+V)_{|\llbracket-L,L\rrbracket\cap\bullet}\text{ in
    }(-\infty,E]\}}{\#(\llbracket-L,L\rrbracket\cap\bullet)}.
\end{equation}
Here, the restriction of $-\Delta+V$ to
$\llbracket-L,L\rrbracket\cap\bullet$ is taken with Dirichlet boundary
conditions; this is to fix ideas as it is known that, in the limit
$L\to+\infty$, other self-adjoint boundary conditions would yield the
same result for the limit~\eqref{eq:144}.\\
The integrated density of states is the same for $H^\N$ and $H^\Z$
(see, e.g.,~\cite{MR94h:47068}). It defines the distribution function of
some probability measure on $\Sigma_\Z$ that is real analytic on
$\overset{\circ}{\Sigma}_\Z$. Let $n$ denote the density of states of
$H^\N$ and $H^\Z$, that is, $\D n(E)=\frac{dN}{dE}(E)$.
\begin{Rem}
  \label{rem:7}
  When $L$ gets large, as $H^{\N}_L$ tends to $H^\N$ in strong
  resolvent sense, interesting phenomena for the resonances of
  $H_L^\N$ should take place near energies in
  $\Sigma_\N$. \\
  Define $\tau_k$ to be the shift by $k$ steps to the left, that is,
  $\tau_kV(\cdot)= V(\cdot+k)$. Then, for $(\ell_L)_L$
  s.t. $l_L\to+\infty$ and $L-\ell_L\to+\infty$ when $L\to+\infty$,
  $\tau^*_{l_L}H^{\Z}_L\tau_{l_L}$ tend to $H^\Z$ in strong resolvent
  sense. Thus, interesting phenomena for the resonances of $H_L^\Z$
  should take place near energies in
  $\Sigma_\Z$.
\end{Rem}
\subsubsection{Resonance free regions}
\label{sec:reson-free-regi}
We start with a description of resonance free regions near the real
axis. Therefore, we introduce some operators on the positive and the
negative half-lattice.\\
Above we have defined $H_\N$; we shall need another auxiliary
operator. On $\ell^2(\Z_-)$ (where $\Z_-=\{n\leq 0\}$), consider the
operator $H_k^-=-\Delta+\tau_kV$ with Dirichlet boundary condition at
$0$ (where $\tau_k$ is defined to be the shift by $k$ steps to the
left, that is, $\tau_kV(\cdot)=
V(\cdot+k)$). Let $\Sigma_k^-=\sigma(H^-_k)$.\\
As is the case for $H^\N$, one knows that
$\sigma_{\text{ess}}(H^-_k)=\Sigma_\Z$ and that
$\sigma_{\text{ess}}(H^-_k)$ is purely absolutely continuous (see,
e.g.,~\cite[Chapter 7]{MR1711536}). $H^-_k$ may also have discrete
eigenvalues in $\R\setminus\Sigma_\Z$.\\
We prove
\begin{Th}
  \label{thr:5}
  Let $I$ be a compact interval in $(-2,2)$.  Then,
  \begin{enumerate}
  \item if $I\subset\R\setminus\Sigma_\N$
    (resp. $I\subset\R\setminus\Sigma_\Z$), then, there exists $c>0$
    such that, for $L$ sufficiently large, $H^{\N}_L$ (resp. $H^\Z_L$)
    has no resonances in the rectangle $\{\text{Re}\,z\in I,\ \text
    {Im}\,z\in[-c,0]\}$;
  \item if $I\subset\Sigma_\Z$, then, there exists $c>0$ such that,
    for $L$ sufficiently large, $H^{\N}_L$ and $H^\Z_L$ have no
    resonances in the rectangle $\{\text{Re}\,z\in I,\ \text
    {Im}\,z\in[-c/L,0]\}$;
  \item fix $0\leq k\leq p-1$ and assume the compact interval $I$ to be such
    that $\{v_j\}=\overset{\circ}{I}\cap\Sigma_\N=I\cap\Sigma_\N$ and
    $I\cap\Sigma_\Z=\emptyset$ ($(v_j)_j$ are defined in the beginning
    of section~\ref{sec:periodic-case}):
    \begin{enumerate}
    \item if $I\cap\Sigma^-_k=\emptyset$ then, there exists $c>0$ such
      that, for $L$ sufficiently large such that $L\equiv k\mod p$,
      $H^{\N}_L$ has a unique resonance in the rectangle
      $\{\text{Re}\,z\in I,\ -c\leq\text{Im}\,z\leq0\}$; moreover, this
      resonance, say $z_j$, is simple and satisfies $\text{Im}
      \,z_j\asymp - e^{-\rho_j L}$ and $|z_j-\lambda_j|\asymp
      e^{-\rho_j L}$ for some $\rho_j>0$ independent of $L$;
    \item if $I\cap\Sigma^-_k\not=\emptyset$ then, there exists $c>0$
      such that, for $L$ sufficiently large such that $L\equiv k\mod
      p$, $H^{\N}_L$ has no resonance in the rectangle
      $\{\text{Re}\,z\in I,\ -c\leq\text{Im}\,z\leq0\}$.
    \end{enumerate}
  \end{enumerate}
\end{Th}
\noindent So, below the spectral interval $(-2,2)$, there exists a
resonance free region of width at least of order $L^{-1}$. For
$H^\N_L$, if $L\equiv k\mod p$, each discrete eigenvalue of $H^\N$
that is not an eigenvalue of $H^-_k$ generates a resonance for
$H^{\N}_L$ exponentially close to the real axis (when $L$ is
large). When the eigenvalue of $H_k^-$ is also an eigenvalue of
$H^\N=H_0^+$, it may also generate a resonance but only much further
away in the complex plane, at least at a distance of order $1$ to the
real axis.\\
In case (3)(a) of Theorem~\ref{thr:5}, one can give an asymptotic
expansion for the resonances (see section~\ref{sec:proof-theorem-5}).\\
\noindent We now turn to the description of the resonances of
$H^\bullet_L$ near $[-2,2]$. Therefore, it will be useful to introduce
a number of auxiliary functions and operators.
\subsubsection{Some auxiliary functions}
\label{sec:auxil-op}
To $H_k^-$ defined above, we associate $N^-_k$, the distribution
function of its spectral measure (that is a probability measure),
i.e., for $\varphi\in\Coi(\R)$, we define
$\D\int_\R\varphi(\lambda)dN^-_k(\lambda):=\varphi(H_k^-)(0,0)$ where
$(\varphi(H_k^-)(x,y))_{(x,y)\in(\Z_-)^2}$ denotes the kernel of
the operator $\varphi(H_k^-)$.\\
On $\overset{\circ}{\Sigma}_\Z$, the spectral measure $dN_k^-$ admits
a density with respect to the Lebesgue measure, say, $n^-_k$, and this
density is real analytic (see Proposition~\ref{pro:1}). \\
For $E\in\overset{\circ}{\Sigma}_\Z$, define
\begin{equation}
  \label{eq:5}
  S^-_k(E):=\text{p.v.}
  \left(\int_\R\frac{dN^-_k(\lambda)}{\lambda-E}\right)
  =\lim_{\varepsilon\downarrow0}\left(\int_{-\infty}^{E-\varepsilon}
    \frac{dN^-_k(\lambda)}{\lambda-E}-
    \int_{E+\varepsilon}^{+\infty}\frac{dN^-_k(\lambda)}{\lambda-E}\right).
\end{equation}
The existence and analyticity of the Cauchy principal value $S^-_k$ on
$\overset{\circ}{\Sigma}_\Z$ is guaranteed by the analyticity of
$n^-_k$ (see, e.g.,~\cite{MR2542214}). Moreover, for
$E\in\overset{\circ}{\Sigma}_\Z$, one has
\begin{equation}
  \label{eq:51}
  S^-_k(E)=\lim_{\varepsilon\to0^+}
  \int_\R\frac{dN^-_k(\lambda)}{\lambda-E-i\varepsilon}-i\pi n_k^-(E).
\end{equation}
In the lower half-plane $\{$Im$\,E<0\}$, define the function
\begin{equation}
  \label{eq:8}
  \Xi^-_k(E):=\int_\R\frac{dN_k^-(\lambda)}{\lambda-E}+e^{-i\arccos(E/2)}
  =\int_\R\frac{dN_k^-(\lambda)}{\lambda-E}+E/2+\sqrt{(E/2)^2-1}    
\end{equation}
where
\begin{itemize}
\item in the first formula, the function $z\mapsto\arccos z$ is the
  analytic continuation to the lower half-plane of the determination
  taking values in $[-\pi,0]$ on the interval $[-1,1]$;
\item in the second formula, the branch of the square root
  $z\mapsto\sqrt{z^2-1}$ has positive imaginary part for $z\in(-1,1)$.
\end{itemize}
The function $\Xi^-_k$ is analytic in $\{$Im$\,E<0\}$ and in a
neighborhood of $(-2,2)\cap\overset{\circ}{\Sigma}_\Z$. Moreover,
$\Xi^-_k$ vanishes identically if and only if $V\equiv0$ (see
Proposition~\ref{pro:4}).\\
From now on we assume that $V\not\equiv0$. In this case, in
$\{$Im$\,E<0\}$ and on $(-2,2)\cap\overset{\circ}{\Sigma}_\Z$, the
analytic function $\Xi_k^-$ has only finitely many zeros, each of
finite multiplicity (see Proposition~\ref{pro:4}). \\
We shall need the analogues of the above defined functions the already
introduced operator $H_0^+:=H^\N=-\Delta+V$ considered on $\ell^2(\N)$
with Dirichlet boundary conditions at $0$. We define the function
$N^+_0$ as the distribution function of the spectral measure of
$H_0^+$, i.e., for $\varphi\in\Coi(\R)$, we define
$\D\int_\R\varphi(\lambda)dN^+_0(\lambda):=\varphi(H_0^+)(0,0)$.  In
the same way as we have defined $n_k^-$, $S_k^-$ and $\Xi_k^-$ from
$H_k^-$, one can define $n_0^+$, $S_0^+$ and $\Xi_0^+$ from $H_0^+$.
They also satisfy Proposition~\ref{pro:1}, relation~\eqref{eq:51} and
Proposition~\ref{pro:4}.\\
For the description of the resonances, it will be convenient to define
the following functions on $\overset{\circ}{\Sigma}_\Z$
\begin{equation}
  \label{eq:232}
  c^{\N}(E):=i+\frac{\Xi_k^-(E)}{\pi\,n_k^-(E)}=\frac1{\pi\,n_k^-(E)}
  \left(S_k^-(E)+e^{-i\arccos(E/2)}\right)
\end{equation}
and
\begin{equation}
  \label{eq:231}
  c^{\Z}(E):=\frac{\D\frac{\left(S^+_0(E)+e^{-i\arccos(E/2)}\right)
      \left(S_k^-(E)+e^{-i\arccos(E/2)}\right)}{n_0^+(E)\,n_k^-(E)}
    -\pi^2}{\D\frac{\pi\left(S^+_0(E)+e^{-i\arccos(E/2)}\right)}{n_0^+(E)}
    +\frac{\pi\left(S_k^-(E)+e^{-i\arccos(E/2)}\right)}{n_k^-(E)}}.
\end{equation}
We shall see that the the zeros of $c^\bullet-i$ play a special role
for the resonances of $H^\bullet_L$: therefore, we define
\begin{equation}
  \label{eq:230}
  \mathcal{D}^\bullet=\left\{z\in\overset{\circ}{\Sigma}_\Z;\
    c^\bullet(z)=i\right\} 
\end{equation}
The set $\mathcal{D}$ introduced in Theorem~\ref{thr:22} is the set
$\mathcal{D}^\Z\cap(-2,2)$.
\begin{Rem}
  \label{rem:6} Before describing the resonances, let us explain why
  the operators $H_0^+$ and $H_k^-$ naturally occur in this
  study. They respectrively are the strong resolvent limits (when
  $L\to+\infty$ s.t. $L\in p\N+k$) of the operator $H^{\Z}_L$
  restricted to $\llbracket 0,L\rrbracket$ with Dirichlet boundary
  conditions at $0$ and $L$ ``seen'' respectively from
  the left and the right hand side. \\
  Indeed, define $H_L$ to be the operator $H^{\N}_L$ restricted to
  $\llbracket 0,L\rrbracket$ with Dirichlet boundary conditions at $L$
  (see Remark~\ref{rem:3}). Note that $H_L$ is also the operator
  $H^{\Z}_L$ restricted to $\llbracket 0,L\rrbracket$ with Dirichlet
  boundary  conditions at $0$ and $L$.\\
  Clearly, the operator $H_0^+$ is the strong resolvent limit of $H_L$
  when
  $L\to+\infty$.\\
  If $\tilde\tau_L$ denotes the translation by $-L$ that unitarily
  maps $\ell^2(\llbracket 0,L\rrbracket)$ into
  $\ell^2(\llbracket-L,0\rrbracket)$, then, $\tilde H_L=\tilde\tau_L
  H_L\tilde\tau^*_L$ converges in the strong resolvent sense to
  $H_k^-$ when $L\to+\infty$ and $L\equiv k\mod(p)$. Indeed,
  $\tau_LV=\tau_kV$ as $V$ is $p$ periodic.
\end{Rem}
\subsubsection{Description of the resonances closest to the real axis}
\label{sec:descr-reson}
Let $(\lambda_l)_{0\leq l\leq L}=(\lambda_l^L)_{0\leq l\leq L}$ be the eigenvalues of
$H_L$ (that is, the eigenvalues of $H^{\N}_L$ or $H^\Z_L$ restricted
to $\llbracket 0,L\rrbracket$ with Dirichlet boundary conditions, see
remark~\ref{rem:3}) listed in increasing order. They are described in
Theorem~\ref{thr:16}; those away from the edges of $\Sigma_\Z$ are
shown to be nicely interspaced points at a distance roughly $L^{-1}$
from one another. \\
We first state our most general result describing the resonances in a
uniform way. We, then, derive two corollaries describing the behavior
of the resonance, first, far from the set of exceptional energies
$\mathcal{D}^\bullet$, second, close to an exceptional energy.\\
Pick a compact interval
$I\subset(-2,2)\cap\overset{\circ}{\Sigma}_\Z$. For
$\bullet\in\{\N,\Z\}$ and $\lambda_l\in I$, for $L$ large, define the
complex number
\begin{equation}
  \label{eq:182}
  \tilde z^\bullet_l=\lambda_l+\frac1{\pi\,n(\lambda_l)\,L}\cot^{-1}\circ\,
  c^\bullet\left[\lambda_l+\frac1{\pi\,n(\lambda_l)\,L}\cot^{-1}\circ\,
    c^\bullet\left(\lambda_l-i\frac{\log L}L\right)\right]
\end{equation}
where the determination of $\cot^{-1}$ is the inverse of the
determination $z\mapsto\cot(z)$ mapping
$\left[0,\pi\right)\times(0,-\infty)$ onto
$\C^+\setminus\{i\}$.\\
Note that, by Proposition~\ref{pro:5}, for $L$ sufficiently large, we
know that, for any $l$ such that $\lambda_l\in I$, one has
\begin{equation*}
  \text{Im}\,c^\bullet\left(\lambda_l-i\frac{\log L}L\right)
  \in(0,+\infty)\setminus\{1\}
\end{equation*}
and
\begin{equation*}
  \text{Im}\,c^\bullet\left[\lambda_l+\frac1{\pi\,n(\lambda_l)\,L}
    \cot^{-1}\circ\,
    c^\bullet\left(\lambda_l-i\frac{\log L}L\right)\right]
  \in(0,+\infty)\setminus\{1\}.
\end{equation*}
Thus, the formula~\eqref{eq:182} defines $\tilde z^\bullet_l$ properly
and in a unique way. Moreover, as the zeros of $E\mapsto
c^\bullet(E)-i$ are of finite order, one checks that
\begin{equation}
  \label{eq:56}
  -\log L\lesssim\,L\cdot\text{Im}\,\tilde z^\bullet_l\lesssim -1\quad
  \text{ and }\quad 1\lesssim L\cdot\text{Re}\,\left(\tilde z^\bullet_{l+1}
    -\tilde z^\bullet_l\right)
\end{equation}
where the constants are uniform for $l$ such that $\lambda_l\in I$.\\
We prove the
\begin{Th}
  \label{thr:2}
  Pick $\bullet\in\{\N,\Z\}$ and $k\in\{0,\cdots,p-1\}$. Let
  $E_0\in(-2,2)\cap\overset{\circ}{\Sigma}_\Z$.\\
  Then, there exists $\eta_0>0$ and $L_0>0$ such that, for $L>L_0$
  satisfying $L=k\mod(p)$, for each $\lambda_l\in
  I:=[E_0-\eta_0,E_0+\eta_0]$, there exists a unique resonance of
  $H^\bullet_L$, say $z^\bullet_l$, in the rectangle
  \begin{equation*}
    \left[\frac{\text{Re}\,(\tilde z^\bullet_l+\tilde
        z^\bullet_{l-1})}2, \frac{\text{Re}\,(\tilde z^\bullet_l+\tilde
        z^\bullet_{l+1})}2\right] +i\left[-\eta_0,0\right];
  \end{equation*}
  this resonance is simple and it satisfies
  $\D\left|z^\bullet_l-\tilde z^\bullet_l\right|\lesssim\frac1{L\log
    L}$.
\end{Th}
\noindent This result calls for a few comments. First, the picture one
gets for the resonances can be described as follows (see also
Figure~\ref{fig:10}). As long as $\lambda_l$ stays away from any zero
of $E\mapsto c^\bullet(E)-i$, the resonances are nicely spaced points
as the following proposition proves.
\begin{Pro}
  \label{pro:2}
  Pick $\bullet\in\{\N,\Z\}$ and $k\in\{0,\cdots,p-1\}$.  Let
  $I\subset(-2,2)\cap\overset{\circ}{\Sigma}_\Z$ be a compact interval
  such that $I\cap\mathcal{D}^\bullet=\emptyset$.\\
  Then, for $L$ sufficiently large, for each $\lambda_l\in I$, the
  resonance $z^\bullet_l$ admits a complete asymptotic expansion in
  powers of $L^{-1}$ and one has
  \begin{equation}
    \label{eq:176}
    z^\bullet_l=\lambda_l+\frac1{\pi\,n(\lambda_l)\,L}\cot^{-1}\circ\,
    c^\bullet(\lambda_l) +O\left(\frac1{L^2}\right)
  \end{equation}
  where the remainder term is uniform in $l$.
\end{Pro}
\begin{figure}[h]
  \centering
  \includegraphics[height=4cm]{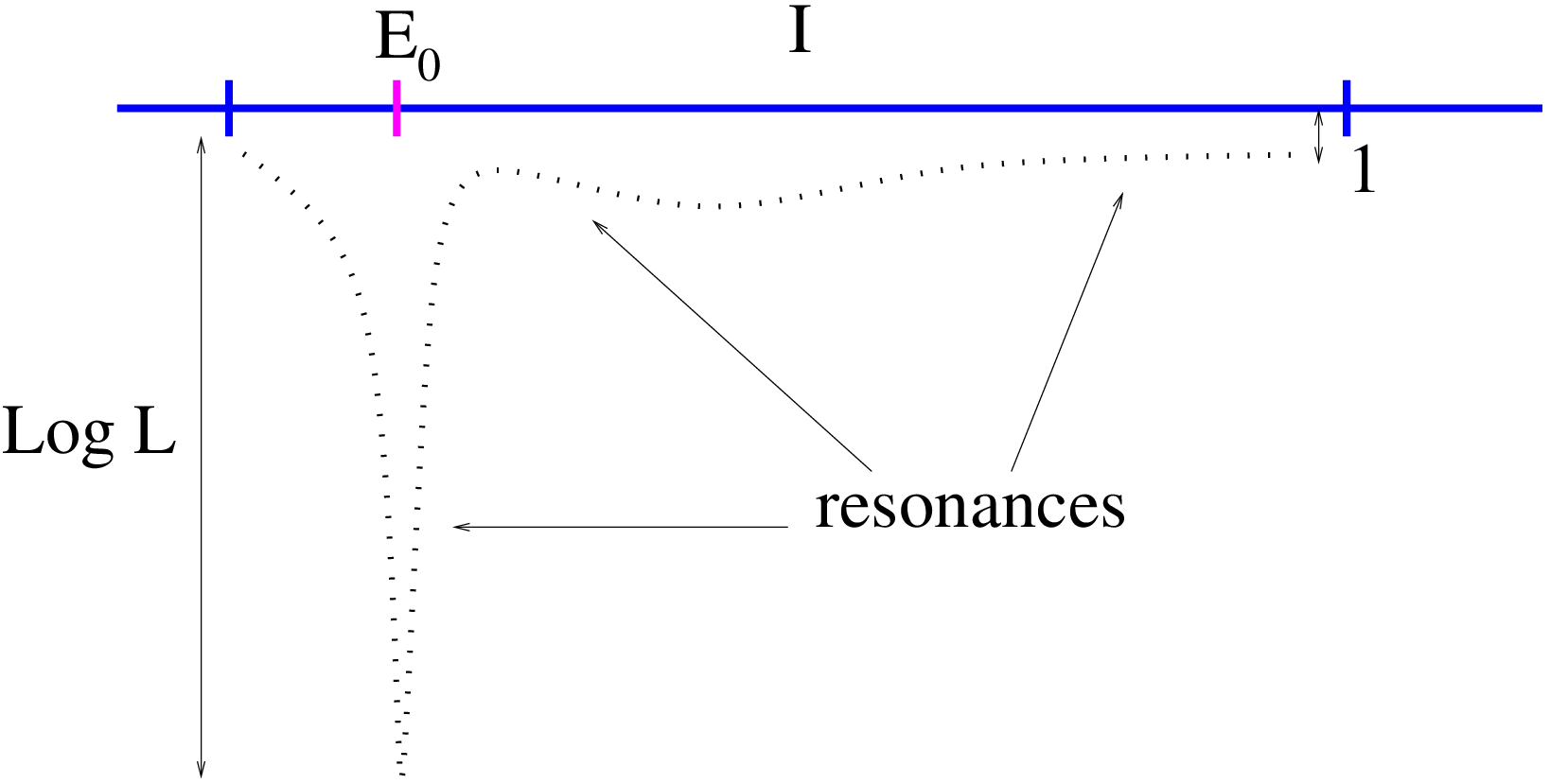}
  \caption{The resonances close to the real axis in the periodic case
    (after rescaling their imaginary parts by $L$)}
  \label{fig:10}
\end{figure}
\noindent The proof of Proposition~\ref{pro:2} actually yields a
complete asymptotic expansion in powers of $L^{-1}$ for the resonances
in this zone (see section~\ref{sec:proofs-propositions}).\\
Proposition~\ref{pro:2} implies Theorem~\ref{thr:22}: we chose
$\bullet=\Z$, $k=0$ and the set $\mathcal{D}$ of exceptional points in
Theorem~\ref{thr:22} is exactly $\mathcal{D}^\Z\cap(-2,2)$; to
obtain~\eqref{eq:6}, it suffices to use the asymptotic form of the
Dirichlet eigenvalues given by Theorem~\ref{thr:16}.\\
Near the zeros of $E\mapsto c^\bullet(E)-i$, the resonances take a
``plunge'' into the lower half of the complex plane (see
Figure~\ref{fig:10}) and their imaginary part becomes of order
$L^{-1}\log L$. Indeed, Theorem~\ref{thr:2} and~\eqref{eq:182} imply
\begin{Pro}
  \label{pro:3}
  Pick $\bullet\in\{\N,\Z\}$ and $k\in\{0,\cdots,p-1\}$.  Let
  $E_0\in\mathcal{D}^\bullet$ be a zero of $E\mapsto c^\bullet(E)-i$
  of order $q$ in
  $(-2,2)\cap\overset{\circ}{\Sigma}_\Z$.\\
  Then, for $\alpha>0$, for $L$ sufficiently large, if $l$ is such
  that $|\lambda_l-E_0|\leq L^{-\alpha}$, the resonance $z^\bullet_l$
  satisfies
  \begin{equation}
    \label{eq:181}
    \text{Im}\,z^\bullet_l=\frac{q}{2\pi\,n(\lambda_l)}\cdot\frac{\log\left(
        |\lambda_l-E_0|^2+\left(\frac{q\,\log L}{2\pi\,n(\lambda_l)\,L}
        \right)^2\right)}{2\,L}\cdot (1+o(1))
  \end{equation}
  where the remainder term is uniform in $l$ such that
  $|\lambda_l-E_0|\leq L^{-\alpha}$.
\end{Pro}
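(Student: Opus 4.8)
The plan is to read off $\text{Im}\,z^\bullet_l$ from the approximation $z^\bullet_l=\tilde z^\bullet_l+O\big(1/(L\log L)\big)$ of Theorem~\ref{thr:2} and to unravel the double composition defining $\tilde z^\bullet_l$ in~\eqref{eq:182}. Two local inputs are needed. Since $E_0$ is a zero of $E\mapsto c^\bullet(E)-i$ of order $q$, Proposition~\ref{pro:5} gives a factorization $c^\bullet(E)-i=a(E)(E-E_0)^q$ with $a$ analytic and non-vanishing near $E_0$, whence $|c^\bullet(E)-i|\asymp|E-E_0|^q$ uniformly near $E_0$. Solving $\cot(z)=i+\varepsilon$ gives $e^{2iz}=2i\varepsilon^{-1}+1$, so for the branch of $\cot^{-1}$ fixed below~\eqref{eq:182} one has, as $\varepsilon\to0$ and uniformly in $\arg\varepsilon$,
\begin{equation*}
  \text{Re}\,\cot^{-1}(i+\varepsilon)\in[0,\pi)\qquad\text{and}\qquad
  \text{Im}\,\cot^{-1}(i+\varepsilon)=\tfrac12\log|\varepsilon|+O(1).
\end{equation*}
Everything below is uniform over $l$ with $|\lambda_l-E_0|\le L^{-\alpha}$, all constants depending only on $a(E_0)\neq0$, on $q$, on $\alpha$, and on upper/lower bounds for $n$ near $E_0$ (these exist since $n(E_0)>0$ and $n$ is continuous there); for $L$ large such $\lambda_l$ lie in a fixed compact $I\subset(-2,2)\cap\overset{\circ}{\Sigma}_\Z$, so Theorem~\ref{thr:2} applies.

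Write $\delta:=\lambda_l-E_0$, $m:=\log L/L$, $n:=n(\lambda_l)$. The innermost evaluation is at $w_0:=\lambda_l-im$, where $|c^\bullet(w_0)-i|\asymp|\delta-im|^q=(\delta^2+m^2)^{q/2}$; hence $\zeta_0:=\cot^{-1}\circ\,c^\bullet(w_0)$ has $\text{Re}\,\zeta_0\in[0,\pi)$ and $\text{Im}\,\zeta_0=\tfrac q4\log(\delta^2+m^2)+O(1)$, with $|\text{Im}\,\zeta_0|\asymp\log L$ since $\log\tfrac1{\delta^2+m^2}\asymp\log L$. Setting $w_1:=\lambda_l+\tfrac1{\pi n L}\zeta_0$ one gets $\text{Re}(w_1-E_0)=\delta+O(1/L)$ and, with $\mu:=-\text{Im}(w_1-E_0)>0$,
\begin{equation*}
  \mu=\frac{q}{4\pi n L}\,\log\frac1{\delta^2+m^2}+O\Big(\frac1L\Big)\asymp\frac{\log L}{L}.
\end{equation*}
Applying the outer layer, using $|c^\bullet(w_1)-i|\asymp|w_1-E_0|^q$ (legitimate since $w_1\to E_0$), together with Theorem~\ref{thr:2} and $|\text{Im}\,\tilde z^\bullet_l|\asymp\log L/L$ (so the $O(1/(L\log L))$ error is negligible), gives
\begin{equation*}
  \text{Im}\,z^\bullet_l=\frac1{\pi n L}\,\text{Im}\,\cot^{-1}\circ\,c^\bullet(w_1)+O\Big(\frac1{L\log L}\Big)
  =\frac{q}{4\pi n L}\,\log|w_1-E_0|^2+O(1/L).
\end{equation*}

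It remains to replace $\log|w_1-E_0|^2$ by $\log(\delta^2+A)$, $A:=\big(\tfrac{q\log L}{2\pi n L}\big)^2$, up to a factor $1+o(1)$; this self-consistent substitution of the ``regulator'' $m^2$ by $A$ is the one genuinely delicate point. Since $|w_1-E_0|^2=\delta^2+\mu^2+O(|\delta|/L)+O(1/L^2)$ and, by the arithmetic--geometric inequality, $\delta^2+\mu^2\ge2|\delta|\mu\gtrsim|\delta|\log L/L\gg|\delta|/L$ while $\mu^2\asymp(\log L/L)^2\gg1/L^2$, we first obtain $|w_1-E_0|^2=(\delta^2+\mu^2)(1+o(1))$. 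A short case analysis according to whether $\delta^2\lesssim A$ or $\delta^2\gg A$ finishes the identification: when $\delta^2\lesssim A$ one has $\delta^2+m^2\asymp m^2$, hence $\log\tfrac1{\delta^2+m^2}=2\log L(1+o(1))$ and $\mu=\tfrac{q\log L}{2\pi nL}(1+o(1))$, so $\mu^2=A(1+o(1))$ and $\delta^2+\mu^2=(\delta^2+A)(1+o(1))$; when $\delta^2\gg A$ one has $\mu^2\asymp m^2\asymp A=o(\delta^2)$, so both $\delta^2+\mu^2$ and $\delta^2+A$ equal $\delta^2(1+o(1))$. In all cases $\log(\delta^2+\mu^2)=\log(\delta^2+A)+o(1)$ uniformly, and since $\log(\delta^2+A)=-\Theta(\log L)$ this means $\log|w_1-E_0|^2=\log(\delta^2+A)(1+o(1))$. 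Plugging into the last display, absorbing the $O(1/L)$ term (which is $o(\log L/L)$) into the same factor, and writing $\tfrac q{4\pi n L}=\tfrac q{2\pi n}\cdot\tfrac1{2L}$ yields~\eqref{eq:181}, with remainder uniform in the stated range.
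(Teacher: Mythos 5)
Your proof is correct and follows essentially the same route as the paper, whose own proof of Proposition~\ref{pro:3} is exactly the combination you spell out: Theorem~\ref{thr:2}, the definition~\eqref{eq:182} of $\tilde z^\bullet_l$, and the asymptotics of $\cot$ near $-i\infty$ (equivalently of $\cot^{-1}$ near $i$), together with the order-$q$ vanishing of $c^\bullet-i$ at $E_0$. Your final two-case comparison is even stronger than required: since $\mu\asymp\sqrt{A}$ uniformly, one already has $\log(\delta^2+\mu^2)=\log(\delta^2+A)+O(1)$, which suffices because $|\log(\delta^2+A)|\asymp\log L$, so the $O(1)$ is absorbed in the factor $1+o(1)$.
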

\noindent When $\bullet=\Z$, the asymptotic~\eqref{eq:181} shows that
there can be a ``resonance'' phenomenon for resonances: when the two
functions $\Xi_k^-$ and $\Xi_0^+$ share a zero at the same real
energy, the maximal width of the resonances increases; indeed, the
factor in front of $L^{-1}\log L$ is proportional to the multiplicity
of the zero of $\Xi_k^-\,\Xi_0^+$.
\subsubsection{Description of the low lying resonances}
\label{sec:descr-reson-ll}
The resonances found in Theorem~\ref{thr:2} are not necessarily the
only ones: deeper into the lower complex plane, one may find more
resonances. They are related to the zeros of $\Xi_k^-$ when
$\bullet=\N$ and $\Xi_k^-\,\Xi_0^+$ when $\bullet=\Z$ (see
Proposition~\ref{pro:5}).\\
We now study what happens below the line $\{$Im$\,z=-\eta_0\}$ (see
Theorem~\ref{thr:2}) for the resonances of $H^{\N}_L$ and $H^\Z_L$.\\
The functions $\Xi_k^-$ and $\Xi_0^+$ are analytic in the lower half
plane and, by Proposition~\ref{pro:4}, they don't vanish in an
neighborhood of $-i\infty$. Hence, the functions $\Xi_k^-$ and
$\Xi_0^+$ have only finitely many zeros in the lower half plane.\\
We prove
\begin{Th}
  \label{thr:3}
  Pick $\bullet\in\{\N,\Z\}$ and $k\in\{0,\cdots,p-1\}$. Let
  $(E^\bullet_j)_{1\leq j\leq J}$ be the zeros of $E\mapsto c^\bullet(E)-i$ in
  $I+i(-\infty,0)$. Pick
  $E_0\in(-2,2)\cap\overset{\circ}{\Sigma}_\Z$.\\
  There exists $\eta_0>0$ such that, for $I=E_0+[-\eta_0,\eta_0]$, for
  $L$ sufficiently large s.t. $L\equiv k\mod(p)$, one has,
  \begin{itemize}
  \item if $E_0\not\in\{\text{Re}\,E^\bullet_j; \ 1\leq j\leq J\}$, then, in
    the rectangle $I+i(-\infty,0]$, the only resonances of $H^{\N}_L$
    and $H^{\Z}_L$ are those given by Theorem~\ref{thr:2};
  \item if $E_0\in\{\text{Re}\,E^\bullet_j; \ 1\leq j\leq J\}$, then,
    \begin{itemize}
    \item in the rectangle $I+i[-\eta_0,0]$, the only resonances of
      $H^{\N}_L$ and $H^{\Z}_L$ are those given by
      Theorem~\ref{thr:2};
    \item in the strip $I+i[-\infty,-\eta_0]$, the resonances of
      $H^{\bullet}_L$ are contained in $\D\bigcup_{j=1}^J
      D\left(E^\bullet_j,e^{-\eta_0 L}\right)$
    \item in $\D D\left(E^\bullet_j,e^{-\eta_0 L}\right)$, the number of
      resonances (counted with multipli\-city) is equal to the order
      of $E^\bullet_j$ as a zero of $E\mapsto c^\bullet(E)-i$.
    \end{itemize}
  \end{itemize}
\end{Th}
\noindent We see that the total number of resonances below a compact
subset of $(-2,2)\cap\overset{\circ}{\Sigma}_\Z$ that do not tend to
the real axis when $L\to+\infty$ is finite. These resonances are
related to the resonances of $H^\infty$ to which we turn now.
\subsubsection{The half-line periodic perturbation}
\label{sec:half-line-pert}
Fix $p\in\N^*$. On $\ell^2(\Z)$, we now consider the operator
$H^\infty=-\Delta+V$ where $V(n)=0$ for $n\geq0$ and $V(n+p)=V(n)$ for
$n\leq-1$. We prove
\begin{Th}
  \label{thr:25}
  The resolvent of $H^\infty$ can be analytically continued from the
  upper half-plane through $(-2,2)\cap\overset{\circ}{\Sigma}_Z$ to
  the lower half plane. The resulting operator does not have any poles
  in the lower half-plane or on
  $(-2,2)\cap\overset{\circ}{\Sigma}_Z$.\\
  The resolvent of $H^\infty$ can be analytically continued from the
  upper half-plane through $(-2,2)\setminus\Sigma_\Z$ (resp.
  $\overset{\circ}{\Sigma}_Z\setminus[-2,2]$) to the lower half plane;
  the poles of the continuation through $(-2,2)\setminus\Sigma_\Z$
  (resp.  $\overset{\circ}{\Sigma}_Z\setminus[-2,2]$) are exactly the
  zeros of the function $\D E\mapsto1-e^{i\theta(E)}
  \int_\R\frac{dN^-_{p-1} (\lambda)}{\lambda-E}$ when continued from
  the upper half-plane through $(-2,2)\setminus\Sigma_\Z$ (resp.
  $\overset{\circ}{\Sigma}_Z\setminus[-2,2]$) to the lower half-plane.
\end{Th}
\begin{Rem}
  In Theorem~\ref{thr:25} and below, every time we consider the
  analytic continuation of a resolvent through some open subset of the
  real line, we implicitly assume the open subset to be non empty.
\end{Rem}
\noindent In figure~\ref{fig:11}, to illustrate Theorem~\ref{thr:25},
assuming that $\Sigma_\Z$ (in blue) has a single gap that is contained
in $(-2,2)$, we drew the various analytic continuations of the
resolvent of $H^\infty$ and the presence or absence of resonances for
the different continuations .
\begin{figure}[h]
  \centering
  \includegraphics[height=5.5cm]{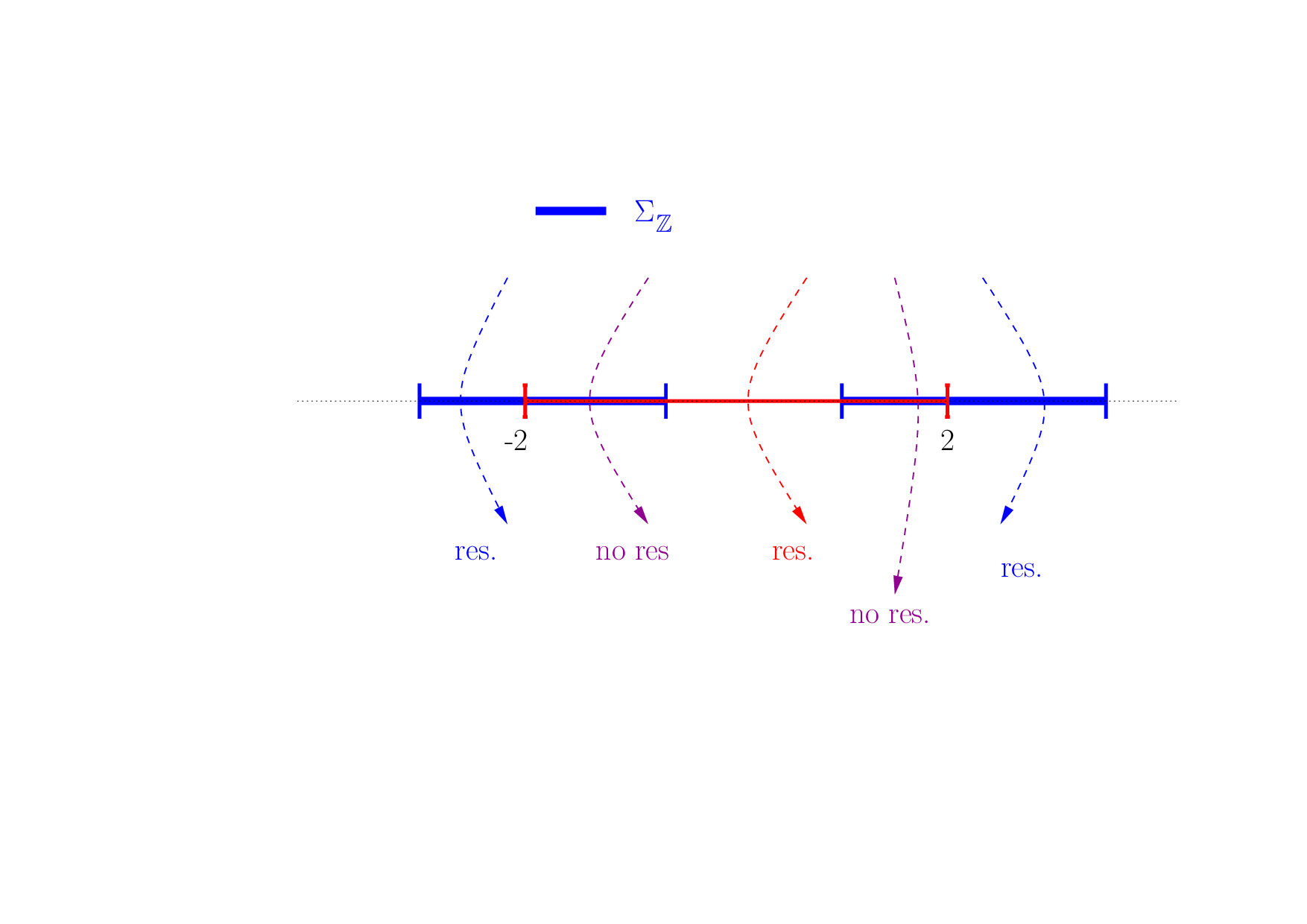}
  \caption{The analytic continuation of the resolvent and resonances
    for $H^\infty$}
  \label{fig:11}
\end{figure}
Using the same arguments as in the proof of Proposition~\ref{pro:4},
one easily sees that the continuations of the function $\D
E\mapsto1-e^{i\theta(E)}\int_\R\frac{dN^-_{p-1} (\lambda)}{\lambda-E}$
to the lower half plane through $(-2,2)\setminus\Sigma_\Z$ and
$\overset{\circ}{\Sigma}_Z\setminus[-2,2]$ have at most finitely many
zeros and that these zeros are away from the real axis.\\
This also implies that the spectrum on $H^\infty$ in
$[-2,2]\cup\Sigma_\Z$ is purely absolutely continuous except possibly
at the points of $\partial\Sigma_\Z\cup\{-2,2\}$ where
$\partial\Sigma_\Z$ is the set of edges of $\Sigma_\Z$.
\subsection{The random case}
\label{sec:random-case}
We now turn to the random case. Let $V=V_\omega$ where
$(V_\omega(n))_{n\in\Z}$ are bounded independent and identically
distributed random variables. Assume that the common law of the random
variables admits a bounded compactly
supported density, say, $g$.\\
Set $H^\N_\omega=-\Delta+V_\omega$ on $\ell^2(\N)$ (with Dirichlet
boundary condition at $0$ to fix ideas). Let $\sigma(H^\N_\omega)$ be
the spectrum of $H^\N_\omega$. Consider also
$H^\Z_\omega=-\Delta+V_\omega$ acting on $\ell^2(\Z)$. Then, one knows
(see, e.g.,~\cite{MR2509110}) that, $\omega$ almost surely,
\begin{equation}
  \label{eq:186}
  \sigma(H^\Z_\omega)=\Sigma:=[-2,2]+\text{supp}\,g.
\end{equation}
One has the following description for the spectra
$\sigma(H^\N_\omega)$ and $\sigma(H^\Z_\omega)$:
\begin{itemize}
\item $\omega$-almost surely, $\sigma(H^\Z_\omega)=\Sigma$; the
  spectrum is purely punctual; it consists of simple eigenvalues
  associated to exponentially decaying eigenfunctions (Anderson
  localization, see, e.g.,~\cite{MR94h:47068,MR2509110}); one can prove
  that, under the assumptions made above, the whole spectrum is
  dynamically localized (see, e.g.,~\cite{MR883643} and references
  therein);
\item for $H^\N_\omega$ (see, e.g.,~\cite{MR94h:47068,MR1102675}), one
  has, $\omega$-almost surely, $\sigma(H^\N_\omega)=\Sigma\cup
  K_\omega$ where
  \begin{itemize}
  \item $\Sigma$ is the essential spectrum of $H^\N_\omega$; it
    consists of simple eigenvalues associated to exponentially
    decaying eigenfunctions;
  \item the set $K_\omega$ is the discrete spectrum of $H^\N_\omega$;
    it may be empty and depends on $\omega$.
  \end{itemize}
\end{itemize}
\subsubsection{The integrated density of states and the Lyapunov
  exponent}
\label{sec:lyapunov-exponent}
It is well known (see, e.g.,~\cite{MR94h:47068}) that the integrated
density of states of $H$, say, $N(E)$ is defined as the following
limit
\begin{equation}
  \label{eq:4}
  N(E)=\lim_{L\to+\infty}\frac{\#\{\text{eigenvalues of
    }H^\Z_{\omega|\llbracket-L,L\rrbracket}\text{ in }(-\infty,E]\}}{2L+1}.
\end{equation}
The above limit does not depend on the boundary conditions used to
define the restriction $H^\Z_{\omega|\llbracket-L,L\rrbracket}$. It
defines the distribution function of a probability measure supported
on $\Sigma$. Under our assumptions on the random potential, $N$ is
known to be Lipschitz continuous (\cite{MR94h:47068,MR2509110}). Let
$\displaystyle n(E)=\frac{dN}{dE}(E)$ be its derivative; it exists for
almost all energies. If one assumes more regularity on $g$ the density
of the random variables $(\omega_n)_n$, then, the density of states
$n$ can be shown to exist everywhere and to be regular (see,
e.g.,~\cite{MR883643}).\\
One also defines the Lyapunov exponent, say $\rho(E)$ as follows
\begin{equation*}
  \rho(E):=\lim_{L\to+\infty}\frac{\log\|T_L(E,\omega)\|}{L+1}  
\end{equation*}
where
\begin{equation}
  \label{eq:9}
  T_L(E;\omega):= \begin{pmatrix}
    E-V_\omega(L) & -1   \\ 1 &0    \end{pmatrix}\times\cdots\times
  \begin{pmatrix}
    E-V_\omega(0) & -1 \\ 1 &0 \end{pmatrix}
\end{equation}
For any $E$, $\omega$-almost surely, the Lyapunov exponent is known to
exist and to be independent of $\omega$ (see,
e.g.,~\cite{MR883643,MR94h:47068,MR1102675}). It is positive at all
energies. Moreover, by the Thouless formula~\cite{MR883643}, it is
positive and continuous for all $E$ and it is the harmonic conjugate
of $n(E)$.\vskip.1cm\noindent
For $\bullet\in\{\N,\Z\}$, we now define $H^\bullet_{\omega,L}$ to be
the operator $-\Delta^\bullet+V_\omega\car_{\llbracket
  0,L\rrbracket}$. The goal of the next sections is to describe the
resonances of these operators in the limit $L\to+\infty$.\par
As in the case of a periodic potential $V$, the resonances are defined
as the poles of the analytic continuation of
$z\mapsto(H^\bullet_{\omega,L}-z)^{-1}$ from $\C^+$ through $(-2,2)$
(see Theorem~\ref{thr:1}).
\subsubsection{Resonance free regions}
\label{sec:reson-free-regi1}
We again start with a description of the resonance free region near a
compact interval in $(-2,2)$. As in the periodic case, the size of the
$H^\bullet_{\omega,L}$-resonance free region below a given energy will
depend on whether this energy belongs to $\sigma(H^\bullet_\omega)$ or
not. We prove
\begin{Th}
  \label{thr:4}
  Fix $\bullet\in\{\N,\Z\}$. Let $I$ be a compact interval in
  $(-2,2)$. Then, $\omega$-a.s., one has
  \begin{enumerate}
  \item for $\bullet\in\{\N,\Z\}$, if
    $I\subset\R\setminus\sigma(H^\bullet_\omega)$, then, there exists
    $C>0$ such that, for $L$ sufficiently large, there are no
    resonances of $H^\bullet_{\omega,L}$ in the rectangle
    $\{\text{Re}\,z\in I,\ 0\geq\text {Im}\,z\geq-1/C\}$;
  \item if $I\subset\overset{\circ}{\Sigma}$, then, for
    $\varepsilon\in(0,1)$, there exists $L_0>0$ such that, for $L\geq
    L_0$, there are no resonances of $H^\bullet_{\omega,L}$ in the
    rectangle $\{\text{Re}\,z\in I,\ 0\geq\text {Im}\,z\geq -e^{-2
      \eta_\bullet\rho L(1+\varepsilon)})\}$ where
    \begin{itemize}
    \item $\rho$ is the maximum of the Lyapunov exponent $\rho(E)$ on
      $I$
    \item $\D\eta_\bullet=\begin{cases}1\text{ if }\bullet=\N,\\
        1/2\text{ if }\bullet=\Z.\end{cases}$
    \end{itemize}
  \item pick $v_j=v_j(\omega)\in K_\omega$ (see the description of the
    spectrum of $H_\omega^\N$ just above
    section~\ref{sec:lyapunov-exponent}) and assume that
    $\{v_j\}=\overset{\circ}{I}\cap\sigma(H^\N_\omega)
    =I\cap\sigma(H^\N_\omega)$ and $I\cap\Sigma=\emptyset$, then,
    there exists $c>0$ such that, for $L$ sufficiently large,
    $H^{\N}_{\omega,L}$ has a unique resonance in $\{\text{Re}\,z\in
    I,\ -c\leq\text{Im}\,z\leq0\}$; moreover, this resonance, say
    $z_j$, is simple and satisfies $\text{Im}
    \,z_j\asymp-e^{-\rho_j(\omega)L}$ and $|z_j-\lambda_j|\asymp
    e^{-\rho_j(\omega)L}$ for some $\rho_j(\omega)>0$ independent of
    $L$.
  \end{enumerate}
\end{Th}
\noindent When comparing point (2) of this result with point (2) of
Theorem~\ref{thr:5}, it is striking that the width of the resonance
free region below $\Sigma$ is much smaller in the random case (it is
exponentially small in $L$) than in the periodic case (it is
polynomially small in $L$). This a consequence of the localized nature
of the spectrum, i.e., of the exponential decay of the eigenfunctions of
$H_\omega^\bullet$.
\subsubsection{Description of the resonances closest to the real axis}
\label{sec:descr-reson1}
We will now see that below the resonance free strip exhibited in
Theorem~\ref{thr:4} one does find resonances, actually, many of
them. We prove
\begin{Th}
  \label{thr:6}
  Fix $\bullet\in\{\N,\Z\}$. Let $I$ be a compact interval in
  $(-2,2)\cap\overset {\circ}{\Sigma}$.  Then,
  \begin{enumerate}
  \item for any $\kappa\in(0,1)$, $\omega$-a.s., one has
    \begin{equation*}
      \frac{\#\left\{z\text{ resonance of }H^\bullet_{\omega,L}\text{
            s.t. Re}\,z\in I,\ 0>\text{Im}\,z\geq
          -e^{-L^\kappa}\right\}}
      L\to \int_In(E)dE;
    \end{equation*}
  \item for $E\in I$ such that $n(E)>0$ and $\lambda\in(0,1)$, define the
    rectangle
    \begin{equation*}
      R^\bullet(E,\lambda,L,\varepsilon,\delta):=\left\{z\in\C;\ 
        \begin{aligned}
          n(E)&|\text{Re}\,z-E|\leq\varepsilon/2\\
          -e^{\eta_\bullet\rho(E)\delta L} &\leq
          e^{2\eta_\bullet\rho(E)\lambda\,L} \text{Im}\,z\leq
          -e^{-\eta_\bullet\rho(E)\delta L}
        \end{aligned}
      \right\}
    \end{equation*}
    where $\eta^\bullet$ is defined in Theorem~\ref{thr:4}; then,
    $\omega$-a.s., one has
    \begin{equation}
      \label{eq:113}
      \lim_{\delta\to0^+}\lim_{\varepsilon\to0^+}
      \lim_{L\to+\infty}\frac{\#\left\{z
          \text{ resonances of }H^\bullet_{\omega,L}
          \text{ in }R^\bullet(E,\lambda,L,\varepsilon,\delta)
        \right\}}{L\,\varepsilon\,\delta}=1.
    \end{equation}
  \item for $E\in I$ such that $n(E)>0$, define
    \begin{equation*}
      R_\pm^\bullet(E,1,L,\varepsilon,\delta)=\left\{z\in\C;\ 
        \begin{aligned}
          n(E)|\text{Re}\,z-E|\leq\varepsilon/2\\
          -e^{-2\eta_\bullet\rho(E)(1\pm\delta)L}\leq\text{Im}\,z<0
        \end{aligned}
      \right\};
    \end{equation*}
    then, $\omega$-a.s., one has
    \begin{equation}
      \label{eq:216}
      \lim_{\delta\to0^+}\lim_{\varepsilon\to0^+}
      \lim_{L\to+\infty}\frac{\#\left\{
          \text{resonances  in }R_\pm^\bullet(E,1,L,\varepsilon,\delta)
        \right\}}{L\,\varepsilon\,\delta}=
      \begin{cases}
        1\text{ if }\pm=-,\\ 0\text{ if }\pm=+.\\
      \end{cases}
    \end{equation}
  \item for $c>0$, $\omega$-a.s., one has
    \begin{equation}
      \label{eq:192}
      \lim_{L\to+\infty} \frac1L\#\left\{z\text{ resonances of
        }H^\bullet_{\omega,L} \text{ in
        }I+i\left(-\infty,-e^{-cL}\right]
      \right\}=\int_I\min\left(\frac{c}{\rho(E)},1\right)n(E)dE.
    \end{equation}
  \end{enumerate}
\end{Th}
\noindent The striking fact is that the resonances are much closer to
the real axis than in the periodic case; the lifetime of these
resonances is much larger. The resonant states are quite stable with
lifetimes that are exponentially large in the width of the random
perturbation. Point (4) is an integrated version of point (2). Let us
also note here that when $\bullet=\Z$, point (4) of
Theorem~\ref{thr:6} is the statement of Theorem~\ref{thr:29}.\\
Note that the rectangles $R^\bullet(E,\lambda,L,\varepsilon,\delta)$
are very stretched along the real axis; their side-length in imaginary
part is exponentially small in $L$ whereas their side-length in real
part is of order $1$.\\
To understand point (2) of Theorem~\ref{thr:6}, rescale the resonances
of $H^\bullet_{\omega,L}$, say, $(z_{l,L}^\bullet(\omega))_l$ as
follows
\begin{equation}
  \label{eq:10}
  \begin{aligned}
    x^\bullet_l&=x_{l,L}^\bullet(E,\omega)
    =n(E)\,L\cdot(\text{Re}\,z_{l,L}^\bullet(\omega)-E)
    \quad\text{and}\\
    y^\bullet_l&=y_{l,L}^\bullet(E,\omega)
    =-\frac1{2\eta_\bullet\,\rho(E)\,L}\log|\text{Im}\,z_{l,L}^\bullet(\omega)|.
  \end{aligned}
\end{equation}
For $\lambda\in(0,1)$, this rescaling maps the rectangle
$R^\bullet(E,\lambda,L,\varepsilon,\delta)$ into
$\{|x|\leq L\varepsilon/2,\ |y-\lambda|\leq \delta/2\}$; and the rectangles
$R_\pm^\bullet(E,1,L,\varepsilon,\delta)$ are respectively mapped into
$\{|x|\leq L\varepsilon/2,\ 1\mp\delta\leq y\}$. The denominator of the
quotient in~\eqref{eq:113} is just the area of the rescaled
$R^\bullet(E,\lambda,L,\varepsilon,\delta)$ for $\lambda\in(0,1)$ or
the rescaled $R_+^\bullet(E,1,L,\varepsilon,\delta)\setminus
R_-^\bullet(E,1,L,\varepsilon,0)$. So, point (2) states that in the
limit $\varepsilon$ and $\delta$ small and $L$ large, the rescaled
resonances become uniformly distributed in the rescaled rectangles.
We see that the structure of the set of resonances is very different
from the one observed in the periodic case (see
Fig.~\ref{fig:2-3}). We will now zoom in on the resonance even more so
as to make this structure clearer. Therefore, we consider the
two-dimensional point process $\xi^\bullet_L(E,\omega)$ defined by
\begin{equation}
  \label{eq:11}
  \xi^\bullet_L(E,\omega)=\sum_{z^\bullet_{l,L}\text{ resonance of }H^\bullet_{\omega,L}}
  \delta_{(x^\bullet_l,y^\bullet_l)}
\end{equation}
where $x^\bullet_l,$ and $y^\bullet_l$ are defined by~\eqref{eq:10}.\\
We prove
\begin{Th}
  \label{thr:7}
  Fix $E\in(-2,2)\cap\overset{\circ}{\Sigma}$ such that
  $n(E)>0$. Then, the point process $\xi^\bullet_L(E,\omega)$
  converges weakly to a Poisson process in $\R\times(0,1]$ with
  intensity $1$. That is, for any $p\geq0$, if $(I_n)_{1\leq n\leq p}$
  resp. $(C_n)_{1\leq n\leq p}$, are disjoint intervals of the real line $\R$
  resp. of $[0,1]$, then
  \begin{equation*}
    \lim_{L\to+\infty}
    \pro\left(\left\{\omega;\
        \begin{aligned}
          &\#\left\{j;
            \begin{aligned}
              x_{l,L}^\bullet(E,\omega)&\in I_1 \\
              y_{l,L}^\bullet(E,\omega)&\in C_1 \end{aligned}
          \right\}=k_1\\&\hskip1cm\vdots\hskip2cm\vdots\\
          &\#\left\{j;
            \begin{aligned}
              x_{l,L}^\bullet(E,\omega)&\in I_p \\
              y_{l,L}^\bullet(E,\omega)&\in C_p
            \end{aligned}
          \right\}=k_p
        \end{aligned}
      \right\}\right)=
    \prod_{n=1}^pe^{-\mu_n}
    \frac{(\mu_n)^{k_n}}{k_n!},
  \end{equation*}
  where $\mu_n:=|I_n||C_n|$ for $1\leq n\leq p$.
\end{Th}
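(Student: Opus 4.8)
The plan is to reduce the statement to the Poisson convergence result of Theorem~\ref{thr:23}-type for eigenvalues/localization centers of the Anderson model and then transfer it through the (essentially deterministic) relationship between resonances and eigenvalues established in the earlier parts of the periodic/random analysis. Concretely, from the description of resonances closest to the real axis one has, for each resonance $z^\bullet_{l,L}(\omega)$ near $E$, an explicit quasi-mode picture: $\mathrm{Re}\,z^\bullet_{l,L}$ is, up to an error $o(L^{-1})$, one of the Dirichlet eigenvalues $\lambda_l^L(\omega)$ of $H^\bullet_{\omega}$ restricted to $\llbracket 0,L\rrbracket$, and $\log|\mathrm{Im}\,z^\bullet_{l,L}|$ is, up to an error $o(L)$, governed by $-2\eta_\bullet\rho(E)$ times the distance from the localization center of the corresponding eigenfunction to the boundary of the box. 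Thus the map $\omega\mapsto\bigl(x^\bullet_{l,L},y^\bullet_{l,L}\bigr)_l$ is, to leading order, a smooth (indeed asymptotically affine in the rescaled variables) image of the point process $\sum_l\delta_{(n(E)L(\lambda_l^L-E),\ \mathrm{dist}(\text{center},\partial)/L)}$. The first step is therefore to make this correspondence quantitative: show that the two point processes are asymptotically equivalent, i.e. that for every bounded box $I_n\times C_n\subset\R\times(0,1]$ the numbers of points differ by an amount that tends to $0$ in probability as $L\to\infty$. This uses the resonance equation (the zeros of $c^\bullet-i$ / the characteristic determinant), exponential decay of eigenfunctions with rate $\to\rho(E)$, and the a priori bounds of Theorem~\ref{thr:4} and Theorem~\ref{thr:6} to exclude spurious resonances and to pin down $\mathrm{Im}\,z$.

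The second step is to invoke the known Poisson statistics for the Anderson model in the localized régime. In the box $\llbracket 0,L\rrbracket$, near an energy $E$ with $n(E)>0$, the rescaled eigenvalues $n(E)L(\lambda_l^L(\omega)-E)$ converge to a Poisson process of intensity $1$ on $\R$ (Minami/Molchanov type estimates), and — crucially — this convergence holds jointly with the convergence of the (normalized) localization centers to an independent uniform law on the box; on the rescaled $y$-axis the center position, divided by $L$, becomes uniform on $(0,1]$. This is precisely the content of the eigenvalue/localization-center Poisson statistics cited in the paper (\cite{MR97d:82046,MR2299191,Ge-Kl:10}). One must be a little careful about which endpoint of the box the center is measured from: for $\bullet=\N$ the relevant distance is to the outer boundary (hence the factor $\eta_\N=1$), whereas for $\bullet=\Z$ both ends of the box leak and the resonance width is governed by the distance to the \emph{nearest} end, which after rescaling still yields a uniform law but with the factor $\eta_\Z=1/2$ — exactly matching the normalization in~\eqref{eq:10}. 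Feeding this joint limit into the asymptotic equivalence from Step~1 and using that weak convergence of point processes is preserved under asymptotically equivalent families (a standard Laplace-functional / Kallenberg argument), one obtains that $\xi^\bullet_L(E,\omega)$ converges to the image of the Poisson process, which is again Poisson with intensity $1$ on $\R\times(0,1]$ by the mapping theorem, and the product formula for disjoint boxes follows immediately.

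The third, more technical step is to control the boundary and higher-order corrections so that the error terms in Step~1 really are $o(1)$ \emph{after} rescaling: the rescaling in the $x$-variable multiplies real parts by $L$, so one needs $|\mathrm{Re}\,z^\bullet_{l,L}-\lambda_l^L|=o(L^{-1})$ uniformly over the relevant eigenvalues, and the rescaling in $y$ divides $\log|\mathrm{Im}\,z|$ by $L$, so one needs $\log|\mathrm{Im}\,z^\bullet_{l,L}|+2\eta_\bullet\rho(E)\,\mathrm{dist}(\text{center},\partial)=o(L)$; both follow from the exponential smallness of the coupling between the perturbed region and the free exterior together with continuity of $\rho$ and $n$ on a neighborhood of $E$, but assembling them uniformly requires the quantitative localization bounds (eigenfunction correlation / SULE estimates) valid $\omega$-a.s. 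The main obstacle I expect is precisely this uniformity: one must ensure that, with probability tending to one, \emph{no} eigenfunction whose eigenvalue lies in the $O(1/L)$-window around $E$ has an anomalously slow decay or a localization center atypically close to the boundary in a way that would move a point across the boundary of a test box. This is handled by a Wegner/Minami estimate to control the number of eigenvalues in the window and a union bound over the at most $O(1)$ such eigenvalues, combined with the large-deviation bound on the transfer-matrix norms that already underlies Theorem~\ref{thr:4}; once that is in place, the rest is bookkeeping with Laplace functionals.
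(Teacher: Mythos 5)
Your proposal follows essentially the same route as the paper: one shows (via the characteristic-equation analysis, the Minami estimate, and the eigenfunction decay rate of Theorem~\ref{thr:10}) that every resonance in the relevant window is given by formula~\eqref{eq:31}, so that after the rescaling~\eqref{eq:10} the point process agrees up to $o(1)$ errors with the process of rescaled Dirichlet eigenvalues and normalized localization-center distances to the boundary (nearest end when $\bullet=\Z$, accounting for $\eta_\Z=1/2$), and one then invokes the joint eigenvalue/localization-center Poisson statistics of~\cite{Ge-Kl:10}, exactly as the paper does. The only cosmetic difference is that you phrase the final transfer via Laplace functionals and the mapping theorem, whereas the paper concludes directly from the joint limit~\eqref{eq:108}; this is not a substantive divergence.
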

\noindent This is the analogue of the celebrated result on the Poisson
structure of the eigenvalues and localization centers of a random
system (see, e.g.,~\cite{MR84e:34081,MR97d:82046,Ge-Kl:10}).\\
When considering the model for $\bullet=\Z$, Theorem~\ref{thr:7} is
Theorem~\ref{thr:23}.
\vskip.1cm\noindent In~\cite{MR2775121}, we proved decorrelation
estimates that can be used in the present setting to prove
\begin{Th}
  \label{thr:8}
  Fix $E\in(-2,2)\cap\overset{\circ}{\Sigma}$ and
  $E'\in(-2,2)\cap\overset{\circ}{\Sigma}$ such that $E\not=E'$,
  $n(E)>0$ and $n(E')>0$. Then, the limits of the processes
  $\xi^\bullet_L(E,\omega)$ and $\xi^\bullet_L(E',\omega)$ are
  stochastically independent.
\end{Th}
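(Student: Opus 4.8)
The strategy is to reduce the statement to the decorrelation estimates of~\cite{MR2775121}, exactly as was done there to establish the stochastic independence of the limiting eigenvalue point processes at distinct energies. Recall from the proof of Theorem~\ref{thr:7} that, after the rescaling~\eqref{eq:10}, the first (real-part) coordinate $x^\bullet_{l,L}(E,\omega)$ of a resonance is governed by the eigenvalue $\lambda_l$ of the Dirichlet restriction $H_L$ lying near $E$, and the second (log-imaginary-part) coordinate $y^\bullet_{l,L}(E,\omega)$ is governed by the exponential decay rate of the corresponding eigenfunction, i.e. by a Lyapunov-type quantity. Thus $\xi^\bullet_L(E,\omega)$ is, up to a negligible error, a deterministic functional of the eigenvalues and eigenfunctions of $H^\bullet_{\omega,L}$ near $E$. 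The independence claim will therefore follow once we know that the local behaviour of the spectral data near $E$ and near $E'$ asymptotically decorrelates.

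First I would make precise, following the proof of Theorem~\ref{thr:7}, the statement that for disjoint boxes $I_n\times C_n$ the events entering the joint law of $\xi^\bullet_L(E,\omega)$ depend (up to errors that vanish as $L\to\infty$) only on the restriction of $H^\bullet_{\omega,L}$ to an interval of length $O(L^{1-\eta})$ around the localization centers of the relevant eigenfunctions — this is the standard consequence of dynamical localization (exponential decay of eigenfunctions away from their centers, with good probabilistic control via the fractional-moment or multiscale bounds quoted in section~\ref{sec:random-case}). The same holds near $E'$. Since $E\neq E'$, the decorrelation estimates of~\cite{MR2775121} (an $L^\infty$-type Minami/decorrelation bound at distinct energies for the one-dimensional Anderson model) give that the probability of simultaneously having a spectral feature of $H^\bullet_{\omega,L}$ in a window of size $\varepsilon/(n(E)L)$ around $E$ and one in a window of size $\varepsilon/(n(E')L)$ around $E'$ factorizes in the limit. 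I would then combine this with the localization-based locality statement to conclude that the pair of vectors of counting random variables $\big(\#\{(x,y)\in I_n\times C_n\}\big)_n$ at $E$ and $\big(\#\{(x',y')\in I'_m\times C'_m\}\big)_m$ at $E'$ become independent as $L\to\infty$; by Theorem~\ref{thr:7} their marginals converge to the respective Poisson laws, and independence of limits follows from the convergence of the joint characteristic (Laplace) functionals to the product.

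Concretely the steps are: (i) express the joint Laplace functional of $\xi^\bullet_L(E,\omega)$ and $\xi^\bullet_L(E',\omega)$ and, using the locality-of-eigenfunctions reduction, approximate it by a functional of $H^\bullet_{\omega,L}$ restricted to a union of disjoint mesoscopic intervals, one cluster of them producing the $E$-data and another the $E'$-data; (ii) invoke the decorrelation estimate of~\cite{MR2775121} to show the cross-terms in this approximate functional are negligible, so it factorizes asymptotically into the product of the single-energy Laplace functionals; (iii) pass to the limit in each factor using Theorem~\ref{thr:7}; (iv) control the approximation errors in (i) and (ii) by a Borel–Cantelli argument to upgrade convergence in probability to $\omega$-almost sure independence of the limiting processes.

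The main obstacle is step~(i)–(ii): one must verify that the decorrelation estimates of~\cite{MR2775121}, which were formulated for the eigenvalue point process, apply to the \emph{resonance} point process, i.e. that adding the second coordinate $y^\bullet_l$ (the rescaled logarithmic width, a Lyapunov-exponent observable rather than a pure energy observable) does not destroy the decorrelation. The key point making this go through is that $y^\bullet_l$ is determined by the same localized eigenfunction whose center already controls $x^\bullet_l$, so no genuinely new randomness — in particular no randomness coupling the $E$- and $E'$-clusters — is introduced; the exponential separation of eigenfunction supports (again from dynamical localization) keeps the two clusters probabilistically almost independent, and the $L^\infty$ decorrelation bound then finishes the argument. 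Once this locality is in hand, the remainder is a routine repetition of the argument of~\cite{MR2775121} for distinct energies.
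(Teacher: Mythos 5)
Your proposal follows essentially the same route as the paper: the proof of Theorem~\ref{thr:7} already reduces the rescaled resonances to the eigenvalue/localization-center data $(E_{j,\omega},x_{j,\omega})$ of $H^\bullet_{\omega,L}$ (with $y^\bullet_l$ determined by the same localized eigenfunction, exactly as you note), and the paper then concludes, as you do, by invoking the decorrelation estimates of~\cite[Theorem 1.2]{MR2775121} (together with~\cite[Theorem 1.11]{Ge-Kl:10}) at the distinct energies $E\not=E'$. Your steps (i)--(iv) simply spell out the content of those cited results, so the argument is correct and matches the paper's.
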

\noindent Due to the rescaling, the above results give only a picture
of the resonances in a zone of the type
\begin{equation}
  \label{eq:236}
  E+L^{-1}\left[-\varepsilon^{-1},\varepsilon^{-1}\right]
  -i\,\left[e^{-2\eta_\bullet(1+\varepsilon)\rho(E)L}
    ,e^{-2\varepsilon\eta_\bullet\rho(E)L}\right] 
\end{equation}
for $\varepsilon>0$ arbitrarily small.\\
When $L$ gets large, this rectangle is of a very small width and
located very close to the real axis. Theorems~\ref{thr:6},~\ref{thr:7}
and~\ref{thr:8} describe the resonances lying closest to the real
axis. As a comparison between points (1) and (2) in
Theorem~\ref{thr:6} shows, these resonances are the most numerous.\\
One can get a number of other statistics (e.g. the distribution of the
spacings between the resonances) using the techniques developed for
the study of the spectral statistics of a random system in the
localized phase (see~\cite{MR2885251,Ge-Kl:10,Kl:10a}) combined with
the analysis developed in section~\ref{sec:random-case-2}.
\subsubsection{The description of the low lying resonances}
\label{sec:descr-low-lying}
It is natural to question what happens deeper in the complex plane. To
answer this question, fix an increasing sequence of scales $(\ell_L)_L$
such that 
\begin{equation}
  \label{eq:14}
  \frac{\ell_L}{\log L}\vers_{L\to+\infty}+\infty
  \quad\text{ and }\quad \frac{\ell_L}L\vers_{L\to+\infty}0.
\end{equation}
We first show that there are only few resonances below the line
$\{$Im$\,z=e^{-\ell_L}\}$, namely
\begin{Th}
  \label{thr:27}
  Pick $(\ell_L)_L$ a sequence of scales satisfying~\eqref{eq:14} and
  $I$ as above.\\
  $\omega$ almost surely, for $L$ large, one has
  \begin{equation}
    \label{eq:193}
    \left\{z \text{ resonances of }H^\bullet_{\omega,L} \text{ in
      }\left\{\text{Re}\,z\in I,\ \text{Im}\,z\leq
        -e^{-\ell_L}\right\}\right\}=O(\ell_L).
  \end{equation}
\end{Th}
\noindent As we shall show now, after proper rescaling, the structure
of theses resonances is the same as that of the resonances closer to
the real axis.\\
Fix $E\in I$ so that $n(E)>0$. Recall that $(z^\bullet_{l,L}(\omega))_l$
be the resonances of $H_{\omega,L}$. We now rescale the resonances
using the sequence $(\ell_L)_L$; this rescaling will select resonances
that are further away from the real axis. Define
\begin{equation}
  \label{eq:12}
  \begin{aligned}
    x^\bullet_l&=x^\bullet_{l,\ell_L}(\omega)=
    n(E)\ell_L(\text{Re}\,z_{l,L}^\bullet(\omega)-E)
    \quad\text{and}\\
    y^\bullet_j&=y^\bullet_{l,\ell_L}(\omega)=
    \frac1{2\eta_\bullet\ell_L\rho(E)}
    \log|\text{Im}\,z_{l,L}^\bullet(\omega)|.
  \end{aligned}
\end{equation}
Consider now the two-dimensional point process
\begin{equation}
  \label{eq:13}
  \xi^\bullet_{L,\ell}(E,\omega)=
  \sum_{z^\bullet_{l,L}\text{ resonance of
    }H^\bullet_{\omega,L}}\delta_{(x^\bullet_{l,\ell_L},y^\bullet_{l,\ell_L})}.  
\end{equation}
We prove the following analogue of the results of
Theorems~\ref{thr:6},~\ref{thr:7} and~\ref{thr:8} for resonances lying
further away from the real axis.
\begin{Th}
  \label{thr:9}
  Fix $E\in(-2,2)\cap\overset{\circ}{\Sigma}$ and
  $E'\in(-2,2)\cap\overset{\circ}{\Sigma}$ such that $E\not=E'$,
  $n(E)>0$ and $n(E')>0$. Fix a sequence of scales $(\ell_L)_L$
  satisfying~\eqref{eq:14}. Then, one has
  \begin{enumerate}
  \item for $\lambda\in(0,1]$, $\omega$-almost surely
    \begin{equation*}
      \lim_{\delta\to0^+}\lim_{\varepsilon\to0^+}
      \lim_{L\to+\infty}\frac{\#\left\{z \text{ resonances of
          }H^\bullet_{\omega,L} \text{ in
          }R^\bullet(E,\lambda,\ell_L,\varepsilon,\delta)
        \right\}}{\ell_L\,\varepsilon\,\delta}=1
    \end{equation*}
    where $R^\bullet(E,\lambda,L,\varepsilon,\delta)$ is defined in
    Theorem~\ref{thr:6};
  \item the point processes $\xi^\bullet_{L,\ell}(E,\omega)$ and
    $\xi^\bullet_{L,\ell}(E',\omega)$ converge weakly to Poisson
    processes in $\R\times(0,+\infty)$ of intensity $1$;
  \item the limits of the processes $\xi^\bullet_{L,\ell}(E,\omega)$
    and $\xi^\bullet_{L,\ell}(E',\omega)$ are stochastically
    independent.
  \end{enumerate}
\end{Th}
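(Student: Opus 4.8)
\smallpagebreak\textit{Plan of the proof.}
The argument runs exactly as for Theorems~\ref{thr:6},~\ref{thr:7} and~\ref{thr:8}, the sample size $L$ being replaced throughout by the scale $\ell_L$; the only point requiring attention is that the estimates underlying those theorems must hold uniformly over scales obeying~\eqref{eq:14}. The starting point is the \emph{resonance equation} for $H^\bullet_{\omega,L}$. Writing $H_L=(-\Delta+V_\omega)_{|\llbracket0,L\rrbracket}$ for the restriction with Dirichlet conditions at $0$ and $L$, a Schur complement (Krein-type) computation identifies the resonances of $H^\bullet_{\omega,L}$ near $\overset{\circ}{\Sigma}$ with the zeros of a scalar analytic equation built from the boundary matrix elements of $(z-H_L)^{-1}$ and the outgoing amplitude $e^{-i\arccos(z/2)}$ continued to $\{\im z<0\}$. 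For $L$ large this yields a bijection between the eigenvalues $\lambda_l$ of $H_L$ in a fixed neighbourhood of a bulk energy $E$ and the resonances $z_l$ of $H^\bullet_{\omega,L}$ in an $O(1)$-high rectangle below that neighbourhood, with $z_l=\lambda_l+o(1)$ and
\[
-\im z_l\ \asymp\ \begin{cases}|\psi_l(L)|^2, & \bullet=\N,\\ |\psi_l(0)|^2+|\psi_l(L)|^2, & \bullet=\Z,\end{cases}
\]
where $\psi_l$ is the normalised eigenfunction of $H_L$ for $\lambda_l$. This is the content already developed in the course of proving Theorems~\ref{thr:4} and~\ref{thr:6}, and it does not involve $\ell_L$.

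Next, localization. By Anderson localization each $\psi_l$ is exponentially concentrated around a centre $x_l=x_l(\omega)\in\llbracket0,L\rrbracket$, and the large deviation estimates for the transfer matrices~\eqref{eq:9} give, uniformly over the eigenvalues in the relevant window, $|\psi_l(0)|=e^{-\rho(\lambda_l)\,x_l(1+o(1))}$ and $|\psi_l(L)|=e^{-\rho(\lambda_l)\,(L-x_l)(1+o(1))}$. Combined with the resonance equation this gives $y^\bullet_{l,\ell_L}(\omega)=d^\bullet_l/\ell_L+o(1)$, where $d^\N_l=L-x_l$ and $d^\Z_l=\min(x_l,L-x_l)$ (the factor $\eta_\bullet$ in~\eqref{eq:12} records the presence of two openings when $\bullet=\Z$). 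Hence $\xi^\bullet_{L,\ell}(E,\omega)$ is, up to a negligible error, the image under $(\lambda_l,x_l)\mapsto\bigl(n(E)\ell_L(\lambda_l-E),\,d^\bullet_l/\ell_L\bigr)$ of the point process of eigenvalue/localization-centre pairs of $H_L$; the rectangles $R^\bullet(E,\lambda,\ell_L,\varepsilon,\delta)$ of Theorem~\ref{thr:6} are mapped onto the fixed rescaled rectangles $\{|x|\le\ell_L\varepsilon/2,\ |y-\lambda|\le\delta/2\}$ (and onto $\{|x|\le\ell_L\varepsilon/2,\ 1-\delta\le y\}$ when $\lambda=1$). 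One is thus reduced to the Poisson statistics of eigenvalues and localization centres of the Anderson model in a long, thin window --- energies in $E+\ell_L^{-1}[\,\cdot\,]$, centres in $\ell_L[\,\cdot\,]$ --- obtained as in~\cite{MR97d:82046,Ge-Kl:10} from the Wegner and Minami estimates: $\ell_L/\log L\to+\infty$ guarantees that two resonances landing in a bounded rescaled rectangle come, with probability tending to $1$, from eigenfunctions localized in disjoint blocks of length $\gg\log L$, so a Minami bound makes them asymptotically independent, while $\ell_L/L\to0$ keeps the window away from the sample edges and makes the above decay estimates uniform with $o(1)$ error. This gives (2), and (1) follows by the same eigenvalue-counting argument as point (2) of Theorem~\ref{thr:6}; the $\omega$-a.s. iterated-limit statements are deduced, as in~\cite{Ge-Kl:10}, from the monotonicity in $\delta$ and $\varepsilon$ together with a Borel--Cantelli argument along a sparse sequence of $L$'s. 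Note that Theorem~\ref{thr:7} corresponds to $\ell_L=L$, for which all centres have depth $\le L$ and hence $y\le1$; taking $\ell_L\ll L$ makes visible the resonances arising from centres at depth $\gg\ell_L$, which is why the limiting process now lives on $\R\times(0,+\infty)$ rather than $\R\times(0,1]$.

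For (3) one argues exactly as for Theorem~\ref{thr:8}: the decorrelation estimates of~\cite{MR2775121}, applied at the scale $\ell_L$, show that for $E\ne E'$ the eigenvalue/localization-centre processes of $H_L$ near $E$ and near $E'$ are asymptotically independent, and this is inherited by $\xi^\bullet_{L,\ell}(E,\omega)$ and $\xi^\bullet_{L,\ell}(E',\omega)$ through the reduction above; conditions~\eqref{eq:14} are exactly what is needed for those estimates to be effective at scale $\ell_L$.

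The genuine difficulty lies in the second step: obtaining the decay asymptotics $|\psi_l(0)|,|\psi_l(L)|=e^{-\rho(\lambda_l)\,\mathrm{dist}\,(1+o(1))}$ \emph{uniformly over a window containing $\Theta(L/\ell_L)\to+\infty$ eigenvalues}. This forces one to run the Lyapunov-exponent large deviation estimates on spectral intervals of length $\ell_L^{-1}$ with an error that is $o(\ell_L)$ in the exponent simultaneously for all these eigenvalues, and to exclude --- via the Minami estimate --- the rare pairs of eigenvalues where such control could fail. The lower bound $\ell_L\gg\log L$ is precisely the threshold at which this becomes possible; securing this uniformity, rather than any new phenomenon, is what the proof really consists of.
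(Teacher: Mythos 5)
Your proposal is correct and follows essentially the same route as the paper: resonances in the relevant exponentially thin strip are identified with Dirichlet eigenvalue/localization-centre pairs via the characteristic equation, formula~\eqref{eq:31} and the uniform Lyapunov-decay estimates of Theorem~\ref{thr:10}, and the Poisson limit and asymptotic independence are then imported from the joint eigenvalue/centre statistics at scale $\ell_L$ (\cite{Ge-Kl:10}) and the decorrelation estimates of~\cite{MR2775121}. The only implementation differences are minor: for point (1) the paper transfers the almost-sure count to boxes of side $\approx\ell_L$ via an auxiliary comparison lemma (Lemma~\ref{le:4}), and it notes explicitly that the scale condition of~\cite{Ge-Kl:10} is slightly stronger than~\eqref{eq:14} but is compensated by the stronger localization of Theorem~\ref{thr:10} --- precisely the uniformity issue you flag at the end.
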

\noindent Point (1) shows that, in~\eqref{eq:193}, one actually has
\begin{equation*}
  \left\{z \text{ resonances of }H^\bullet_{\omega,L} \text{ in
    }\left\{\text{Re}\,z\in I,\ \text{Im}\,z\leq
      -e^{-\ell_L}\right\}\right\}\asymp\ell_L.
\end{equation*}
Notice also that the effect of the scaling~\eqref{eq:12} is to select
resonances that live in the rectangle
\begin{equation*}
  E+\ell_L^{-1}\left[-\varepsilon^{-1},\varepsilon^{-1}\right]
  -i\,\left[e^{-2\eta_\bullet(1+\varepsilon)\rho(E)\ell_L}
    ,e^{-2\varepsilon\eta_\bullet\rho(E)\ell_L}\right] 
\end{equation*}
This rectangle is now much further away from the real axis than the
one considered in section~\ref{sec:descr-reson1}.\\
Modulo rescaling, the picture one gets for resonances in such
rectangles is the same one got above in the
rectangles~\eqref{eq:236}. This description is valid almost all the
way from distances to the real axis that are exponentially small in
$L$ up to distances that are of order $e^{-(\log L)^\alpha}$,
$\alpha>1$ (see~\eqref{eq:14}).
\subsubsection{Deep resonances}
\label{sec:deep-resonances}
One can also study the resonances that are even further away from the
real axis in a way similar to what was done in the periodic case in
section~\ref{sec:descr-reson-ll}. Define the following random
potentials on $\N$ and $\Z$
\begin{equation}
  \label{eq:200}
  \begin{aligned}
    \tilde V^\N_{\omega,L}(n)&= \begin{cases}
      \omega_{L-n}\text{ for }0\leq n\leq L\\
      0\text{ for }L+1\leq n
    \end{cases}\quad\text{ and}\\
    \tilde V^\Z_{\omega,\tilde\omega,L}(n)&= \begin{cases}
      0\text{ for }n\leq -1\\
      \tilde\omega_{n}\text{ for }0\leq n\leq [L/2]\\
      \omega_{L-n}\text{ for }[L/2]+1\leq n\leq L\\
      0\text{ for }L+1\leq n
    \end{cases}
  \end{aligned}
\end{equation}
where $\omega=(\omega_n)_{n\in \N}$ and
$\tilde\omega=(\tilde\omega_n)_{n\in \N}$ are i.i.d. and satisfy the
assumptions of the beginning of section~\ref{sec:random-case}.\\
Consider the operators
\begin{itemize}
\item $\tilde H^\N_{\omega,L}=-\Delta+\tilde V^\N_{\omega,L}$ on
  $\ell^2(\N)$ with Dirichlet boundary condition at $0$,
\item $\tilde H^\Z_{\omega,\tilde\omega,L}=-\Delta+\tilde
  V^\Z_{\omega,\tilde\omega,L}$ on $\ell^2(\Z)$.
\end{itemize}
Clearly, the random operator $\tilde H^\N_{\omega,L}$ (resp. $\tilde
H^\Z_{\omega,L}$) has the same distribution as $H^\N_{\omega,L}$
(resp.  $H^\Z_{\omega,L}$). Thus, for the low lying resonances, we are
now going to describe those of $\tilde H^\N_{\omega,L}$ (resp. $\tilde
H^\Z_{\omega,L}$) instead of those of $H^\N_{\omega,L}$ (resp.
$H^\Z_{\omega,L}$).
\begin{Rem}
  \label{rem:4}
  The reason for this change of operators is the same as the one why,
  in the case of the periodic potential, we had to distinguish various
  auxiliary operators depending on the congruence of $L$ modulo $p$,
  the period : this gives a meaning to the limiting operators when
  $L\to+\infty$.
\end{Rem}
\noindent Define the probability measure $dN_\omega(\lambda)$ using
its Borel transform by, for Im$ z\not=0$,
\begin{equation}
  \label{eq:114}
  \int_\R\frac{dN_\omega(\lambda)}{\lambda-z}:=\langle\delta_0,
  (H^\N_\omega-E)^{-1}\delta_0\rangle.
\end{equation}
Consider the function
\begin{equation}
  \label{eq:185}
  \Xi_\omega(E)=\int_\R\frac{dN_\omega(\lambda)}{\lambda-E}+e^{-i\arccos(E/2)}
  =\int_\R\frac{dN_\omega(\lambda)}{\lambda-E}+E/2+\sqrt{(E/2)^2-1}
\end{equation}
where the determinations of $z\mapsto\arccos z$ and
$z\mapsto\sqrt{z^2-1}$ are those described after~\eqref{eq:8}.\\
This random function $\Xi_\omega$ is the analogue of $\Xi_k^-$ in the
periodic case. One proves the analogue of Proposition~\ref{pro:4}
\begin{Pro}
  \label{pro:6}
  If $\omega_0\not=0$, one has $\D\Xi_\omega(E)\equ_{\substack{|E|\to\infty\\
      \text{Im}\,E<0}}-\omega_0\,E^{-2}$. Thus, $\omega$ almost
  surely, $\Xi_\omega$ does not vanish identically in $\{$Im$\,E<0\}$.\\
  Pick $I\subset\overset{\circ}{\Sigma}\cap(-2,2)$ compact. Then,
  $\omega$ almost surely, the number of zeros of $\Xi_\omega$ (counted
  with multiplicity) in $I+i\left(-\infty,\varepsilon\right]$ is
  asymptotic to $\D\int_I\frac{n(E)}{\rho(E)}dE\,|\log\varepsilon|$ as
  $ \varepsilon\to0^+$; moreover, $\omega$ almost surely, there exists
  $\varepsilon_\omega>0$ such that all the zeroes of $\Xi_\omega$ in
  $I+i[-\varepsilon_\omega,0)$ are simple.
\end{Pro}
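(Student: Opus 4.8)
The statement of Proposition~\ref{pro:6} has two parts: (i) an asymptotic formula as $|E|\to\infty$ in the lower half-plane giving $\Xi_\omega(E)\equ -\omega_0 E^{-2}$; and (ii) a counting asymptotic for the zeros of $\Xi_\omega$ in a strip $I+i(-\infty,\varepsilon]$, together with almost-sure simplicity of the zeros in a small strip $I+i[-\varepsilon_\omega,0)$. For part (i), I would expand the Borel transform $\int dN_\omega(\lambda)/(\lambda-E)$ in powers of $E^{-1}$ as $|E|\to\infty$: since $dN_\omega$ is a probability measure with all moments (the potential is bounded), one has $\int dN_\omega(\lambda)/(\lambda-E)=-E^{-1}-m_1 E^{-2}-m_2 E^{-3}-\cdots$ where $m_j=\langle\delta_0,(H^\N_\omega)^j\delta_0\rangle$. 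Meanwhile $e^{-i\arccos(E/2)}=E/2+\sqrt{(E/2)^2-1}$ expands (on the chosen branch) as $E-E^{-1}-E^{-3}-\cdots$ near $-i\infty$, but with the \emph{opposite} sign convention so that the $E$ and $E^{-1}$ terms cancel against those of the Borel transform. The leading surviving term is then the $E^{-2}$ coefficient, which is $-m_1=-\langle\delta_0,H^\N_\omega\delta_0\rangle=-V_\omega(0)=-\omega_0$ (using Dirichlet b.c. at $0$, so the Laplacian contributes nothing to the diagonal entry at $0$). This is exactly as in the proof of Proposition~\ref{pro:4}, which the text invites us to mimic. Since the law of $\omega_0$ has a density, $\omega_0\neq 0$ almost surely, so $\Xi_\omega\not\equiv 0$ in $\{\mathrm{Im}\,E<0\}$ almost surely.

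For part (ii), the key observation is that $\Xi_\omega$ is analytic in $\{\mathrm{Im}\,E<0\}$ and extends analytically across $(-2,2)\cap\overset{\circ}{\Sigma}$, and that its zeros should be counted via the argument principle. The plan is to relate zeros of $\Xi_\omega$ to the resonances of $H^\N_{\omega,L}$ analysed in Theorem~\ref{thr:6}, or more directly to compute the integrated density of zeros. The natural approach: write $\Xi_\omega(E)=\int dN_\omega(\lambda)/(\lambda-E)+e^{-i\arccos(E/2)}$ and observe that, as $E$ approaches the real axis from below with $\mathrm{Re}\,E=x\in(-2,2)\cap\overset{\circ}{\Sigma}$, one has $\Xi_\omega(x-i0)=S_\omega(x)+ e^{-i\arccos(x/2)}$ where $S_\omega$ is the (a.s.\ finite) Borel transform boundary value; for $n(x)>0$ the density of states is positive, and the imaginary part of $\Xi_\omega$ at the boundary is governed by $\pi n_\omega(x)$ plus $\sin(\arccos(x/2))>0$, hence $\Xi_\omega$ does \emph{not} vanish on the real line where $n(x)>0$; the zeros are genuinely in the open lower half-plane. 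The counting then follows from an argument-principle computation over the contour bounding $I+i[-\eta,\varepsilon]$: the variation of $\arg\Xi_\omega$ along the bottom segment $\{\mathrm{Im}\,E=-\eta\}$ as $\mathrm{Re}\,E$ sweeps $I$ is, to leading order in $\eta$, controlled by the phase $-\arccos(x/2)$ combined with the behaviour of $\log|\mathrm{Im}\,z|$ along the resonance curve; this should reproduce the integral $\int_I \frac{n(E)}{\rho(E)}dE$ as the density, with the $|\log\varepsilon|$ coming from how deep the strip is. I expect the cleanest route is to invoke the correspondence (established in the body of the paper, presumably in section~\ref{sec:random-case-2}) between zeros of $\Xi_\omega$ and the low-lying resonances, whose counting function is exactly the content of point (1) of Theorem~\ref{thr:9} and point (3) of Theorem~\ref{thr:6}; the $|\log\varepsilon|$ scaling matches the $\ell_L$-rescaling there.

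For the almost-sure simplicity statement, I would argue as follows: a zero $E_0$ of $\Xi_\omega$ is a double (or higher) zero precisely when $\Xi_\omega(E_0)=\Xi_\omega'(E_0)=0$, which is a codimension-two condition; since $\Xi_\omega$ depends on the i.i.d.\ sequence $\omega$ through the (infinitely many) entries, and the law has a density, the probability that $\Xi_\omega$ has a multiple zero within $I+i[-\varepsilon_\omega,0)$ should be zero once $\varepsilon_\omega$ is chosen small enough depending on $\omega$. More concretely, one can exploit that the map $\omega_0\mapsto \Xi_\omega$ is affine in $\omega_0$ (via $m_1=\omega_0$ being the only way $\omega_0$ enters the first correction) — actually $\omega_0$ enters $dN_\omega$ more intricately through $H^\N_\omega$, but a cleaner handle is the rank-one/Aronszajn–Donoghue type dependence of $\langle\delta_0,(H^\N_\omega-E)^{-1}\delta_0\rangle$ on $\omega_0$, which gives a Möbius dependence; for fixed values of the other $\omega_n$'s, the zeros of $\Xi_\omega$ move analytically (and nontrivially) as $\omega_0$ varies, so multiple zeros occur only for $\omega_0$ in a measure-zero set, and Fubini closes the argument.

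**Main obstacle.** The hard part will be the counting asymptotic in part (ii): making rigorous the argument-principle computation that extracts $\int_I\frac{n(E)}{\rho(E)}dE\,|\log\varepsilon|$ as the zero-counting density. This requires controlling the boundary behaviour of the Borel transform $\int dN_\omega(\lambda)/(\lambda-E)$ uniformly as $\mathrm{Im}\,E\to 0^-$ and $\mathrm{Im}\,E\to -\infty$ along $\mathrm{Re}\,E\in I$, and relating the logarithmic growth of $|\mathrm{Re}\,\Xi_\omega|$ versus $|\mathrm{Im}\,\Xi_\omega|$ to the Lyapunov exponent $\rho$ (the harmonic conjugate of $n$, by the Thouless formula quoted in the excerpt). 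The factor $1/\rho(E)$ is precisely where the Lyapunov exponent enters, reflecting the exponential decay rate of eigenfunctions, so the estimate must be quantitative and almost sure — this is where one genuinely needs the Thouless-formula relation $\rho = $ harmonic conjugate of $n$ rather than just positivity of $\rho$. I expect this to be handled by the same machinery used for the resonance counting in Theorem~\ref{thr:6}(3)/Theorem~\ref{thr:9}, transported to the $\Xi_\omega$ picture.
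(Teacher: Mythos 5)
Your treatment of the first point coincides with the paper's: expand the Borel transform of $dN_\omega$ in moments, observe that the $E^{-1}$ term cancels against the expansion of $e^{-i\arccos(E/2)}$ on the chosen branch, and identify the surviving $E^{-2}$ coefficient with $-\langle\delta_0,H^\N_\omega\delta_0\rangle=-\omega_0$; that part is fine. The genuine gap is in the zero-counting asymptotic, which is the heart of the proposition. Your primary route (argument principle on $\partial\bigl(I+i[-\eta,\varepsilon]\bigr)$, with the density $n/\rho$ extracted via the Thouless formula) is not substantiated: you give no mechanism by which the phase variation of $\Xi_\omega$ along a horizontal segment at depth $\varepsilon$ produces the factor $1/\rho(E)$, and in the paper that factor does not come from harmonic conjugacy at all but from the distribution of the resonance widths $e^{-2\rho(E)(L-x_{j,\omega})}$, i.e.\ from point (3) of Theorem~\ref{thr:6}. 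Your fallback, ``invoke the correspondence between zeros of $\Xi_\omega$ and low-lying resonances'', is circular as stated: that correspondence is Theorem~\ref{thr:21}, whose proof uses Proposition~\ref{pro:6} (its simplicity statement and the bound~\eqref{eq:203}). What the paper actually does, and what is missing from your plan, is: (a) the quantitative approximation~\eqref{eq:143}, valid with probability at least $1-e^{-\tilde c L}$ uniformly on $\{\text{Im}\,E\leq-e^{-\tilde cL}\}$, comparing $\int_\R dN_\omega(\lambda)/(\lambda-E)$ with the finite-volume Green function $\langle\delta_0,(H_{\omega,L}-E)^{-1}\delta_0\rangle$ --- obtained from the localization estimates of Theorem~\ref{thr:18}, not from Combes--Thomas, since real energies of $I\subset\Sigma$ are spectrum of $H^\N_\omega$; (b) the coupling of the box size to the depth, $L=L_\varepsilon\sim c^{-1}|\log\varepsilon|$ with $c<\min_{E\in I}\rho(E)$, so that Theorem~\ref{thr:6}(3) yields exactly $L_\varepsilon\int_I\frac{c}{\rho(E)}n(E)dE=|\log\varepsilon|\int_I\frac{n(E)}{\rho(E)}dE$ resonances above depth $e^{-cL_\varepsilon}\sim\varepsilon$; (c) the transfer of the count from the resonance equation to $\Xi_\omega$ by Rouch{\'e}'s theorem along a contour avoiding the resonances, constructed as in the proof of Theorem~\ref{thr:27}, where a uniform lower bound on $|S_L(E)+e^{-i\theta(E)}|$ is available. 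Without the scale coupling $L_\varepsilon\sim|\log\varepsilon|$ your contour computation has no way of producing the $|\log\varepsilon|$ growth.

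On simplicity, your route is genuinely different from the paper's, which deduces simplicity of the zeros near the real axis from simplicity of the resonances (Minami estimate~\eqref{eq:105} plus Theorem~\ref{thr:12}) together with Rouch{\'e} and Proposition~\ref{pro:7}. Your codimension-two/Fubini heuristic is not a proof as written, but the M{\"o}bius observation is the right seed and can be pushed through: writing $\langle\delta_0,(H^\N_\omega-E)^{-1}\delta_0\rangle=(\omega_0-E-m_1(E))^{-1}$ with $m_1$ depending only on $(\omega_n)_{n\geq1}$, one gets $\Xi_\omega(E)=e^{-i\arccos(E/2)}\,\frac{\omega_0-F(E)}{\omega_0-E-m_1(E)}$ with $F(E):=E+m_1(E)-e^{i\arccos(E/2)}$ independent of $\omega_0$ and the denominator nonvanishing for $\text{Im}\,E<0$; a multiple zero forces $\omega_0$ to be a critical value of the nonconstant analytic function $F$, a countable set, so independence of $\omega_0$ from $m_1$ and the absolute continuity of its law give, by Fubini, simplicity of \emph{all} zeros in the open lower half-plane --- more than the proposition claims (the paper explicitly refrains from proving this). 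If you intend that route, you must spell out this identification and the discreteness of the critical set rather than appeal to a ``codimension-two condition''.
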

\noindent It seems reasonable to believe that, except for the zero at
$-i\infty$, $\omega$ almost surely, all the zeros of $\Xi_\omega$ are
simple; we do not prove it\\
For the ``deep'' resonances, we then prove
\begin{Th}
  \label{thr:21}
  Fix $I\subset\overset{\circ}{\Sigma}\cap(-2,2)$ a compact
  interval. There exists $c>0$ such that, with probability 1, there
  exists $c_\omega>0$ such that, for $L$ sufficiently large, one has
  \begin{enumerate}
  \item for each resonance of $\tilde H^\N_{\omega,L}$ (resp. $\tilde
    H^\Z_{\omega,\tilde\omega,L}$) in $I+i\left(-\infty,-e^{-c
        L}\right]$, say $E$, there exists a unique zero of
    $\Xi_\omega$ (resp. $\Xi_\omega\,\Xi_{\tilde\omega}$), say $\tilde
    E$, such that $|E-\tilde E|\leq e^{-c_\omega L}$;
  \item reciprocally, to each zero (counted with multiplicity) of
    $\Xi_\omega$ (resp. $\Xi_\omega\,\Xi_{\tilde\omega}$) in the
    rectangle $I+i\left(-\infty,-e^{-c L}\right]$, say $\tilde E$, one
    can associate a unique resonance of $\tilde H^\N_{\omega,L}$
    (resp. $\tilde H^\Z_{\omega,\tilde\omega,L}$), say $E$, such that
    $|E-\tilde E|\leq e^{-c_\omega L}$.
  \end{enumerate}
\end{Th}
\noindent One can combine this result with the description of the
asymptotic distribution of the resonances given by Theorem~\ref{thr:9}
to obtain the asymptotic distributions of the zeros of the function
$\Xi_\omega$ near a point $E-i\varepsilon$ when
$\varepsilon\to0^+$. Indeed, let $(z_l(\omega))_l$ be the zeros of
$\Xi_\omega$ in $\{$Im$\,E<0\}$. Rescale the zeros:
\begin{equation}
  \label{eq:140}
  x_{l,\varepsilon}(\omega)
  =n(E)|\log\varepsilon|\cdot(\text{Re}\,z_l(\omega)-E)
  \quad\text{and}\quad y_{l,\varepsilon}(\omega)
  =-\frac1{2\rho(E)|\log\varepsilon|}\log|\text{Im}\,z_l(\omega)|
\end{equation}
and consider the two-dimensional point process
$\xi_\varepsilon(E,\omega)$ defined by
\begin{equation}
  \label{eq:209}
  \xi_\varepsilon(E,\omega)=\sum_{z_l(\omega)\text{ zeros of
    }\Xi_\omega} \delta_{(x_{l,\varepsilon},y_{l,\varepsilon})}.  
\end{equation}
Then, one has
\begin{Cor}
  \label{cor:3}
  Fix $E\in I$ such that $n(E)>0$. Then, the point process
  $\xi_\varepsilon(E,\omega)$ converges weakly to a Poisson process in
  $\R\times\R$ with intensity $1$.
\end{Cor}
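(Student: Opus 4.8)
The plan is to deduce Corollary~\ref{cor:3} from Theorem~\ref{thr:21}, which for $L$ large identifies the zeros of $\Xi_\omega$ with the resonances of $\tilde H^\N_{\omega,L}$, together with Theorem~\ref{thr:9}, which gives the Poisson statistics of those resonances after rescaling at an intermediate scale $\ell_L$. Recall from Remark~\ref{rem:4} that $\tilde H^\N_{\omega,L}$ has the same distribution as $H^\N_{\omega,L}$, so the weak limit in Theorem~\ref{thr:9}(2) is also the weak limit of the analogously rescaled resonance point process built from $\tilde H^\N_{\omega,L}$.

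First I would, given $\varepsilon>0$ small, set $L=L(\varepsilon):=\lfloor e^{\sqrt{|\log\varepsilon|}}\rfloor$ and $\ell_L:=|\log\varepsilon|$. Then $L(\varepsilon)\to+\infty$ as $\varepsilon\to0^+$, the scales $(\ell_L)_L$ satisfy~\eqref{eq:14} (since $\ell_L/\log L(\varepsilon)\to+\infty$ and $\ell_L/L(\varepsilon)\to0$), and $e^{-cL(\varepsilon)}\ll\varepsilon^{M}$ for every fixed $M$. By Theorem~\ref{thr:21} with $\bullet=\N$, for $L$ large every zero $z_l$ of $\Xi_\omega$ with $\re z_l\in I$ and $|\im z_l|\ge e^{-cL}$ lies within $e^{-c_\omega L}$ of a unique resonance $z^\N_{l,L}(\omega)$ of $\tilde H^\N_{\omega,L}$, and conversely. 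The resonances that contribute to the rescaled process $\xi^\N_{L,\ell}(E,\omega)$ of~\eqref{eq:13} on a fixed bounded set all satisfy $|\im z^\N_{l,L}(\omega)|\ge\varepsilon^{C_0}$ for some fixed $C_0$ (hence $|\im z^\N_{l,L}(\omega)|\gg e^{-cL}$) and $|\re z^\N_{l,L}(\omega)-E|=O(1/|\log\varepsilon|)$, so they lie in the region covered by Theorem~\ref{thr:21}.

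Next I would check that the matching error is invisible to the rescaling~\eqref{eq:12} at scale $\ell_L=|\log\varepsilon|$: since $\ell_L=o(L)$, one has, uniformly over the relevant $l$, $n(E)\,\ell_L\,e^{-c_\omega L}\to0$ and $e^{-c_\omega L}/|\im z^\N_{l,L}(\omega)|\to0$, so replacing $z^\N_{l,L}(\omega)$ by the corresponding zero $z_l$ of $\Xi_\omega$ shifts $x^\N_{l,\ell_L}$ and $y^\N_{l,\ell_L}$ by $o(1)$. Hence the point process obtained by rescaling the zeros of $\Xi_\omega$ at scale $\ell_L$ via~\eqref{eq:12} has the same weak limit as $\xi^\N_{L,\ell}(E,\omega)$, namely the Poisson process of intensity $1$ given by Theorem~\ref{thr:9}(2). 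Since $\eta_\N=1$ and $\ell_L=|\log\varepsilon|$, this rescaled zero process is precisely $\xi_\varepsilon(E,\omega)$ of~\eqref{eq:140}--\eqref{eq:209} (the choice of sign in the definition of $y$ being immaterial, as the limiting intensity is constant). Letting $\varepsilon\to0^+$, so that $L(\varepsilon)\to+\infty$, then shows that $\xi_\varepsilon(E,\omega)$ converges weakly to a Poisson process of intensity $1$.

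The main technical point, glossed over above, is the passage from Theorem~\ref{thr:9}, stated for a fixed sequence of scales $(\ell_L)_L$, to the $\varepsilon$-indexed family in which $\ell_L=|\log\varepsilon|$ is slaved to $L=L(\varepsilon)$; one must argue that the convergence in Theorem~\ref{thr:9} is uniform enough over all sequences satisfying~\eqref{eq:14}. This is handled by a routine subsequence argument: it suffices that every $\varepsilon_k\to0^+$ have a subsequence along which $\xi_{\varepsilon_k}(E,\omega)$ converges weakly to the Poisson process, and along such a subsequence one may assume $L(\varepsilon_k)$ strictly increasing, interpolate the values $\ell_{L(\varepsilon_k)}=|\log\varepsilon_k|$ to a genuine scale sequence satisfying~\eqref{eq:14}, and apply Theorem~\ref{thr:9}. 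One also needs to dispose of the finitely many ``deep'' zeros of $\Xi_\omega$, i.e. those with $|\im z_l|$ bounded away from $0$: by Proposition~\ref{pro:6}, $\omega$ almost surely $\Xi_\omega$ does not vanish in a neighbourhood of $\infty$ in $\{\im E<0\}$, so in the strip $\{\re z\in I\}$ it has only finitely many zeros with $|\im z_l|\ge\delta$ for each $\delta>0$; since a.s. none of these has real part equal to $E$, their rescaled real parts escape to $\pm\infty$ as $\varepsilon\to0^+$ and they do not affect the limit on bounded subsets of $\R\times\R$.
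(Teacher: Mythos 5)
Your proposal is correct and takes essentially the same approach as the paper: the paper gives no separate proof of Corollary~\ref{cor:3} beyond the remark that it follows by combining Theorem~\ref{thr:21} with Theorem~\ref{thr:9}, and your write-up supplies exactly that combination (a choice of $L(\varepsilon)$ with $\ell_{L}=|\log\varepsilon|$ satisfying~\eqref{eq:14}, transfer of the Poisson limit to $\tilde H^\N_{\omega,L}$ by equality in law, and the verification that the $e^{-c_\omega L}$ matching error and the finitely many deep zeros are invisible after rescaling). The subsequence/interpolation device and the treatment of the sign convention in $y$ are sound, so no gaps worth flagging.
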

\noindent The function $\Xi_\omega$ has been studied
in~\cite{MR2252707,PhysRevB.77.054203} where the average density of
its zeros was computed. Here, we obtain a more precise result.
\subsubsection{The half-line random perturbation}
\label{sec:half-line-rand-pert}
On $\ell^2(\Z)$, we now consider the operator
$H_\omega^\infty=-\Delta+V_\omega$ where $V_\omega(n)=0$ for $n\geq0$
and $V_\omega(n)=\omega_n$ for $n\leq-1$ and $(\omega_n)_{n\geq0}$ are
i.i.d. and have the same distribution as above. Recall that $\Sigma$
is the almost sure spectrum of $H_\omega^\Z$ (on $\ell^2(\Z)$). We
prove
\begin{Th}
  \label{thr:26}
  First, $\omega$ almost surely, the resolvent of $H_\omega^\infty$
  does not admit an analytic continuation from the upper half-plane
  through $(-2,2)\cap\overset{\circ}{\Sigma}$ to any subset of the
  lower half plane. Nevertheless, $\omega$-almost surely, the spectrum
  of $H^\infty_\omega$ in $(-2,2)\cap\overset{\circ}{\Sigma}$ is
  purely absolutely continuous.\\
  Second, $\omega$ almost surely, the resolvent of $H_\omega^\infty$
  does admit a meromorphic continuation from the upper half-plane
  through $(-2,2)\setminus\Sigma$ to the lower half plane; the poles
  of this continuation are exactly the zeros of the function $\D
  E\mapsto1-e^{i\theta(E)}\int_\R\frac{dN_\omega
    (\lambda)}{\lambda-E}$ when continued from the upper half-plane
  through $(-2,2)\setminus\Sigma$ to the lower half-plane.\\
  Third, $\omega$ almost surely, the spectrum of $H_\omega^\infty$ in
  $\overset{\circ}{\Sigma}\setminus[-2,2]$ is pure point associated to
  exponentially decaying eigenfunctions; hence, the resolvent of
  $H_\omega^\infty$ cannot be be continued through
  $\overset{\circ}{\Sigma}\setminus[-2,2]$.
\end{Th}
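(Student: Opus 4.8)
The plan is to reduce everything to the explicit structure of the resolvent of $H_\omega^\infty$ in terms of the half-line operator $H_\omega^\N=-\Delta+V_\omega$ on $\ell^2(\Z_-)$ (with Dirichlet condition at $0$) on one side and the free Laplacian on $\ell^2(\N)$ on the other. As for the periodic operator $H^\infty$ in Theorem~\ref{thr:25}, one couples the two half-line problems at the sites $0$ and $-1$ by a rank-two (Krein-type) formula. Concretely, if $r_0^-(E)=\langle\delta_0,(H_\omega^\N-E)^{-1}\delta_0\rangle=\int_\R\frac{dN_\omega(\lambda)}{\lambda-E}$ is the Weyl function of the random half-line and $e^{i\theta(E)}=E/2+\sqrt{(E/2)^2-1}$ encodes the outgoing solution of the free half-line (notation as after~\eqref{eq:8}), then the matrix element $\langle\delta_0,(H_\omega^\infty-E)^{-1}\delta_0\rangle$ and, more generally, the whole resolvent kernel restricted to $\ell^2_{\mathrm{comp}}\to\ell^2_{\mathrm{loc}}$ is a rational expression in $r_0^-(E)$ and $e^{i\theta(E)}$ whose only possible poles come from the vanishing of a denominator of the form $1-e^{i\theta(E)}\,r_0^-(E)$. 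This is the deterministic algebraic input; it is the same computation used for Theorem~\ref{thr:25}, only with $dN_{p-1}^-$ replaced by the random measure $dN_\omega$.

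Second, I would treat the three regimes separately according to which part of the real axis we try to cross. Across $(-2,2)\setminus\Sigma$: here $E$ lies outside the (almost sure) spectrum of $H_\omega^\N$, so $r_0^-(E)$ extends analytically across that interval (it is a Borel transform of a measure whose support avoids a neighbourhood of $I$), and the coupling formula furnishes a genuine meromorphic continuation of $(H_\omega^\infty-E)^{-1}$ to the lower half-plane, with poles exactly at the zeros of $E\mapsto 1-e^{i\theta(E)}\int_\R\frac{dN_\omega(\lambda)}{\lambda-E}$, as claimed. Across $\overset\circ\Sigma\setminus[-2,2]$: here the free half-line contributes nothing singular (we are outside $[-2,2]$, so $e^{i\theta}$ is analytic), but now $E$ is in the essential spectrum of $H_\omega^\N$, which by Anderson localization is pure point with exponentially decaying eigenfunctions; I would argue that $H_\omega^\infty$ inherits a dense pure point spectrum there (the eigenfunctions of $H_\omega^\N$, suitably cut off / corrected near $0$, are genuine $\ell^2(\Z)$-eigenfunctions of $H_\omega^\infty$ up to an exponentially small error, and a standard rank-one / spectral averaging argument upgrades this to true eigenvalues almost surely), so the resolvent cannot be continued through that set. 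The remaining, and most delicate, regime is across $(-2,2)\cap\overset\circ\Sigma$.

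For that regime the claim is twofold: no analytic continuation exists through any subset, yet the spectrum there is purely absolutely continuous. The a.c. statement is the softer half: on $(-2,2)$ the outgoing half-line feeds an a.c. component into $H_\omega^\infty$, and since a half-line free Laplacian glued to anything keeps $[-2,2]$ as a.c. essential spectrum (trace-class, or rank-two, perturbation of $(-\Delta)\oplus H_\omega^\N$ preserves a.c. spectrum on $[-2,2]\setminus\sigma_{\mathrm{pp}}$, and there is no embedded singular spectrum because the boundary value $r_0^-(E+i0)$ has strictly positive imaginary part $\pi n_\omega(E)>0$ for a.e. $E$ by the positivity of the density of states), one gets pure a.c. spectrum on $(-2,2)\cap\overset\circ\Sigma$. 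The obstruction to continuation is the hard part: one must show that for a.e.\ boundary point $E\in(-2,2)\cap\overset\circ\Sigma$ the function $r_0^-(E+i\epsilon)$ does \emph{not} extend analytically across $E$ from above — indeed $\mathrm{Im}\,r_0^-(E+i0)=\pi n_\omega(E)$ is a.e.\ positive but, being the Borel transform of a purely singular (pure point!) measure, $r_0^-$ has a dense set of poles accumulating at the real axis from \emph{below} in any would-be continuation, so no neighbourhood of any subinterval can carry an analytic (single-valued) extension. I expect this last point — ruling out continuation through \emph{every} subset, not just generic ones — to be the main obstacle, and I would handle it exactly as one rules out continuation of the resolvent of a random half-line Schrödinger operator through its localized spectrum: use that the singular part of $dN_\omega$ is dense, so the "two-sided" limits $r_0^-(E\pm i0)$ disagree on a dense set in a way incompatible with analyticity, together with reflectionlessness failing because the potential is not periodic on one side; the fine point is making "dense set of differing boundary values obstructs continuation on any open subset" rigorous, which follows from the identity theorem once one notes that an analytic continuation would force $dN_\omega$ to be absolutely continuous near $I$, contradicting pure point spectrum of $H_\omega^\N$ (Anderson localization).
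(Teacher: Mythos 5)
Your overall architecture is the same as the paper's: you write $H^\infty_\omega$ as the Dirichlet half-line random operator coupled to the free Dirichlet half-line by a rank-two term, reduce everything to the two Weyl functions $m_\omega(E)=\langle\delta_{-1},(H^-_{\omega,-1}-E)^{-1}\delta_{-1}\rangle$ and $m_0(E)=\langle\delta_0,(-\Delta_0^+-E)^{-1}\delta_0\rangle=-e^{i\theta(E)}$, and identify the poles with the zeros of $1-e^{i\theta(E)}\int_\R\frac{dN_\omega(\lambda)}{\lambda-E}$; the treatment of $(-2,2)\setminus\Sigma$ is correct and identical to the paper's. But the part of the first statement you call ``softer'' is where your argument actually breaks. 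You justify the absence of embedded singular spectrum in $(-2,2)\cap\overset{\circ}{\Sigma}$ by claiming $\mathrm{Im}\,r_0^-(E+i0)=\pi n_\omega(E)>0$ for a.e.\ $E$. This confuses the density of states with the density of the spectral measure $dN_\omega$: almost surely $dN_\omega$ is a \emph{pure point} measure (as you yourself state two lines later when arguing non-continuability), so $\mathrm{Im}\,r_0^-(E+i0)=0$ Lebesgue-a.e.; your two sub-arguments contradict each other, and the one carrying the a.c.\ claim is false. Nor does trace-class invariance help: it only shows the a.c.\ spectrum contains $[-2,2]$, while the whole point is to exclude singular spectrum embedded in an interval where the \emph{decoupled} operator has dense point spectrum. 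The paper's mechanism is different and is the one that works: it is the imaginary part of the \emph{free} Green's function $m_0(E+i0)$ that is bounded away from zero on compacts of $(-2,2)$; since $-1/m_\omega$ is Herglotz, the denominator $m_\omega^{-1}(E)-m_0(E)$ has imaginary part $\leq-\mathrm{Im}\,m_0(E)$, hence is bounded away from zero up to the real axis, and the explicit Schur-complement formulas then give $\limsup_{\varepsilon\to0^+}$-bounded Green's functions at \emph{every} $E$ of the interval for the spanning vectors $\delta_0,\delta_{-1}$, which yields purely a.c.\ spectrum.

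Two further gaps. For $\overset{\circ}{\Sigma}\setminus[-2,2]$, cutting off the localized eigenfunctions of $H^-_{\omega,-1}$ only produces approximate eigenfunctions, hence only shows that the spectrum is dense there; ``a standard rank-one / spectral averaging argument upgrades this to true eigenvalues'' is not an argument — a finite-rank coupling can in principle convert dense point spectrum into singular continuous spectrum. The paper proves pure point spectrum with exponentially decaying eigenfunctions by establishing uniform exponential bounds on the fractional moments $\esp\bigl(|\langle\delta_{-n},(H^\infty_\omega-E)^{-1}\delta_m\rangle|^\beta\bigr)$, combining Combes--Thomas on the free side, the fractional-moment localization bound~\eqref{eq:204} on the random side, and a spectral-averaging estimate for the $2\beta$-moment of the reciprocal denominator (written as $(\omega_{-1}-b)/(\omega_{-1}-a)$), and then invokes Aizenman--Molchanov-type criteria; some such quantitative input is unavoidable here. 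Finally, your obstruction-to-continuation argument has the right idea (exploit the dense point spectrum of $dN_\omega$), but as written it is loose: a continuation of the \emph{coupled} resolvent does not immediately continue $r_0^-$, and the ``poles accumulating from below'' picture is not accurate. You need the intermediate step the paper supplies: a continuation of the resolvent forces a continuation of the scalar function $g_\omega(E)=\bigl(m_\omega^{-1}(E)-m_0(E)\bigr)^{-1}$; since $m_\omega^{-1}(E+i0)=0$ on the dense set of eigenvalues, the boundary value of $g_\omega$ would coincide with $-1/m_0(E+i0)$ on a dense subset, hence identically, forcing $m_\omega^{-1}\equiv0$ in the upper half-plane, which is absurd. (Equivalently, one can invert the algebra to get a meromorphic continuation of $m_\omega$ itself and contradict the density of its atoms; either way the reduction step must be made explicit.)
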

\noindent In figure~\ref{fig:12}, to illustrate Theorem~\ref{thr:26},
assuming that $\Sigma_\Z$ (in blue) has a single gap that is contained
in $(-2,2)$, we drew the analytic continuation of the resolvent of
$H_\omega^\infty$ and the associated resonances; we also indicate the
real intervals of spectrum through which the the resolvent of
$H_\omega^\infty$ does not admit an analytic continuation and the
spectral type of $H_\omega^\infty$ in the intervals.
\begin{figure}[ht]
  \centering
  \includegraphics[height=5.5cm]{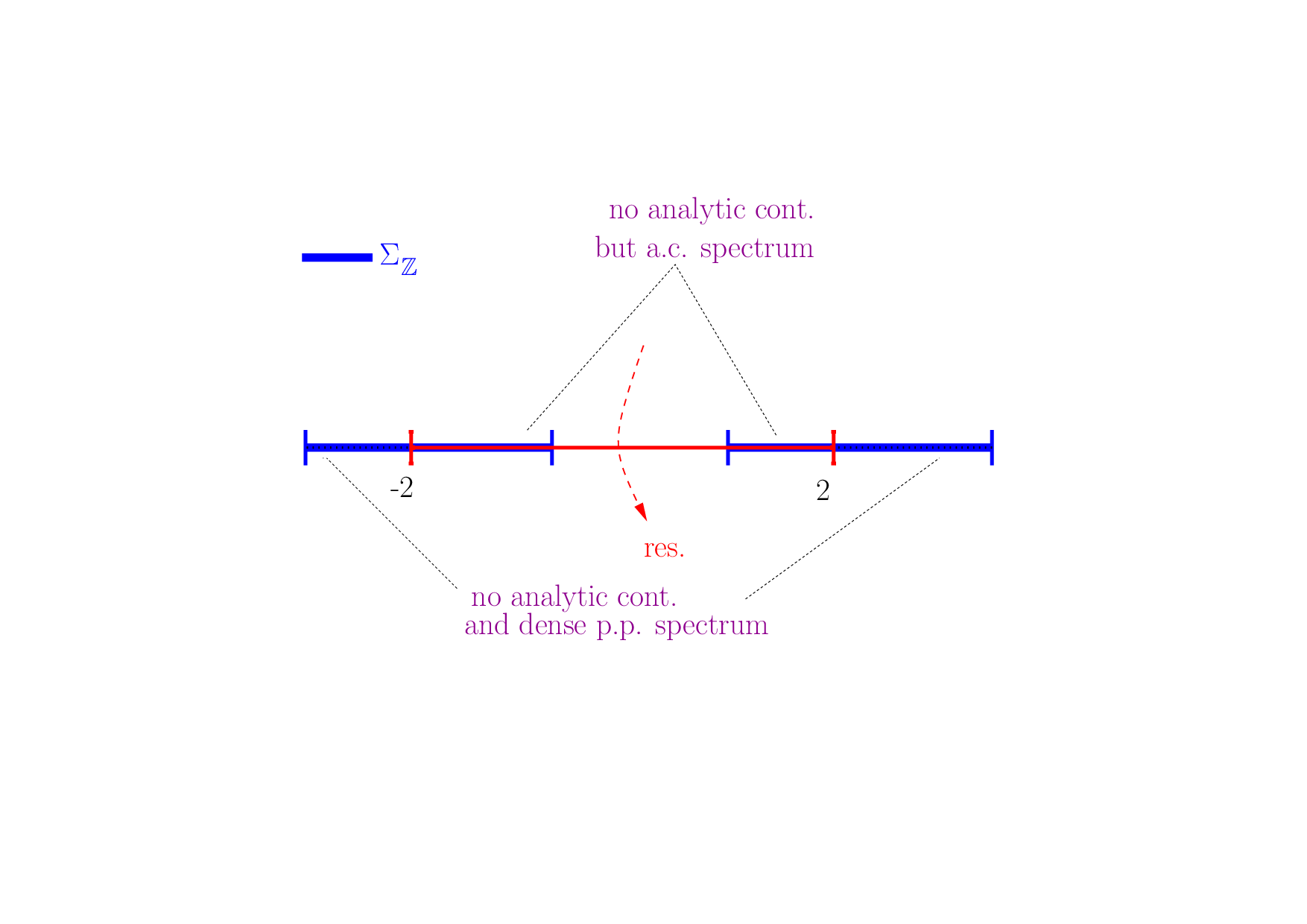}
  \caption{The analytic continuation of the resolvent and resonances
    for $H_\omega^\infty$}
  \label{fig:12}
\end{figure}
Let us also note here that if $0\in\,$supp$g$ (where $g$ is the
density of the random variables defining the random potential), then,
by~\eqref{eq:186}, one has $[-2,2]\subset\Sigma$. In this case, there
is no possibility to continue the resolvent of
$H_\omega^\infty$ to the lower half plane passing through $[-2,2]$. \\
Comparing Theorem~\ref{thr:26} to Theorem~\ref{thr:25}, we see that,
as the operator $H^\infty$, when continued through
$(-2,2)\cap\overset{\circ}{\Sigma}$, the operator $H^\infty_\omega$
does not have any resonances but for very different reasons.\\
When one does the continuation through $(-2,2)\setminus\Sigma$, one
sees that the number of resonances is finite; ``near'' the real axis,
the continuation of the function $\D
E\mapsto1-e^{i\theta(E)}\int_\R\frac{dN_\omega(\lambda)}{\lambda-E}$
has non trivial imaginary part and near $\infty$ it does not vanish.\\
Theorem~\ref{thr:26} also shows that the equation studied
in~\cite{MR2252707,PhysRevB.77.054203}, i.e., the equation
$\Xi_\omega(E)=0$, does not describe the resonances of
$H_\omega^\infty$ as is claimed in these papers: these resonances do
not exist as there is no analytic continuation of the resolvent of
$H_\omega^\infty$ through $(-2,2)\cap\Sigma$! As is shown in
Theorem~\ref{thr:21}, the solutions to the equation $\Xi_\omega(E)=0$
give an approximation to the resonances of $H^\N_{\omega,L}$ (see
Theorem~\ref{thr:21}).
\subsection{Outline of and reading guide to the paper}
\label{sec:outl-read-guide}
In the present section, we shall explain the main ideas leading to the
proofs of the results presented above.\\
In section~\ref{sec:proof-theorem}, we prove Theorem~\ref{thr:1}; this
proof is classical. As a consequence of the proof, one sees that, in
the case of the half-lattice $\N$ (resp. lattice $\Z$), the resonances
are the eigenvalues of a rank one (resp. two) perturbation of
$(-\Delta+V)_{|\llbracket 0,L\rrbracket}$ with Dirichlet b.c. The
perturbation depends in an explicit way on the resonance. This yields
a closed equation for the resonances in terms of the eigenvalues and
normalized eigenfunctions of the Dirichlet restriction
$(-\Delta+V)_{|\llbracket 0,L\rrbracket}$. To obtain a description of
the resonances we then are in need of a ``precise'' description of the
eigenvalues and normalized eigenfunctions. Actually the only
information needed on the normalized eigenfunctions is their weight at
the point $L$ (and the point $0$ in the full lattice case), $0$ and
$L$ being the endpoints of $\llbracket 0,L\rrbracket$. \\
In section~\ref{sec:char-reson}, we solve the two equations obtained
previously under the condition that the weight of the normalized
eigenfunctions at $L$ (and $0$) be much smaller than the spacing
between the Dirichlet eigenvalues. This condition entails that the
resonance equation we want to solve essentially factorizes and become
very easy to solve (see Theorems~\ref{thr:13},~\ref{thr:14}
and~\ref{thr:12}), i.e., it suffices to solve it near any given
Dirichlet eigenvalue.
\par For periodic potentials, the condition that the eigenvalue
spacing is much larger than the weight of the normalized
eigenfunctions at $L$ (and $0$) is not satisfied: both quantities are
of the same order of magnitude (see Theorem~\ref{thr:16}) for the
Dirichlet eigenvalues in the bulk of the spectrum, i.e., the vast
majority of them. This is a consequence of the extended nature of the
eigenfunctions in this case.  Therefore, we find another way to solve
the resonance equation. This way goes through a more precise
description of the Dirichlet eigenvalues and normalized eigenfunctions
which is the purpose of Theorems~\ref{thr:16}. We use this description
to reduce the resonance equation to an effective equation (see
Theorem~\ref{thr:17}) up to errors of order $O(L^{-\infty})$. It is
important to obtain errors of at most that size. Indeed, the effective
equation may have solutions to any order (the order is finite and only
depends on $V$ but it is unknown); thus, to obtain solutions to the
true equation from solutions to the effective equation with a good
precision, one needs the two equations to differ by at most
$O(L^{-\infty})$. We then solve the effective equation and, in
section~\ref{sec:proof-theorem-3}, prove the results of
section~\ref{sec:periodic-case}.
\par On the other hand, for random potentials, it is well known that
the eigenfunctions of the Dirichlet restriction
$(-\Delta+V)_{|\llbracket 0,L\rrbracket}$ are exponentially localized
and, for most of them localized, far from the edge of $\llbracket
0,L\rrbracket$. Thus, their weight at $L$ (and $0$ in the full lattice
case) is typically exponentially small in $L$; the eigenvalue spacing
however is typically of order $L^{-1}$. We can then use the results of
section~\ref{sec:char-reson} to solve the resonance equation. The real
part of a given resonance is directly related to a Dirichlet
eigenvalue and its imaginary part to the weight of the corresponding
eigenfunction at $L$ (and $0$ in the full lattice case). The main
difficulty is to find the asymptotic behavior of this weight. Indeed,
while it is known that, in the random case, eigenfunctions decay
exponentially away from a localization center and while it is known
that, for the full random Hamiltonian (i.e. the Hamiltonian on the
line or half-line with a random potential), at infinity, this decay
rate is given by the Lyapunov exponent, to the best of our knowledge,
before the present work, it was not known at which length scale this
Lyapunov behavior sets in (with a good probability). Answering this
question is the purpose of Theorems~\ref{thr:10} and~\ref{thr:20}
proved in section~\ref{sec:estim-growth-eigenf}: we show that, for the
$1$-dimensional Anderson model, for $\delta>0$ arbitrary, on a box of
size $L$ sufficiently large, all the eigenfunctions exhibit an
exponential decay (we obtain both an upper and a lower bound on the
eigenfunctions) at a rate equal to the Lyapunov exponent at the
corresponding energy (up to an error of size $\delta$) as soon as one
is at a distance $\delta L$ from the corresponding localization center. \\
These bounds give estimates on the weight of most eigenfunctions at
the point $L$ (and $0$ in the full lattice case): it is directly
related to the distance of the corresponding localization center to
the points $L$ (and $0$). One can then transform the known results on
the statistics of the (rescaled) eigenvalues and (rescaled)
localization centers into statistics of the (rescaled) resonances.
This is done in section~\ref{sec:proofs-theorems} and proves most of
the results in section~\ref{sec:random-case}.\\
Finally, section~\ref{sec:half-line-rand} is devoted to the study of
the full line Hamiltonian obtained from the free Hamiltonian on one
half-line and a random Hamiltonian on the other half-line; it contains
in particular the proof of Theorem~\ref{thr:26}.
\section{The analytic continuation of the resolvent}
\label{sec:proof-theorem}
Resonances for Jacobi matrices were considered in various works (see,
e.g.,~\cite{MR2164835,MR2869822} and references tehrein). For the sake
of completeness, we provide an independent proof of
Theorem~\ref{thr:1}. It follows standard ideas that were first applied
in the continuum setting, i.e., for partial differential operators
instead of finite difference operators (see, e.g.,~\cite{MR1115789} and
references therein).\\
The proof relies on the fact that the resolvent of free Laplace
operator can be continued holomorphically from $\C^+$ to
$\D\C\setminus\left((-\infty,-2]\cup[2,+\infty)\right)$ as an operator
valued function from $l^2_\text{comp}$ to $l^2_\text{loc}$. This is an
immediate consequence of the fact that, by discrete Fourier
transformation, $-\Delta$ is the Fourier multiplier by the function
$\theta\mapsto2\cos\theta$.\\
Indeed, for $-\Delta$ on $\ell^2(\Z)$ and Im$\,E>0$, one has, for
$(n,m)\in\Z$ (assume $n-m\geq0$)
\begin{equation}
  \label{eq:166}
  \begin{split}
    \langle\delta_n,(-\Delta-E)^{-1}\delta_m\rangle&=\frac1{2\pi}
    \int_0^{2\pi}\frac{e^{-i(n-m)\theta}}{2\cos\theta-E}d\theta
    =\frac1{2i\pi}\int_{|z|=1}\frac{z^{n-m}}{z^2-Ez+1}dz\\
    &=\frac1{2\sqrt{(E/2)^2-1}}\left(E/2-\sqrt{(E/2)^2-1}\right)^{n-m}
    =\frac{e^{i(n-m)\theta(E)}}{\sin\theta(E)}
  \end{split}
\end{equation}
where $E=2\cos\,\theta(E)$ and the determination $\theta=\theta(E)$ is
chosen so that Im$\,\theta>0$ and Re$\,\theta\in(-\pi,0)$ for
Im$\,E>0$. The determination satisfies
$\D\theta\left(\overline{E}\right)=\overline{\theta(E)}$.\\
The map $E\mapsto\theta(E)$ can continued analytically from $\C^+$ to
the cut plane $\D\C\setminus\left((-\infty,-2]\cup[2,+\infty)\right)$
as shown in Figure~\ref{fig:5}.
\begin{figure}[h]
  \centering
  \includegraphics[height=2.5cm]{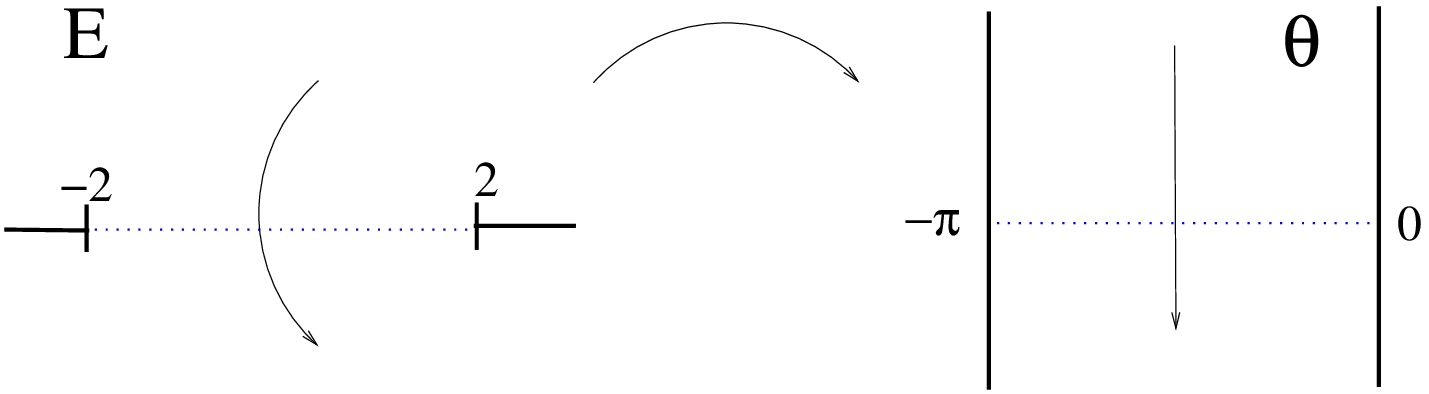}
  \caption{The mapping $E\mapsto\theta(E)$}
  \label{fig:5}
\end{figure}
\\ The continuation is one-to-one and onto from
$\D\C\setminus\left((-\infty,-2]\cup[2,+\infty)\right)$ to
$(-\pi,0)+i\R$. It defines a determination of
$E\mapsto\arccos(E/2)=\theta(E)$. \\
Clearly, using~\eqref{eq:166}, this continuation yields an analytic
continuation of $R_0^{\Z}:=(-\Delta-E)^{-1}$ from $\{$Im$\,E>0\}$ to
$\D\C\setminus\left((-\infty,-2]\cup[2,+\infty)\right)$ as an operator
from $l^2_\text{comp}$ to $l^2_\text{loc}$. \vskip.1cm\noindent
Let us now turn to the half-line operator, i.e., $-\Delta$ on $\N$ with
Dirichlet condition at $0$. Pick $E$ such that Im$\,E>0$ and set
$E=2\cos\theta$ where the determination $\theta=\theta(E)$ is chosen
as above. If for $v\in\C^\N$ bounded and $n\geq-1$, one sets
$v_{-1}=0$ and
\begin{equation}
  \label{eq:15}
  [R_0^\N(E)(v)]_n=\frac1{2i\sin\theta(E)}\sum_{j=-1}^n
  v_j\cdot\sin((n-j)\theta(E))
  -e^{i\theta(E)}\frac{\sin((n+1)\theta(E))}{2i\sin\,\theta(E)}
  \sum_{j\geq0}e^{ij\theta(E)}v_j.
\end{equation}
Then, for Im$\,E>0$, a direct computations shows that
\begin{enumerate}
\item for $v\in\ell^2(\N)$, the vector $R_0^\N(E)(v)$ is in the domain
  of the Dirichlet Laplacian on $\ell^2(\N)$
 , i.e., $[R_0^\N(E)(v)]_{-1}=0$;
\item for $n\geq0$, one checks that
  \begin{equation}
    \label{eq:16}
    [R_0^\N(E)(v)]_{n+1}+[R_0^\N(E)(v)]_{n-1}-E[R_0^\N(E)(v)]_n=v_n.
  \end{equation}
  %
%
\item $R_0^\N(E)$ defines a bounded map from $\ell^2(\N)$ to itself;
\end{enumerate}
Thus, $R_0^\N(E)$ is the resolvent of the Dirichlet Laplacian on $\N$
at energy $E$ for Im$\,E>0$.
Using the continuation of $E\mapsto\theta(E)$, formula~\eqref{eq:15}
yields an analytic continuation of the resolvent $R_0^\N(E)$ as an
operator from $l^2_\text{comp}$ to $l^2_\text{loc}$.
\begin{Rem}
  \label{rem:5}
  Note that the resolvent $R_0^\N(E)$ at an energy $E$ s.t. Im$\,E<0$
  is given by formula~\eqref{eq:15} where $\theta(E)$ is replaced by
  $-\theta(E)$. For~\eqref{eq:15}, one has to assume that
  $(v_j)_{j\in\N}$ decays fast enough at $\infty$.
\end{Rem}
\noindent To deal with the perturbation $V$, we proceed in the same
way on $\Z$ and on $\N$.  Set $V^L=V\car_{\llbracket 0,L\rrbracket}$
(seen as a function on $\N$ or $\Z$ depending on the case).  Letting
$R_0(E)$ be either $R_0^\Z(E)$ or $R_0^\N(E)$, we compute
\begin{equation*}
  -\Delta+V^L-E=(-\Delta-E)(1+R_0(E)V^L)=(1+V^LR_0(E))(-\Delta_L-E).
\end{equation*}
Thus, it suffices to check that the operator $R_0(E)V^L$
(resp. $V^LR_0(E)$) can be analytically continued as an operator from
$l^2_\text{loc}$ to $l^2_\text{loc}$ (resp. $l^2_\text{comp}$ to
$l^2_\text{comp}$). This follows directly from~\eqref{eq:15} and the
fact $V^L$ has finite rank.\\
To complete the proof of Theorem~\ref{thr:1}, we just note that, as
\begin{itemize}
\item $E\mapsto R_0(E)V^L$ (resp. $E\mapsto V^LR_0(E)$) is a finite
  rank operator valued function analytic on the connected set
  $\D\C\setminus\left((-\infty,-2]\cup[2,+\infty)\right)$,
\item $-1$ is not an eigenvalue of $R_0(E)V^L$ (resp. $V^LR_0(E)$) for
  Im$\,E>0$,
\end{itemize}
by the Fredholm principle, the set of energies $E$ for which $-1$ is
an eigenvalue of $R_0(E)V^L$ (resp. $V^LR_0(E)$) is discrete. Hence,
the set of resonances is discrete.\\
This completes the proof of the first part of Theorem~\ref{thr:1}. To
prove the second part, we will first write a characteristic equation
for resonances. The bound on the number of resonances will then be
obtained through a bound on the number of solutions to this equation.
\subsection{A characteristic equation for resonances}
\label{sec:char-equat-reson}
In the literature, we did not find a characteristic equation for the
resonances in a form suitable for our needs. The characteristic
equation we derive will take different forms depending on whether we
deal with the half-line or the full line operator. But in both cases,
the coefficients of the characteristic equation will be constructed
from the spectral data (i.e. the eigenvalues and eigenfunctions) of
the operator $H_L$ (see Remark~\ref{rem:6}).
\subsection{In the half-line case}
\label{sec:half-line-case}
We first consider $H^{\N}_L$ on $\ell^2(\N)$ and prove
\begin{Th}
  \label{thr:11}
  Consider the operator $H_L$ defined as $H^{\N}_L$ restricted to
  $\llbracket 0,L\rrbracket$ with Dirichlet boundary conditions at $L$
  and define
  \begin{itemize}
  \item $(\lambda_j)_{0\leq j\leq L}=(\lambda_j(L))_{0\leq j\leq L}$ are the Dirichlet
    eigenvalues of $H^{\N}_L$ ordered so that
    $\lambda_j<\lambda_{j+1}$;
  \item $a^\N_j=a^\N_j(L)=|\varphi_j(L)|^2$ where
    $\varphi_j=(\varphi_j(n))_{0\leq n\leq L}$ is a normalized eigenvector
    associated to $\lambda_j$.
  \end{itemize}
  Then, an energy $E$ is a resonance of $H^\N_L$ if and only if
  \begin{equation}
    \label{eq:1}
    S_L(E):=\sum_{j=0}^L\frac{a^\N_j}{\lambda_j-E}=-e^{-i\theta(E)},\quad
    E=2\cos\theta(E),
  \end{equation}
  the determination of $\theta(E)$ being chosen so that
  Im$\,\theta(E)>0$ and Re$\,\theta(E)\in(-\pi,0)$ when Im$\,E>0$.
\end{Th}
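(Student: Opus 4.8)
The plan is to characterize resonances of $H^\N_L$ as energies where the Dirichlet Laplacian resolvent $R_0^\N(E)$ fails to invert $1+V^L R_0^\N(E)$, and then transfer the problem onto the finite box $\llbracket 0,L\rrbracket$ where $V^L$ lives. More precisely, from the proof of Theorem~\ref{thr:1}, $E$ is a resonance iff $-1$ is an eigenvalue of $V^L R_0^\N(E)$ (equivalently of $R_0^\N(E)V^L$). Since $V^L=V\car_{\llbracket0,L\rrbracket}$ is supported in the box, the nontrivial part of this operator acts only on $\ell^2(\llbracket0,L\rrbracket)$; so I would try to rewrite the resonance condition purely in terms of the operator $H_L$ (the box operator with Dirichlet conditions at $0$ and $L$, as in Remark~\ref{rem:6}) together with the ``boundary'' contribution coming from the half-line at the right end $L$.

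The key computational step is to use the explicit formula~\eqref{eq:15} for $R_0^\N(E)$. Write its kernel as a sum of the free Dirichlet kernel on $\N$ (the ``$\sin$'' part) plus the rank-one outgoing term $-e^{i\theta}\tfrac{\sin((n+1)\theta)}{\sin\theta}\sum_{j\ge0}e^{ij\theta}v_j$. Restricted to the box, the $\sin$-part is precisely the resolvent of the finite Dirichlet operator $(H_L-E)^{-1}$ on $\llbracket 0,L\rrbracket$ built from free Laplacian (Dirichlet at $0$ and at $L+1$); adding $V^L$ and inverting, one finds that the Birman–Schwinger-type condition ``$1+R_0^\N(E)V^L$ not invertible'' is equivalent to a scalar equation because the only obstruction comes through the single rank-one outgoing mode. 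Concretely, one writes $(H_L-E)^{-1}$ via its spectral decomposition $\sum_j \frac{\langle\cdot,\varphi_j\rangle\varphi_j}{\lambda_j-E}$, evaluates the relevant matrix elements at the endpoint $n=L$ (which is where the outgoing term couples), and the factor $|\varphi_j(L)|^2=a_j^\N$ appears. Matching this against the coefficient $e^{i\theta(E)}$ in the outgoing term, after using $e^{i\theta}\cdot e^{-i\theta}$-type simplifications and the identity relating $\sin((L+1)\theta)$ to the Dirichlet resolvent at the edge, yields exactly $\sum_j \frac{a_j^\N}{\lambda_j-E}=-e^{-i\theta(E)}$.

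Concretely I would proceed as follows. First, set up the Schur-complement/Feshbach reduction: since $V^L$ has range in $\ell^2(\llbracket0,L\rrbracket)$, invertibility of $1+R_0^\N(E)V^L$ on $\ell^2_{\mathrm{loc}}$ reduces to invertibility of $1+\Pi R_0^\N(E)V^L$ on the finite-dimensional space $\ell^2(\llbracket0,L\rrbracket)$, where $\Pi$ is the coordinate projection. Second, decompose $\Pi R_0^\N(E)\Pi = G_L(E) + \kappa(E)\,\psi\otimes\psi$, where $G_L(E)$ is the free finite-box Dirichlet resolvent kernel, $\psi=(e^{ij\theta(E)})_{0\le j\le L}$ (up to normalization and endpoint corrections), and $\kappa(E)$ carries the $-e^{i\theta}/\sin\theta$ prefactor. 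Third, observe that $(1+G_L(E)V^L) = (H_L-E)^{-1}(H_L^{\mathrm{free}}\text{-shift})$-type factorization shows $1+G_LV^L$ is invertible away from $\sigma(H_L)=\{\lambda_j\}$ with inverse expressible via $(H_L-E)^{-1}$; hence the full operator $1+(G_L+\kappa\,\psi\otimes\psi)V^L$ is non-invertible iff a scalar condition of the form $1+\kappa(E)\langle\psi, V^L(1+G_LV^L)^{-1}\psi\rangle=0$ holds (the standard rank-one perturbation determinant identity). Fourth, compute that scalar: using the resolvent identity one rewrites $V^L(1+G_LV^L)^{-1}=V^L - V^L(H_L-E)^{-1}V^L$ acting appropriately, and evaluates the pairing against $\psi$; the crucial point is that only the value of eigenfunctions at the endpoint $n=L$ survives (because $\psi_j$ grows like $e^{ij\theta}$ and the box/outgoing matching happens at $L$), producing $\sum_j a_j^\N/(\lambda_j-E)$. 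Fifth, simplify the $\theta$-dependent prefactors (using $E=2\cos\theta$, $\sin\theta = (e^{i\theta}-e^{-i\theta})/(2i)$, and $e^{i(L+1)\theta}$ cancellations, noting $2L\equiv 0$ or the relevant congruence is not even needed here) to land on $S_L(E)=-e^{-i\theta(E)}$. Finally, check the determination of $\theta$ is consistent with that fixed in Theorem~\ref{thr:1}, and note the equivalence is genuinely "iff" since each implication is reversible.

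The main obstacle I expect is the bookkeeping in the rank-one reduction: correctly isolating the single outgoing mode from formula~\eqref{eq:15}, tracking the endpoint boundary terms (the $v_{-1}=0$ convention and the shift between Dirichlet-at-$L$ versus Dirichlet-at-$L+1$), and making sure the prefactors $\tfrac{\sin((n+1)\theta)}{\sin\theta}$ combine cleanly so that the messy $\theta$-dependence collapses to the single factor $-e^{-i\theta(E)}$. A secondary subtlety is handling the energies $E=\lambda_j$ themselves: there $G_L(E)$ has a pole, so one must argue (e.g. by analytic continuation, or by checking directly that $\lambda_j$ is a resonance iff $a_j^\N=0$, which cannot happen for a Dirichlet eigenfunction of a Jacobi matrix) that the characteristic equation~\eqref{eq:1}, interpreted with its poles, captures resonances at those points correctly. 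Once the algebra is organized, everything else is routine linear algebra of finite-rank perturbations.
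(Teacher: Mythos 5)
Your overall plan (reduce to the box via Birman--Schwinger, peel off a rank-one piece, and use the rank-one determinant identity together with the spectral decomposition of $(H_L-E)^{-1}$) can be made to work, but as written it has a concrete error at its pivot point, and it is in any case a heavier route than the paper's. The paper argues directly on resonant states: if $R_0^\N(E)V^Lu=-u$, then for $n\geq L+1$ the state is purely outgoing, $u_n=\beta e^{in\theta(E)}$, hence $u_{L+1}=e^{i\theta(E)}u_L$; so $E$ is a resonance iff $E$ is an eigenvalue of the finite Jacobi matrix $H_L+e^{i\theta(E)}\,\delta_L\otimes\delta_L$, and the standard rank-one (Aronszajn--Krein) formula immediately gives $1+e^{i\theta(E)}\langle\delta_L,(H_L-E)^{-1}\delta_L\rangle=0$, i.e.\ \eqref{eq:1}. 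No decomposition of the free resolvent is needed.

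The gap in your version: the decomposition $\Pi R_0^\N(E)\Pi=G_L(E)+\kappa(E)\,\psi\otimes\psi$ with $\psi=(e^{ij\theta(E)})_{0\le j\le L}$ is false, and it is not a matter of ``normalization and endpoint corrections''. Both kernels share the left solution $u_-(j)=\sin((j+1)\theta(E))$ and differ only in the right solution (outgoing exponential versus the sine vanishing at $L+1$), so their difference is proportional to $u_-\otimes u_-$, i.e.\ the rank-one direction is the \emph{sine} vector, not the outgoing exponential. Consequently your stated mechanism for the key cancellation (``only the endpoint value survives because $\psi_j$ grows like $e^{ij\theta}$'') is not an argument and would not produce \eqref{eq:1} with your $\psi$. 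The correct reason the pairing localizes at $n=L$ is that $u_-$ solves the free equation on the box except at the right edge: $(-\Delta^{\mathrm{free}}_{\llbracket0,L\rrbracket}-E)u_-=-\sin((L+2)\theta)\,\delta_L$. Using your (correct) identity $(1+G_LV^L)^{-1}=1-(H_L-E)^{-1}V^L$, the scalar condition collapses to $1-\kappa\,\sin((L+2)\theta)\bigl[\sin((L+1)\theta)+\sin((L+2)\theta)S_L(E)\bigr]=0$, and with $\kappa=-e^{i(L+2)\theta}/\bigl(\sin\theta\,\sin((L+2)\theta)\bigr)$ this simplifies, via $\sin\theta+e^{i(L+2)\theta}\sin((L+1)\theta)=e^{i(L+1)\theta}\sin((L+2)\theta)$, exactly to $S_L(E)=-e^{-i\theta(E)}$. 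You must also treat the exceptional energies where $\sin((L+2)\theta(E))=0$ (poles of the free box resolvent $G_L$, which you did not mention) in addition to $E\in\{\lambda_j\}$; both sets are discrete and real, so meromorphy (or the sign argument on imaginary parts, plus $a^\N_j>0$ as in \eqref{eq:20}) closes the ``iff'' there, but this needs to be said. With these repairs your route is sound, just considerably longer than the outgoing-boundary-condition argument the paper uses.
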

\noindent Let us note that
\begin{equation}
  \label{eq:20}
  \forall0\leq j\leq L,\ a^\N_j(L)>0 \quad\text{and}\quad
  \sum_{j=0}^La^\N_j(L)=\sum_{j=0}^L|\varphi_j(L)|^2=1.
\end{equation}
\begin{proof}[Proof of Theorem~\ref{thr:11}]
  By the proof of the first statement of Theorem~\ref{thr:1} (see the
  beginning of section~\ref{sec:proof-theorem}), we know that an
  energy $E$ is a resonance if and only if $-1$ if an eigenvalue of
  $R_0(E)V^L$ where $R_0(E)$ is defined by~\eqref{eq:15}. Pick $E$ an
  resonance and let $u=(u_n)_{n\geq0}$ be a resonant state that is an
  eigenvector of $R_0(E)V^L$ associated to the eigenvalue $-1$. As
  $V^L_n=0$ for $n\geq L+1$, equation~\eqref{eq:15} yields that, for
  $n\geq L+1$, $u_n=\beta e^{in\theta(E)}$ for some fixed
  $\beta\in\C^*$. As $u=-R_0(E)V^Lu$, for $n\geq L+1$, it satisfies
  $u_{n+1}+u_{n-1}=E\,u_n$.  Thus, $u_{L+1}=e^{i\theta(E)}u_L$ and
  by~\eqref{eq:16}, $u$ is a solution to the eigenvalues problem
  \begin{equation*}
    \begin{cases}
      u_{n+1}+u_{n-1}+V_nu_n=E\,u_n,\ \forall n\in\llbracket
      0,L\rrbracket \\ u_{-1}=0,\quad u_{L+1}=e^{i\theta(E)}u_L
    \end{cases}
  \end{equation*}
  This can be equivalently be rewritten as
  \begin{equation}
    \label{eq:19}
    \begin{pmatrix}
      V_0 & 1 & 0 & \cdots &0 \\
      1 & V_1 & 1 & 0 & &\\
      \vdots & \ddots &\ddots &\ddots &\\
      &0 &1 & V_{L-1} &1\\
      0 & \cdots &0 & 1 &V_L+e^{i\theta(E)} \\
    \end{pmatrix}
    \begin{pmatrix}
      u_0\\ \\ \vdots\\ \\ u_L
    \end{pmatrix}= E\begin{pmatrix} u_0\\ \\ \vdots \\ \\ u_L
    \end{pmatrix}
  \end{equation}
  The matrix in~\eqref{eq:19} is the Dirichlet restriction of
  $H^{\N}_L$ to $\llbracket 0,L\rrbracket$ perturbed by the rank one
  operator $e^{i\theta(E)}\delta_L\otimes\delta_L$. Thus, by rank one
  perturbation theory (see, e.g.,~\cite{MR97c:47008}), an energy $E$ is
  a resonance if and only if if satisfies~\eqref{eq:1}.\\
  This completes the proof of Theorem~\ref{thr:11}.
\end{proof}
\noindent Let us now complete the proof of Theorem~\ref{thr:1} for the
operator on the half-line. Let us first note that, for Im$\,E>0$, the
imaginary part of the left hand side of~\eqref{eq:1} is positive
by~\eqref{eq:18}. On the other hand, the imaginary part of the right
hand side of~\eqref{eq:1} is equal to
$-e^{\text{Im}\,\theta(E)}\sin($Re$\,\theta(E))$ and, thus, is
negative (recall that Re$\,\theta(E)\in(-\pi,0)$ (see
fig.~\ref{fig:1}). Thus, as already underlined, equation~\eqref{eq:1}
has no solution in the upper half-plane or on the interval $(-2,2)$.\\
Clearly, equation~\eqref{eq:1} is equivalent to the following
polynomial equation of degree $2L+2$ in the variable
$z=e^{-i\theta(E)}$
\begin{equation}
  \label{eq:18}
  \prod_{k=0}^L\left(z^2-2\lambda_k z+1\right)-
  \sum_{j=0}^La^\N_j\prod_{\substack{0\leq k\leq L\\k\not=j}}
  \left(z^2-2\lambda_k z+1\right)=0.
\end{equation}
We are looking for the solutions to~\eqref{eq:18} in the upper
half-plane. As the polynomial in the right hand side of~\eqref{eq:18}
has real coefficients, its zeros are symmetric with respect to the
real axis. Moreover, one notices that, by~\eqref{eq:20}, $0$ is a
solution to~\eqref{eq:18}. Hence, the number of solutions
to~\eqref{eq:18} in the upper half-plane is bounded by $L$. This
completes the proof of Theorem~\ref{thr:1}.
\subsection{On the whole line}
\label{sec:whole-line}
Now, consider $H^\Z_L$ on $\ell^2(\Z)$. We prove
\begin{Th}
  \label{thr:24}
  Using the notations of Theorem~\ref{thr:11}, an energy $E$ is a
  resonance of $H^\Z_L$ if and only if
  \begin{equation}
    \label{eq:135}
    \text{det}
    \left(\sum_{j=0}^L\frac1{\lambda_j-E}
      \begin{pmatrix} |\varphi_j(L)|^2& \overline{\varphi_j(0)}
        \varphi_j(L) \\ \varphi_j(0) \overline{\varphi_j(L)} &
        |\varphi_j(0)|^2
      \end{pmatrix}+e^{-i\theta(E)}\right)=0
  \end{equation}
  where $det(\cdot)$ denotes the determinant of a square matrix,
  $E=2\cos\theta(E)$ and the determination of $\theta(E)$ is chosen as
  in Theorem~\ref{thr:11}.
\end{Th}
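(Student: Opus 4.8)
\textbf{Proof proposal for Theorem~\ref{thr:24}.} The plan is to mimic the half-line argument of Theorem~\ref{thr:11}, but now the resonant state leaks out to $+\infty$ on the right \emph{and} to $-\infty$ on the left, so the rank-one perturbation $e^{i\theta(E)}\delta_L\otimes\delta_L$ gets replaced by a rank-two perturbation supported at the two edges $\{0,L\}$ of the interval $\llbracket 0,L\rrbracket$. First I would recall from the proof of Theorem~\ref{thr:1} that $E$ is a resonance of $H^\Z_L$ if and only if $-1$ is an eigenvalue of $R_0^\Z(E)V^L$; picking a resonant state $u=(u_n)_{n\in\Z}$ with $u=-R_0^\Z(E)V^Lu$, the equation $u_{n+1}+u_{n-1}=E u_n$ holds for all $n\leq -1$ and all $n\geq L+1$, so on these two half-lines $u$ is a linear combination of $e^{\pm in\theta(E)}$. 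The $\ell^2$/outgoing branch selected by the continuation of $R_0^\Z$ (see~\eqref{eq:166}) forces $u_n=\gamma_- e^{-in\theta(E)}$ for $n\leq -1$ and $u_n=\gamma_+ e^{in\theta(E)}$ for $n\geq L+1$. Matching at the two edges then gives $u_{-1}=e^{i\theta(E)}u_0$ and $u_{L+1}=e^{i\theta(E)}u_L$, so $u|_{\llbracket 0,L\rrbracket}$ solves
\begin{equation*}
  \left(H_L+e^{i\theta(E)}\,\delta_0\otimes\delta_0
  +e^{i\theta(E)}\,\delta_L\otimes\delta_L\right)u=E\,u,
\end{equation*}
where $H_L$ is the Dirichlet restriction to $\llbracket 0,L\rrbracket$ from Remark~\ref{rem:6}.

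Next I would apply rank-two (finite-rank) perturbation theory. Write the perturbation as $B\,C$ with $C:\ell^2(\llbracket0,L\rrbracket)\to\C^2$, $Cu=(u_0,u_L)$, and $B=e^{i\theta(E)}C^*$; the Krein-type formula says that $E$ (not an eigenvalue of $H_L$, which is generic and the excluded cases are handled separately) is an eigenvalue of $H_L+BC$ iff
\begin{equation*}
  \det\left(I_2+e^{i\theta(E)}\,C(H_L-E)^{-1}C^*\right)=0.
\end{equation*}
Now $C(H_L-E)^{-1}C^*$ is the $2\times2$ matrix whose entries are $\langle\delta_a,(H_L-E)^{-1}\delta_b\rangle$ for $a,b\in\{0,L\}$, and expanding $(H_L-E)^{-1}$ in the orthonormal basis of Dirichlet eigenvectors $\varphi_j$ gives
\begin{equation*}
  C(H_L-E)^{-1}C^*=\sum_{j=0}^L\frac1{\lambda_j-E}
  \begin{pmatrix}|\varphi_j(0)|^2 & \varphi_j(0)\overline{\varphi_j(L)}\\
    \overline{\varphi_j(0)}\varphi_j(L) & |\varphi_j(L)|^2\end{pmatrix}.
\end{equation*}
Since $\det(I_2+e^{i\theta}M)=e^{2i\theta}\det(e^{-i\theta}I_2+M)$ and $e^{2i\theta(E)}\neq0$, vanishing of this determinant is equivalent to~\eqref{eq:135} (after the harmless relabelling $0\leftrightarrow L$ in the matrix, which just conjugates by $\begin{pmatrix}0&1\\1&0\end{pmatrix}$ and leaves the determinant unchanged). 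That yields the stated characteristic equation.

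The main obstacle, as in the half-line case, is bookkeeping the degenerate situations rather than the core computation: one must check that the resonant state cannot be supported entirely outside $\llbracket0,L\rrbracket$ (it cannot, since $V^L u$ would vanish and then $u=-R_0^\Z(E)V^Lu=0$), and one must make sure that the Krein formula is being used at energies $E$ that are genuinely resonances — in particular one should argue that if $E$ happens to coincide with some Dirichlet eigenvalue $\lambda_j$, then~\eqref{eq:135}, read as the vanishing of the determinant of the matrix with the pole subtracted out (equivalently, clearing denominators to get a polynomial identity in $z=e^{-i\theta(E)}$ exactly as in~\eqref{eq:18}), still correctly detects resonances; this is the analogue of the remark that $0$ solves~\eqref{eq:18}. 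Finally I would note that clearing the denominators in~\eqref{eq:135} produces a polynomial of degree $2L+2$ in $z=e^{-i\theta(E)}$ with real coefficients whose zero set is symmetric about the real axis and which vanishes at $z=0$ (using $\sum_j\varphi_j(0)\overline{\varphi_j(L)}=0$ and the normalization~\eqref{eq:20}), giving at most $L$ solutions in the upper half-plane and thereby also completing the second half of Theorem~\ref{thr:1} on the whole line.
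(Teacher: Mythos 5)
Your derivation is correct and is essentially the paper's own proof: the paper likewise reduces a resonance of $H^\Z_L$ to a nontrivial solution of the problem on $\llbracket 0,L\rrbracket$ with the boundary conditions $u_{-1}=e^{i\theta(E)}u_0$, $u_{L+1}=e^{i\theta(E)}u_L$, i.e. an eigenvalue of $H_L$ perturbed by $e^{i\theta(E)}$ at both corners, and its ``rank one perturbation used twice'' computation is exactly your $2\times2$ Krein determinant $\det\bigl(I_2+e^{i\theta(E)}C(H_L-E)^{-1}C^*\bigr)$ expanded, which after factoring out $e^{2i\theta(E)}$ gives~\eqref{eq:135}. Only your closing aside on the root count is slightly off (and concerns the completion of Theorem~\ref{thr:1}, not the statement at hand): simply clearing denominators in the $2\times2$ determinant yields a polynomial of much higher degree, and the paper first reduces the double sum via the simplicity of the Dirichlet eigenvalues, identity~\eqref{eq:137}, to obtain the polynomial~\eqref{eq:138} of degree $2L+3$ in $z=e^{-i\theta(E)}$ with a root of order two at $z=0$, from which the bound of $L$ poles follows.
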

\noindent So, an energy $E$ is a resonance of $H^\Z_L$ if and only if
$-e^{-i\theta(E)}$ belongs to the spectrum of the $2\times2$ matrix
\begin{equation}
  \label{eq:142}
  \Gamma_L(E):=\sum_{j=0}^L\frac1{\lambda_j-E}
  \begin{pmatrix} |\varphi_j(L)|^2& \overline{\varphi_j(0)}
    \varphi_j(L) \\ \varphi_j(0) \overline{\varphi_j(L)} &
    |\varphi_j(0)|^2
  \end{pmatrix}.
\end{equation}
\begin{proof}[Proof of Theorem~\ref{thr:24}]
  The proof is the same as that of Theorem~\ref{thr:11} except that
  now $E$ is a resonance if there exists $u$ a non trivial solution to
  the eigenvalues problem
  \begin{equation*}
    \begin{cases}
      u_{n+1}+u_{n-1}+V_nu_n=E\,u_n,\ \forall n\in\llbracket
      0,L\rrbracket \\ u_{-1}=e^{i\theta(E)}u_{0}\text{ and }
      u_{L+1}=e^{i\theta(E)}u_L
    \end{cases}
  \end{equation*}
  This can be equivalently be rewritten as
  \begin{equation*}
    \begin{pmatrix}
      V_0+e^{i\theta(E)} & 1 & 0 & \cdots &0 \\
      1 & V_1 & 1 & 0 & &\\
      \vdots & \ddots &\ddots &\ddots &\\
      &0 &1 & V_{L-1} &1\\
      0 & \cdots &0 & 1 &V_L+e^{i\theta(E)} \\
    \end{pmatrix}
    \begin{pmatrix}
      u_{0}\\ \\ \vdots\\ \\ u_L
    \end{pmatrix}= E\begin{pmatrix} u_{0}\\ \\ \vdots \\ \\ u_L
    \end{pmatrix}
  \end{equation*}
  Thus, using rank one perturbations twice, we find that an energy $E$
  is a resonance if and only if
  \begin{equation*}
    \left( 1+e^{i\theta(E)}
      \sum_{j=0}^L\frac{|\varphi_j(0)|^2}{\lambda_j-E} \right)
    \left(1+e^{i\theta(E)}
      \sum_{j=0}^L\frac{|\varphi_j(L)|^2}{\lambda_j-E}
    \right)\\=e^{2i\theta(E)}\sum_{0\leq j,j'\leq L} \frac{\varphi_j(L)
      \varphi_{j'}(0) \overline{\varphi_{j'}(L)\varphi_j(0)}}
    {(\lambda_j-E)(\lambda_{j'}-E)},
  \end{equation*}
  that is, if and only is~\eqref{eq:135} holds. This completes the
  proof of Theorem~\ref{thr:24}.
\end{proof}
\noindent Let us now complete the proof of Theorem~\ref{thr:1} for the
operator on the full-line. Let us first show that~\eqref{eq:135} has
no solution in the upper half-plane. Therefore, if $-e^{-i\theta(E)}$
belongs to the spectrum of the matrix defined by~\eqref{eq:135} and if
$u\in\C^2$ is a normalized eigenvector associated to
$-e^{-i\theta(E)}$, one has
\begin{equation*}
  \sum_{j=0}^L\frac1{\lambda_j-E}
  \left|\left\langle\begin{pmatrix} \varphi_j(L)\\
        \varphi_j(0) \end{pmatrix}
      ,u\right\rangle\right|^2=-e^{-i\theta(E)}.
\end{equation*}
This is impossible in the upper half-plane and on $(-2,2)$ as the two
sides of the equation have imaginary parts of opposite signs.\\
Note that
\begin{equation*}
  \sum_{j=0}^L\begin{pmatrix} \varphi_j(L)\\
    \varphi_j(0) \end{pmatrix}\overline{\begin{pmatrix}\varphi_j(L) &
      \varphi_j(0)\end{pmatrix}}=\begin{pmatrix} 1&0\\0&1 \end{pmatrix}.
\end{equation*}
Note also that $-e^{-i\theta(E)}$ is an eigenvalue of~\eqref{eq:135}
if and only if it satisfies
\begin{equation}
  \label{eq:122}
  1+e^{i\theta(E)}
  \sum_{j=0}^L\frac{|\varphi_j(L)|^2+|\varphi_j(0)|^2}{\lambda_j-E}
  =-\frac12e^{2i\theta(E)}\sum_{0\leq j,j'\leq L}\frac{\left|
      \begin{matrix}\varphi_j(0)&\varphi_{j'}(0)\\
        \varphi_j(L)&\varphi_{j'}(L) \end{matrix}\right|^2}
  {(\lambda_j-E)(\lambda_{j'}-E)}.
\end{equation}
As the eigenvalues of $H_L$ are simple, one computes
\begin{equation}
  \label{eq:137}
  \sum_{0\leq j,j'\leq L}\frac{\left|
      \begin{matrix}\varphi_j(0)&\varphi_{j'}(0)\\
        \varphi_j(L)&\varphi_{j'}(L) \end{matrix}\right|^2}
  {(\lambda_j-E)(\lambda_{j'}-E)}= 2\sum_{0\leq j\leq
    L}\frac{1}{\lambda_j-E}
  \sum_{j'\not=j}\frac1{\lambda_{j'}-\lambda_j}\left|
    \begin{matrix}\varphi_j(0)&\varphi_{j'}(0)\\
      \varphi_j(L)&\varphi_{j'}(L) \end{matrix}\right|^2.
\end{equation}
Thus, equation~\eqref{eq:122} is equivalent to the following
polynomial equation of degree $2(L+1)$ in the variable
$z=e^{-i\theta(E)}$
\begin{equation}
  \label{eq:138}
  z\prod_{k=0}^L\left(z^2-\lambda_k z+1\right)-
  \sum_{j=0}^L(2a^\Z_j z+b^\Z_j)\prod_{\substack{0\leq k\leq L\\k\not=j}}
  \left(z^2-\lambda_k z+1\right)=0.
\end{equation}
where we have defined
\begin{equation}
  \label{eq:145}
  \begin{split}
    a^\Z_j&:=\frac12\left(|\varphi_j(L)|^2+|\varphi_j(0)|^2\right)
    =\frac12 \left\| \begin{pmatrix}
        \varphi_j(L)\\\varphi_j(0)\end{pmatrix}\right\|^2 =\frac12
    \left\| \begin{pmatrix} |\varphi_j(L)|^2& \overline{\varphi_j(0)}
        \varphi_j(L) \\ \varphi_j(0) \overline{\varphi_j(L)} &
        |\varphi_j(0)|^2 \end{pmatrix} \right\|.
  \end{split}
\end{equation}
and
\begin{equation*}
  b^\Z_j:=\sum_{j'\not=j}\frac1{\lambda_{j'}-\lambda_j}\left|
    \begin{matrix}\varphi_j(0)&\varphi_{j'}(0)\\
      \varphi_j(L)&\varphi_{j'}(L) \end{matrix}\right|^2.
\end{equation*}
The sequence $(a^\Z_j)_j$ also satisfies~\eqref{eq:20}.  Taking $|E|$
to $+\infty$ in~\eqref{eq:137}, one notes that
\begin{equation}
  \label{eq:98}
  \sum_{j=0}^Lb^\Z_j=0\quad\text{and}\quad\sum_{j=0}^L\lambda_j b^\Z_j=
  -\frac12\sum_{0\leq j,j'\leq L}\left|
    \begin{matrix}\varphi_j(0)&\varphi_{j'}(0)\\
      \varphi_j(L)&\varphi_{j'}(L) \end{matrix}\right|^2=-1.
\end{equation}
We are looking for the solutions to~\eqref{eq:138} in the upper
half-plane. As the polynomial in the right hand side of~\eqref{eq:138}
has real coefficients, its zeros are symmetric with respect to the
real axis. Moreover, one notices that, by~\eqref{eq:98}, $0$ is a root
of order two of the polynomial in~\eqref{eq:138}. Hence, as the
polynomial has degree $2L+3$, the number of solutions
to~\eqref{eq:138} in the upper half-plane is bounded by $L$. This
completes the proof of Theorem~\ref{thr:1}.

\section{General estimates on resonances}
\label{sec:char-reson}
By Theorems~\ref{thr:11} and~\ref{thr:24}, we want to solve
equations~\eqref{eq:1} and~\eqref{eq:135} in the lower half-plane. We
first derive some general estimates for zones in the lower half-plane
free of solutions to equations~\eqref{eq:1} and~\eqref{eq:135} (i.e.
resonant free zones for the operators $H^{\N}_L$ and $H^\Z_L$) and
later a result on the existence of solutions to equations~\eqref{eq:1}
and~\eqref{eq:135} (i.e. resonances for the operators $H^{\N}_L$ and
$H^\Z_L$).
\subsection{General estimates for resonant free regions}
\label{sec:gener-estim-reson}
We keep the notations of Theorems~\ref{thr:11} and~\ref{thr:24}. To
simplify the notations in the theorems of this section, we will write
$a_j$ for either $a^\N_j$ when solving~\eqref{eq:1} or $a^\Z_j$ when
solving~\eqref{eq:135}. We will specify the superscript only when
there is risk of confusion.\\
We first prove
\begin{Th}
  \label{thr:13}
  Fix $\delta>0$. Then, there exists $C>0$ (independent of $V$ and
  $L$) such that, for any $L$ and $j\in\{0,\cdots,L\}$ such that
  $-4+\delta\leq \lambda_{j-1}+\lambda_j<\lambda_{j+1}+\lambda_j
  \leq4-\delta$, equations~\eqref{eq:1} and~\eqref{eq:135} have no
  solution in the set
  \begin{equation}
    \label{eq:22}
    U_j:=\left\{E\in\C;\
      \begin{aligned}
        \text{Re}\,E&\in\left[\frac{\lambda_j+
            \lambda_{j-1}}2,\frac{\lambda_j+\lambda_{j+1}}2\right]\\
        \quad 0\geq C\cdot\theta'_\delta\,&\text{Im}\,E>-a_j\,d^2_j\,
        |\sin\,\text{Re}\;\theta(E)|
      \end{aligned}
    \right\}
  \end{equation}
  where the map $E\mapsto\theta(E)$ is defined in
  section~\ref{sec:proof-theorem} and we have set
  \begin{equation}
    \label{eq:21}
    d_j:=\min\left(\lambda_{j+1}-\lambda_j,\lambda_j-\lambda_{j-1},1\right)\quad
    \text{and}\quad\theta'_\delta:=\max_{|E|\leq2-\frac\delta2}|\theta'(E)|.
  \end{equation}
\end{Th}
\begin{wrapfigure}{r}{.45\textwidth}
  \centering
  \includegraphics[width=.30\textwidth]{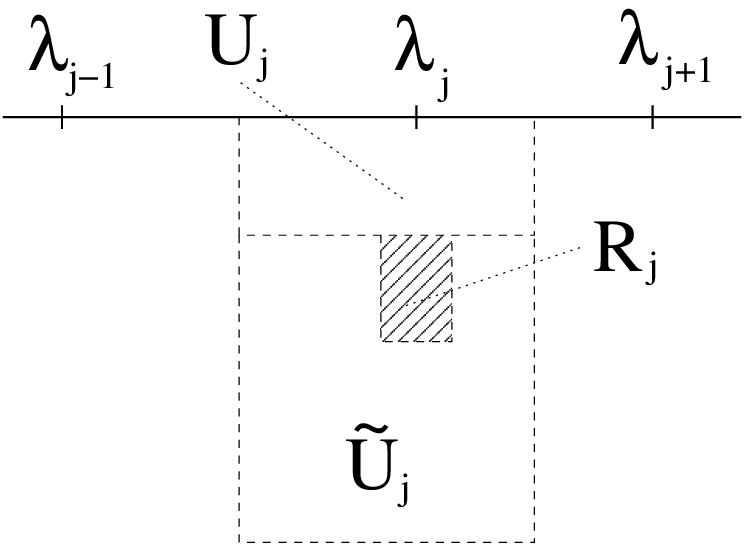}
  \caption{The resonance free zones $U_j$ and $\tilde U_j$.}
  \label{fig:6}
\end{wrapfigure}
\noindent In Theorem~\ref{thr:13} there are no conditions on the
numbers $(a_j)_j$ or $(d_j)_j$ except their being positive. In our
application to resonances, this holds. Theorem~\ref{thr:13} becomes
optimal when $a_j\ll d^2_j$. In our application to resonances, for
periodic operators, one has $a_j\asymp L^{-1}$ and $d_j\asymp L^{-1}$
(see Theorem~\ref{thr:19}) and for random operators, one has
$a_j\asymp e^{-cL}$ and $d_j\gtrsim L^{-4}$ (see Theorem~\ref{thr:10}
and~\eqref{eq:105}). Thus, in the random case, Theorem~\ref{thr:13}
will provide an optimal strip free of resonances whereas in the
periodic case we will use a much more precise computation (see
Theorem~\ref{thr:17}) to obtain sharp
results.\\
When $a_j\ll d_j^2$, one proves the existence of another resonant free
region near a energy $\lambda_j$, namely,
\begin{Th}
  \label{thr:14}
  Fix $\delta>0$. Pick $j\in\{0,\cdots,L\}$ such that
  $-4+\delta<\lambda_{j-1}+\lambda_j<\lambda_{j+1}+\lambda_j<4-\delta$.
  There exists $C>0$ (depending only on $\delta$) such that, for any
  $L$, if $a_j\leq d^2_j/C^2$, equations~\eqref{eq:1} and~\eqref{eq:135}
  have no solution in the set
  \begin{equation}
    \label{eq:30}
    \begin{split}
      \tilde U_j&:=\left\{E\in\C;\
        \begin{split}
          \text{Re}\,E&\in\left[\frac{\lambda_j+
              \lambda_{j-1}}2,\lambda_j-Ca_j\right]\cup
          \left[\lambda_j+Ca_j,\frac{\lambda_j+\lambda_{j+1}}2\right]\\
          \quad &-Ca_j\leq \text{Im}\,E\leq -a_j\,d^2_j/C
        \end{split}
      \right\}\\&\hskip3cm\bigcup\left\{E\in\C;\
        \begin{split}
          \text{Re}\,E&\in\left[\frac{\lambda_j+
              \lambda_{j-1}}2,\frac{\lambda_j+\lambda_{j+1}}2\right]\\
          \quad &-d^2_j/C\leq\text{Im}\,E\leq -Ca_j
        \end{split}
      \right\}
    \end{split}
  \end{equation}
\end{Th}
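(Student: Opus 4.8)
The plan is to reduce both characteristic equations~\eqref{eq:1} and~\eqref{eq:135} to a single scalar equation of the form $\sum_{k=0}^{L}\frac{b_k}{\lambda_k-E}=-e^{-i\theta(E)}$ with weights satisfying $b_k\ge 0$, $\sum_k b_k=1$ and $b_j\le 2a_j$, and then to rule out solutions in $\tilde U_j$ by a comparison of imaginary parts. For~\eqref{eq:1} there is nothing to do: it is already of this form with $b_k=a^\N_k$, and $\sum_k b_k=1$ by~\eqref{eq:20}. For~\eqref{eq:135}, recall that $E$ is a resonance of $H^\Z_L$ iff $-e^{-i\theta(E)}$ is an eigenvalue of the $2\times2$ matrix $\Gamma_L(E)$ of~\eqref{eq:142}. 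Writing $v_k$ for the (column) vector $(\varphi_k(L),\varphi_k(0))$, the $k$-th summand of $\Gamma_L(E)$ is $\frac1{\lambda_k-E}\,v_kv_k^*$; pairing the eigenvalue relation with a unit eigenvector $u\in\C^2$ gives $\sum_{k=0}^L\frac{|\langle u,v_k\rangle|^2}{\lambda_k-E}=\langle\Gamma_L(E)u,u\rangle=-e^{-i\theta(E)}$, i.e.\ the desired scalar equation with $b_k:=|\langle u,v_k\rangle|^2\ge 0$. Here $\sum_k b_k=\langle(\sum_k v_kv_k^*)u,u\rangle=\|u\|^2=1$, using the completeness identity $\sum_k v_kv_k^*=\mathrm{Id}$ recorded in Section~\ref{sec:whole-line}, and $b_k\le\|v_k\|^2=2a^\Z_k$ by Cauchy--Schwarz.

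Next I would isolate the two facts that drive the comparison. First, for $E\in\tilde U_j$ one has $\operatorname{Im}E<0$ and $E\ne\lambda_k$ for every $k$, so $\operatorname{Im}\sum_k\frac{b_k}{\lambda_k-E}=\operatorname{Im}E\cdot\sum_k\frac{b_k}{|\lambda_k-E|^2}<0$. Second, the real parts occurring in $\tilde U_j$ lie in $[\frac{\lambda_{j-1}+\lambda_j}2,\frac{\lambda_j+\lambda_{j+1}}2]\subset(-2+\frac\delta2,2-\frac\delta2)$, and (once $C$ is large, using $a_j\le d_j^2/C^2$) $|\operatorname{Im}E|\le d_j^2/C$ is small; hence, by the explicit formula for $\theta$ from Section~\ref{sec:proof-theorem} (so that $e^{-i\theta(E)}$ is close to $\operatorname{Re}E/2+i\sqrt{1-(\operatorname{Re}E/2)^2}$), one gets $\operatorname{Im}\!\big({-}e^{-i\theta(E)}\big)\le-c_\delta$ for some $c_\delta=c_\delta(\delta)>0$. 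Combining the two facts, it suffices to prove the uniform bound $|\operatorname{Im}E|\cdot\sum_{k=0}^L\frac{b_k}{|\lambda_k-E|^2}<c_\delta$ on $\tilde U_j$.

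The heart of the proof is then a short case analysis, splitting the sum into $k\ne j$ and $k=j$. For $k\ne j$: since $\operatorname{Re}E\in[\frac{\lambda_{j-1}+\lambda_j}2,\frac{\lambda_j+\lambda_{j+1}}2]$ one has $|\lambda_k-E|\ge|\lambda_k-\operatorname{Re}E|\ge d_j/2$, hence $\sum_{k\ne j}\frac{b_k}{|\lambda_k-E|^2}\le\frac4{d_j^2}$, and $|\operatorname{Im}E|\le d_j^2/C$ makes this contribution $\le 4/C$. For $k=j$ use $b_j\le 2a_j$ and $|\lambda_j-E|\ge\max(|\operatorname{Re}E-\lambda_j|,|\operatorname{Im}E|)$: on the first piece of $\tilde U_j$, $|\operatorname{Re}E-\lambda_j|\ge Ca_j$ and $|\operatorname{Im}E|\le Ca_j$ give $|\operatorname{Im}E|\cdot\frac{b_j}{|\lambda_j-E|^2}\le Ca_j\cdot\frac{2a_j}{(Ca_j)^2}=2/C$; on the second piece, $|\operatorname{Im}E|\ge Ca_j$ gives $|\operatorname{Im}E|\cdot\frac{b_j}{|\lambda_j-E|^2}\le\frac{b_j}{|\operatorname{Im}E|}\le\frac{2a_j}{Ca_j}=2/C$. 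Adding up, $|\operatorname{Im}E|\cdot\sum_k\frac{b_k}{|\lambda_k-E|^2}\le 6/C$, which is $<c_\delta$ as soon as $C>6/c_\delta$. Fixing $C=C(\delta)$ large enough to also ensure the smallness of $|\operatorname{Im}E|$ used above (and $C\ge2$, so that both pieces of $\tilde U_j$ are nonempty) then concludes the argument.

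I expect the only real subtlety — the main obstacle — to be the $k=j$ term in the regime where $E$ is closest to $\lambda_j$ (the inner edge of the first piece and the top of the second), and this is exactly where the hypothesis $a_j\le d_j^2/C^2$ enters: it guarantees $Ca_j\le d_j^2/C$, which is what (i) makes the two pieces of $\tilde U_j$ dovetail — the $\operatorname{Im}E$-interval of the second piece is non-degenerate and the first piece sits above it — and (ii) supplies the uniform bound $|\operatorname{Im}E|\le d_j^2/C$ used for the $k\ne j$ part on all of $\tilde U_j$. Everything else is routine: the reduction of~\eqref{eq:135}, the lower bound on $\operatorname{Im}(-e^{-i\theta})$, and the $k\ne j$ estimate. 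It is worth noting that this complements Theorem~\ref{thr:13}, whose region $U_j$ does reach arbitrarily close to $\lambda_j$, so that the crude constant bound on $\operatorname{Im}\sum_k b_k/(\lambda_k-E)$ fails there and a finer argument (capturing the width $\sim a_jd_j^2$) is needed instead.
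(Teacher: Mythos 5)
Your proof is correct and follows essentially the same route as the paper: bound the imaginary part of $S_L(E)$ (resp.\ of $\langle\Gamma_L(E)u,u\rangle$ for a unit eigenvector $u$, which is exactly how the paper reduces the case of~\eqref{eq:135}, with $b_k=|\langle v_k,u\rangle|^2\le 2a^\Z_k$) by $O(1/C)$ on $\tilde U_j$, and compare with the $\delta$-dependent lower bound on $\bigl|\text{Im}\,e^{-i\theta(E)}\bigr|$ coming from $-4+\delta<\lambda_{j-1}+\lambda_j<\lambda_j+\lambda_{j+1}<4-\delta$. The only cosmetic difference is that you merge the two sub-rectangles of $\tilde U_j$ into a single estimate $|\text{Im}\,E|\sum_k b_k|\lambda_k-E|^{-2}\le 6/C$, whereas the paper treats them separately, but the computations are the same.
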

\noindent Theorem~\ref{thr:14} becomes optimal when $a_j$ is small and
$d_j$ is of order one. This will be sufficient to deal with the
isolated eigenvalues for both the periodic and the random
potential. It will also be sufficient to give a sharp description of
the resonant free region for random potentials. For the periodic
potential, we will rely a much more precise computations (see
Theorem~\ref{thr:17}). \\
Note that Theorem~\ref{thr:14} guarantees that, if $d_j$ is not too
small, outside $R_j$, resonances are quite far below the real axis.
\begin{proof}[Proof of Theorem~\ref{thr:13}]
  The basic idea of the proof is that, for $E$ close to $\lambda_j$,
  $S_L(E)$ and the matrix $\Gamma_L(E)$ are either large or have a
  very small imaginary part while, as
  $-4<\lambda_{j-1}+\lambda_j<\lambda_{j+1}+\lambda_j<4$,
  $e^{-i\theta(E)}$ has a large imaginary part. Thus,~\eqref{eq:1}
  and~\eqref{eq:135} have no solution in this region.\\
  We start with equation~\eqref{eq:1}. Pick $E\in U_j$ for some $C$ large
  to be chosen later on. Assume first that $|E-\lambda_j|\leq
  a_jd_j(2+C_0 d_j)^{-1}$ for $C_0:=2e^{1/C}$. Recall that
  $0<a_j,d_j\leq 1$. Note that, for $C$ sufficiently large, for $E\in
  U_j$, one has
  \begin{equation}
    \label{eq:24}
    \begin{split}
      \left|\text{Im}\,e^{-i\theta(E)}\right|&=e^{\text{Im}\,\theta(E)}
      |\sin\text{Re}\,\theta(E)|=
      e^{\text{Im}[\theta(E)-\theta(\text{Re}\,E)]}
      |\sin\text{Re}\,\theta(E)|\\&\geq e^{\theta'_\delta\text{Im} E}
      |\sin\text{Re}\,\theta(E)|\geq e^{-1/C}
      |\sin\text{Re}\,\theta(E)|
    \end{split}
  \end{equation}
  and
  \begin{equation}
    \label{eq:117}
    \left|e^{-i\theta(E)}\right|\leq 1\leq e^{1/C}.
  \end{equation}
  One estimates
  \begin{equation}
    \label{eq:146}
    |S_L(E)|\geq\frac{a_j}{|\lambda_j-E|}-\sum_{k\not=j}
    \frac{a_k}{|\lambda_k-E|}\geq \frac2{d_j}+C_0-\sum_{k\not=j}
    \frac{2a_k}{\D\min_{k\not=j}|\lambda_k-\lambda_j|}\geq
    C_0=2e^{1/C}.
  \end{equation}
  Thus, comparing~\eqref{eq:146} and~\eqref{eq:117}, we see that
  equation~\eqref{eq:1} has
  no solution in the set $U_j\cap\{|E-\lambda_j|\leq a_jd_j(2+C d_j)^{-1}\}$.\\
  Assume now that $|E-\lambda_j|> a_jd_j(2+C_0d_j)^{-1}$. Then, for
  $E\in U_j$, one has
  \begin{equation}
    \label{eq:147}
    |\text{Im}\,E|\leq \frac1{\theta'_\delta C} 
    a_jd^2_j|\sin(\text{Re}\,\theta(E))|.
  \end{equation}
  Thus, for $E\in U_j\cap\{|E-\lambda_j| > a_jd_j(2+C_0d_j)^{-1}\}$, one
  computes
  \begin{equation}
    \label{eq:194}
    \begin{split}
      |\text{Im}\,S_L(E)|&\leq|\text{Im}\,E|
      \left(\frac{a_j}{|\lambda_j-\text{Re}\,E|^2+
          |\text{Im}\,E|^2}+\frac{4}{d^2_j+
          |\text{Im}\,E|^2}\right)\\&\leq \frac1{\theta'_\delta C}
      a_jd^2_j|\sin(\text{Re}\,\theta(E))|
      \left(\frac{(2+C_0d_j)^2a_j}{a^2_jd^2_j}+\frac4{d_j^2}\right)\\
      &\leq \frac4{\theta'_\delta C} (1+e^{1/C})^2
      |\sin(\text{Re}\,\theta(E))|
      \leq\frac12e^{-1/C}|\sin(\text{Re}\,\theta(E))|
    \end{split}
  \end{equation}
  provided $C$ satisfies $\D 8e^{1/C}(1+e^{1/C})^2<\theta'_\delta C$.\\
  Hence, the comparison of~\eqref{eq:24} with~\eqref{eq:194} shows
  that~\eqref{eq:1} has no solution in $U_j\cap\{|E-\lambda_j| >
  a_jd_j(2+C_0d_j)^{-1}\}$ if we choose $C$ large enough (independent
  of $(a_j)_j$ and $(\lambda_j)_j$). Thus, we have proved that for
  some $C>0$ large enough (independent of $(a_j)_j$ and
  $(\lambda_j)_j$), ~\eqref{eq:1} has no solution in $U_j$.
  \vskip.1cm\noindent
  Let us now turn to the case of equation~\eqref{eq:135}. The basic
  ideas are the same as for equation~\eqref{eq:1}. Consider the matrix
  $\Gamma_L(E)$ defined by~\eqref{eq:142}. The summands
  in~\eqref{eq:142} are hermitian, of rank $1$ and their norm is given
  by~\eqref{eq:145}.\\
  Assume that $E\in U_j$ is a solution to~\eqref{eq:135}. Define the
  vectors
  \begin{equation*}
    v_j:=a_j^{-1/2}\begin{pmatrix}
      \varphi_j(L)\\\varphi_j(0) \end{pmatrix}\quad\text{for}\quad
    j\in\{0,\cdots,L\}.
  \end{equation*}
  Here $a_j=a_j^{\Z}$.\\
  Note that by definition of $a_j$, one has $\|v_j\|^2=2$. Pick $u$ in
  $C^2$, a normalized eigenvector of $\Gamma_L(E)$ associated to the
  eigenvalue $-e^{-i\theta(E)}$. Thus, $u$ satisfies
  \begin{equation}
    \label{eq:111}
    \sum_{j=0}^L\frac{a_j\,\langle v_j,u\rangle \, v_j}{\lambda_j-E}
    =-e^{-i\theta(E)}u
  \end{equation}
  Note that, by assumption, one has
  \begin{equation}
    \label{eq:169}
    \sup_{E\in U_j}\left\|\sum_{k\not=j}\frac{a_k\,\langle
        v_k,u\rangle\,v_k}{\lambda_k-E}\right\|\lesssim\frac1{d_j}
    \quad\text{and}\quad
    \left|\text{Im}\left(\sum_{k\not=j}\frac{a_k\,|\langle
          v_k,u\rangle|^2}{\lambda_k-E}\right)\right|\lesssim
    \frac{|\text{Im}\, E|}{d^2_j}
  \end{equation}
  where the constants are independent of $C$, the one defining
  $U_j$.\\
  Taking the (real) scalar product of equation~\eqref{eq:111} with
  $\overline{u}$, and then the imaginary part, we obtain
  \begin{equation*}
    -\frac{a_j\,|\langle
      v_j,u\rangle|^2\text{Im}\,E}{|\lambda_j-E|^2}+
    \text{Im}\left(e^{-i\theta(E)}\right)
    =O\left(\frac{|\text{Im}\, E|}{d^2_j}\right)
  \end{equation*}
  Thus, for $E\in U_j$, as $a_j\leq1$, for $C$ in~\eqref{eq:22}
  sufficiently large (depending only on $\delta$),
  \begin{equation*}
    \frac{a_j\,|\langle
      v_j,u\rangle|^2|\text{Im}\,E|}{|\lambda_j-E|^2}\geq
    \frac12|\sin(\text{Re}\,\theta(E))|.
  \end{equation*}
  Hence, for a solution to~\eqref{eq:135} in $U_j$ and $u$ as above,
  one has
  \begin{equation*}
    |\lambda_j-E|\leq |\langle v_j,u\rangle|
    \sqrt{\frac{2a_j|\text{Im}\,E|}{|\sin(\text{Re}\,\theta(E))|}}
    \leq2\sqrt{\frac{a_j|\text{Im}\,E|}{|\sin(\text{Re}\,\theta(E))|}}.
  \end{equation*}
  Hence, by the definition of $U_j$, for $C$ large, we get
  \begin{equation}
    \label{eq:212}
    \left|\frac{a_j}{\lambda_j-E}\right|\geq\frac{C\theta'_\delta}{d_j}\gg \frac1{d_j}.
  \end{equation}
  By~\eqref{eq:169}, the operator $\Gamma_L(E)$ can be written as
  \begin{equation}
    \label{eq:215}
    \Gamma_L(E)=\frac{a_j}{\lambda_j-E} v_j\otimes v_j+
    R_j(E)+i I_j(E)
  \end{equation}
  where $R_j(E)$ and $I_j(E)$ are self-adjoint ($I_j$ is non negative)
  and satisfy
  \begin{equation}
    \label{eq:214}
    \|R_j(E)\|\lesssim \frac1{d_j}\quad\text{ and }\quad
    \|I_j(E)\|\lesssim \frac{|\text{Im}\,E|}{d^2_j}.
  \end{equation}
  An explicit computation shows that the eigenvalues of the two-by-two
  matrix $\D\frac{a_j}{\lambda_j-E}v_j\otimes v_j+R_j(E)$ satisfy
  \begin{equation*}
    \text{either}\quad \lambda=\frac{a_j}{\lambda_j-E} 
    \left(1+O\left(\frac{d_j}{C\theta'_\delta}\right)\right)\quad
    \text{or}\quad|\text{Im}\,\lambda|\lesssim\frac{|\im E|}{a_j}
  \end{equation*}
  where the implicit constants are independent of the one defining
  $U_j$.\\
  Thus, by~\eqref{eq:215}, using~\eqref{eq:212} and the second
  estimate in~\eqref{eq:214}, we see that the eigenvalues of the
  matrix $\D\Gamma_L(E)$ satisfy
  \begin{equation*}
    \text{either}\quad \lambda=\frac{a_j}{\lambda_j-E} 
    \left(1+O\left(\frac{d_j}{C\theta'_\delta}\right)\right)\quad
    \text{or}\quad|\text{Im}\,\lambda|\leq \frac2{C\theta'_\delta}.     
  \end{equation*}
  Clearly, for $C$ large, no such value can be equal to
  $-e^{-i\theta(E)}$ being to large by~\eqref{eq:212} in the first
  case or having too small imaginary part in the second. The proof of
  Theorem~\ref{thr:13} is complete.
\end{proof}
\begin{proof}[Proof of Theorem~\ref{thr:14}]
  Again, we start with the solutions to~\eqref{eq:1}. For $z\in\tilde
  U_j$, we compute
  \begin{equation}
    \label{eq:136}
    \begin{split}
      \text{Im}\,S_L(E)&=\sum_{k=0}^L\frac{a_k\,\text{Im}\,
        E}{(\lambda_k-\text{Re}\,E)^2+\text{Im}^2
        E}\\&=\frac{a_j\,\text{Im}\,
        E}{(\lambda_j-\text{Re}\,E)^2+\text{Im}^2 E}+
      \sum_{\substack{0\leq k\leq L\\ k\not=j}}\frac{-a_k\,\text{Im}\,
        E}{(\lambda_k-\text{Re}\,E)^2+\text{Im}^2 E}.
    \end{split}
  \end{equation}
  When $-d^2_j/C\leq\text{Im}\,E\leq -Ca_j$, the second equality above
  and~\eqref{eq:20} yield, for $C$ sufficiently large,
  \begin{equation}
    \label{eq:196}
    0\leq \text{Im}\,S_L(E)\lesssim
    \frac{a_j}{|\text{Im}\,E|}+\frac{|\text{Im}\,E|}{d^2_j+\text{Im}^2E}\leq
    \frac2C.
  \end{equation}
  On the other hand, for some $K>0$, one has
  \begin{equation*}
    \left|\text{Im}\,e^{-i\theta(E)}\right|\geq
    |\text{Im}\,e^{-i\theta(\text{Re}\,E)}|-K d^2_j/C.
  \end{equation*}
  Now, as, under the assumptions of Theorem~\ref{thr:14}, one has
  \begin{equation}
    \label{eq:197}
    \min_{E\in\left[\frac{\lambda_j+
          \lambda_{j-1}}2,\frac{\lambda_j+\lambda_{j+1}}2\right]}
    \left|\text{Im}\,e^{-i\theta(E)}\right|\geq\frac14\min
    \left(\sqrt{16-(\lambda_j+\lambda_{j-1})^2},\sqrt{16-(\lambda_j+
        \lambda_{j+1})^2}\right),
  \end{equation}
  we obtain that~\eqref{eq:1} has no solution in $\tilde
  U_j\cap\{-d_j/C\leq\text{Im}\,E\leq -Ca_j\}$. \\
  Pick now $E\in \tilde U_j$ such that $-Ca_j\leq\text{Im}\,E\leq
  -a_jd_j^2/C$. Then,~\eqref{eq:117} and~\eqref{eq:20} yield, for $C$
  sufficiently large,
  \begin{equation*}
    \text{Im}\,S_L(E)\lesssim
    \frac{a_j\text{Im}\,E}{C^2a^2_j+\text{Im}^2E}+
    \frac{Ca_j}{d^2_j}\leq
    \frac1C+\frac1{2C}.
  \end{equation*}
  The imaginary part of $e^{-i\theta(E)}$ is estimated as above. Thus,
  for $C$ sufficiently large, ~\eqref{eq:1} has no solution in $\tilde
  U_j\cap\{-Ca_j\leq\text{Im}\,E\leq
  -a_jd_j^2/C\}$.\vskip.1cm\noindent
  The case of equation~\eqref{eq:135} is studied in exactly the same
  way except that, as in the proof of Theorem~\ref{thr:13}, one has to
  replace the study of $S_L(E)$ by that of
  $\langle\Gamma_L(E)u,u\rangle$ for $u$ a normalized eigenvector of
  $\Gamma_L(E)$ associated to $-e^{-i\theta(E)}$ and, thus, the
  coefficient $a_k$ in~\eqref{eq:136} gets multiplied by a factor
  $|\langle v_k,u\rangle|^2$ that is bounded by $2$.  \\
  This completes the proof of Theorem~\ref{thr:14}.
\end{proof}
\subsection{The resonances near an ``isolated'' eigenvalue}
\label{sec:reson-near-discr}
We will now solve equation~\eqref{eq:1} near a given $\lambda_j$ under
the additional assumptions that $a_j\ll d^2_j$. By
Theorems~\ref{thr:13} and~\ref{thr:14}, we will do so in the rectangle
$R_j$ (see Fig.~\ref{fig:6}). Actually, we prove that, in $R_j$, there
is exactly one resonance and give an asymptotic for this resonance in
terms of $a_j$, $d_j$ and $\lambda_j$. This result is going to be
applied to the case of random $V$ and to that of isolated eigenvalues
(for any $V$).\\
Using the notations of section~\ref{sec:char-reson}, for
$j\in\{0,\cdots,L\}$, we define
\begin{equation}
  \label{eq:25}
  S_{L,j}(E):=\sum_{k\not=j}\frac{a^\N_k}{\lambda_k-E}
  \quad\text{and}\quad
  \Gamma_{L,j}(E):=\sum_{k\not=j}\frac1{\lambda_k-E}
  \begin{pmatrix} |\varphi_k(L)|^2& \overline{\varphi_k(0)}
    \varphi_k(L) \\ \varphi_k(0) \overline{\varphi_k(L)} &
    |\varphi_k(0)|^2
  \end{pmatrix}.
\end{equation}
We prove
\begin{Th}
  \label{thr:12}
  Pick $j\in\{0,\cdots,L\}$ such that
  $-4<\lambda_{j-1}+\lambda_j<\lambda_{j+1}+\lambda_j<4$.  There
  exists $C>1$ (depending only on $(\lambda_{j-1}+\lambda_j)+4$ and
  $4-(\lambda_{j+1}+\lambda_j)$) such that, for any $L$, if $a_j\leq
  d^2_j/C$, equation~\eqref{eq:1} and~\eqref{eq:135} has exactly one
  solution in the set
  \begin{equation}
    \label{eq:29}
    R_j:=\left\{E\in\C;\
      \begin{aligned}
        \text{Re}\,E&\in\lambda_j+Ca_j\left[-1,1\right]\\
        \quad -Ca_j\leq &\text{Im}\,E\leq -a_j\,d^2_j/C
      \end{aligned}
    \right\}.
  \end{equation}
  Moreover, the solution to~\eqref{eq:1}, say $z^\N_j$, satisfies
  \begin{equation}
    \label{eq:31}
    z^\N_j=\lambda_j+\frac{a^\N_j}{S_{L,j}(\lambda_j)+e^{-i\theta(\lambda_j)}}
    +O\left(\left(a^\N_jd^{-1}_j\right)^2\right).
  \end{equation}
  and the solution to~\eqref{eq:135}, say $z^\Z_j$, satisfies
  \begin{equation}
    \label{eq:149}
    z^\Z_j=\lambda_j+\left\langle \begin{pmatrix}\overline{\varphi_j(L)}
        \\ \varphi_j(0)\end{pmatrix} ,\left(\Gamma_{L,j}(\lambda_j) 
        +e^{-i\theta(\lambda_j)}\right)^{-1}
      \begin{pmatrix}\overline{\varphi_j(L)}
        \\ \varphi_j(0)\end{pmatrix}\right\rangle+
    O\left(\left(a^\Z_jd^{-1}_j\right)^2\right).
  \end{equation}
\end{Th}
\noindent Note that, if $a^\N_jd^{-2}_j$ is small,
formula~\eqref{eq:31} gives the asymptotic of the width of the
solution $z^\N_j$, namely,
\begin{equation}
  \label{eq:150}
  \text{Im}\,z^\N_j=\frac{a^\N_j\cdot\sin\theta(\lambda_j)}
  {[S_{L,j}(\lambda_j)+\cos\theta(\lambda_j)]^2+
    \sin^2\theta(\lambda_j)}  (1+o(1)).
\end{equation}
Recall that $\sin\theta(\lambda_j)<0$ (see Theorem~\ref{thr:11}). For
$H_L^\Z$, using the bounds~\eqref{eq:156} and~\eqref{eq:153}, we see
that the asymptotic of the imaginary part of the solution $z^\Z_j$
satisfies
\begin{equation}
  \label{eq:155}
  -\frac1C a^\Z_j\leq\text{Im}\,z^\Z_j\leq -Ca^\Z_j d_j^2.
\end{equation}
This and~\eqref{eq:150} will be useful when $a_j\ll d^2_j$ as will be
the case for random potentials. The case when $a_j$ and $d_j$ are of
the same order of magnitude requires more information. This is the
case that we meet in the next section when dealing with periodic
potentials.\\
The proof of Theorem~\ref{thr:12} also yields the behavior of the
functions $E\mapsto S_L(E)+e^{-i\theta(E)}$ and $\D E\mapsto
\text{det}\left( \Gamma_L(E)+e^{-i\theta(E)}\right)$ near their zeros
in $R_j$ and, in particular shows the following
\begin{Pro}
  \label{pro:7}
  Fix $\delta>0$. Under the assumptions of Theorem~\ref{thr:12}, there
  exists $c>0$ such that, for
  $-4+\delta<\lambda_{j-1}+\lambda_j<\lambda_{j+1}+\lambda_j<4-\delta$,
  one has
  \begin{equation*}
    \begin{aligned}
      \inf_{0<r<c a^\N_jd^{-1}_j}\min_{|E-z^\N_j|=r}
      \frac{\left|S_L(E)+e^{-i\theta(E)}\right|}r&\geq c\quad\text{and}\\
      \inf_{0<r<c a^\Z_jd^{-1}_j}
      \min_{|E-z^\Z_j|=r}\frac{\left|\text{det}
          \left(\Gamma_L(E)+e^{-i\theta(E)}\right)\right|}r&\geq c.
    \end{aligned}
  \end{equation*}
\end{Pro}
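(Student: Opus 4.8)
The plan is to obtain this as a quantitative byproduct of the proof of Theorem~\ref{thr:12}: one re-runs that fixed point (Rouch\'e) argument while keeping track of lower bounds. Recall from there that, for $E$ in a disc $D(\lambda_j,\kappa d_j)$ with $\kappa\in(0,\tfrac12)$ universal, one writes, in the notation of~\eqref{eq:25},
\[
  S_L(E)+e^{-i\theta(E)}=\frac{a^\N_j}{\lambda_j-E}+G_j(E),\qquad
  G_j(E):=S_{L,j}(E)+e^{-i\theta(E)},
\]
where $G_j$ is holomorphic on $D(\lambda_j,\kappa d_j)$, satisfies $|G_j|\lesssim d_j^{-1}$ and $|G_j'|\lesssim d_j^{-2}$ there (every pole $\lambda_k$, $k\neq j$, is at distance $\geq d_j$ from $\lambda_j$), and has $\text{Im}\,G_j(\lambda_j)=|\sin\text{Re}\,\theta(\lambda_j)|\geq c_\delta>0$ on the interval $I$. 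Setting $\Phi(E):=(\lambda_j-E)\bigl(S_L(E)+e^{-i\theta(E)}\bigr)=a^\N_j+(\lambda_j-E)G_j(E)$, the proof of Theorem~\ref{thr:12} produces, by comparing $\Phi$ on small circles with the frozen polynomial $\Phi_0(E):=a^\N_j+(\lambda_j-E)G_j(\lambda_j)$, the unique simple zero $z^\N_j$ of $\Phi$ in $D(\lambda_j,\kappa d_j)$, together with $z^\N_j-\lambda_j=a^\N_j/G_j(\lambda_j)+O\bigl((a^\N_jd_j^{-1})^2\bigr)$; in particular $|z^\N_j-\lambda_j|=a^\N_j/|G_j(z^\N_j)|\lesssim a^\N_j$ and $|G_j(z^\N_j)|\geq c_\delta/2$.

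To prove the Proposition I would fix $0<r<c\,a^\N_jd_j^{-1}$ and $E$ with $|E-z^\N_j|=r$, note that $r+|z^\N_j-\lambda_j|\ll d_j$ so the circle lies in $D(\lambda_j,\kappa d_j)$, and distinguish two regimes. When $r$ is at most a small multiple of $d_j^2$, I would compare $\Phi$ on $|E-z^\N_j|=r$ with its linearisation $\Phi'(z^\N_j)(E-z^\N_j)$: since $|\Phi'(z^\N_j)|\geq c_\delta/4$ (using $|(\lambda_j-z^\N_j)G_j'(z^\N_j)|\lesssim a^\N_jd_j^{-2}\leq C^{-1}$) and $|\Phi''|\lesssim d_j^{-2}$ near $z^\N_j$, Rouch\'e gives $|\Phi(E)|\geq\tfrac18 c_\delta r$, hence
\[
  \bigl|S_L(E)+e^{-i\theta(E)}\bigr|=\frac{|\Phi(E)|}{|\lambda_j-E|}
  \geq\frac{c_\delta\,r}{8\bigl(|z^\N_j-\lambda_j|+r\bigr)},
\]
which is $\gtrsim r\cdot d_j^2/a^\N_j\gtrsim r$ if $r<|z^\N_j-\lambda_j|$ (using $a^\N_j\leq d_j^2/C$) and is a positive constant $\gtrsim c_\delta$ if $r\geq|z^\N_j-\lambda_j|$ (since then $|\lambda_j-E|\leq2r$); in either case the quotient is $\geq c$ for a suitable $c$. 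For larger $r$, up to $c\,a^\N_jd_j^{-1}$, I would instead run Rouch\'e on the circle $|E-z^\N_j|=r$ against $\Phi_0$, whose modulus there is $|G_j(\lambda_j)|\,|E-\tilde z_j|\geq c_\delta\bigl(r-|z^\N_j-\tilde z_j|\bigr)$ with $\tilde z_j:=\lambda_j+a^\N_j/G_j(\lambda_j)$ and $|z^\N_j-\tilde z_j|=O\bigl((a^\N_jd_j^{-1})^2\bigr)$, the error being $\Phi-\Phi_0=(\lambda_j-E)\bigl(G_j(E)-G_j(\lambda_j)\bigr)$, of size $\lesssim(|z^\N_j-\lambda_j|+r)^2d_j^{-2}$. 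The case of~\eqref{eq:135} is treated identically, replacing $S_L(E)+e^{-i\theta(E)}$ by $\det\bigl(\Gamma_L(E)+e^{-i\theta(E)}\bigr)$ and using the rank-one factorisation of the determinant through $\tfrac1{\lambda_j-E}\binom{\varphi_j(L)}{\varphi_j(0)}\otimes\overline{\binom{\varphi_j(L)}{\varphi_j(0)}}$, exactly as in the passage from~\eqref{eq:1} to~\eqref{eq:135} in the proofs of Theorems~\ref{thr:13},~\ref{thr:14} and~\ref{thr:12}: there $G_j$ becomes the matrix $\Gamma_{L,j}(E)+e^{-i\theta(E)}$, whose imaginary part at $\lambda_j$ equals $|\sin\text{Re}\,\theta(\lambda_j)|\cdot\mathrm{Id}$, and $a^\N_j/G_j(\lambda_j)$ becomes $\bigl\langle\bigl(\Gamma_{L,j}(\lambda_j)+e^{-i\theta(\lambda_j)}\bigr)^{-1}\binom{\varphi_j(L)}{\varphi_j(0)},\binom{\varphi_j(L)}{\varphi_j(0)}\bigr\rangle$.

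The step I expect to be delicate is making the ``large $r$'' regime work \emph{uniformly} up to $r\sim a^\N_jd_j^{-1}$: with the crude bound $|G_j'|\lesssim d_j^{-2}$ the error $r^2d_j^{-2}$ only beats the main term $\sim c_\delta r$ when $r\lesssim d_j^2$, which is not enough once $d_j$ is small. Closing this gap should use either the sharper fact that, on the relevant segment, $|G_j'|\lesssim d_j^{-1}\sum_{k\neq j}a^\N_k|\lambda_k-\lambda_j|^{-1}$ together with $\sum_k a^\N_k=1$ --- so that whenever $G_j'$ is genuinely of size $d_j^{-2}$ one also has $|G_j(\lambda_j)|\sim d_j^{-1}$, forcing $|z^\N_j-\lambda_j|\asymp a^\N_jd_j$ and hence a correspondingly smaller circle --- or else the resonance-free regions $U_j$, $\tilde U_j$ of Theorems~\ref{thr:13} and~\ref{thr:14}, which show that for $\text{Re}\,E$ in the interval of~\eqref{eq:29} the only resonance with $\text{Im}\,E>-d_j^2/C$ is $z^\N_j$; for $c$ small enough this makes $\Phi(E)/(E-z^\N_j)$ holomorphic and zero-free on $\overline{D(z^\N_j,r)}$ and a minimum-modulus estimate then gives the bound. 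Either way the remaining work is the bookkeeping of the constants from~\eqref{eq:22},~\eqref{eq:30} and~\eqref{eq:29}, which is the technical heart of the argument.
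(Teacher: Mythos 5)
Your strategy is the paper's own: the paper proves Proposition~\ref{pro:7} by re-running the Rouch\'e comparison of $S_L+e^{-i\theta}$ (resp. $\det(\Gamma_L+e^{-i\theta})$) with the frozen functions $\tilde S_{L,j}$ (resp. $\tilde\Gamma_{L,j}$), the only new point being that the analogues of~\eqref{eq:201} and~\eqref{eq:202} are established on the larger rectangles $\tilde z_j+c\,a^\bullet_jd_j^{-1}[-1,1]\times[-1,1]$. Your small-$r$ regime (Taylor expansion of $\Phi$ at $z^\N_j$, with $|\Phi'(z^\N_j)|\gtrsim c_\delta$ and $|\Phi''|\lesssim d_j^{-2}$) is correct and corresponds to the inner part of that comparison.

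The large-$r$ regime, which you flag yourself, is a genuine gap, and neither of your patches closes it. Patch (a) is false: $S_{L,j}(\lambda_j)=\sum_{k\neq j}a^\N_k(\lambda_k-\lambda_j)^{-1}$ is a signed sum, so a large $|G_j'|$ does not force $|G_j(\lambda_j)|$ to be large. Take $a^\N_{j-1}\approx a^\N_{j+1}\approx\frac12$ with $\lambda_{j\pm1}-\lambda_j=\pm d_j$ and $a^\N_j\leq d_j^2/C$ (nothing in Theorem~\ref{thr:12} excludes this); then $S_{L,j}(\lambda_j)\approx0$, hence $|G_j(\lambda_j)|\approx1$, while $G_j'(\lambda_j)\approx d_j^{-2}$ since all terms of $S_{L,j}'$ are positive. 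In that configuration, on the circle $|E-z^\N_j|=c\,a^\N_jd_j^{-1}$ one has $|\tilde S_{L,j}(E)|\approx1$ while the error $|S_L(E)+e^{-i\theta(E)}-\tilde S_{L,j}(E)|\approx|E-\lambda_j|\,d_j^{-2}\approx c\,a^\N_jd_j^{-3}$, which exceeds any constant once $a^\N_j$ is comparable to $d_j^2/C$ and $d_j\ll c/C$; so the raw comparison with $\Phi_0$ cannot be run there, and the proposition survives only because $|S_L+e^{-i\theta}|$ is then itself even larger --- a fact your argument never exploits (the terms $a^\N_k[(\lambda_k-E)(\lambda_k-\lambda_j)]^{-1}$ all lie in a fixed sector about the positive real axis, which is the kind of structural input needed). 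Patch (b) is the right reduction, but it is incomplete on exactly the two points that matter: the minimum-modulus principle only reduces the claim to the lower bound $|S_L(E)+e^{-i\theta(E)}|\gtrsim a^\N_jd_j^{-1}$ on the single outermost circle, and that boundary bound --- precisely what the paper's extension of~\eqref{eq:201}--\eqref{eq:202} to the rectangle of side $\sim a^\N_jd_j^{-1}$ is meant to deliver --- is never established (and, by the example above, does not follow from $|G_j|\lesssim d_j^{-1}$, $|G_j'|\lesssim d_j^{-2}$ alone); moreover, when $d_j<c$ the disc of radius $c\,a^\N_jd_j^{-1}$ dips below the strip $\{\text{Im}\,E\geq-d_j^2/C\}$ covered by Theorems~\ref{thr:13} and~\ref{thr:14}, so even the zero-freeness you invoke requires an extra argument. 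The determinant case is treated only by analogy, which is acceptable since the paper does the same, but the same gaps recur there.
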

\noindent Proposition~\ref{pro:7} is a consequence of the analogues
of~\eqref{eq:201} and~\eqref{eq:202} on the rectangles
\begin{equation*}
  \tilde R_j=\tilde z_j+ca^\bullet_jd^{-1}_j[-1,1]\times[-1,1] 
\end{equation*}
for $\bullet\in\{\N,\Z\}$ and $c$ sufficiently small.
\begin{proof}[Proof of Theorem~\ref{thr:12}]
  Let us start with equation~\eqref{eq:1}. To prove the statement on
  equation~\eqref{eq:1}, in $R_j$, we compare the function $E\mapsto
  S_L(E)+e^{-i\theta(E)}$ to the function
  \begin{equation*}
    E\mapsto\tilde S_{L,j}(E)=\frac{a^\N_j}{\lambda_j-E}
    +S_{L,j}(\lambda_j)+e^{-i\theta(\lambda_j)}.
  \end{equation*}
  Clearly, in $\C$, the equation $\tilde S_{L,j}(E)=0$ admits a unique
  solution given by
  \begin{equation*}
    \tilde z_j=\lambda_j+\frac{a^\N_j}{S_{L,j}(\lambda_j)+e^{-i\theta(\lambda_j)}}.
  \end{equation*}
  For $E\in\partial R_j$, the boundary of $R_j$, one has
  \begin{equation}
    \label{eq:151}
    \begin{split}
      \left|\tilde S_{L,j}(E)\right|\geq
      \frac1{2C}\quad&\text{and}\quad
      \left|\frac{a^\N_j}{\lambda_j-E}\right|\geq \frac1{2C},\\
      \left| e^{-i\theta(E)}-e^{-i\theta(\lambda_j)}\right|\leq C a^\N_j
      &\quad\text{and}\quad \left|
        S_{L,j}(E)-S_{L,j}(\lambda_j)\right|\leq Ca^\N_jd^{-2}_j.
    \end{split}
  \end{equation}
  Hence, as $d_j\leq1$, one gets
  \begin{equation*}
    \max_{E\in\partial R_j}\frac{\left|\tilde
        S_{L,j}(E)-S_L(E)-e^{-i\theta(E)}\right|}{|\tilde S_{L,j}(E)|}\leq
    4Ca^\N_jd_j^{-2} 
  \end{equation*}
  Thus, by Rouch{\'e}'s theorem, equation~\eqref{eq:1} has a unique
  solution in $R_j$. \\
  To obtain the asymptotics of the solution, it suffices to use
  Rouch{\'e}'s theorem again with the functions $\tilde S_{L,j}$ and
  $S_L(E)+e^{-i\theta(E)}$ on the smaller rectangle $\tilde R_j=\tilde
  z_j+K(a^\N_jd^{-1}_j)^2[-1,1]\times[-1,1]$. One then estimates
  \begin{equation}
    \label{eq:201}
    \max_{E\in\partial\tilde R_j}\frac{\left|\tilde S_{L,j}(E)-S_L(E)
        -e^{-i\theta(E)}\right|}{|\tilde S_{L,j}(E)|}\leq4CK^{-1}. 
  \end{equation}
  Thus, for $K$ sufficiently large, this completes the proof of the
  statements on the solutions to equation~\eqref{eq:1} contained in
  Theorem~\ref{thr:12}.\vskip.1cm\noindent
  Let us turn to equation~\eqref{eq:135}. On $R_j$, we now compare
  $\Gamma_L(E)+e^{-i\theta(E)}$ to the matrix valued function
  \begin{equation*}
    E\mapsto\tilde \Gamma_{L,j}(E):=
    \frac1{\lambda_j-E}
    \begin{pmatrix} |\varphi_j(L)|^2& \overline{\varphi_j(0)}
      \varphi_j(L) \\ \varphi_j(0) \overline{\varphi_j(L)} &
      |\varphi_j(0)|^2
    \end{pmatrix}+\Gamma_{L,j}(\lambda_j)+e^{-i\theta(\lambda_j)}.
  \end{equation*}
  The matrix $\D\begin{pmatrix} |\varphi_j(L)|^2&
    \overline{\varphi_j(0)} \varphi_j(L) \\ \varphi_j(0)
    \overline{\varphi_j(L)} & |\varphi_j(0)|^2
  \end{pmatrix}$ is rank $1$ and can be diagonalized as
  \begin{equation*}
    \begin{pmatrix} |\varphi_j(L)|^2&
      \overline{\varphi_j(0)} \varphi_j(L) \\ \varphi_j(0)
      \overline{\varphi_j(L)} & |\varphi_j(0)|^2
    \end{pmatrix}=P_j\begin{pmatrix} a^\Z_j&0 \\0&0\end{pmatrix}P^*_j
  \end{equation*}
  where $a^\Z_j$ is given by~\eqref{eq:145} and
  \begin{equation*}
    P_j=\frac1{\sqrt{a^\Z_j}} \begin{pmatrix} \varphi_j(L)&
      -\overline{\varphi_j(0)} \\ \varphi_j(0) &
      \overline{\varphi_j(L)}
    \end{pmatrix}.
  \end{equation*}
  Thus, $\tilde \Gamma_{L,j}(E)$ is unitarily equivalent to
  \begin{equation}
    \label{eq:154}
    M:=\frac1{\lambda_j-E}\begin{pmatrix} a^\Z_j&0 \\0&0\end{pmatrix}+
    P_j^*\Gamma_{L,j}(\lambda_j)P_j+e^{-i\theta(\lambda_j)}.
  \end{equation}
  As $P_j^*\Gamma_{L,j}(\lambda_j)P_j$ is real and the imaginary part
  of $e^{-i\theta(\lambda_j)}$ does not vanish, the matrix $
  M_0:=P_j^*\Gamma_{L,j}(\lambda_j)P_j+e^{-i\theta(\lambda_j)}$ is
  invertible. By rank $1$ perturbation theory (see
 , e.g.,~\cite{MR2154153}), we know that $M$ is invertible if and only
  if $a^\Z_j\left[M_0^{-1}\right]_{11}+\lambda_j\not=E$ (where
  $[M]_{11}$ is the upper right coefficient of the $2\times 2$ matrix
  $M$). In this case, one has
  \begin{equation}
    \label{eq:152}
    M^{-1}=M_0^{-1}
    -\frac{a^\Z_j}{a^\Z_j\left[M_0^{-1}\right]_{11}+\lambda_j-E}
    M_0^{-1}\begin{pmatrix} 1& 0\\0 &0 \end{pmatrix}
    M_0^{-1}.  
  \end{equation}
  Hence, $0$ is an eigenvalue of $M$ if and only if
  \begin{equation}
    \label{eq:159}
    \begin{split}
      E&=\lambda_j+a^\Z_j\left[\left(P_j^*\Gamma_{L,j}(\lambda_j)P_j
          +e^{-i\theta(\lambda_j)}\right)^{-1}\right]_{11}\\&=
      \lambda_j+\left\langle \begin{pmatrix}\overline{\varphi_j(L)} \\
          \varphi_j(0)\end{pmatrix} ,\left(\Gamma_{L,j}(\lambda_j)
          +e^{-i\theta(\lambda_j)}\right)^{-1}\begin{pmatrix}\overline{\varphi_j(L)}
          \\ \varphi_j(0)\end{pmatrix}\right\rangle.
    \end{split}
  \end{equation}
  Note that, as $\Gamma_{L,j}(\lambda_j)$ is real symmetric and
  $\|\Gamma_{L,j}(\lambda_j)\|\leq Cd_j^{-1}$, one has
  \begin{equation}
    \label{eq:156}
    \left|\left\langle \begin{pmatrix}\overline{\varphi_j(L)} \\
          \varphi_j(0)\end{pmatrix} ,\left(\Gamma_{L,j}(\lambda_j)
          +e^{-i\theta(\lambda_j)}\right)^{-1}
        \begin{pmatrix}\overline{\varphi_j(L)}
          \\ \varphi_j(0)\end{pmatrix}\right\rangle\right|\leq
    \frac{a^\Z_j}{\left|\sin\theta(\lambda_j)\right|}.
  \end{equation}
  and
  \begin{equation}
    \label{eq:153}
    \text{Im}\left(\left\langle \begin{pmatrix}\overline{\varphi_j(L)} \\
          \varphi_j(0)\end{pmatrix} ,\left(\Gamma_{L,j}(\lambda_j)
          +e^{-i\theta(\lambda_j)}\right)^{-1}
        \begin{pmatrix}\overline{\varphi_j(L)}
          \\ \varphi_j(0)\end{pmatrix}\right\rangle\right)\leq
    \frac{a^\Z_jd_j^2\sin\theta(\lambda_j)} {1+d_j^2}.
  \end{equation}
  Using~\eqref{eq:154},~\eqref{eq:152},~\eqref{eq:156}
  and~\eqref{eq:153},we see that, for $E\in\partial R_j$, the boundary
  of $R_j$, $\tilde\Gamma_{L,j}(E)$ is invertible and that one has
  \begin{gather*}
    \left\|\left[\tilde\Gamma_{L,j}(E)\right]^{-1}\right\|\leq
    2C\quad\text{and}\quad \left\|
      \Gamma_{L,j}(E)-\Gamma_{L,j}(\lambda_j)\right\|\leq
    Ca^\Z_jd^{-2}_j.
  \end{gather*}
  Hence, as $d_j\leq1$, taking~\eqref{eq:151} into account, one gets
  \begin{equation*}
    \max_{E\in\partial
      R_j}\left\|1-\left[\tilde\Gamma_{L,j}(E)\right]^{-1}
      \left(\Gamma_L(E)+e^{-i\theta(E)}\right)\right\|\leq
    4C^2a^\Z_jd_j^{-2} 
  \end{equation*}
  In the same way, one proves
  \begin{equation}
    \label{eq:202}
    \max_{E\in\partial\tilde R_j}\left\|1-\left[\tilde\Gamma_{L,j}(E)
      \right]^{-1}\left(\Gamma_L(E)+e^{-i\theta(E)}\right)\right\|\lesssim
    K^{-1} 
  \end{equation}
  where we recall that $\tilde R_j=\tilde
  z_j+K(a^\N_jd^{-1}_j)^2[-1,1]\times[-1,1]$.\\%
  Thus, we can apply Rouch{\'e}'s Theorem to compare the following two
  functions on $\partial R_j$ and $\partial \tilde R_j$ (for $K$
  sufficiently large),
  \begin{equation*}
    \text{det}\left(\tilde\Gamma_{L,j}(E)\right)\quad\text{and}\quad
    \text{det}\left(\Gamma_{L}(E)+e^{-i\theta(E)}\right)
  \end{equation*}
  as
  \begin{multline*}
    \frac{\left|\text{det}\left(\tilde\Gamma_{L,j}(E)\right)
        -\text{det}\left(\Gamma_{L}(E)+e^{-i\theta(E)}\right)\right|}
    {\left|\text{det}\left(\tilde\Gamma_{L,j}(E)\right)\right|}\\=
    \left|1-\text{det}\left(1-\left[
          1-\left[\tilde\Gamma_{L,j}(E)\right]^{-1}
          \left(\Gamma_L(E)+e^{-i\theta(E)}\right)\right]\right)\right|.
  \end{multline*}
  We then conclude as in the case of equation~\eqref{eq:1}. This
  completes the proof of Theorem~\ref{thr:12}.
\end{proof}
\noindent Combining Theorems~\ref{thr:12},~\ref{thr:13}
and~\ref{thr:14}, we get a pretty clear picture of the resonances near
the Dirichlet eigenvalues in $(-2,2)$ as long as the associated $a_j$
and $d_j$ behave correctly. As said, this and the knowledge of the
spectral statistics for random operators will enable us to prove the
results described in section~\ref{sec:random-case}. For the periodic
case, Theorems~\ref{thr:13},~\ref{thr:14} and~\ref{thr:12} will prove
not too be sufficient. As we shall see, in this case, $a_j$ and $d_j$
are of the same order of magnitude. Thus, neighboring Dirichlet
eigenvalues have a sizable effect on the location of
resonances. Therefore, in the next section, we compute the Dirichlet
spectral data for the truncated periodic potential.
\section{The Dirichlet spectral data for periodic potentials}
\label{sec:periodic-case-1}
As we did not find any suitable reference for this material, we first
derive a suitable description of the spectral data (i.e. the $(a_j)_j$
and $(\lambda_j)_j$) for the Dirichlet restriction of a periodic
operator to the interval $\llbracket 0,L\rrbracket$ when $L$ becomes
large.\\
Consider a potential $V:\N\to\R$ such that, for some $p\geq1$, one has
$V_k=V_{k+p}$ for all $k\geq0$. We assume $p$ to be minimal, i.e., to be
the period of $V$. In our first result, we describe the spectrum of
$H^\Z=-\Delta+V$ on $\ell^2(\Z)$ and $H^\N=-\Delta+V$ on $\ell^2(\N)$
(with Dirichlet boundary conditions at $0$). In the second result we
turn to $H_L$, the Dirichlet restriction $H^\N$ to $\llbracket
0,L\rrbracket$ and described its spectral data, i.e., its eigenvalues
and eigenfunctions. \\
We recall
\begin{Th}
  \label{thr:15}
  The spectrum of $H^\Z$, say $\Sigma_\Z$, is a union of at most $p$
  disjoint intervals that all consist in purely absolutely continuous
  spectrum.\\
  The spectrum of $H^\N$ is the union of $\Sigma_\Z$ and at most
  finitely many simple eigenvalues outside $\Sigma_\Z$, say,
  $(v_j)_{0\leq j \leq n}$. $\Sigma_\Z$ consists of purely absolutely
  continuous spectrum of $H^\N$ and the eigenfunctions associated to
  $(v_j)_{0\leq j \leq n}$, say $(\psi_j)_{0\leq j \leq n}$, are exponentially
  decaying at infinity.
\end{Th}
\noindent Except for the exponential decay of the eigenfunctions, the
proof of the statement for the periodic operator on $\Z$ and $\N$ is
classical and can e.g. be found in a more general setting
in~\cite[chapters 2, 3 and 7]{MR1711536} (see
also~\cite{MR0650253,MR58:12429c}). The exponential decay is an
immediate consequence of Floquet theory for the periodic Hamiltonian
on $\Z$ and the fact that the eigenvalues lie in gaps of $\Sigma_\Z$.\\
For $H^\Z$ one can define its Bloch quasi-momentum (see the beginning
of section~\ref{sec:proof-theorem-2} for details) that we denote by
$\theta_p$; it is continuous and strictly increasing on $\Sigma_\Z$
and real analytic on $\overset{\circ}{\Sigma}_\Z$. Decompose
$\Sigma_\Z$ into its connected components, i.e.,
$\D\Sigma_\Z=\bigcup_{r=1}^q B_r$ where $q\leq p$. Let $c_q$ be the
number of closed gaps contained in $q$. Then, $\theta_p$ is continuous
and strictly increasing on $B_r$ and real analytic on
$\overset{\circ}{B}_r$, the interior of the $r$-th band.  Moreover, on
this set, its derivative can be expressed in terms of the density of
states defined in~\eqref{eq:144} as
\begin{equation}
  \label{eq:183}
  n(\lambda)=\frac1\pi\theta'_p(\lambda).
\end{equation}
We first describe the eigenvalues of $H_L$.
\begin{Th}
  \label{thr:16}
  One has
  \begin{enumerate}
  \item For any $k\in\{0,\cdots,p-1\}$, there exists $h_k:\;
    \Sigma_\Z\to\R$, a continuous function that is real analytic in a
    neighborhood of $\overset{\circ}{\Sigma}_\Z$ such that, for $L$
    sufficiently large s.t.  $L\equiv k\mod p$,
    \begin{enumerate}
    \item for $1\leq r\leq q$, the function $h_k$ maps $B_r$ into
      $(-(c_r+1)\pi,(c_r+1)\pi)$;
    \item define the function
      \begin{equation}
        \label{eq:229}
        \theta_{p,L}:=\theta_p-\frac{h_k}{L-k};    
      \end{equation}
      it is continuous and strictly monotonous on each $B_r$ ($1\leq
      r\leq q$);
    \item for $1\leq r\leq q$, the eigenvalues of $H_L$ in $B_r$, the $r$-th
      band of $\Sigma_\Z$, say $(\lambda^r_j)_j$, are the solutions
      (in $\Sigma_\Z$) to the quantization conditions
      \begin{equation}
        \label{eq:23}
        \theta_{p,L}(\lambda^r_j)=\frac{j\pi}{L-k},\quad j\in\Z.
      \end{equation}
    \end{enumerate}
  \item There exists $c>0$ such that, if $\lambda$ is an eigenvalue of
    $H_L$ outside $\Sigma_\Z$, then, for $L=Np+k$ sufficiently large,
    there exists
    $\lambda_\infty\in\Sigma_0^+\cup\Sigma^-_k\setminus\Sigma_\Z$
    s.t., one has $|\lambda-\lambda_\infty|\leq e^{-c L}$.
  \end{enumerate}
\end{Th}
\noindent Recall that $\Sigma_0^+$ and $\Sigma^-_k$ are respectively the
spectra of $H_0^+$ and $H^-_k$ defined in section~\ref{sec:auxil-op}. \\
In Theorem~\ref{thr:16}, when solving equation~\eqref{eq:23}, one has
to do it for each band $B_r$, and, for each band and each $j$ such
that $\D\frac{j\pi}{L-k}\in \theta_{p,L}(B_r)$, equation~\eqref{eq:23}
admits a unique solution. But, it may happen that one has two
solutions to~\eqref{eq:23} for a given $j$ belonging to neighboring
bands. In the sequel to simplify the notations, we will not
distinguish between the different bands, i.e., we will write
eigenvalues $(\lambda_j)_j$ not referring to the band they belong to.\\
Let us now describe the associated eigenfunctions.
\begin{Th}
  \label{thr:30}
  Recall that $(\lambda_j)_j$ are the eigenvalues of $H_L$ in
  $\Sigma_\Z$ (enumerated as in Theorem~\ref{thr:16}).
  \begin{enumerate}
  \item There exist $p+2$ positive functions, say, $f^+_0$,
    $(f^-_k)_{0\leq k\leq p-1}$ and $\tilde f$, that are real analytic in a
    neighborhood of $\overset{\circ}{\Sigma}_\Z$ such that, there
    exists $\sigma_r\in\{+1,-1\}$ such that, for $L=Np+k$ sufficiently
    large, for $\lambda_j$ in $\overset{\circ}{B_r}$, the interior of
    $r$-th band of $\Sigma_\Z$, one has
    \begin{equation}
      \label{eq:26}
      \begin{split}
        &|\varphi_l(L)|^2=\frac{f^-_k(\lambda_j)}{L-k}\left(1+
          \frac{\tilde f(\lambda_j)}{L-k}\right)^{-1},\quad
        |\varphi_l(0)|^2=\frac{f^+_0(\lambda_j)}
        {f^-_k(\lambda_j)}|\varphi_l(L)|^2,\\
        &\varphi_l(L)\overline{\varphi_l(0)}=\sigma_r\, e^{i\pi
          l} |\varphi_l(L)||\varphi_l(0)|=\sigma_r\,
        e^{i(L-k)\theta_p(\lambda_j)-
          h_k(\lambda_j)}|\varphi_l(L)||\varphi_l(0)|.
      \end{split}
    \end{equation}
  \item Let $\lambda$ be an eigenvalue of $H_L$ outside $\Sigma_\Z$
    (see point (3) in Theorem~\ref{thr:16}). If $\varphi$ is a
    normalized eigenfunction associated to $\lambda$ and $H_L$, one
    has one of the following alternatives for $L$ large
    \begin{enumerate}
    \item if $\lambda_\infty\in\Sigma_0^+\setminus\Sigma^-_k$, one has
      \begin{equation}
        \label{eq:27}
        |\varphi(L)|\asymp e^{-cL}\quad\text{and}\quad|\varphi(0)|\asymp 1;
      \end{equation}
    \item if $\lambda_\infty\in\Sigma^-_k\setminus\Sigma^+_0$, one has
      \begin{equation}
        \label{eq:134}
        |\varphi(L)|\asymp 1\quad\text{and}\quad|\varphi(0)|\asymp e^{-cL};
      \end{equation}
    \item if $\lambda_\infty\in\Sigma^-_k\cap\Sigma^+_0$, one has
      \begin{equation}
        \label{eq:158}
        |\varphi(L)|\asymp 1\quad\text{and}\quad|\varphi(0)|\asymp 1.
      \end{equation}
    \end{enumerate}
  \end{enumerate}
\end{Th}
\noindent For later use, let us define $\theta_{p,L}$, $f_{0,L}$ and
$f_{k,L}$ by
\begin{equation}
  \label{eq:171}
  f_{k,L}(\lambda)=f^-_k(\lambda)\left(1+
    \frac{\tilde
      f(\lambda)}{L-k}\right)^{-1}\quad\text{and}\quad
  f_{0,L}(\lambda)=f^+_0(\lambda)\left(1+ \frac{\tilde
      f(\lambda)}{L-k}\right)^{-1}    
\end{equation}
where $\theta_p$, $h_k$, $f_{0}$, $f_{k}$ and $\tilde f$ are
defined in Theorem~\ref{thr:16}.\\
As a consequence of Theorem~\ref{thr:16}, we obtain
\begin{Cor}
  \label{cor:1}
  For $\lambda\in\overset{\circ}{\Sigma}_\Z$, for $L\equiv k\mod(p)$
  sufficiently large, one has
  \begin{gather}
    \label{eq:64}
    \frac{dN_k^-}{d\lambda}(\lambda)=n_k^-(\lambda)=f^-_k(\lambda)n(\lambda)
    =\frac1\pi f^-_k(\lambda)\theta_p'(\lambda)=\frac1\pi
    f_{k,L}(\lambda)\theta_{p,L}'(\lambda),
    \\
    \label{eq:53}
    \frac{dN_0^+}{d\lambda}(\lambda)=n_0^+(\lambda)=f^+_0(\lambda)n(\lambda)
    =\frac1\pi f^+_0(\lambda)\theta_p'(\lambda)=\frac1\pi
    f_{0,L}(\lambda)\theta_{p,L}'(\lambda).
  \end{gather}
  Here, $\theta_P$, $f^+_0$ and $f^-_k$ are defined the functions
  defined in Theorem~\ref{thr:16}.
\end{Cor}
\begin{proof}[Proof of Corollary~\ref{cor:1}]
  To prove the first equalities in~\eqref{eq:64} and~\eqref{eq:53}, it
  suffices to prove that, for any
  $\chi\in\Coi(\overset{\circ}{\Sigma}_\Z)$,
  \begin{gather}
    \label{eq:67}
    \begin{aligned}
      \langle\delta_0,\chi(H_k^-)\delta_0\rangle&=
      \int_\R\chi(\lambda)dN_k^-(\lambda)=\frac1\pi
      \int_\R\chi(\theta_p^{-1}(k))f^-_k(\theta_p^{-1}(k))d
      k\\&=\frac1\pi
      \int_\R\chi(\lambda)f^-_k(\lambda)\theta_p'(\lambda)d\lambda,
    \end{aligned}
    \\
    \label{eq:69}
    \begin{aligned}
      \langle\delta_0,\chi(H_0^+)\delta_0\rangle&=
      \int_\R\chi(\lambda)dN_0^+(\lambda)=\frac1\pi
      \int_\R\chi(\theta_p^{-1}(k))f^+_0(\theta_p^{-1}(k))d
      k\\&=\frac1\pi
      \int_\R\chi(\lambda)f^+_0(\lambda)\theta_p'(\lambda)d\lambda,
    \end{aligned}
  \end{gather}
  the full statement then following by standard density argument. The
  operator $H_L$ converges to $H_0^+$ in norm resolvent sense. Thus,
  we know that $\D \langle\delta_0,\chi(H_0^+)\delta_0\rangle=
  \lim_{L\to+\infty} \langle\delta_0,\chi(H_L)\delta_0\rangle$. Now,
  by Theorem~\ref{thr:16}, as $\chi$ is supported in
  $\overset{\circ}{\Sigma}_\Z$, using the Poisson formula, one
  computes
  \begin{equation*}
    \begin{split}
      \langle\delta_0,&\chi(H_L)\delta_0\rangle=
      \sum_{j}\chi(\lambda_j)||\varphi_j(0)|^2
      =\frac1{L-k}\sum_{l}\chi\left(\theta_{p,L}^{-1}
        \left(\frac{l\pi}{L-k}\right)\right)
      f_{0,L}\left(\theta_{p,L}^{-1}\left(\frac{l\pi}{L-k}\right)\right)\\
      &=\frac1{L-k}\sum_{j\in\Z}\int_{\R} e^{-i2\pi j
        \lambda}\chi\left(\theta_{p,L}^{-1}
        \left(\frac{\pi\,\lambda}{L-k}\right)\right)
      f_{0,L}\left(\theta_{p,L}^{-1}
        \left(\frac{\pi\,\lambda}{L-k}\right)\right)d\lambda\\&
      =\frac1\pi\sum_{j\in\Z}\int_{\R} e^{-i2(L-k)j
        \theta_{p,L}(\lambda)}\chi\left( \lambda\right)
      f_{0,L}\left(\lambda\right)\theta'_{p,L}(\lambda)d\lambda.
    \end{split}
  \end{equation*}
  Thus, using the non stationary phase, i.e., integrating by parts, one
  gets, for any $N\geq2$,
  \begin{equation}
    \label{eq:173}
    \begin{split}
      \left|\langle\delta_0,\chi(H_L)\delta_0\rangle-\frac1\pi\int_{\R}
        \chi\left( \lambda\right)
        f_{0,L}\left(\lambda\right)\theta'_{p,L}(\lambda)d\lambda\right|&\leq
      \sum_{j\geq1}C_{N,K}\|\chi\|_{\mathcal{C}^N}(|j|(L-k))^{-N}
      \\&\leq C_{N,K}\|\chi\|_{\mathcal{C}^N}((L-k))^{-N}.
    \end{split}
  \end{equation}
  Here, we have used the analyticity of the functions $\theta_{p,L}$
  and $f_{0,L}$.\\
  To deal with $H_k^-$, we recall the operator $\tilde H_L$ (that is
  unitarily equivalent to $H_L$) defined in Remark~\ref{rem:6}. One
  has $\langle\delta_L,H_L\delta_L\rangle= \langle\delta_0,\chi(\tilde
  H_L)\delta_0\rangle$, thus, as $H_k^-$ is the strong resolvent sense
  limit of $\tilde H_L$, one gets
  $\D\langle\delta_0,\chi(H_k^-)\delta_0\rangle=
  \lim_{L\to+\infty} \langle\delta_L,\chi(H_L)\delta_L\rangle$.\\
  Then,~\eqref{eq:67} and~\eqref{eq:69} and, thus, the first
  equalities in~\eqref{eq:64} and~\eqref{eq:53}, follow as
  $\theta'_{p,L}$, $f_{0,L}$ and $f_{k,L}$ converge (locally uniformly
  on $\overset{\circ}{\Sigma}_\Z$) respectively to $\theta'_p$,
  $f^+_0$
  and $f^-_k$ (see~\eqref{eq:171} and Theorem~\ref{thr:16}).\\
  Let us now prove the second equalities in~\eqref{eq:64}
  and~\eqref{eq:53}. Therefore, we use an {\it almost analytic
    extension} (see~\cite{Mat:71}) of $\chi$, say, $\tilde\chi$, that
  is, a function $\tilde\chi:\ \C\to\C$ satisfying
  ( \begin{enumerate}
  \item for $z\in\R$, $\tilde\chi(z)=\chi(z)$,
  \item supp$(\tilde\chi)\subset\{z\in\C;\
    \vert\text{Im}(z)\vert<1\}$,
  \item $\tilde\chi\in{\mathcal S}(\{z\in\C;\
    \vert\text{Im}(z)\vert<1\})$,
  \item The family of functions $\D
    x\mapsto\frac{\partial\tilde\chi}{\partial\overline z}(x+iy)\cdot \vert
    y\vert^{-n}$ (for $0<\vert y\vert<1$) is bounded in $\Sc(\R)$ for
    any $n\in{\mathbb N}$.
  \end{enumerate}
  Moreover, $\tilde\chi$ can be chosen so that one has the following
  estimates: for $n\geq 0$, $\alpha\geq0$, $\beta\geq0$, there exists
  $C_{n,\alpha,\beta}>0$ such that
  \begin{equation}
    \label{estunif}
    \sup_{0<\vert y\vert\leq 1}\sup_{x\in\R}\left\vert
      x^\alpha\frac{\partial^\beta}{\partial x^\beta}\left(\vert
        y\vert^{-n}\cdot \frac{\partial\tilde\chi}{\partial\overline
          z}(x+iy)\right)\right\vert\\\leq
    C_{n,\alpha,\beta}\sup_{\substack{\beta'\leq n+\beta+2
        \alpha'\leq\alpha}}\sup_{x\in\R}\left\vert
      x^{\alpha'}\frac{\partial^{\beta'}\chi}{\partial
        x^{\beta'}}(x)\right\vert.
  \end{equation}
  By the definition of $\chi$, the right hand side of~\eqref{estunif}
  is bounded uniformly in $E$ complex.\\
  Let $\chi\in\Coi(\R)$ and $\tilde\chi$ be an almost analytic
  extension of $\chi(x)$. Then, by \cite{He-Sj:90} and
  \cite{MR96m:82033}, we know that, for any $n$ and $\omega\in\Omega$,
  the following formula hold,
  \begin{equation}
    \label{hesj0}
    \chi(H_\bullet)= \frac i{2\pi}\int_\C
    \frac{\partial\tilde\chi}{\partial{\overline z}}(z)\cdot 
    (z-H_\bullet)^{-1}dz\wedge d{\overline z}
  \end{equation}
  where $H_\bullet=H_L, \ \tilde H_L,\ H_0^+$ or $H_k^-$.\\
  Using the geometric resolvent equation (see, e.g.,~\cite[Theorem
  5.20]{MR2509110}) and the Combes-Thomas estimate (see
 , e.g.,~\cite[Theorem 11.2]{MR2509110}), we know that for some $C>0$,
  for Im$z\not=0$,
  \begin{multline}
    \label{eq:168}
    \left|\left\langle\delta_0,\left[(\tilde
          H_L-z)^{-1}-(H_k^--z)^{-1}
        \right]\delta_0\right\rangle\right|\\+
    \left|\left\langle\delta_0,\left[(H_L-z)^{-1}-(H_0^+-z)^{-1}
        \right]\delta_0\right\rangle\right|\leq
    \frac{C}{|\text{Im}z|}e^{-L|\text{Im}z|/C}.
  \end{multline}
  Plugging~\eqref{eq:168} into~\eqref{hesj0} and
  using~\eqref{estunif}, we get
  \begin{equation*}
    \left|\sum_{j=0}^L\chi(\lambda_j)|\varphi_j(0)|^2-
      \int_{\R}\chi(\lambda)dN^+_0(\lambda)
    \right|\leq \tilde C_N\int_{|y|\leq 1}|y|^{N-1}
    e^{-L|y|/C}dy\leq C_N L^{-N}
  \end{equation*}
  Thus, by~\eqref{eq:69} and~\eqref{eq:173}, we obtain that, for
  $\chi\in\Coi(\overset{\circ}{\Sigma}_\Z)$ and any $N\geq0$, there
  exists $C_N>0$ such that
  \begin{equation}
    \label{eq:71}
    \begin{split}
      &\left|\int_{\R} \chi\left( \lambda\right)\left[
          f_{0,L}\left(\lambda\right)\theta'_{p,L}(\lambda)-
          f^+_0\left(\lambda\right)\theta'_p(\lambda)
        \right]d\lambda\right|\\&= \left|\int_{\R} \chi\left(
          \lambda\right)
        f_{0,L}\left(\lambda\right)\theta'_{p,L}(\lambda)d\lambda-
        \int_{\R}\chi(\lambda)dN^+_0(\lambda) \right|\leq C_N L^{-N}.
    \end{split}
  \end{equation}
  Now, by~\eqref{eq:23} and~\eqref{eq:171}, the function
  $f_{0,L}\theta'_{p,L}-f^+_0\theta'_p$ admits an expansion in inverse
  powers of $L$ that is converging uniformly on compact subsets of
  $\overset{\circ}{\Sigma}_\Z$, namely,
  \begin{equation*}
    f_{0,L}\theta'_{p,L}-f^+_0\theta'_p=\sum_{k\geq1} L^{-k}\alpha_k.
  \end{equation*}
  Thus,~\eqref{eq:71} immediately yields that, for any $k\geq1$, one
  has $\alpha_k\equiv0$ on $\overset{\circ}{\Sigma}_\Z$. Hence,
  $f_{0,L}\theta'_{p,L}\equiv f^+_0\theta'_p$ on
  $\overset{\circ}{\Sigma}_\Z$. This completes the proof of
  Corollary~\ref{cor:1}.
\end{proof}
\subsection{The proofs of Theorems~\ref{thr:16} and~\ref{thr:30}}
\label{sec:proof-theorem-2}
We will first describe some objects from the spectral theory of
$H^\Z$, use them to describe the spectral theory of $H^\N$, prove
Theorem~\ref{thr:16} and finally prove Theorem~\ref{thr:30}.
\subsubsection{The spectral theory of $H^\Z$}
\label{sec:spectral-theory-Z}
This material is classical (see, e.g.,~\cite{MR0650253,MR1711536}); we
only recall it for the readers convenience. For $0\leq j\leq p-1$, define
$\tilde T_j=\tilde T_j(E)$ to be a monodromy matrix for the periodic
finite difference operator $H^\Z$, that is ,
\begin{equation}
  \label{eq:223}
  \tilde T_j(E)= T_{j+p-1,j}(E)=T_{j+p-1}(E)\cdots
  T_j(E)=:\begin{pmatrix}a^j_p(E)&b^j_p(E)\\
    a^j_{p-1}(E)&b^j_{p-1}(E)\end{pmatrix} 
\end{equation}
where
\begin{equation}
  \label{eq:242}
  T_j(E)=\begin{pmatrix}E-V_j&-1\\1&0\end{pmatrix}.  
\end{equation}
The coefficients of $\tilde T_j(E)$ are monic polynomials in the
energy $E$: $a^j_p(E)$ has degree $p$ and $b^j_p(E)$ has degree
$p-1$. Clearly, det$\,\tilde T_j(E)=1$. As $j\mapsto V_j$ is
$p$-periodic, so is $j\mapsto \tilde T_j(E)$. Moreover, for $j'<j$,
one has
\begin{equation}
  \label{eq:39}
  \tilde T_j(E)\,T_{j,j}(E)=T_{j+p-1,j'+p-1}(E)\,\tilde T_{j'}(E)
  =T_{j,j'}(E)\,\tilde T_{j'}(E).
\end{equation}
Thus, the discriminant $\underline{\Delta}(E):=\,$tr$\,\tilde
T_j(E)=a^j_p(E)+ b^j_{p-1}(E)$ is a polynomial of degree $p$ that is
independent of $j$; so are $\rho(E)$ and $\rho^{-1}(E)$, the
eigenvalues of $\tilde T_j(E)$. One defines the Bloch quasi-momentum
$E\mapsto\theta_p(E)$ by
\begin{equation}
  \label{eq:187}
  \underline{\Delta}(E)=\rho(E)+\rho^{-1}(E)=2\cos(p\,\theta_p(E)).
\end{equation}
Let us recall some basic properties of the discriminant $\Delta$ and
the coefficients of $\tilde T_j$, the proofs of which can be found
in~\cite{MR0650253}:
\begin{enumerate}
\item if $\Delta'(E)=0$ then $|\underline{\Delta}(E)|\geq2$;
\item the zeros of $\Delta'$ are simple;
\item $E$ is a zero of $\Delta'$ s.t.  $|\underline{\Delta}(E)|=2$ if
  and only if $\tilde T_j(E)\in\{+\,$Id,$-\,$Id$\}$ (for any $j$);
\item the polynomials $b_p^j$ and $a_{p-1}^j$ only vanish in the set
  $\{|\underline{\Delta}(E)|\geq2\}$ ; they keep a fixed sign in each
  of the connected components of the set
  $\{|\underline{\Delta}(E)|<2\}$.
\end{enumerate}
Note that $\underline{\Delta}(E)$ is real when $E$ is real. Thus, for
$E$ real, $|\underline{\Delta}(E)|\leq 2$ implies that
$\rho^{-1}(E)=\overline{\rho(E)}$ and $|\underline{\Delta}(E)|>2$ that
$\rho(E)$ is real. When $|\underline{\Delta}(E)|\leq 2$, we will fix
$\rho(E):=e^{ip\theta_p(E)}$ and when $|\underline{\Delta}(E)|> 2$, we
will fix $\rho(E)$ so that $|\rho(E)|<1$.\\
$E$ belongs to the spectrum of $H^\Z$ (i.e. $-\Delta+V$ on
$\ell^2(\Z)$) if and only if $|\underline{\Delta}(E)|\leq 2$ (see,
e.g.,~\cite{MR1711536}).\\
Properties (1)-(3) above imply that, for $E_0$ a zero of $\Delta'$
such that $\underline{\Delta}(E_0)=\pm2$, $\theta_p$ is real analytic
near $E_0$ and $\theta'_p(E_0)\not=0$.
\begin{Def}
  \label{def:1}
  $E_0$ is said to be a closed gap if and only if
  $|\underline{\Delta}(E_0)|=2$ and $\Delta'(E_0)=0$ or equivalently
  if and only if $\tilde T_0(E_0)$ is diagonal.
\end{Def}
\noindent Consider $\partial\Sigma_\Z$. It is the set of energies
solutions to $|\underline{\Delta}(E)|=2$ where $\tilde T_0(E)$ is not
diagonal; it is also the set of roots of $|\underline{\Delta}(E)|=2$
that are not closed gaps. From the upper half of the complex plane,
one can continue $E\mapsto\theta_p(E)$ analytically to the universal
cover of $\C\setminus\partial\Sigma_\Z$. Each of the points in
$\partial\Sigma_\Z$ is a branch point of $\theta_p$ of square root
type.  Moreover, for $E\not\in\partial\Sigma_\Z$, there exists two
linearly independent solutions to the eigenvalue equation
$(-\Delta+V-E)u=0$, say $\varphi_\pm(E)$, satisfying, for $n\in\Z$
\begin{equation}
  \label{eq:28}
  \varphi_\pm(n+p,E)=e^{\pm i p\theta_p(E)}\varphi_\pm(n,E).
\end{equation}
\subsubsection{The spectral theory of $H^\N$}
\label{sec:spectral-theory}
Let us now turn to the spectrum of the operator on the half-lattice.
\paragraph{\it The operator $H_0^+$}
\label{sec:operator}
For the operator $H_0^+=H^\N$ (that is $-\Delta+V$ on $\ell^2(\N)$
with Dirichlet boundary conditions at $0$), $E$ is in the spectrum if
and only if
\begin{itemize}
\item either $|\underline{\Delta}(E)|\leq 2$
\item or $|\underline{\Delta}(E)|>2$ and $\D [\tilde
  T_0(E)]^n\begin{pmatrix}1\\0\end{pmatrix}$ stays bounded as
  $n\to+\infty$.
\end{itemize}
The second condition is equivalent to asking that $\D [\tilde
T_j(E)]^nT_{j-1}(E)\cdots T_0(E)\begin{pmatrix}1\\0\end{pmatrix}$
stay bounded as $n\to+\infty$.\\
When $|\underline{\Delta}(E)|\not=2$ and $a^0_{p-1}(E)\not=0$, one can
diagonalize $\tilde T_0(E)$ in the following way
\begin{multline}
  \label{eq:33}
  \begin{pmatrix} a^0_{p-1}(E)&\rho(E)-a^0_p(E)\\
    -a^0_{p-1}(E)&a^0_p(E)-\rho^{-1}(E) \end{pmatrix}\times \tilde T_0(E)\\=
  \begin{pmatrix}\rho(E)& 0\\0&\rho^{-1}(E) \end{pmatrix}\times
  \begin{pmatrix} a^0_{p-1}(E)&\rho(E)-a^0_p(E)\\
    -a^0_{p-1}(E)&a^0_p(E)-\rho^{-1}(E) \end{pmatrix}.
\end{multline}
Thus, using
\begin{equation}
  \label{eq:43}
  \left|\begin{matrix}
      \rho(E)-a_p^0(E)&-b_p^0(E) \\-a_{p-1}^0(E) &\rho(E)-b_{p-1}^0(E)
    \end{matrix}\right|=
  \left|\begin{matrix}
      \rho(E)-a_p^0(E)&-b_p^0(E) \\-a_{p-1}^0(E) &a_p^0(E)-\rho^{-1}(E)
    \end{matrix}\right|=0
\end{equation}
for $n\in\Z$, one computes
\begin{equation}
  \label{eq:42}
  \left(\tilde T_0(E)\right)^n=
  \begin{pmatrix}\tilde t_{0,n}^{11}(E)&\tilde t_{0,n}^{12}(E)\\\tilde
    t_{0,n}^{21}(E)& \tilde t_{0,n}^{22}(E)\end{pmatrix}
\end{equation}
where
\begin{equation}
  \label{eq:210}
  \begin{aligned} & \tilde
    t_{0,n}^{11}(E):=\rho^n(E)
    \frac{a^0_p(E)-\rho^{-1}(E)}{\rho(E)-\rho^{-1}(E)}
    +\rho^{-n}(E)\frac{\rho(E)-a^0_p(E)}{\rho(E)-\rho^{-1}(E)},
    \\
    &\tilde t_{0,n}^{12}(E):=
    \left(\rho^{-n}(E)-\rho^n(E)\right)\frac{b^0_p(E)}{\rho(E)-\rho^{-1}(E)},\\
    &\tilde t_{0,n}^{21}(E):=\left(\rho^n(E)-\rho^{-n}(E)\right)
    \frac{a^0_{p-1}(E)}{\rho(E)-\rho^{-1}(E)},\\&
    \tilde
    t_{0,n}^{22}(E):=\rho^{-n}(E)\frac{a^0_p(E)-\rho^{-1}(E)}
    {\rho(E)-\rho^{-1}(E)}
    +\rho^n(E)\frac{\rho(E)-a^0_p(E)}{\rho(E)-\rho^{-1}(E)}.
  \end{aligned}
\end{equation}
Clearly, the formulas~\eqref{eq:33},~\eqref{eq:42} and~\eqref{eq:210}
stay valid even if $a^0_{p-1}(E)=0$. They also stay valid if
$|\underline{\Delta}(E)|=2$ and $\Delta'(E)=0$. Indeed, by points
(1)-(3) in section~\ref{sec:spectral-theory-Z}, the functions
$\rho-\rho^{-1}$, $a^0_p-\rho^{-1}$, $-\rho-a^0_p$, $b^0_p$ and
$a^0_{p-1}$ are analytic
near and have simple zeros at such points. \\
We have thus proved that
\begin{Le}
  \label{le:14}
  For $E\not\in\partial\Sigma_\Z$, $\left(\tilde T_0(E)\right)^n$ has
  the form~\eqref{eq:42}~-~\eqref{eq:210}
\end{Le}
\noindent Simple computations then show that $E$ is in the spectrum of
$H_0^+$, that is, $-\Delta+V$ on $\ell^2(\N)$ with Dirichlet boundary
conditions at $0$ if and only if one of the following conditions is
satisfied:
\begin{enumerate}
\item $|\underline{\Delta}(E)|\leq 2$: moreover, the set $\{E\in\R;\
  |\underline{\Delta}(E)|\leq 2\}$ is contained in the absolutely
  continuous spectrum of $H_0^+$;
\item $|\underline{\Delta}(E)|>2$ and
  \begin{equation}
    \label{eq:119}
    a^0_{p-1}(E)=0\quad\text{ and }\quad |a^0_p(E)|<1.
  \end{equation}
\end{enumerate}
Thus, on $\Sigma_\Z$, the spectrum of $H_0^+$ is purely absolutely
continuous; it does not contain any embedded eigenvalues. \\
Note that, in case (2), $\D [\tilde
T_0(E)]^n\begin{pmatrix}1\\0\end{pmatrix}$ actually decays
exponentially fast. In this case, $E$ is an eigenvalue associated to
the (non normalized) eigenfunction $(u_l)_{l\in\N}$ where, for
$n\geq0$ and $j\in\{0,\cdots,p-1\}$,
\begin{equation}
  \label{eq:127}
  \begin{split}
    u_{np+j}(E)&= \left\langle T_{j-1}(E)\cdots T_0(E)
      \begin{pmatrix}1\\0 \end{pmatrix},
      \begin{pmatrix} 1\\0 \end{pmatrix}\right\rangle\cdot
    \left[a^0_p(E)\right]^n\\&=a_j(E)\left[a^0_p(E)\right]^n
  \end{split}
\end{equation}
writing
\begin{equation}
  \label{eq:118}
  T_{j-1}(E)\cdots T_0(E)=: \begin{pmatrix} a_j(E)&b_j(E)\\a_{j-1}(E)
    &b_{j-1}(E) \end{pmatrix}.
\end{equation}
It is well know that, for any $j$, the zeros of $a_j$ and $b_j$ are
simple (see, e.g.,~\cite[section 4]{MR1711536}), and the roots of
$a_{j+1}$ (resp. $b_{j+1}$) interlace those of $a_j$ (resp. $b_j$).
Let $E'$ be an eigenvalue of $H_0^+$. Differentiating~\eqref{eq:43} at
the energy $E'$, we compute
\begin{equation}
  \label{eq:115}
  b_p^0(E')\frac{d a_{p-1}^0}{dE}(E')
  +(\rho(E')-\rho^{-1}(E'))\frac{d (\rho-a_p^0)}{dE}(E')=0.
\end{equation}
\paragraph{\it The eigenvalues of the operator $H_k^-$}
\label{sec:operator2}
Let us now turn to $H^-_k$. Recalling~\eqref{eq:118} and using the
representation~\eqref{eq:42}, we obtain that the eigenvalues of
$H^-_k$ outside $\Sigma_\Z$ satisfy
\begin{equation}
  \label{eq:34}
  \begin{pmatrix}
    \rho(E)-a_p^0(E)&-a_{p-1}^0(E) \\-b_p^0(E) &a_p^0(E)-\rho^{-1}(E)
  \end{pmatrix}
  \begin{pmatrix}
    a_{k+1}(E)\\ b_{k+1}(E)
  \end{pmatrix}=0.
\end{equation}
As for $H_0^+$, the eigenfunction associated to $E$ and $H_k^-$ decays
exponentially fast. Indeed, the eigenvalues of $H_k^-$ in the region
$|\underline{\Delta}(E)|>2$ can be analyzed as we analyzed those of
$H_0^+$, i.e., they are the energies such that $\D [\tilde
T_k(E)]^{-n}\begin{pmatrix}0\\1\end{pmatrix}$ stays bounded; this
yields the quantization conditions $ b^k_p(E)=0\text{ and
}|b^k_{p-1}(E)|<1$. In this case, $E$ is an eigenvalue associated to
the (non normalized) eigenfunction $(u_l)_{-l\in\N}$ where, for
$n\geq0$ and $k\in\{0,\cdots,p-1\}$,
\begin{equation}
  \label{eq:126}
  u_{-np-k}(E)=b_k(E)\left[b^k_{p-1}(E)\right]^{-n}.
\end{equation}
\paragraph{\it Common eigenvalues to $H_0^+$ and $H_k^-$}
\label{sec:operator3}
Assume now that $E'$ is simultaneously an eigenvalue of $H_k^-$ and
$H_0^+$. In this case, one has $a_{p-1}^0(E')=0$, $|a_p^0(E')|<1$ and
$b_p^0(E')b_{k+1}(E')=a_{k+1}(E')(\rho^{-1}(E')-\rho(E'))$. So~\eqref{eq:34}
(see also~\eqref{eq:115}) becomes
\begin{equation}
  \label{eq:116}
  \begin{pmatrix}
    \frac{d (\rho-a_p^0)}{dE}(E')&-\frac{d a_{p-1}^0}{dE}(E')
    \\-b_p^0(E) &a_p^0(E')-\rho^{-1}(E')
  \end{pmatrix} \begin{pmatrix} a_{k+1}(E')\\
    b_{k+1}(E') \end{pmatrix}=0.
\end{equation}
Hence, the analytic function $E\mapsto
a_{k+1}(E)(a_p^0(E)-\rho(E))-b_{k+1}(E)a_{p-1}^0(E)$ has a root of
order at least 2 at $E'$. It also implies that
$a_{k+1}(E')\not=0$. Indeed, if $a_{k+1}(E')=0$,~\eqref{eq:116}
implies
$b_{k+1}(E')=0$ as $\frac{d a_{p-1}^0}{dE}(E')\not=0$.\\
Conversely, if $E'\in\sigma(H_0^+)$ such that
$|\underline{\Delta}(E')|>2$ and $E\mapsto
a_{k+1}(E)(a_p^0(E)-\rho(E))-b_{k+1}(E)a_{p-1}^0(E)$ has a root of
order at least 2 at $E'$, then~\eqref{eq:116} holds and $E'$
is an eigenvalue of $H_k^-$.\\
We have thus proved
\begin{Le}
  \label{le:18}
  $E_0\in\sigma(H_0^+)\cap\sigma(H_k^-)\setminus\Z$ if and only if
  $|\underline{\Delta}(E_0)|>2$ and $E_0$ is a double root of
  $E\mapsto a_{k+1}(E)(a_p^0(E)-\rho(E))-b_{k+1}(E)a_{p-1}^0(E)$.
\end{Le}
\subsubsection{The Dirichlet eigenvalues for a periodic potential :
  the proof of Theorem~\ref{thr:16}}
\label{sec:dirichl-eigenv}
Let us now turn to the study of the eigenvalues and eigenvectors of
$H_L$, i.e., to the proof of Theorem~\ref{thr:16}.  We first prove the
statements for the eigenvalues and then, in the next section,
turn to the eigenvectors.\\
Recall that $L\equiv k\mod p$; we write $L=Np+k$. By definition, $E$
is an eigenvalue of $-\Delta+V$ on $\llbracket 0,L\rrbracket$ with
Dirichlet boundary conditions if and only if
\begin{equation}
  \label{eq:32}
  \begin{split}
    0&=\text{det}\left(T_{L+1}(E)T_L(E)T_{L-1}(E)\cdots
      T_0(E)\begin{pmatrix}1 \\0\end{pmatrix},
      \begin{pmatrix}0\\1\end{pmatrix}\right)\\&= \text{det}\left(
      T_{k}(E)\cdots T_0(E)\cdot [\tilde T_0(E)]^N
      \begin{pmatrix}1\\0\end{pmatrix},
      \begin{pmatrix}0\\1\end{pmatrix}\right)
  \end{split}
\end{equation}
where $\tilde T_k(E)$ is  the monodromy matrix defined above.\\
We use the notations of sections~\ref{sec:spectral-theory}
and~\ref{sec:spectral-theory-Z}. Let us first show point (1) of
Theorem~\ref{thr:16}, namely,
\begin{Le}
  \label{le:11} For $L$ large, one has
  \begin{equation*}
    \partial\Sigma_\Z\cap\sigma(H_L)=\{E_0;\
    a_{k+1}(E_0)=a_{p-1}^0(E_0)=0\text{ and }b^0_p(E_0)\not=0\}.
  \end{equation*}
\end{Le}
\begin{proof}
  For $E_0\in \partial\Sigma_\Z$, we know that
  $|\underline{\Delta}(E_0)|=2$ and $\tilde T_0(E_0)$ is not
  diagonal. Assume $\underline{\Delta}(E_0)=2$ (the case
  $\underline{\Delta}(E_0)=-2$ is dealt with in the same way); hence,
  $\tilde T_0(E_0)$ has a Jordan normal form, i.e., there exists $P$,
  a $2\times2$ invertible matrix and $\alpha\in\R^*$ such that
  \begin{equation}
    \label{eq:218}
    \tilde T_0(E_0)=P^{-1} \begin{pmatrix} 1 & 0 \\ \alpha & 1
    \end{pmatrix} P\quad\text{where}\quad P=
    \begin{pmatrix} p_{11}&p_{12}\\ p_{21}&p_{22} \end{pmatrix}.
  \end{equation}
  Thus, by~\eqref{eq:32}, $E_0\in\sigma(H_L)$ is and only if
  \begin{equation}
    \label{eq:219}
    \begin{split}
      0&=\left|\begin{pmatrix} a_{k+1}(E_0) & b_{k+1}(E_0)\\a_{k}(E_0)
          & b_{k}(E_0)\end{pmatrix}\, \left(\tilde T_0(E_0)
        \right)^N\,\begin{pmatrix}1\\0\end{pmatrix},
        \begin{pmatrix}0\\1\end{pmatrix} \right| \\&=
      \left| \begin{pmatrix} a_{k+1}(E_0) & b_{k+1}(E_0)\\a_{k}(E_0)
          & b_{k}(E_0)\end{pmatrix}\, P^{-1}\begin{pmatrix} 1 & 0\\
          N \alpha &1\end{pmatrix}P\,\begin{pmatrix}1\\0\end{pmatrix},
        \begin{pmatrix}0\\1\end{pmatrix} \right|,
    \end{split}
  \end{equation}
  that is,
  \begin{equation*}
    \begin{split}
      0&=\left|\begin{pmatrix} 1 & 0\\ N \alpha & 1\end{pmatrix}
        P\, \begin{pmatrix}1\\0\end{pmatrix},P\, \begin{pmatrix}
          -b_{k+1}(E_0) \\a_{k+1}(E_0)\end{pmatrix}\right| \\&=
      (\text{det}\,P)\, a_{k+1}(E_0)-N\,\alpha\, p_{11}\,
      (-p_{11}b_{k+1}(E_0)+ p_{12} a_{k+1}(E_0)).
    \end{split}
  \end{equation*}
  For $N$ large, this expression vanishes if and only if
  $(\text{det}\,P)\, a_{k+1}(E_0)=0$ and $\alpha\, p_{11}\,
  (-p_{11}b_{k+1}(E_0)+ p_{12} a_{k+1}(E_0))=0$. As $P$ is invertible,
  as $|b_{k+1}(E_0)|+|a_{k+1}(E_0)|\not=0$ and as $\alpha\not=0$, one
  has $a_{k+1}(E_0)=0$ and $p_{11}=0$.\\
  In this case, using $b_{k+1}(E_0)\not=0$, we can then rewrite the
  eigenvalue equation~\eqref{eq:219} as
  \begin{equation}
    \label{eq:220}
    0=\left| (\tilde T_0(E_0))^N\begin{pmatrix}1\\0\end{pmatrix},
      \begin{pmatrix} 1\\0\end{pmatrix}\right|=\tilde t_{0,N}^{21}(E_0)
  \end{equation}
  For $E\in\overset{\circ}{\Sigma}_\Z$ close to $E_0$,
  by~\eqref{eq:210}, we have
  \begin{equation*}
    \begin{split}
      t_{0,N}^{21}(E)= \frac{\left(\rho^N(E)-\rho^{-N}(E)\right)\,
        a^0_{p-1}(E)}{\rho(E)-\rho^{-1}(E)}=\rho^{N-1}
      \left(\sum_{j=0}^{N-1}\rho^{-2j}(E) \right) a^0_{p-1}(E).
    \end{split}
  \end{equation*}
  As $\rho$ is continuous at $E_0$ and $\rho^2(E_0)=1$, taking $E$ to
  $E_0$, we get
  \begin{equation*}
    a^0_{p-1}(E_0)=0.
  \end{equation*}
  As $\tilde T_0(E_0)$ is not diagonal, this implies
  $b^0_p(E_0)\not=0$. This completes the proof of Lemma~\ref{le:11}.
\end{proof}
\noindent Now, pick $E\not\in\partial\Sigma_\Z$. Then, by
Lemma~\ref{le:14}, the quantization condition~\eqref{eq:32} becomes
\begin{equation}
  \label{eq:35}
  \left|\begin{matrix}\D \rho^N(E)
      \frac{a^0_p(E)-\rho^{-1}(E)}{\rho(E)-\rho^{-1}(E)}
      +\rho^{-N}(E)\frac{\rho(E)-a^0_p(E)}{\rho(E)-\rho^{-1}(E)}
      & -b_{k+1}(E) \\\D \left(\rho^N(E)-\rho^{-N}(E)\right)
      \frac{a^0_{p-1}(E)}{\rho(E)-\rho^{-1}(E)}
      & a_{k+1}(E)\end{matrix} \right|=0.
\end{equation}
\paragraph{\it The eigenvalues outside of $\Sigma_\Z$}
\label{sec:eiog-outs-}
Let us first study the eigenvalues outside $\Sigma_\Z$, i.e., in the
region $|\underline{\Delta}(E)|>2$. If, for $j\in\N$, we define
\begin{equation}
  \label{eq:74}
  \begin{split}
    \alpha_j(E):=a_j(E)\frac{a^0_p(E)-\rho^{-1}(E)}{\rho(E)-\rho^{-1}(E)}
    +b_j(E)\frac{a^0_{p-1}(E)}{\rho(E)-\rho^{-1}(E)}\\
    \text{and}\quad
    \beta_j(E):=a_j(E)\frac{\rho(E)-a^0_p(E)}{\rho(E)-\rho^{-1}(E)}-b_j(E)
    \frac{a^0_{p-1}(E)}{\rho(E)-\rho^{-1}(E)},
  \end{split}
\end{equation}
equation~\eqref{eq:35} can be rewritten as
$\beta_{k+1}(E)+\rho^{2N}(E) \alpha_{k+1}(E)=0$; using
\begin{equation}
  \label{eq:37}
  \alpha_{k+1}(E)+\beta_{k+1}(E)=a_{k+1}(E),
\end{equation}
\eqref{eq:35} becomes
\begin{equation}
  \label{eq:36}
  \beta_{k+1}(E)=-\frac{\rho^{2N}(E)}{1-\rho^{2N}(E)}\,a_{k+1}(E).
\end{equation}
We first show
\begin{Le}
  \label{le:19}
  There exists $\eta>0$ such that, for $L$ sufficiently large,
  $\sigma(H_L)\cap[(\Sigma_\Z+[-\eta,\eta])\setminus\Sigma_\Z]=\emptyset$.
\end{Le}
\begin{proof}
  Using~\eqref{eq:74}, we rewrite~\eqref{eq:36} as
  \begin{equation}
    \label{eq:234}
    a_{k+1}(E)(\rho(E)-a_p^0(E))-b_{k+1}(E)a_{p-1}^0(E)
    =\rho^{2N+1}(E)\frac{1-\rho^2(E)}{1-\rho^{2N}(E)}\,a_{k+1}(E).
  \end{equation}
  Pick $E_0\in\partial\Sigma_Z$. Then, by our choice for $\rho$, for
  $\eta>0$ small, we know that, for
  $E\in([E_0-\eta,E_0+\eta])\setminus\Sigma_\Z$,
  $\rho^2(E)=e^{-c_0\sqrt{|E-E_0|}(1+O(\sqrt{|E-E_0|}))}$. Hence, for
  $E\in([E_0-\eta,E_0+\eta])\setminus\Sigma_\Z$, one has
  \begin{equation}
    \label{eq:235}
    \left|\rho^{2N+1}(E)\frac{1-\rho^2(E)}{1-\rho^{2N}(E)}\right|
    \lesssim \min\left(\sqrt{|E-E_0|},\frac1N\right).
  \end{equation}
  Thus, if $a_{k+1}(E_0)(\rho(E_0)-a_p^0(E_0))-b_{k+1}(E_0)
  a_{p-1}^0(E_0)\not=0$, equation~\eqref{eq:234} has no solution in
  $[E_0-\eta,E_0+\eta]\setminus\Sigma_\Z$ for $\eta$ small and $L$
  sufficiently large.\\
  Let us now assume that
  $a_{k+1}(E_0)(\rho(E_0)-a_p^0(E_0))-b_{k+1}(E_0)a_{p-1}^0(E_0)=0$. Hence,
  \begin{itemize}
  \item if $a_{k+1}(E_0)\not=0$: one computes
    \begin{gather*}
      a_{k+1}(E)(\rho(E)-a_p^0(E))-b_{k+1}(E)a_{p-1}^0(E)
      =a_{k+1}(E_0)(\rho(E)-\rho(E_0))(1+o(1))\\\intertext{and}
      \rho^{2N+1}(E)\frac{1-\rho^2(E)}{1-\rho^{2N}(E)}\,a_{k+1}(E)=
      -(\rho(E)-\rho(E_0))\,a_{k+1}(E_0)
      \frac{\rho^{2(N+1)}(E)}{1-\rho^{2N}(E)}(1+o(1)).
    \end{gather*}
    Hence, for $\eta>0$ small and
    $E\in[E_0-\eta,E_0+\eta]\setminus\Sigma_\Z$, the two sides of
    equation~\eqref{eq:234} have opposite signs: there is no solution
    to equation~\eqref{eq:234} in this interval;
  \item if $a_{k+1}(E_0)=0$: then $b_{k+1}(E_0)\not=0$, $a^0_{p-1}(E_0)=0$,
    $\rho(E_0)=a^0_p(E_0)$ and $(a^0_{p-1})'(E_0)\not=0$; one computes
    \begin{gather*}
      a_{k+1}(E)(\rho(E)-a_p^0(E))-b_{k+1}(E)a_{p-1}^0(E)
      =-b_{k+1}(E_0)(a^0_{p-1})'(E_0)(E-E_0)(1+o(1)) \\\intertext{and,
        by~\eqref{eq:235}, for $\eta>0$ small and
      $E\in[E_0-\eta,E_0+\eta]\setminus\Sigma_\Z$,}
      \left|\rho^{2N+1}(E)\frac{1-\rho^2(E)}{1-\rho^{2N}(E)}\,a_{k+1}(E)\right|
      \lesssim |E-E_0|\min\left(\sqrt{|E-E_0|},\frac1N\right)
    \end{gather*}
    Hence, for $\eta>0$ small and
    $E\in[E_0-\eta,E_0+\eta]\setminus\Sigma_\Z$, there is no solution
    to equation~\eqref{eq:234} in this interval.
  \end{itemize}
  This completes the proof of Lemma~\ref{le:19}.
\end{proof}
\noindent In Lemma~\ref{le:11}, we saw that, if
$E_0\in\partial\Sigma_\Z$ satisfies $a_{k+1}(E_0)=0$ and
$a_{k+1}(E_0)(\rho(E_0)-a_p^0(E_0))-b_{k+1}(E_0)a_{p-1}^0(E_0)=0$,
then $E_0$ is an eigenvalue of $H_L$ for $L$ large.\\
By Lemma~\ref{le:19}, if now suffices to consider energies such that
$|\underline{\Delta}(E)|>2+\eta$ for some $\eta>0$. In this case, we
note that the left hand side in~\eqref{eq:36} is the left hand side of
the first equation in~\eqref{eq:34} (up to the factor $\rho-\rho^{-1}$
that does not vanish outside $\Sigma_\Z$). On the other hand, the
right hand side in~\eqref{eq:36} is uniformly exponentially small for
large $N$ on $\{E\in\R;\ |\underline{\Delta}(E)|>2+\eta\}$. Thus, for
$L$ large, the solutions to~\eqref{eq:36} are exponentially close to
$E'$ that is either an eigenvalue of $H_0^+$ or one of $H_k^-$. One
distinguishes between the following cases:
\begin{enumerate}
\item if $E'$ is an eigenvalue of $H_0^+$ but not of $H_k^-$, then
  $E'$ is a simple root of the function $E\mapsto \beta_{k+1}(E)$ (see
  section~\ref{sec:operator3}); one has to distinguish two cases
  depending on whether $a_{k+1}(E')$ vanishes or not. Assume first
  $a_{k+1}(E')=0$; then, by~\eqref{eq:127}, we know that the
  eigenvector of $H_0^+$ actually satisfies the Dirichlet boundary
  conditions at $L$; thus, $E'$ is a solution to~\eqref{eq:36}, i.e.,
  an eigenvalue of $H_L$, and~\eqref{eq:127} gives a (non
  normalized) eigenvector.\\
  Assume now that $a_{k+1}(E')\not=0$; then, by Rouch{\'e}'s Theorem, the
  unique solution to~\eqref{eq:36} close to $E'$ satisfies
  \begin{equation}
    \label{eq:121}
    E-E'=-\frac{\rho^{2N}(E')}{\beta'_{k+1}(E')}a_{k+1}(E')
    (1+o(\rho^{2N}(E')));
  \end{equation}
\item if $E'$ is an eigenvalue of $H_k^-$ but not of $H_0^+$, mutandi
  mutandis, the analysis is the same as in point (1);
\item if $E'$ is an eigenvalue of both $H_0^+$ and $H_k^-$, then, we
  are in a resonant tunneling situation. The analysis done in the
  appendix, section~\ref{sec:appendix}, shows that near $E'$, $H_L$
  has two eigenvalues, say $E_\pm$ satisfying, for some constant
  $\alpha>0$,
  \begin{equation}
    \label{eq:120}
    E_\pm-E'=\pm \alpha\,\rho^N(E'))
    \left(1+O\left(N\rho(E')^N\right)\right).
  \end{equation}
\end{enumerate}
This completes the proof of the statements of Theorem~\ref{thr:16} for
the eigenvalues outside $\Sigma_\Z$. \vskip.1cm\noindent
\paragraph{\it The eigenvalues inside $\Sigma_\Z$}
\label{sec:eiog-ins-}
We now study the eigenvalues in the region
$\overset{\circ}{\Sigma}_\Z$. One can express $\rho(E)$ in terms of
the Bloch quasi-momentum $\theta_p(E)$ and use
$\rho^{-1}(E)=\overline{\rho(E)}$. Notice that, on
$\overset{\circ}{\Sigma}_\Z$, one has
\begin{itemize}
\item Im$\,\rho(E)$ does not vanish
\item the function $E\mapsto\rho(E)$ is real analytic,
\item the functions $E\mapsto a^0_p(E)$, $E\mapsto a^0_{p-1}(E)$,
  $E\mapsto a_{k+1}(E)$ and $E\mapsto b_{k+1}(E)$ are real valued
  polynomials.
\end{itemize}
We prove
\begin{Le}
  \label{le:15}
  The function $\alpha_{k+1}$ is analytic and does not vanish on
  $\overset{\circ}{\Sigma}_\Z$.
\end{Le}
\begin{proof}
  Assume that the function $\alpha_{k+1}$ vanishes at a point $E_0$ in
  $\overset{\circ}{\Sigma}_\Z$:
  \begin{itemize}
  \item if $\rho(E_0)\not=\rho^{-1}(E_0)$: then, one has
    $a_{k+1}(E_0)\,(a^0_p(E_0)-\rho^{-1}(E_0))+b_{k+1}(E_0)\,
    a^0_{p-1}(E_0)=0$: as $\rho(E_0)\not= \rho^{-1}(E_0)$ and
    $E_0\in\overset{\circ}{\Sigma}_\Z$, one has
    $\rho^{-1}(E_0)=\overline{\rho(E_0)}\not\in\R$; thus, for
    $a_{k+1}(E_0)\,(a^0_p(E_0)-\rho^{-1}(E_0))-b_{k+1}(E_0)\,
    a^0_{p-1}(E_0)$ to vanish, one needs $a_{k+1}(E_0)=0$ and
    $a^0_{p-1}(E_0)=0$ (as $b_{k+1}$ and $a_{k+1}$ don't vanish
    together); this implies that $\rho(E_0)=\pm1$ and contradicts
    $\rho(E_0)\not=\rho^{-1}(E_0)$;
  \item if $\rho(E_0)=\rho^{-1}(E_0)$: such a point $E_0$ is a simple
    root of the three functions $a^0_{p-1}$, $\rho-\rho^{-1}$ and
    $a^0_p-\rho$ that are analytic near $E_0$ (see points (1)-(4) in
    section~\ref{sec:spectral-theory-Z}). Moreover, one checks that
    the derivatives of these functions at that point are respectively
    real, purely imaginary and neither real, nor purely imaginary: for
    $E$ close to $E_0$, one has
    \begin{equation}
      \label{eq:225}
      \begin{split}
        a^0_{p-1}(E)&=A(E-E_0)(1+O(E-E_0)),\\
        \rho(E)-\rho^{-1}(E)&=2iC(E-E_0)(1+O(E-E_0)),
        \\a^0_p(E)-\rho^{-1}(E)&=(B+iC)(E-E_0)(1+O(E-E_0))
        \quad\text{where}\quad (A,B,C)\in (\R^*)^3.
      \end{split}
    \end{equation}
    Now, as $a_{k+1}$ and $b_{k+1}$ are real valued and can't vanish
    at the same point, we see that $\alpha_{k+1}(E_0)\not=0$.
  \end{itemize}
  This complete the proof of Lemma~\ref{le:15}
\end{proof}
\noindent Now, as $L=Np+k$, the characteristic equation~\eqref{eq:35}
(valid for $E\in\overset{\circ}{\Sigma}_\Z$) becomes
\begin{equation}
  \label{eq:38}
  \begin{split}
    \rho^{2N}(E)&=e^{2 iNp \theta_p(E)}=
    -\frac{\overline{\alpha_{k+1}(E)}}{\alpha_{k+1}(E)}=
    -\frac{\beta_{k+1}(E)} {\overline{\beta_{k+1}(E)}}
    \\&=\frac{a_{k+1}(E)(\rho(E)-a^0_p(E))-b_{k+1}(E)
      a^0_{p-1}(E)}{\overline{a_{k+1}(E)(\rho(E) -a^0_p(E))-b_{k+1}(E)
        a^0_{p-1}(E)}}=:e^{2i h_k(E)}.
  \end{split}
\end{equation}
By Lemma~\ref{le:15}, the function $E\mapsto h_k(E)$ defined
in~\eqref{eq:38} is real analytic on
$\overset{\circ}{\Sigma}_\Z$. Clearly, as inside $\Sigma_\Z$, $\rho$
is real only at bands edges or closed gaps, $h_k$ takes values in
$\pi\Z$ only at bands edges or closed gaps.  This implies point (a) of
Theorem~\ref{thr:16}. We prove
\begin{Le}
  \label{le:16}
  The function $h_k$ can be extended continuously from
  $\overset{\circ}{\Sigma}_\Z$ to $\Sigma_\Z$; for
  $E_0\in\partial\Sigma_\Z$, one has
  \begin{equation*}
    h_k(E_0)\in\begin{cases}
      \frac{\pi}2+\pi\Z&\text{ if }a_{k+1}(E_0)\not=0\text{ and }
      a_{k+1}(E_0)(\rho(E_0)-a^0_p(E_0))-b_{k+1}(E_0)a^0_{p-1}(E_0)=0,\\
      \pi\Z&\text{ if not}.
    \end{cases}
  \end{equation*}
  The function $\theta_{p,L}$ is strictly increasing on the bands of
  $\Sigma_\Z$.
\end{Le}
\begin{proof}
  Pick $E_0\in\partial\Sigma_\Z$.  It suffices to study the behavior
  of $E\in\Sigma_\Z\mapsto
  s(E):=a_{k+1}(E)(\rho(E)-a^0_p(E))-b_{k+1}(E) a^0_{p-1}(E)$ near
  $E_0$ inside $\Sigma_\Z$. Write $E=E_0\pm t^2$ for $t$ real
  positive; here, the sign $\pm$ depends on whether $E_0$ is a left or
  right edge of $\Sigma_\Z$ and is chosen so that $E=E_0\pm
  t^2\in\overset{\circ}{\Sigma}_\Z$ for $t$ small.\\
  First, $t\mapsto\rho(E_0\pm t^2)$ is analytic near $0$; thus, so is
  $t\mapsto s(E_0\pm t^2)$. Solving the characteristic equation
  $\rho^2(E)-\underline{\Delta}(E)\rho(E)+1=0$, one finds
  \begin{equation*}
    \rho(E_0\pm t^2)=\rho(E_0)+i a t+ b t^2+O(t^3),\quad a\in\R^*,\
    b\in\R.
  \end{equation*}
  Thus,
  \begin{equation*}
    s(E_0\pm t^2)=s(E_0)+ i a_{k+1}(E_0)\cdot a\cdot t+c\cdot
    t^2+O(t^3)
  \end{equation*}
  where
  \begin{equation*}
    c:=a'_{k+1}(E_0)(\rho(E_0)-a^0_p(E_0))+
    a_{k+1}(E_0)(b-(a^0_p)'(E_0))- (b'_{k+1}(E_0)
    a^0_{p-1}(E_0)+b_{k+1}(E_0)(a^0_{p-1})'(E_0)).
  \end{equation*}
  Hence,
  \begin{itemize}
  \item if $s(E_0)\not=0$, then $s(E_0\pm t^2)=s(E_0)+O(t)$ ; hence,
    $h_k(E_0\pm t^2)=\pi n+O(t)$ for some $n\in\Z$
  \item if $s(E_0)=0$ and $a_{k+1}(E_0)\not=0$, one has $s(E_0\pm
    t^2)=ia_{k+1}(E_0)\cdot a\cdot t+O(t^2)$; thus, $h_k(E_0\pm
    t^2)=\frac{\pi}2+\pi n+O(t)$ for some $n\in\Z$;
  \item if $s(E_0)=a_{k+1}(E_0)=0$, one has $b_{k+1}(E_0)\not=0$,
    $a^0_{p-1}(E_0)=0$, $\rho(E_0)=a^0_p(E_0)$ and
    $(a^0_{p-1})'(E_0)\not=0$; thus $s(E_0\pm
    t^2)=-b_{k+1}(E_0)(a^0_{p-1})'(E_0) t^2+0(t^2)$; hence,
    $h_k(E_0\pm t^2)=\pi n+O(t)$ for some $n\in\Z$.
  \end{itemize}
  This completes the proof of the statement of Lemma~\ref{le:16} on
  the function $h_k$.
  \vskip.1cm\noindent Let us now control the monotony of
  $\theta_{p,L}$ (see Theorem~\ref{thr:16}) on the bands of
  $\Sigma_\Z$. It is well known that keeping the above notations,
  $\theta_p(E_0\pm t^2)-\theta_p(E_0)=\pm\alpha t(1+t g_0(t))$ with
  $\alpha>$. The computations done in the previous paragraph show that
  $h_k(E_0\pm t^2)=h_k(E_0)+at^k(1+t g_1(t))$, $k\geq1$. Hence,
  \begin{itemize}
  \item if $k>1$, we have $\D\theta_{p,L}(E_0\pm
    t^2)-\theta_{p,L}(E_0)=\pm\alpha t(1+t g_2(t))$,
  \item if $k=1$, we have $\D\theta_{p,L}(E_0\pm
    t^2)-\theta_{p,L}(E_0)=\left(\pm\alpha+\frac{a}{L-k}\right)t(1+t
    g_2(t))$.
  \end{itemize}
  Hence, $\theta_{p,L}$ is strictly increasing inside the band near
  $E_0$ for $L$ sufficiently large. Outside a neighborhood of the
  edges of a band, by analyticity of $h_k$, as the bands are compact,
  we have $|\theta'_{p,L}-\theta'_{p}|\lesssim L^{-1}$.  As $\theta_p$
  is strictly increasing on each band, $\theta_{p,L}$ is also strictly
  increasing outside a neighborhood of the edges of a band. This
  completes the proof of Lemma~\ref{le:16}.
\end{proof}
\noindent One proves
\begin{Le}
  \label{le:17}
  Let $E_0$ be a closed gap for $H^\Z$ (see
  Definition~\ref{def:1}). Then, for any $L=Np+k$ the following
  assertions are equivalent:
  \begin{equation}
    \label{eq:224}
    E_0\in\sigma(H_L)\quad\Longleftrightarrow\quad  h_k(E_0)\in\pi\Z
    \quad\Longleftrightarrow\quad a_{k+1}(E_0)=0
    \quad\Longleftrightarrow\quad \alpha_{k+1}(E_0)\in i\R^*.
  \end{equation}
\end{Le}
\begin{proof}
  The proof of the first equivalence follows immediately from
  Definition~\ref{def:1} and the quantization condition~\eqref{eq:38};
  the second follows from~\eqref{eq:74} and the expansions
  in~\eqref{eq:225}; the third follows
  Lemma~\ref{le:16},~\eqref{eq:74} and~\eqref{eq:38}.
\end{proof}
\noindent Let us note that, in particular, closed gaps where $a_{k+1}$
vanishes are eigenvalues of $H_L$ for all $L=Np+k$.
\begin{Rem}
  \label{rem:12}
  The characteristic equation~\eqref{eq:38} and the computations done
  at the end of the proof of Lemma~\ref{le:15} show that, for $L=Np+k$
  large, an energy $E_0$ such that $\rho(E_0)=\rho^{-1}(E_0)$ is an
  eigenvalues of $H_L$ if and only if $a_{k+1}(E_0)=0$. This is an
  extension of Lemma~\ref{le:11}.
\end{Rem}
\noindent In view of the definition and monotony of $\theta_{p,L}$,
the quantization condition~\eqref{eq:38} is clearly equivalent
to~\eqref{eq:23}. This completes the proof Theorem~\ref{thr:15} on the
eigenvalues of $H_L$. Let us now turn to the computation of the
associated eigenfunctions.
\subsubsection{The Dirichlet eigenfunctions for a truncated periodic
  potential: the proof of Theorem~\ref{thr:30}}
\label{sec:dirichl-eigenf}
Recall that we assume $L=Np+k$. First, if $(u^j_l)_{l=0}^L$ is an
eigenfunction associated to the eigenvalue $\lambda_j$, the eigenvalue
equation reads
\begin{equation*}
  \begin{pmatrix}u^j_{l+1}\\ u^j_l\end{pmatrix}=T_l(\lambda_j)  
  \begin{pmatrix}u^j_l\\u^j_{l-1}  \end{pmatrix}\text{ for }0\leq
  l\leq L\text{ where } u^j_{L+1}=u^j_{-1}=0.
\end{equation*}
To normalize the solution, we assume that $u^j_0=1$. The coefficients
we want to compute are
\begin{equation}
  \label{eq:44}
  |\varphi_j(L)|^2=|u^j_L|^2\left(\sum_{l=0}^L\left|u^j_l\right|^2\right)^{-1}
  \quad\text{and}\quad  
  |\varphi_j(0)|^2=\left(\sum_{l=0}^L\left|u^j_l\right|^2\right)^{-1}.
\end{equation}
Fix $l=np+m$. Thus, using the notations of
section~\ref{sec:dirichl-eigenv} and the
expressions~\eqref{eq:42},~\eqref{eq:210} and~\eqref{eq:33}, one
computes
\begin{equation}
  \label{eq:41}
  \begin{pmatrix}u^j_l\\u^j_{l-1}\end{pmatrix}=
  T_{m-1,0}(\lambda_j)\left(\tilde
    T_0(\lambda_j)\right)^n \begin{pmatrix}1\\0 \end{pmatrix}=
  \begin{pmatrix} \alpha_m(\lambda_j)
    \rho^n(\lambda_j)+\beta_m(\lambda_j)\rho^{-n}(\lambda_j)\\ 
    \alpha_{m-1}(\lambda_j)\rho^n(\lambda_j)+
    \beta_{m-1}(\lambda_j)\rho^{-n}(\lambda_j) \end{pmatrix}     
\end{equation}
where $\alpha_m$ and $\beta_m$ are defined in~\eqref{eq:74}.
\paragraph{\it The eigenvectors associated to eigenvalues inside
  $\Sigma_\Z$}
\label{sec:it-eigenv-assoc}
As $\rho^{-1}(\lambda_j)=\overline{\rho(\lambda_j)}$,
$\beta_m(\lambda_j)=\overline{\alpha_m(\lambda_j)}$ and as the
functions $(\alpha_m)_{0\leq m\leq p-1}$ do not vanish on
$\overset{\circ}{\Sigma}_\Z$, we compute
\begin{equation}
  \label{eq:48}
  \left|u^j_{np+m}\right|^2=2|\alpha_m(\lambda_j)|^2
  \left(1+\text{Re}\left[\frac{\alpha_m(\lambda_j)}
      {\overline{\alpha_m(\lambda_j)}}\rho^{2n}(\lambda_j)\right]
  \right).
\end{equation}
As $L=Np+k$, using the quantization condition~\eqref{eq:38}, we obtain
that
\begin{equation}
  \label{eq:221}
  \begin{split}
    \sum_{l=0}^L\left|u^j_l\right|^2&=
    2\sum_{m=0}^k\left|\alpha_m(\lambda_j)\right|^2 \left(1+\text{Re}
      \left[\frac{\alpha_m(\lambda_j)}{\overline{\alpha_m(\lambda_j)}}
        \rho^{2N}(\lambda_j)\right]\right) \\&\hskip2.5cm+
    2\sum_{m=0}^{p-1}\left|\alpha_m(\lambda_j)
    \right|^2\sum_{n=0}^{N-1}\left(
      1+\text{Re}\left[\frac{\alpha_m(\lambda_j)}
        {\overline{\alpha_m(\lambda_j)}}
        \rho^{2n}(\lambda_j)\right]\right)\\
    &=N\,p\,f(\lambda_j)\left(1+\frac1{Np}\tilde f(\lambda_j) \right)
  \end{split}
\end{equation}
where we have defined
\begin{equation}
  \label{eq:46}
  f(E):=\frac2p\sum_{m=0}^{p-1}\left|\alpha_m(E)\right|^2.
\end{equation}
and, using the quantization condition~\eqref{eq:38}, computed
\begin{equation}
  \label{eq:49}
  \begin{split}
    \tilde f(E)&:=\frac2{f(E)}
    \text{Re}\left[\left(\sum_{m=0}^{p-1}\alpha^2_m(E)\right)
      \frac1{1-\rho^2(E)} \left(1+\frac{\overline{\alpha_{k+1}(E)}}
        {\alpha_{k+1}(E)}\right)\right] \\&\hskip4cm
    +\frac{2}{f(E)}\sum_{m=0}^k \left|\alpha_m(E)\right|^2\left(1-
      \text{Re}\left[\frac{\alpha_m(E)\,\overline{\alpha_{k+1}(E)}}
        {\overline{\alpha_m(E)}\,\alpha_{k+1}(E)}\right]\right)
  \end{split}
\end{equation}
The function $E\mapsto f(E)$ is real analytic and does
not vanish on $\overset{\circ}{\Sigma}_\Z$.\\
We prove
\begin{Pro}
  \label{pro:9}
  For $E_0$, a closed gap, one has $\D
  \sum_{m=0}^{p-1}\alpha^2_m(E_0)=0$.
\end{Pro}
\begin{proof}
  By the definition of $(a_j,b_j)$, see~\eqref{eq:118}, and that of
  $\alpha_j(E)$, see~\eqref{eq:74}, the sequence
  $(\alpha_j(E))_{j\in\Z}$ satisfies the equation
  $\alpha_{j+1}+\alpha_{j-1}+(V_j-E)\alpha_{j}=0$. As $\tilde
  T_0(E)=T_{p-1}(E)\cdots T_0(E)$, by~\eqref{eq:33}, for $j\in\Z$, one
  has $\alpha_{j+p}(E)=\rho(E) \alpha_{j}(E)$. Hence, the column
  vector $A(E)=(\alpha_1(E),\cdots,\alpha_p(E))^t$ satisfies
  \begin{equation*}
    (H_\rho-E)A(E)=0\quad\text{where}\quad H_\rho=\begin{pmatrix}
      V_1 & 1 & 0 & \cdots & 0 &\rho(E)\\
      1& V_2 & 1 & 0 & \cdots & 0 \\
      0 & 1 & V_3 & 1 & \cdots & 0 \\
      \vdots &  &  & \ddots &  & \vdots \\
      0 & \cdots & 0& 1& V_{p-1} & 1 \\
      \rho^{-1}(E) & 0 & \cdots & 0& 1& V_p
    \end{pmatrix}.
  \end{equation*}
  Thus, we have
  \begin{equation}
    \label{eq:226}
    \langle (H_\rho-E)A(E),A(E)\rangle_\R=0   
  \end{equation}
  where $\langle\cdot,\cdot\rangle_\R$ denotes the real scalar product
  over $\C^p$, i.e., $\D \left\langle\begin{pmatrix}z_1\\\vdots\\z_p
    \end{pmatrix},\begin{pmatrix}z'_1\\\vdots\\z'_p\end{pmatrix}
  \right\rangle_\R=\sum_{j=1}^p z_jz'_j$.\\
  The functions $E\mapsto A(E)$ and $E\mapsto \rho(E)$ being analytic
  over $\overset{\circ}{\Sigma_\Z}$, one can
  differentiate~\eqref{eq:226} with respect to $E$ to obtain
  \begin{multline}
    \label{eq:227}
    0=-\langle A(E),A(E)\rangle_\R\\+(\rho(E)-\rho^{-1}(E))
    \left(\rho^{-1}(E)\rho'(E)\alpha_1(E)\alpha_p(E)-
      \alpha_p(E)\alpha'_1(E)+\alpha_1(E)\alpha'_p(E)\right).
  \end{multline}
  Here, we have used the fact that, if $H^t_\rho$ is the transposed of
  the matrix $H_\rho$, then
  \begin{equation*}
    H^t_\rho-H_\rho=(\rho(E)-\rho^{-1}(E))
    \begin{pmatrix}
      0 &  \cdots & 0&-1\\
      0 &  \cdots & 0& 0 \\
      \vdots & & & \vdots \\
      0 &  0& \cdots & 0 \\
      1 & 0 & \cdots & 0
    \end{pmatrix}.
  \end{equation*}
  At $E_0$, a closed gap, one has $\rho(E_0)=\rho^{-1}(E_0)$. Hence,
  \eqref{eq:227} implies
  \begin{equation*}
    0=\langle
    A(E_0),A(E_0)\rangle_\R=\sum_{m=0}^{p-1}\alpha^2_m(E_0).    
  \end{equation*}
  This completes the proof of Proposition~\ref{pro:9}.
\end{proof}
\noindent In view of~\eqref{eq:49}, the function $\tilde f$ is real
analytic on $\overset{\circ}{\Sigma}_\Z$; indeed, the only poles of
the function $E\mapsto [\rho(E)-\rho^{-1}(E)]^{-1}$ in
$\overset{\circ}{\Sigma}_\Z$ are the closed gaps; they are simple
poles of this function and, by Proposition~\ref{pro:9}, the real
analytic function $\D E\mapsto \sum_{m=0}^{p-1}\alpha^2_m(E)$
vanishes at these poles.\\
Now that we have computed the normalization constant, let us compute
the coefficient $u_L^j$ defined in~\eqref{eq:44}. As $L=Np+k$, the
characteristic equation for $\lambda_j$, that is,~\eqref{eq:38} reads
\begin{equation}
  \label{eq:45}
  \alpha_{k+1}(\lambda_j)\rho^N(\lambda_j)=
  -\beta_{k+1}(\lambda_j)\rho^{-N}(\lambda_j)
  =-\overline{\alpha_{k+1}(\lambda_j)\rho^{N}(\lambda_j)}.
\end{equation}
Hence, one computes
\begin{equation}
  \label{eq:47}
  \begin{split}
    u^j_L&=\alpha_k(\lambda_j)\rho^N(\lambda_j)+
    \overline{\alpha_k(\lambda_j)\rho^N(\lambda_j)}=
    \rho^N(\lambda_j)\frac{\alpha_k(\lambda_j)
      \overline{\alpha_{k+1}(\lambda_j)}
      -\overline{\alpha_k(\lambda_j)}\alpha_{k+1}(\lambda_j)}
    {\overline{\alpha_{k+1}(\lambda_j)}}\\&=
    \frac{-\rho^N(\lambda_j)\,a_{p-1}^0(\lambda_j)}
    {(\rho(\lambda_j)-\rho^{-1}(\lambda_j))\,
      \overline{\alpha_{k+1}(\lambda_j)}}
    =\frac{-e^{i[Np\theta_p(\lambda_j)-h_k(\lambda_j)]}\,
      a_{p-1}^0(\lambda_j)} {\left|a_{k+1}(\lambda_j)(a^0_p(\lambda_j)
        -\rho^{-1}(\lambda_j))+b_{k+1}(\lambda_j)
        a^0_{p-1}(\lambda_j)\right|}\\
    &=\frac{-\,e^{i\pi j}\, a_{p-1}^0(\lambda_j)}
    {\left|a_{k+1}(\lambda_j)(a^0_p(\lambda_j)
        -\rho^{-1}(\lambda_j))+b_{k+1}(\lambda_j)
        a^0_{p-1}(\lambda_j)\right|}
  \end{split}
\end{equation}
where we have used the quantization condition satisfied by
$\lambda_j$, the last equality in~\eqref{eq:38}, and that
\begin{equation*}
  \left|\begin{matrix}
      \alpha_{k+1}(\lambda_j)& \alpha_k(\lambda_j)\\
      \overline{\alpha_{k+1}(\lambda_j)}&\overline{\alpha_k(\lambda_j)}
    \end{matrix}\right|=
  \left|\begin{matrix} \frac{a^0_{p-1}(\lambda_j)}
      {\rho(\lambda_j)-\rho^{-1}(\lambda_j)}&
      \frac{a^0_p(\lambda_j)-\rho^{-1}(\lambda_j)}
      {\rho(\lambda_j)-\rho^{-1}(\lambda_j)}\\
      -\frac{a^0_{p-1}(\lambda_j)}{\rho(\lambda_j)-\rho^{-1}(\lambda_j)}&
      \frac{\rho(\lambda_j)-a^0_p(\lambda_j)}
      {\rho(\lambda_j)-\rho^{-1}(\lambda_j)} \end{matrix} \right|
  \left| \begin{matrix} b_{k+1}(\lambda_j)&b_k(\lambda_j) \\a_{k+1}(\lambda_j)
      &a_k(\lambda_j)\end{matrix}\right|
\end{equation*}
and
\begin{equation*}
  \left|\begin{matrix} 1&
      \frac{a^0_p(\lambda_j)-\rho^{-1}(\lambda_j)}
      {\rho(\lambda_j)-\rho^{-1}(\lambda_j)}\\ -1&
      \frac{\rho(\lambda_j)-a^0_p(\lambda_j)}
      {\rho(\lambda_j)-\rho^{-1}(\lambda_j)} \end{matrix} \right|
  = \left|\begin{matrix} b_k(\lambda_j)&b_{k+1}(\lambda_j)
      \\a_k(\lambda_j)&a_{k+1}(\lambda_j)
    \end{matrix}\right|=1
\end{equation*}
\begin{Le}
  \label{le:21}
  Define the function $\tilde f^-_k(E)$ by
  \begin{equation*}
    \tilde f^-_k(E):=\frac{|a_{p-1}^0(E)|^2}{|a_{k+1}(E)(a_p^0(E)-
      \rho^{-1}(E))+b_{k+1}(E)a_{p-1}^0(E)|^2};
  \end{equation*}  
  Then, the function $\tilde f^-_k$ does not vanish on
  $\overset{\circ}{\Sigma}_\Z$.
\end{Le}
\begin{proof}
  By the definition of $\alpha_{k+1}$, one has $\D \tilde
  f^-_k(E)=\frac{|a_{p-1}^0(E)|^2}
  {|\rho(E)-\rho^{-1}(E))|^2\,|\alpha_{k+1}(E)|^2}$. That this
  expression is well defined and does not vanish on
  $\overset{\circ}{\Sigma}_\Z$ follows from Lemma~\ref{le:15} and the
  computations made in the proof thereof.
\end{proof}
\noindent Plugging~\eqref{eq:47} this and~\eqref{eq:48}
into~\eqref{eq:44}, recalling that $u^j_0=1$, outside the bad closed
gaps, we obtain~\eqref{eq:26} if,
\begin{itemize}
\item in addition to~\eqref{eq:46} and~\eqref{eq:49}, we set $\D
  f^+_0(E):=\frac1{f(E)}$ and $f^-_k(E)=f^+_0(E)\cdot\tilde f^-_k(E)$,
\item we remember that the function $a_{p-1}^0$ only changes sign in
  the gaps of the spectrum $\Sigma_\Z$ (see point (4) in
  section~\ref{sec:spectral-theory-Z}) and set $\sigma_r$ to be the
  sign of $-a_{p-1}^0$ on $B_r$, the $r$-th band.
\end{itemize}
By~\eqref{eq:44} and~\eqref{eq:48}, we obtain~\eqref{eq:26} using
Lemma~\ref{le:21}. This completes the proof of the statements in
Theorem~\ref{thr:30} on the eigenfunctions of $H_L$ associated to
eigenvalues in $\overset{\circ}{\Sigma}_\Z$.
\begin{Rem}
  \label{rem:10}
  To complete our study let us also see what happens the
  eigenfunctions near the edges of the spectrum. Pick
  $E_0\in\partial\Sigma_\Z$. One then knows that, for $E\in\Sigma_\Z$,
  $E$ close to $E_0$, one has
  \begin{equation}
    \label{eq:222}
    \theta_p(E)-\theta_p(E_0)=a\sqrt{|E-E_0|}(1+o(1)) 
  \end{equation}
  (see the proof of Lemma~\ref{le:16}).\\
  Let us rewrite $\tilde f$ (see~\eqref{eq:49}) in the following way
  \begin{equation}
    \label{eq:228}
    \begin{split}
      \tilde
      f(E)&=\frac2{f(E)}\left[\sum_{m=0}^{p-1}\left|\alpha_m(E)\right|^2
        \cos(h_k(E)-2 h_{m-1}(E)-p\theta_p(E))\right]
      \frac{\sin(h_k(E))}{\sin(p\theta_p(E))} \\&\hskip3.5cm
      +\frac{2}{f(E)}\sum_{m=0}^k \left|\alpha_m(E)\right|^2\left(1-
        \cos(2(h_k(E)-h_{m-1}(E)))\right).
    \end{split}
  \end{equation}
  Let us first show
  \begin{Le}
    \label{le:20}
    For any $0\leq m\leq p-1$, $E\mapsto
    \frac{2\left|\alpha_m(E)\right|^2}{p\,f(E)}$ can be extended
    continuously from $\overset{\circ}{\Sigma}_\Z$ to $\Sigma_\Z$.
  \end{Le}
  \begin{proof}
    For $p=1$ there is nothing to be done as
    $\frac{2\left|\alpha_m(E)\right|^2}{p\,f(E)}\equiv1$.\\
    For $p\geq2$, we note that, for $0\leq m\leq m+1\leq p-1$, as $\D
    \left|\begin{matrix} a_{m+1}(E)&b_{m+1}(E)\\a_m(E)
        &b_m(E) \end{matrix} \right|=1$ by~\eqref{eq:118},
    \begin{equation*}
      \begin{split}
        0&=a_{m+1}(E_0)(a^0_p(E_0)-\rho^{-1}(E_0))+b_{m+1}(E_0)
        a^0_{p-1}(E_0)
        \\&=a_m(E_0)(a^0_p(E_0)-\rho^{-1}(E_0))+b_m(E_0)
        a^0_{p-1}(E_0)
      \end{split}
    \end{equation*}
    if and only if $a^0_{p-1}(E_0)=0$ (as this implies
    $a^0_p(E_0)-\rho^{-1}(E_0)=0$).\\
    Let us assume this is the case. As $p\geq2$, we know that
    $\D\sum_{j=0}^{p-1}|a_j(E_0)|^2\not=0$. By~\eqref{eq:225}, for at
    least one $m_0\in\{0,\cdots,p-1\}$, one has $a_{m_0}(E_0)\not=0$
    and $\alpha_{m_0}(E)=bc^{-1}a_{m_0}(E_0)+
    O(\sqrt{|E-E_0|})$. Hence, $E\mapsto
    \frac{2\left|\alpha_m(E)\right|^2} {p\,f(E)}$ can be continued to
    $E_0$ setting $\D \frac{2\left|\alpha_m(E_0)\right|^2}{p\,f(E_0)}=
    \frac{|a_m(E_0)|^2}{|a_0(E_0)|^2+\cdots+|a_{p-1}(E_0)|^2}$. Actually,
    $f(E)$ can be continued at $E_0$ by setting
    \begin{equation}
      \label{eq:239}
      f(E_0)=|a_0(E_0)|^2+\cdots+|a_{p-1}(E_0)|^2.        
    \end{equation}
    Let us now assume that $a^0_{p-1}(E_0)\not=0$. We study the
    behavior of $\alpha_m$ near $E_0$. Recall~\eqref{eq:74}. Then, one
    has
    \begin{enumerate}
    \item either $d_m:=a_m(E_0)(a^0_p(E_0)-\rho^{-1}(E_0))+b_m(E_0)
      a^0_{p-1}(E_0) \not=0$: in this case, by~\eqref{eq:225}, one has
      $\alpha_m(E)=\frac{d_mc^{-1}}{\sqrt{|E-E_0|}}(1+o(1))$;
    \item or $d_m=a_m(E_0)(a^0_p(E_0)-\rho^{-1}(E_0))+b_m(E_0)
      a^0_{p-1}(E_0)=0$: in this case, as for some $A_m\in\R^*$ and
      $k_m\geq1$, one has
      \begin{equation*}
        a_m(E)(a^0_p(E)-\rho^{-1}(E_0))+b_m(E)
        a^0_{p-1}(E)=A_m(E-E_0)^{k_m}(1+o(1)),
      \end{equation*}
      and, by~\eqref{eq:225}, one can continue $\alpha_m$ to $E_0$ by
      setting $\alpha_m(E_0)=a_m(E_0)/2$.
    \end{enumerate}
    As $a^0_{p-1}(E_0)\not=0$, we know that for some
    $m_0\in\{0,\cdots,p-1\}$, we are in case (a). Hence, one has
    \begin{equation}
      \label{eq:238}
      f(E)=\frac{2}{p|E-E_0|}
      \sum_{m=0}^{p-1}|a_m(E_0)(a^0_p(E_0)-\rho^{-1}(E_0))+b_m(E_0)
      a^0_{p-1}(E_0)|^2(1+o(1))        
    \end{equation}
    and $E\mapsto \frac{2\left|\alpha_m(E)\right|^2} {p\,f(E)}$ can be
    continued to $E_0$ setting $\D
    \frac{2\left|\alpha_m(E_0)\right|^2}{p\,f(E_0)}=
    \frac{|d_m|^2}{|d_0|^2+\cdots+|d_{p-1}|^2}$ (using the notation
    introduced in point (a).\\
    This completes the proof of Lemma~\ref{le:20}.
  \end{proof}
  \noindent By Lemma~\ref{le:16}, we know that for $1\leq k\leq p$ and
  $E_0\in\partial\Sigma_\Z$, one has $2h_k(E_0)\in\pi\Z$. Thus, for
  $1\leq k\leq p$, $1\leq m\leq p$ and $E_0\in\partial\Sigma_\Z$, one has
  $\cos(h_k(E_0)-2h_{m-1}(E_0)-p\theta_p(E_0))\sin(h_k(E_0))=0$. Using
  the expansions leading to the proof of Lemma~\ref{le:16}, one gets
  \begin{equation*}
    \cos(h_k(E)-2h_{m-1}(E)-p\theta_p(E))\sin(h_k(E))=c\sqrt{|E-E_0|}(1+o(1)).
  \end{equation*}
  Recalling~\eqref{eq:222} and the fact that $p\theta_p(E_0)\in\pi\Z$,
  Lemma~\ref{le:20} implies that $\tilde f$ can be extended
  continuously up to $E_0$. Hence, the expansion~\eqref{eq:221} again
  yields
  \begin{equation}
    \label{eq:233}
    \sum_{l=0}^L\left|u^j_l\right|^2\asymp Np f(\lambda_j).
  \end{equation}
  Let us now review the computation~\eqref{eq:47} in this case. We
  distinguish two cases:
  \begin{enumerate}
  \item if $a^0_{p-1}(E_0)=0$: then,~\eqref{eq:47} and the fact that
    $a_{k+1}(E_0)\not=0$ (this case was dealt with in point (1)),
    yields that, for $|\lambda_j-E_0|$ sufficiently small,
    \begin{equation*}
      |u^j_L|\asymp\sqrt{|\lambda_j-E_0|}.
    \end{equation*}
    By~\eqref{eq:239} and~\eqref{eq:233}, we obtain
    \begin{equation}
      \label{eq:240}
      |\varphi_j(L)|^2\asymp\frac{|\lambda_j-E_0|}{Np}\quad\text{
        and }\quad|\varphi_j(0)|^2\asymp\frac{1}{Np}.
    \end{equation}
  \item if $a^0_{p-1}(E_0)\not=0$: then
    \begin{enumerate}
    \item if $d_{k+1}\not=0$ (see case (a) in the proof of
      Lemma~\ref{le:20}): by~\eqref{eq:238} and~\eqref{eq:233}, one
      has
      \begin{equation}
        \label{eq:241}
        |\varphi_j(0)|^2\asymp\frac{|\lambda_j-E_0|}{Np}\quad\text{
          and }\quad|\varphi_j(L)|^2\asymp\frac{|\lambda_j-E_0|}{Np}.
      \end{equation}
    \item if $d_{k+1}=0$: by~\eqref{eq:238} and~\eqref{eq:233}, one
      has
      \begin{equation}
        \label{eq:244}
        |\varphi_j(0)|^2\asymp\frac{|\lambda_j-E_0|}{Np}\quad\text{
          and }\quad|\varphi_j(L)|^2\asymp\frac{1}{Np}.
      \end{equation}
    \end{enumerate}
  \end{enumerate}
\end{Rem}
\paragraph{\it The eigenvectors associated to eigenvalues outside
  $\Sigma_\Z$}
\label{sec:it-eigenv-assoc-out}
Let us now turn to the eigenfunctions associated to eigenvalues $H_L$
in the gaps of $\Sigma_\Z$, i.e., in the region
$\{E;\;|\underline{\Delta}(E)|>2\}$.  On $\R\setminus\Sigma_\Z$, the
eigenvalue $E\mapsto\rho(E)$ is real valued (recall that we pick it so
that $|\rho(E)|<1$) and so are all the functions
$(\alpha_m)_{0\leq m\leq p-1}$ and $(\beta_m)_{0\leq m\leq p-1}$
(see~\eqref{eq:74}). For $0\leq m\leq p-1$,~\eqref{eq:41} yields
\begin{equation}
  \label{eq:124}
  \left|u^j_{np+m}\right|^2=\alpha^2_m(E)\rho^{2n}(E)
  +\beta^2_m(E)\rho^{-2n}(E)+2\alpha_m(E) \beta_m(E).
\end{equation}
As when we studied the eigenvalues of $H_L$, let us now distinguish
the cases when $E$ is close to an eigenvalue of $H_0^+$ or to an
eigenvalue of $H_k^-$:
\begin{enumerate}
\item Pick $E'$ an eigenvalue of $H_0^+$ but not an eigenvalue of
  $H_k^-$; then, recall that $a_{p-1}^0(E')=0=a_p^0(E')-\rho(E')$.
  Thus, for $0\leq m\leq p-1$, one has $\beta_m(E')=0$. Assume $E$ be close to
  $E'$. As $E$ satisfies~\eqref{eq:121}, using~\eqref{eq:36},
  ~\eqref{eq:124} becomes
  \begin{equation*}
    \begin{split}
      \left|u^j_{np+m}\right|^2&=\rho^{2n}(E')
      \left|\alpha_m(E')-\frac{\beta'_m(E')}{\beta'_{k+1}(E')}
        a_{k+1}(E')\right.\\&\hskip2.5cm
      \cdot\left[\rho(E')-\rho^{-1}(E')\right]\rho^{2(N-n)}(E')+
      O(\rho^{2N}(E))\Biggl|^2.
    \end{split}
  \end{equation*}
  for $0\leq m\leq p-1$ if $0\leq n\leq N-1$ and
  $0\leq m\leq k$ if $n=N$.\\
  Using~\eqref{eq:37}, one computes
  \begin{equation}
    \label{eq:123}
    \left|u^j_{np+m}\right|^2=\rho^{2n}(E')
    \left|a_m(E')-\frac{\beta'_m(E')}{\beta'_{k+1}(E')}a_{k+1}(E')
      \rho^{2(N-n)}(E')+
      O(\rho^{2N}(E))\right|^2.     
  \end{equation}
  This yields
  \begin{equation*}
    \begin{split}
      \sum_{l=0}^L\left|u^j_l\right|^2
      &=\sum_{m=0}^{p-1}\sum_{n=0}^{N-1} \rho^{2n}(E')a^2_m(E')+
      O(N\rho^{2N}(E))\\&=\frac1{1-\rho^2(E')}
      \sum_{m=0}^{p-1}a^2_m(E')+ O(N\rho^{2N}(E)).
    \end{split}
  \end{equation*}
  Moreover, by~\eqref{eq:44},~\eqref{eq:124} and~\eqref{eq:74}, as
  $a_{p-1}^0(E')=0=a_p^0(E')-\rho(E')$, we obtain
  \begin{equation*}
    \begin{split}
      \left|\varphi_j(L)\right|^2&
      =\rho^{2N}(E')\frac{(1-\rho^2(E'))a^2_{k+1}(E')}{
        \D\left[\beta'_{k+1}(E')\right]^2\sum_{m=0}^{p-1}a^2_m(E')}
      \left|\begin{matrix}\beta'_k(E')&
          a_k(E')\\\beta'_{k+1}(E')&a_{k+1}(E')
        \end{matrix}\right|^2 +O(N\rho^{4N}(E))\\
      &=\gamma\rho^{2N}(E') +O(N\rho^{4N}(E)).
    \end{split}
  \end{equation*}
  where
  \begin{equation*}
    \gamma:=\frac{(1-\rho^2(E'))a^2_{k+1}(E')}{
      \D\left[\beta'_{k+1}(E')\right]^2\sum_{m=0}^{p-1}a^2_m(E')}
    \left(\frac{d a_{p-1}^0}{dE}(E')\right)^2>0.
  \end{equation*}
  Hence, $\left|\varphi_j(L)\right|$ is exponentially small in $L$
  (recall $|\rho(E)|<1$).
\item if $E'$ is an eigenvalue of $H_k^-$ but not of $H_0^+$, then
  inverting the parts of $H_k^-$ and $H_0^+$, we see that
  $|\varphi_j(L)|$ is of order $1$. A precise asymptotic can be
  computed but it won't be needed.
\item if $E'$ is an eigenvalue of $H_0^+$ and of $H_k^-$, the double
  well analysis done in section~\ref{sec:appendix} shows that for
  normalized eigenvectors, say, $\varphi_{1,2}$ associated to the two
  eigenvalues of $H_L$ close to $E'$, the four coefficients
  $|\varphi_{1,2}(0)|$ and $|\varphi_{1,2}(L)|$ are of order
  $1$. Again precise asymptotics can be computed but won't be needed.
\end{enumerate}
This completes the description of the eigenfunctions given by
Theorem~\ref{thr:30} and completes the proof of this result.\qed
\section{Resonances in the periodic case}
\label{sec:reson-peri-case}
We are now in the state to prove the results stated in
section~\ref{sec:periodic-case}. Therefore, we first study the
function $E\mapsto S_L(E)$ and $E\mapsto \Gamma_L(E)$ in the complex
strip $I+i(-\infty,0)$ for $I\subset\overset{\circ}{\Sigma}_\Z$.
\subsection{The matrix $\Gamma_L$ in the periodic case}
\label{sec:funct-s_l-peri}
Using Theorem~\ref{thr:16}, we first prove
\begin{Th}
  \label{thr:17}
  Fix $I\subset\overset{\circ}{\Sigma}_\Z$ a compact interval. There
  exists $\varepsilon_I>0$ and $\sigma_I\in\{+1,-1\}$ such that, for
  any $N\geq0$, there exists $C_N>0$ such that, for $L$ sufficiently
  large s.t $L\equiv k\mod(p)$, one has
  \begin{equation}
    \label{eq:59}
    \sup_{\substack{\text{Re}\,E\in I\\-\varepsilon_I<\text{Im}\,E<0}}
    \left|\Gamma_L(E)-\Gamma^\text{eff}_L(E)\right|\leq C_NL^{-N}.
  \end{equation}
  where
  \begin{equation}
    \label{eq:162}
    \Gamma^\text{eff}_L(E)= -\frac{\theta'_p(E)}{\sin u_L(E)}
    \begin{pmatrix}\D e^{-i u_L(E)} f^-_k(E)&\D\sigma_I\,
      \sqrt{f^-_k(E)f^+_0(E)}\\ \D\sigma_I\,\sqrt{f^-_k(E)f^+_0(E)}&
      \D e^{-i u_L(E)} f^+_0(E)\end{pmatrix}+
    \begin{pmatrix}\D \int_\R\frac{dN^-_k(\lambda)}{\lambda-E}&0\\
      0&\D \int_\R\frac{dN^+_0(\lambda)}{\lambda-E}
    \end{pmatrix}
  \end{equation}
  and $\D u_L(E):=(L-k)\theta_{p,L}(E)$ (see~\eqref{eq:229}),
\end{Th}
\noindent The sign $\sigma_I$ only deepends on the spectral band
containing $I$.\\
Deeper into the lower half-plane, we obtain the following simpler
estimate
\begin{Th}
  \label{thr:19}
  There exists $C>0$ such that, for any $\varepsilon>0$ and for
  $L\geq1$ sufficiently large s.t.  $L=Np+k$, one has
  \begin{equation}
    \label{eq:180}
    \sup_{\substack{\text{Re}\,E\in I\\\text{Im}\,E<-\varepsilon}}
    \left|\Gamma_L(E)-\begin{pmatrix}\D \int_\R\frac{dN^-_k(\lambda)}{\lambda-E}&0\\
        0&\D  \int_\R\frac{dN^+_0(\lambda)}{\lambda-E}
      \end{pmatrix}
    \right|\leq C\varepsilon^{-2}e^{-\varepsilon L/C}.
  \end{equation}
\end{Th}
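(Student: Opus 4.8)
The starting point is to recognize $\Gamma_L(E)$ in~\eqref{eq:142} as the $2\times 2$ matrix of resolvent matrix elements of the Dirichlet box operator $H_L$ (the operator of Remark~\ref{rem:6}, i.e. $H^\N_L$ — equivalently $H^\Z_L$ — restricted to $\llbracket 0,L\rrbracket$ with Dirichlet conditions at $0$ and $L$) between its two boundary sites $0$ and $L$. Plugging the spectral decomposition $(H_L-E)^{-1}=\sum_j(\lambda_j-E)^{-1}|\varphi_j\rangle\langle\varphi_j|$ into~\eqref{eq:142} gives
\[
\Gamma_L(E)=\begin{pmatrix}
\langle\delta_L,(H_L-E)^{-1}\delta_L\rangle & \langle\delta_L,(H_L-E)^{-1}\delta_0\rangle\\[2mm]
\langle\delta_0,(H_L-E)^{-1}\delta_L\rangle & \langle\delta_0,(H_L-E)^{-1}\delta_0\rangle
\end{pmatrix},
\]
while, by the definitions of $N^-_k$ and $N^+_0$ (and an elementary approximation argument, the measures being compactly supported and $\lambda\mapsto(\lambda-E)^{-1}$ smooth there for $\operatorname{Im}E\neq0$), the matrix subtracted in~\eqref{eq:180} equals $\operatorname{diag}\big(\langle\delta_0,(H^-_k-E)^{-1}\delta_0\rangle,\langle\delta_0,(H^+_0-E)^{-1}\delta_0\rangle\big)$. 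So Theorem~\ref{thr:19} reduces to three exponential estimates: that the off-diagonal entries of $\Gamma_L(E)$ are $O(e^{-c\varepsilon L})$, and that each diagonal entry is $O(e^{-c\varepsilon L})$-close to the corresponding half-line quantity.

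All constants will be uniform because, for $\operatorname{Re}E\in I$ and $\operatorname{Im}E<-\varepsilon$, the operators $H_L$, $H^+_0=H^\N$ and $H^-_k$ are self-adjoint with real spectrum, so $\operatorname{dist}(E,\sigma(\cdot))\geq|\operatorname{Im}E|>\varepsilon$, and their potentials are all bounded by $\|V\|_\infty$. For the off-diagonal term I would invoke the Combes--Thomas estimate (see e.g.~\cite[Theorem~11.2]{MR2509110}): conjugating $H_L$ by $n\mapsto e^{\gamma n}$ changes it by an operator of norm $\leq 2(e^{\gamma}-1)\leq C\gamma$, so for $\gamma$ a small fixed multiple of $\min(\varepsilon,1)$ the conjugated operator still has $E$ at distance $\gtrsim\varepsilon$ from its spectrum, whence $|\langle\delta_0,(H_L-E)^{-1}\delta_L\rangle|\leq C\varepsilon^{-1}e^{-\gamma L}$ with $\gamma\gtrsim\varepsilon$, uniformly on the region considered.

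For the two diagonal entries I would use the geometric resolvent identity together with Combes--Thomas, exactly as already done in~\eqref{eq:168}: $H_L$ coincides with $H^+_0$ except for an extra Dirichlet wall at the site $L$, so the finite-rank perturbation formula for $(H_L-E)^{-1}-(H^+_0-E)^{-1}$ combined with a Combes--Thomas bound for the resolvent kernels joining the site $0$ to the site $L$ (at distance $L$) gives $\big|\langle\delta_0,(H_L-E)^{-1}\delta_0\rangle-\langle\delta_0,(H^+_0-E)^{-1}\delta_0\rangle\big|\leq C\varepsilon^{-2}e^{-\gamma L}$; for the $(1,1)$ entry one translates by $-L$, writes $\langle\delta_L,(H_L-E)^{-1}\delta_L\rangle=\langle\delta_0,(\tilde H_L-E)^{-1}\delta_0\rangle$ with $\tilde H_L$ as in Remark~\ref{rem:6}, and (using that $V$ is $p$-periodic and $L\equiv k\bmod p$, so $\tilde H_L$ differs from $H^-_k$ only by a Dirichlet wall at $-L$) runs the same argument against $H^-_k$. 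Collecting the three bounds and absorbing $\gamma/\varepsilon$ and the implicit constants into $C$ yields~\eqref{eq:180}.

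The main obstacle is the uniformity of the Combes--Thomas decay rate: one has to check that $\gamma$ can be taken proportional to $\varepsilon$ — not merely positive — uniformly over $\operatorname{Re}E\in I$ and $\operatorname{Im}E<-\varepsilon$, which is precisely where the boundedness of the potential and the fact that all the spectra in play are real and contained in a fixed bounded interval enter (once $\|e^{\gamma n}He^{-\gamma n}-H\|\leq 2(e^{\gamma}-1)\leq C\gamma$ is used for $\gamma\leq 1$, the choice $\gamma=\min(\varepsilon,1)/C'$ works). Everything else — the explicit form of the geometric resolvent identity for a rank-two finite-difference perturbation, and keeping track of the $\varepsilon^{-1}$/$\varepsilon^{-2}$ factors coming from resolvent norms — is routine bookkeeping.
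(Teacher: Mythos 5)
Your proposal is correct and follows essentially the same route as the paper: identify the entries of $\Gamma_L(E)$ as resolvent matrix elements of the Dirichlet box operator $H_L$ at the boundary sites, compare the diagonal entries with $\langle\delta_0,(H_0^+-E)^{-1}\delta_0\rangle$ and $\langle\delta_0,(H_k^--E)^{-1}\delta_0\rangle$ via the geometric resolvent identity, and conclude with the Combes--Thomas estimate (the paper's~\eqref{eq:157}), which directly yields the $\varepsilon^{-2}e^{-\varepsilon L/C}$ bound uniformly on $\{\operatorname{Re}E\in I,\ \operatorname{Im}E<-\varepsilon\}$.
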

\noindent In sections~\ref{sec:proof-theorem-3}, the
approximations~\eqref{eq:59} and~\eqref{eq:180} theorems will be used
to prove Theorems~\ref{thr:5},~\ref{thr:2} and~\ref{thr:3}.\\
Let us note that, as $\cot z=i+O\left(e^{-2i\text{Im}\,z}\right)$, for
$\varepsilon\in(0,\varepsilon_I)$, the asymptotics given by
Theorems~\ref{thr:17} and~\ref{thr:19} coincide in the region
$\{$Re$\,E\in I,\ $Im$\,E\in(-\varepsilon_I,-\varepsilon)\}$: indeed one
has,
\begin{equation*}
  \sup_{\substack{\text{Re}\,E\in I\\-\varepsilon_I<\text{Im}\,E<-\varepsilon}}
  \left\|\frac{\theta'_p(E)}{\sin u_L(E)}
    \begin{pmatrix}\D e^{-i u_L(E)} f^-_k(E)&\D\sigma_I\,
      \sqrt{f^-_k(E)f^+_0(E)}\\ \D\sigma_I\,\sqrt{f^-_k(E)f^+_0(E)}&
      \D e^{-iu_L(E)} f^+_0(E) \end{pmatrix} \right\|\leq e^{-\varepsilon L/C}.
\end{equation*}
Let us now turn to the proofs of Theorems~\ref{thr:17}
and~\ref{thr:19}.
\subsubsection{The proof of Theorem~\ref{thr:17}}
\label{sec:proof-theorem-4}
To prove Theorem~\ref{thr:17}, we split the sum $S_L(E)$ into two
parts, one containing the Dirichlet eigenvalues ``close'' to Re$\,E$,
the second one containing those ``far'' from Re$\,E$. By ``far'', we
mean that the distance to Re$\,E$ is lower bounded by a small constant
independent of $L$. The ``close'' eigenvalues are then described by
Theorem~\ref{thr:16}. For the ``far'' eigenvalues, the strong
resolvent convergence of $H_L$ to $H_0^+$, that of $\tilde H_L$ to
$H_k^-$ (see Remark~\ref{rem:6}) and Combes-Thomas estimates enable us
to compute the limit and to show that the prelimit and the limit are
$O(L^{-\infty})$ close to each other. For the ``close'' eigenvalues,
the sum coming up in~\eqref{eq:142}, the definition of $\Gamma_L$, is
a Riemann sum. We use the Poisson summation formula to obtain a
precise approximation.\vskip.1cm\noindent
As $I$ is a compact interval in $\overset{\circ}{\Sigma}_\Z$, we pick
$\varepsilon>0$ such that, for $E\in I$, one has
$[E-6\varepsilon,E+6\varepsilon]\subset\overset{\circ}{\Sigma}_\Z$. Let
$\chi\in\Coi(\R)$ be a non-negative cut-off function such that
$\chi\equiv1$ on $[-4\varepsilon,4\varepsilon]$ and $\chi\equiv0$
outside $[-5\varepsilon,5\varepsilon]$. For $E\in I$, define
$\chi_E(\cdot)=\chi(\cdot-E)$.\vskip.1cm\noindent
We first give the asymptotic for the sum over the Dirichlet
eigenvalues far from Re$\,E$. We prove
\begin{Le}
  \label{le:1}
  For any $N>1$, there exists $C_N>0$ such that, for $L$ sufficiently
  large such that $L\equiv k\mod(p)$, one has
  \begin{equation}
    \label{eq:63}
    \sup_{E\in\C}\left|\sum_{j=1}^L
      \frac{1-\chi_{\text{Re}\,E}(\lambda_j)}{\lambda_j-E}
      \begin{pmatrix} |\varphi_j(L)|^2& \overline{\varphi_j(0)}
        \varphi_j(L) \\ \varphi_j(0) \overline{\varphi_j(L)} &
        |\varphi_j(0)|^2 \end{pmatrix}-\tilde M(E) \right|\leq C_N L^{-N}
  \end{equation}
  where
  \begin{equation}
    \label{eq:211}
    \tilde M(E):=    
    \begin{pmatrix}
      \D
      \int_{\R}(1-\chi_{\text{Re}\,E})(\lambda)\frac{dN^-_k(\lambda)}
      {\lambda-E}& 0\\0 & \D
      \int_{\R}(1-\chi_{\text{Re}\,E})(\lambda)\frac{dN^+_0(\lambda)}
      {\lambda-E}
    \end{pmatrix}.
  \end{equation}
\end{Le}
\begin{proof}[Proof of Lemma~\ref{le:1}]
  Recall (see Theorem~\ref{thr:11}) that $H_L$ is the operator $H_0^+$
  restricted to $\llbracket 0,L\rrbracket$ with Dirichlet boundary
  condition at $L$; as $L\equiv k\mod(p)$, it is unitarily equivalent
  to the operator $H_k^-$ restricted to $\llbracket -L,0\rrbracket$
  with Dirichlet boundary condition at $-L$ (see
  Remark~\ref{rem:6}).\\
  Pick $\tilde\chi\in\Coi$ such that $\tilde\chi\equiv1$ on
  $\sigma(H_0^+)\cup\sigma(H_k^-)$. First, we compute
  \begin{equation*}
    \begin{split}
      &\sum_{j=0}^L(1-\chi_{\text{Re}\,E})(\lambda_j)\frac{|\varphi_j(0)|^2}
      {\lambda_j-E}-
      \int_{\R}(1-\chi_{\text{Re}\,E})(\lambda)\frac{dN^+_0(\lambda)}{\lambda-E}
      \\&=\langle\delta_0,\left[\tilde\chi(1-\chi_{\text{Re}\,E})\right](H_L)(H_L-E)^{-1
      }\delta_0\rangle\\&\hskip5cm
      -\langle\delta_0,\left[\tilde\chi(1-\chi_{\text{Re}\,E})\right](H_0^+)(H_0^+-E)^{-1
      }\delta_0\rangle,
    \end{split}
  \end{equation*}
  \begin{equation*}
    \begin{split}
      &\sum_{j=0}^L(1-\chi_{\text{Re}\,E})(\lambda_j)\frac{
        |\varphi_j(L)|^2}{\lambda_j-E}-
      \int_{\R}(1-\chi_{\text{Re}\,E})(\lambda)\frac{dN^-_k(\lambda)}{\lambda-E}
      \\&=
      \langle\delta_L,\left[\tilde\chi(1-\chi_{\text{Re}\,E})\right](H_L)(H_L-E)^{-1
      }\delta_L\rangle\\&\hskip5cm-\langle\delta_L,\left[\tilde\chi(1-\chi_{\text{Re}\,E})\right]
      (H_k^-)(H_k^--E)^{-1}\delta_L\rangle,
    \end{split}
  \end{equation*}
  and
  \begin{equation*}
    \sum_{j=0}^L(1-\chi_{\text{Re}\,E})(\lambda_j)
    \frac{\varphi_j(L)\overline{\varphi_j(0)}}{\lambda_j-E}=
    \langle\delta_L,\left[\tilde\chi(1-\chi_{\text{Re}\,E})\right](H_L)(H_L-E)^{-1
    }\delta_0\rangle.
  \end{equation*}
  By the definition of $\chi_{\text{Re}\,E}$, the function
  $\lambda\mapsto(\lambda-E)^{-1} \tilde\chi(\lambda)
  (1-\chi_{\text{Re}\,E})(\lambda)$ is $\Coi$ on $\R$; moreover, its
  semi-norms (see~\eqref{estunif}) are bounded uniformly in
  $E\in\C$. Thus, there exists an almost analytic extension of
  $[\tilde\chi(1-\chi_{\text{Re}\,E})](\cdot)(\cdot-E)^{-1}$ such
  that, uniformly in $E$, one has~\eqref{estunif}.\\
  In the same way as we obtained~\eqref{eq:168}, we obtain
  \begin{multline}
    \label{eq:157}
    \left|\left\langle\delta_L,\left[(\tilde
          H_L-z)^{-1}-(H_k^--z)^{-1}
        \right]\delta_L\right\rangle\right|\\+
    \left|\left\langle\delta_0,\left[(H_L-z)^{-1}-(H_0^+-z)^{-1}
        \right]\delta_0\right\rangle\right|\\+
    \left|\left\langle\delta_0,(H_L-z)^{-1}\delta_L\right\rangle\right|\leq
    \frac{C}{|\text{Im}z|^2}e^{-L|\text{Im}z|/C}
  \end{multline}
  Plugging~\eqref{eq:157} into~\eqref{hesj0} and using~\eqref{estunif}
  for $[\tilde\chi(1-\chi_{\text{Re}\,E})](\cdot)(\cdot-E)^{-1}$, we
  get
  \begin{equation*}
    \forall K\in\N,\quad\sup_{\substack{L\geq1\\ L\equiv k\mod(p)}}
    L^{K}\,\left|\sum_{j=0}^L(1-\chi_{\text{Re}\,E})(\lambda_j)\frac{|\varphi_j(0)|^2}
      {\lambda_j-E}-
      \int_{\R}(1-\chi_{\text{Re}\,E})(\lambda)\frac{dN^+_0(\lambda)}{\lambda-E}
    \right|<+\infty
  \end{equation*}
  This entails~\eqref{eq:63} and completes the proof of
  Lemma~\ref{le:1}.
\end{proof}
\noindent Let us now estimate the part of $\Gamma_L(E)$ associated to
the Dirichlet eigenvalues close to Re$\,E$. Therefore, define
\begin{equation}
  \label{eq:61}
  \Gamma_L^\chi(E)=\sum_{j=1}^L\frac{\chi_{\text{Re}\,E}(\lambda_j)}{\lambda_j-E}
  \begin{pmatrix} |\varphi_j(L)|^2& \overline{\varphi_j(0)}
    \varphi_j(L) \\ \varphi_j(0) \overline{\varphi_j(L)} &
    |\varphi_j(0)|^2
  \end{pmatrix}.
\end{equation}
We prove
\begin{Le}
  \label{le:13}
  There exists $\varepsilon>0$ such that, for $N\geq1$, there exists
  $C_N$ such that, for $L$ sufficiently large such that $L\equiv
  k\mod(p)$, one has
  \begin{equation*}
    \label{eq:65}
    \sup_{\substack{\text{Re}\,E\in I\\-\varepsilon<\text{Im}\,E<0}}
    \left|\Gamma^\chi_L(E)-\Gamma^\text{eff}_L(E)+\tilde M(E)
    \right|\leq C_NL^{-N}
  \end{equation*}
  where $\tilde M$ is defined in~\eqref{eq:211}.
\end{Le}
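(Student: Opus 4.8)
The plan is to use Theorem~\ref{thr:16} to turn $\Gamma^\chi_L(E)$ into an explicit Riemann-type sum over the quantized Dirichlet eigenvalues and to evaluate that sum by the Poisson summation formula. Choose $\varepsilon>0$ so small that $[\text{Re}\,E-6\varepsilon,\text{Re}\,E+6\varepsilon]\subset\overset{\circ}{\Sigma}_\Z$ for every $\text{Re}\,E\in I$, that this interval lies strictly inside one band, away from the band edges, and that $\theta_p$ and the functions $f_k,f_0,h_k$ of Theorem~\ref{thr:16} extend analytically, with $\text{Im}\,\theta_p(x+iy)$ having the sign of $y$, to the $\varepsilon_I$-neighborhood of it for some $\varepsilon_I>\varepsilon$. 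Then only eigenvalues $\lambda_l\in\overset{\circ}{\Sigma}_\Z$ contribute to $\Gamma^\chi_L(E)$, and for those the quantization condition $u_L(\lambda_l)=(L-k)\theta_{p,L}(\lambda_l)=l\pi$ together with~\eqref{eq:26} gives the exact formulas $|\varphi_l(L)|^2=f_{k,L}(\lambda_l)/(L-k)$, $|\varphi_l(0)|^2=f_{0,L}(\lambda_l)/(L-k)$ and $\overline{\varphi_l(0)}\varphi_l(L)=e^{iu_L(\lambda_l)}\sqrt{f_{k,L}(\lambda_l)f_{0,L}(\lambda_l)}/(L-k)$. Hence each entry of $\Gamma^\chi_L(E)$ equals $\sum_l\psi_E(\lambda_l)$ with $\psi_E(t)=\chi_{\text{Re}\,E}(t)\,e^{i\kappa u_L(t)}W(t)\big/\big((t-E)(L-k)\big)$, where $\kappa\in\{0,1\}$ and $W\in\{f_{k,L},f_{0,L},\sqrt{f_{k,L}f_{0,L}}\}$. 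Since $\theta_{p,L}$ is, for $L$ large, an increasing analytic bijection of the band containing $I$ and $\psi_E\circ\theta_{p,L}^{-1}\in\Coi$, the Poisson summation formula and a change of variables give
\[
  \sum_l\psi_E(\lambda_l)=\frac1\pi\sum_{m\in\Z}\int_\R\frac{\chi_{\text{Re}\,E}(t)\,W(t)\,\theta'_{p,L}(t)}{t-E}\,e^{i(\kappa-2m)u_L(t)}\,dt ,
\]
the factors $L-k$ cancelling against $u_L'=(L-k)\theta'_{p,L}$.

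Next I would treat the terms separately. A term with $\kappa-2m=0$ occurs only on the diagonal ($\kappa=0$, $m=0$) and, by Corollary~\ref{cor:1} (which gives $dN_k^-=\tfrac1\pi f_{k,L}\theta'_{p,L}\,d\lambda$ on $\overset{\circ}{\Sigma}_\Z$), it equals $\int_\R\chi_{\text{Re}\,E}(\lambda)(\lambda-E)^{-1}dN_k^-(\lambda)$ (resp. with $dN_0^+$); added to $\tilde M(E)$ this reconstitutes the diagonal Borel-transform block of $\Gamma^\text{eff}_L$. For every term with $\kappa-2m\neq0$ I would take an almost analytic extension $\widetilde F$ of the amplitude $F(t)=\chi_{\text{Re}\,E}(t)W(t)\theta'_{p,L}(t)$, which is $\Coi$ with all seminorms bounded uniformly in $E$, and deform the contour $\R$ (restricted to the compact real projection of $\operatorname{supp}\widetilde F$) into the half-plane $\{\pm\text{Im}\,t>0\}$ on which $e^{i(\kappa-2m)u_L(t)}$ decays — the upper one when $\kappa-2m>0$, the lower one when $\kappa-2m<0$ — up to a fixed height $\delta_0\in(\varepsilon,\varepsilon_I)$. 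This produces three contributions: the integral over the shifted contour, which is $O(e^{-cL})$; the $\bar\partial$-area integral $\iint\bar\partial\widetilde F(t)\,(t-E)^{-1}e^{i(\kappa-2m)u_L(t)}\,dA(t)$, which is $O((|m|L)^{-N})$ for every $N$ uniformly in $E$ because $\bar\partial\widetilde F$ vanishes to infinite order on $\R$ while $e^{i(\kappa-2m)u_L}$ decays like $e^{-c|m|L|\text{Im}\,t|}$; and, only when the pole at $E$ is actually crossed, i.e. only when $\kappa-2m<0$, the residue $-2i\,W(E)\theta'_{p,L}(E)\,e^{i(\kappa-2m)u_L(E)}$ (here $\chi_{\text{Re}\,E}(E)=1$ since $\chi\equiv1$ near $0$ and $|\text{Im}\,E|<\varepsilon$). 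Summing the two $O$-contributions over $m$ gives an $O(L^{-N})$ error.

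Finally I would sum the residues. Since $|e^{-2iu_L(E)}|=e^{2(L-k)\text{Im}\,\theta_{p,L}(E)}<1$ for $\text{Im}\,E<0$, the series $\sum_{\kappa-2m<0}e^{i(\kappa-2m)u_L(E)}$ is a convergent geometric series equal to $e^{i(\kappa-1)u_L(E)}\big/\big(2i\sin u_L(E)\big)$ (using $e^{-iu_L}/(1-e^{-2iu_L})=(2i\sin u_L)^{-1}$); multiplying by $-2i\,W(E)\theta'_{p,L}(E)$ and invoking $f_{k,L}\theta'_{p,L}\equiv f_k\theta'_p$ and $f_{0,L}\theta'_{p,L}\equiv f_0\theta'_p$ from Corollary~\ref{cor:1}, hence $\sqrt{f_{k,L}f_{0,L}}\,\theta'_{p,L}=\sqrt{f_kf_0}\,\theta'_p$, turns this into exactly the corresponding entry of the first matrix in~\eqref{eq:162}. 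Collecting the three kinds of contribution yields $\Gamma^\chi_L(E)=\Gamma^\text{eff}_L(E)-\tilde M(E)+O(L^{-N})$ uniformly for $\text{Re}\,E\in I$, $-\varepsilon<\text{Im}\,E<0$, which is the assertion. The delicate point, and the reason one must extract the residue by the contour–deformation/$\bar\partial$ argument rather than by a naive integration by parts, is the uniformity as $\text{Im}\,E\to0^-$, when the pole of $\psi_E$ approaches the integration contour; this is also why $\chi$ must be supported in the interior of a single band, away from the square-root singularities of $\theta_p'$, and why $\varepsilon$ and $\delta_0$ have to be chosen small enough for the analyticity and sign statements above to hold on the complex strip used in the deformation.
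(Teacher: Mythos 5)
Your proposal is correct and follows essentially the same route as the paper's proof: Theorem~\ref{thr:16} and the quantization condition turn $\Gamma^\chi_L(E)$ into a sum that is evaluated by Poisson summation, the nonzero Fourier modes are pushed into the half-plane where the oscillatory factor decays, the modes deformed downward across the pole at $E$ yield residues whose geometric sum produces the $1/\sin u_L(E)$ block of~\eqref{eq:162}, and Corollary~\ref{cor:1} identifies the remaining (phase-free, diagonal) term with the Borel transforms of $dN_k^-$ and $dN_0^+$ so that adding $\tilde M(E)$ reconstitutes $\Gamma^{\text{eff}}_L(E)$. The only, immaterial, difference is the device used to handle the non-analytic cutoff: you take an almost analytic extension of the full amplitude and run a $\bar\partial$/Stokes argument, whereas the paper deforms the contour only where $\chi\equiv1$, along the paths $\gamma_\pm(\lambda)=\lambda\pm2i\varepsilon\tilde\chi(\lambda)$, and concludes by non-stationary phase there.
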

\noindent Clearly Lemmas~\ref{le:1} and~\ref{le:13} immediately yield
Theorem~\ref{thr:17}.
\begin{proof}[Proof of Lemma~\ref{le:13}]
  Recall that the quasi-momentum $\theta_p$ defines a real analytic
  one-to-one monotonic map from the interior of each band of spectrum
  onto the set $(0,\pi)$, $(-\pi,0)$ or $(-\pi,\pi)$ (depending on the
  spectral band containing $I+[-4\varepsilon,4\varepsilon]$ where
  $\varepsilon>0$ has been fixed above) (see
 , e.g.,~\cite{MR1711536}). Moreover, the derivative $\theta'_p$ is
  positive in the interior of a spectral band. Thus, for $L$
  sufficiently large, the real part of the derivative $\theta'_{p,L}$
  (see~\eqref{eq:229}) is positive $I+[-2\varepsilon,2\varepsilon]$
  and $\theta_{p,L}$ is real analytic one-to-one on a complex
  neighborhood of
  $(I+[-3\varepsilon,3\varepsilon])+i[-3\varepsilon,3\varepsilon]$
  (possibly at the expense of reducing $\varepsilon$ somewhat). \\
  By~\eqref{eq:142},~\eqref{eq:171} and Theorem~\ref{thr:16}, one may
  write
  \begin{equation}
    \label{eq:60}
    \Gamma^\chi_L(E)=\frac1{L-k}\sum_{j\in\Z}
    \frac{\chi_{\text{Re}\,E}\left(\theta_{p,L}^{-1}
        \left(\frac{\pi j}{L-k}\right)\right)}{\theta_{p,L}^{-1}
      \left(\frac{\pi j}{L-k}\right)-E}
    M\left(\theta_{p,L}^{-1}
      \left(\frac{\pi j}{L-k}\right)\right)
  \end{equation}
  where
  \begin{equation}
    \label{eq:62}
    M(\lambda):=
    \begin{pmatrix} f_{k,L}(\lambda)&
      \sigma_I\,
      e^{i(L-k)\theta_{p,L}(\lambda)}\sqrt{f_{k,L}(\lambda)f_{0,L}(\lambda)} \\ 
      \sigma_I\,
      e^{i(L-k)\theta_{p,L}(\lambda)}\sqrt{f_{k,L}(\lambda)f_{0,L}(\lambda)}&
      f_{0,L}(\lambda)  \end{pmatrix}.
  \end{equation}
  and the matrix $M$ is analytic in the rectangle
  $(I+[-3\varepsilon,3\varepsilon])+i[-3\varepsilon,3\varepsilon]$. Thus,
  the Poisson formula tells us that
  \begin{equation}
    \label{eq:66}
    \begin{split}
      \Gamma^\chi_L(E)&=\frac1{L-k}\sum_{j\in\Z}\int_\R e^{-2i\pi j x}
      \frac{\chi_{\text{Re}\,E}\left(\theta_{p,L}^{-1} \left(\frac{\pi
              x}{L-k}\right)\right)}{\theta_{p,L}^{-1} \left(\frac{\pi
            x}{L-k}\right)-E}\, M\left(\theta_{p,L}^{-1}
        \left(\frac{\pi x}{L-k}\right)\right)dx\\
      &=\sum_{j\in\Z}\frac1\pi\int_\R e^{-2i j(L-k)
        \theta_{p,L}(\lambda)}
      \frac{\chi_{\text{Re}\,E}(\lambda)}{\lambda-E}\theta_{p,L}'(\lambda)
      \,M\left(\lambda\right)d\lambda\\
      &=\sum_{j\in\Z}\frac1\pi\int_\R M_{j,\chi}(E,\lambda,\lambda)
      d\lambda
    \end{split}
  \end{equation}
  by the definition of $\chi_{\text{Re}\,E}$; here, we have set
  \begin{equation*}
    M_{j,\chi}(E,\lambda,\beta):=e^{-2i j(L-k)
      \theta_{p,L}(\beta+\text{Re}\,E)}
    \frac{\chi(\lambda)}{\beta-i\text{Im}\,E}
    \theta_{p,L}'(\beta+\text{Re}\,E)\,
    M(\beta+\text{Re}\,E).
  \end{equation*}
  Let us now study the individual terms in the last sum
  in~\eqref{eq:66}. Therefore, recall that, on
  $[-4\varepsilon,4\varepsilon]$, $\chi$ is identically $1$ and that
  $\lambda\mapsto \theta_{p,L}(\lambda+\text{Re}\,E)$ and
  $\lambda\mapsto M(\lambda)$ are analytic in
  $(I+[-3\varepsilon,3\varepsilon])+i[-3\varepsilon,3\varepsilon]$;
  moreover, by~\eqref{eq:23}, for some $\delta>0$, one has
  \begin{equation}
    \label{eq:85}
    \liminf_{L\to+\infty}\inf_{\lambda\in[-4\varepsilon,4\varepsilon]}
    \theta'_{p,L}(\lambda+\text{Re}\,E)\geq \liminf_{L\to+\infty}\inf_{E\in I}
    \theta'_{p,L}(E)\geq \delta.
  \end{equation}
  Recall also that Im$\,E<0$. Consider $\tilde\chi:\R\to[0,1]$ smooth
  such that $\tilde\chi=1$ on $[-2\varepsilon,2\varepsilon]$ and
  $\tilde\chi=0$ outside $[-3\varepsilon,3\varepsilon]$.\\
  In the complex plane, consider the paths $\gamma_\pm:\,\R\to\C$
  defined by
  \begin{equation*}
    \gamma_\pm(\lambda)=\lambda\pm2i\varepsilon\tilde\chi(\lambda).    
  \end{equation*}
  As $-\varepsilon\leq$Im$\,E\,<0$, by contour deformation, we have
  \begin{gather*}
    \int_\R M_{j,\chi}(E,\lambda,\lambda)d\lambda= \int_\R
    M_{j,\chi}(E,\lambda,\gamma_+(\lambda))d\lambda,
    \\
    \begin{split}
      \int_\R M_{j,\chi}(E,\lambda,\lambda)d\lambda&= -2i\pi e^{-2i
        j(L-k) \theta_{p,L}(E)}\theta_{p,L}'(E)\,M(E)+ \int_\R
      M_{j,\chi}(E,\lambda,\gamma_-(\lambda))d\lambda.
    \end{split}
  \end{gather*}
  We then estimate
  \begin{itemize}
  \item for $j<0$, using a non-stationary phase argument as the
    integrand is the product of a smooth function with an rapidly
    oscillating function (using $|j|(L-k)$ as the large parameter),
    one then estimates
    \begin{equation*}
      \int_\R M_{j,\chi}(E,\lambda,\gamma_+(\lambda))d\lambda
      =O\left((|j|L)^{-\infty}\right).
    \end{equation*}
    The phase function is complex but its real part is non positive as
    Im$\,\theta_{p,L}(\gamma_+(\cdot)+ \text{Re}\,E)\geq0$ on the
    support of $\chi$ (by~\eqref{eq:85}). Note that the off-diagonal
    terms of $M(\lambda)$ also carry a rapidly oscillating exponential
    (see~\eqref{eq:62}) but it clearly does not suffice to counter the
    main one.
  \item in the same way, for $j>0$, one has
    \begin{equation*}
      \int_\R M_{j,\chi}(E,\lambda,\gamma_-(\lambda))d\lambda=
      O\left((|j|L)^{-\infty}\right).
    \end{equation*}
  \end{itemize}
  Thus, we compute
  \begin{gather}
    \label{eq:174}
    \text{ for }j<0:\ \int_\R M_{j,\chi}(E,\lambda,\lambda)
    d\lambda=O\left((|j|L)^{-\infty}\right),\\
    \label{eq:175}
    \text{ for }j>0:\ \int_\R M_{j,\chi}(E,\lambda,\lambda)d\lambda=
    -2i\pi e^{-2i j(L-k) \theta_{p,L}(E)}\theta_{p,L}'(E)\,M(E)+
    O\left((|j|L)^{-\infty}\right).
  \end{gather}
  Finally, for $j=0$, the contour deformation along $\gamma_+$ yields
  \begin{equation*}
    \begin{split}
      \int_\R\frac{\chi(\lambda)}{\lambda-i\text{Im}\,E}\,
      M(\lambda+\text{Re}\,E)d\lambda&=
      \int_\R\frac{\chi_{\text{Re}\,E}(\lambda)}{\lambda-E}
      \,\theta_{p,L}'(\lambda)
      \begin{pmatrix} f_{k,L}(\lambda)& 0 \\ 0&
        f_{0,L}(\lambda)\end{pmatrix}d\lambda +O\left(L^{-\infty}\right)\\
      &=\int_\R\frac{\chi_{\text{Re}\,E}(\lambda)}{\lambda-E}
      \begin{pmatrix} dN_k^-(\lambda)& 0 \\ 0&
        dN_0^+(\lambda)\end{pmatrix} + O\left(L^{-\infty}\right)
    \end{split}
  \end{equation*}
  by Corollary~\ref{cor:1}.\\
  Plugging this,~\eqref{eq:174} and~\eqref{eq:175} into~\eqref{eq:66}
  and computing the geometric sum immediately yields the following
  asymptotic expansion (where the remainder term is uniform on the
  rectangle $I+i[-\varepsilon,0)$)
  \begin{equation}
    \label{eq:170}
    \begin{split}
      \Gamma^\chi_L(E)&= -2i\sum_{j>0} e^{-2i j(L-k)
        \theta_{p,L}(E)}\theta_{p,L}'(E)\,M(E)\\&\hskip2cm+
      \int_\R\frac{\chi_{\text{Re}\,E}(\lambda)}{\lambda-E}
      \begin{pmatrix} dN_k^-(\lambda)& 0 \\ 0&
        dN_0^+(\lambda)\end{pmatrix} +
      O\left(L^{-\infty}\right)\\
      &= \frac{- e^{-i (L-k) \theta_{p,L}(E)}}{\sin((L-k)
        \theta_{p,L}(E))}\theta_{p,L}'(E)\,M(E)\\&\hskip2cm+
      \int_\R\frac{\chi_{\text{Re}\,E}(\lambda)}{\lambda-E}
      \begin{pmatrix} dN_k^-(\lambda)& 0 \\ 0&
        dN_0^+(\lambda)\end{pmatrix} + O\left(L^{-\infty}\right).
    \end{split}
  \end{equation}
  This completes the proof of Lemma~\ref{eq:65}.
\end{proof}
\subsubsection{The proof of Theorem~\ref{thr:19}}
\label{sec:proof-theorem-7}
To prove~\eqref{eq:59}, for Im$\, E<-\varepsilon$, it suffices to
write
\begin{equation*}
  \begin{split}
    \sum_{j=0}^L\frac{|\varphi_j(0)|^2} {\lambda_j-E}-
    \int_{\R}\frac{dN^+_0(\lambda)}{\lambda-E}
    &=\langle\delta_0,(H_L-E)^{-1}\delta_0\rangle-
    \langle\delta_0,(H_0^+-E)^{-1}\delta_0\rangle
    \\&=\langle\delta_0,(H_L-E)^{-1}\delta_L\rangle
    \langle\delta_{L+1},(H_0^+-E)^{-1 }\delta_0\rangle
  \end{split}
\end{equation*}
and
\begin{gather*}
  \sum_{j=0}^L\frac{ |\varphi_j(L)|^2}{\lambda_j-E}-
  \int_{\R}\frac{dN^-_k(\lambda)}{\lambda-E}=
  \langle\delta_0,(H_L-E)^{-1}\delta_L\rangle
  \langle\delta_{L+1},(H_k^--E)^{-1}\delta_0\rangle,
  \\
  \sum_{j=0}^L\frac{\varphi_j(L)\overline{\varphi_j(0)}}{\lambda_j-E}=
  \langle\delta_L,(H_L-E)^{-1}\delta_0\rangle
\end{gather*}
and to use the Combes-Thomas estimate~\eqref{eq:157}. This completes
the proof of Theorem~\ref{thr:19}.\qed
\subsection{The proofs of Theorems~\ref{thr:5},~\ref{thr:2}
  and~\ref{thr:3}}
\label{sec:proof-theorem-3}
We will now use Theorems~\ref{thr:17} and~\ref{thr:19} to prove
Theorems~\ref{thr:5},~\ref{thr:2} and~\ref{thr:3}.
\subsubsection{The proof of Theorem~\ref{thr:5}}
\label{sec:proof-theorem-5}
The first statement of Theorem~\ref{thr:5} is an immediate consequence
of the characteristic equations for the resonances~\eqref{eq:1}
and~\eqref{eq:135} and the description of the eigenvalues of $H_L$
given in Theorem~\ref{thr:16}.\\
When $\bullet=\N$, i.e., for the operator on the half-line, if
$I\subset(-2,2)$ does not meet $\Sigma_\N$, there exists $C>0$
s.t. for $L$ sufficiently large dist$(I,\sigma(H_L))>1/C$. Thus, on
the set $I-i[0,+\infty)$, one has Im$\,S_L(E)\leq\,$Im$\,E/C$. As on
$I$, one has Im$\,\theta_p(E)>1/C$ (see
section~\ref{sec:proof-theorem}), the characteristic
equation~\eqref{eq:1} admits a solution $E$ such that Re$\,E\in I$ only
if Im$\,E<1/C^2$. This completes the proof of point (1) of
Theorem~\ref{thr:5} for $\bullet=\N$.\\
For $\bullet=\Z$, i.e., to study equation~\eqref{eq:135}, one reasons in
the same way except that one replaces the study of $S_L(E)$ by that of
$\langle\Gamma_L(E)u, u\rangle$ for $u$ an arbitrary vector in $\C^2$
of unit length. This completes the proof of point (1) of
Theorem~\ref{thr:5}\\
Point (3a) is an immediate consequence of Theorems~\ref{thr:12}
and~\ref{thr:14} and the description of the eigenvalues of $H_L$
outside $\Sigma_\Z$. Notice that in the present case $d_j$ in
Theorems~\ref{thr:12} and~\ref{thr:14} is bounded from below by a
constant independent of $L$ and $a^\bullet_j$ is exponentially small
and described by Theorem~\ref{thr:16}. \\
Point (3b) is an immediate consequence of the description of the
eigenvalues of $H_L$ outside $\Sigma_\Z$ in case (3) of
Theorem~\ref{thr:19} and Theorem~\ref{thr:13}. Indeed, in the present
case $d_j$ and and $a^\bullet_j$ are both of order $1$; thus,
Theorem~\ref{thr:13} guarantees, around the common eigenvalue for
$H_k^-$ and $H_0^+$, a rectangle of width of order $1$ free
of resonances. \\
Let us now turn to the proof of point (2). Therefore, we first prove
the following corollary of Theorem~\ref{thr:17}
\begin{Cor}
  \label{cor:2}
  Fix $I\subset\overset{\circ}{\Sigma}_\Z$ compact. There exists
  $\eta_0>0$ such that, for $L$ sufficiently large, one has
  \begin{equation}
    \label{eq:50}
    \min_{\substack{\text{Re}\,E\in I\\\text{Im}\,E\in [-\eta_0/L,0)}}
    \left|S_L(E)+e^{-i\theta(E)}\right|\geq\eta_0\quad\text{ and
    }\quad \min_{\substack{\text{Re}\,E\in I\\\text{Im}\,E\in [-\eta_0/L,0)}}
    \left|\text{det}\left(\Gamma_L(E)+e^{-i\theta(E)}\right)\right|\geq\eta_0.
  \end{equation}
\end{Cor}
\noindent Clearly, Corollary~\ref{cor:2} implies that neither
equation~\eqref{eq:1} nor equation~\eqref{eq:135} can have a solution
in $I+i]-\eta_0/L,0]$. This proves point (2) of
Theorem~\ref{thr:5}.\qed\vskip.2cm\noindent
Before proving Corollary~\ref{cor:2}, we first prove
Propositions~\ref{pro:4} and~\ref{pro:5} as these will be used in the
proof of Corollary~\ref{cor:2}.
\subsubsection{Results on the auxiliary functions defined in section~\ref{sec:auxil-op}}
\label{sec:proof-proposition}
Recall that $N^-_k$ is defined in section~\ref{sec:auxil-op}. We prove
\begin{Pro}
  \label{pro:1}
  For $k\in\{0,\cdots,p-1\}$, $dN^-_k$ is a positive measure that is
  absolutely continuous on $\Sigma_\Z$. Moreover, its density, say,
  $E\mapsto n^-_k(E)$ is real analytic on $\overset{\circ}{\Sigma}_\Z$
  and there exists $f^-_k:\overset{\circ}{\Sigma}_\Z\to\R$ a positive
  real analytic function such that, on $\overset{\circ}{\Sigma}_\Z$,
  one has $\D n^-_k(E) =f^-_k(E)\, n(E)$.
\end{Pro}
\begin{proof}
  Proposition~\ref{pro:1} is an immediate consequence of
  Theorems~\ref{thr:17} and~\ref{thr:19} and Corollary~\ref{cor:1}.
\end{proof}
\noindent For $\Xi^-_k$ defoined in~\eqref{eq:8}, we prove
\begin{Pro}
  \label{pro:4}
  $\Xi^-_k$ vanishes identically if and only if $V\equiv0$, i.e., $V$
  vanishes identically. Moreover, if $V\not\equiv0$ then there exists
  $\xi^-_k\not=0$ and $\alpha^-_k\in\{2,3,\cdots\}$ such that
  $\D\Xi^-_k(E)\equ_{\substack{|E|\to\infty\\
      \text{Im}\,E<0}}\xi_k^-E^{-\alpha^-_k}$.
\end{Pro}
\begin{proof}
  We will do the proofs for the function
  $\Xi_k^-$. Proposition~\ref{pro:4} is an immediate consequence of
  the fact that, in the lower half-plane, the function $\D E\mapsto
  -e^{-i\arccos(E/2)}=-\frac E2-\sqrt{\frac{E^2}4-1}$ (i.e. the
  determination of it defined above) is equal to the Stieltjes (or
  Borel) transform of the spectral measure associated to the Dirichlet
  Laplacian on $\N$ and the vector $\delta_0$; this follows from a
  direct computation (see Remark~\ref{rem:5} and~\eqref{eq:15} for
  $n=0$). Now, if one lets $W$ be the symmetric of $\tau_k V$ with
  respect to $0$, the spectral measure $dN_k^-$ is also the spectral
  measure of the Schr{\"o}dinger operator $H_k=-\Delta+W$ on $\N$
  associated to $\delta_0$. The equality of the Borel transforms
  implies the equality of the measures but $\delta_0$ is cyclic for
  both operators so the operators have equal spectral measures. This
  implies that the two operators are equal and, thus, the symmetric of
  $\tau_k V$ has to vanish identically on $\N$. As $V$ is periodic,
  $V$ must vanish
  identically.\\
  As for the second point, if the function $\Xi_k^-$ were to vanish to
  infinite order at $E=-i\infty$, as each of the terms
  $\D\int_\R\frac{dN^-_k(\lambda)}{\lambda-E}$ and $\D-\frac E2-
  \sqrt{\frac{E^2}4-1}$ admits an infinite asymptotic expansion in
  powers of $E^{-1}$, these two expansions would be equal. The $n$-th
  coefficient of these expansion are respectively the $n$-th moments
  of the spectral measures of $H_k$ and $-\Delta_0^+$ (associated to
  the cyclic vector $\delta_0$). So these moments would coincide and,
  thus, the spectral measures would coincide. One concludes as
  above.\qed
\end{proof}
\noindent For $c^{\bullet}$ defined in~\eqref{eq:232}
and~\eqref{eq:231}, we prove
\begin{Pro}
  \label{pro:5}
  Pick $\bullet\in\{\N,\Z\}$. Let
  $I\subset(-2,2)\cap\overset{\circ}{\Sigma}_\Z$ be a compact
  interval.\\
  There exists a neighborhood of $I$ such that, in this neighborhood,
  the function $E\mapsto c^\bullet(E)$ is analytic and
  has  a positive imaginary part.\\
  The function $c^\N$ (resp.  $c^\Z$) takes the value $i$ only at the
  zeros of $\Xi_k^-$ (resp. $\Xi_k^-\,\Xi_0^+$).
\end{Pro}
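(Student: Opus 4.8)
The plan is to express $c^\N$ and $c^\Z$ through the ``half-line'' functions attached to $H_k^-$ and $H_0^+$, to reduce the assertion on the imaginary part to an elementary Möbius-map computation, and then to pass from $I$ to a complex neighbourhood by compactness. Write $b(E):=c^\N(E)=\frac1{\pi n_k^-(E)}\bigl(S_k^-(E)+e^{-i\arccos(E/2)}\bigr)$ and introduce the analogous function built from $H_0^+$, namely $a(E):=\frac1{\pi n_0^+(E)}\bigl(S_0^+(E)+e^{-i\arccos(E/2)}\bigr)$, which (as for $c^\N$) also equals $i+\Xi_0^+(E)/(\pi n_0^+(E))$. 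One then checks directly, rewriting numerator and denominator of $c^\Z$ in terms of $a$ and $b$, that
\[
  c^\Z(E)=\frac{a(E)\,b(E)-1}{a(E)+b(E)}=m_{a(E)}\bigl(b(E)\bigr),\qquad m_\alpha(z):=\frac{\alpha z-1}{z+\alpha};
\]
thus $c^\Z$ is obtained from the two half-line functions by a Möbius transformation, which is the structural observation I would build on.

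For analyticity I would use that, by Proposition~\ref{pro:1} and the remarks after~\eqref{eq:5}, $n_k^-$ and $n_0^+$ are real analytic and \emph{positive} on $\overset{\circ}{\Sigma}_\Z$ (positivity of the density of states on the interior of the bands), $S_k^-$ and $S_0^+$ are real analytic there, and $E\mapsto e^{-i\arccos(E/2)}$ is holomorphic near $(-2,2)$. Hence $a$ and $b$ extend holomorphically to a complex neighbourhood of the compact interval $I$ on which $n_0^+,n_k^-$ are bounded away from $0$, so $c^\N=b$ is holomorphic there, and $c^\Z=m_a(b)$ is holomorphic wherever in addition $a+b\neq0$ — a condition supplied by the next step.

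The core point is the sign of the imaginary part, which I would establish first on the real interval. For real $E\in I$ one has $S_k^-(E),n_k^-(E)\in\R$ with $n_k^-(E)>0$, and, with the branch conventions of~\eqref{eq:8}, $e^{-i\arccos(E/2)}=E/2+i\sqrt{1-(E/2)^2}$; therefore $\text{Im}\,c^\N(E)=\sqrt{1-(E/2)^2}\,/\bigl(\pi n_k^-(E)\bigr)>0$, and likewise $\text{Im}\,a(E)>0$. For $c^\Z$ it then suffices to note that $m_\alpha$ maps the upper half-plane into itself: for $a=a_1+ia_2$, $b=b_1+ib_2$ with $a_2,b_2>0$ one has $a+b\neq0$ and a one-line computation gives
\[
  \text{Im}\,\frac{ab-1}{a+b}=\frac{(1+|a|^2)\,b_2+(1+|b|^2)\,a_2}{|a+b|^2}>0 .
\]
Since $\text{Im}\,c^\bullet$ is continuous and, by the above, strictly positive on the compact set $I$, after shrinking the neighbourhood it stays positive — and, for $\bullet=\Z$, $a+b$ stays non-zero — which proves the first two assertions.

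It remains to identify the level set $\{c^\bullet=i\}$. From the definition $c^\N=i+\Xi_k^-/(\pi n_k^-)$ and $n_k^-\neq0$ one reads off that $c^\N(E)=i$ iff $\Xi_k^-(E)=0$. For $c^\Z$, $m_a(b)=i$ is equivalent to $ab-1=i(a+b)$, i.e. to $(a-i)(b-i)=0$; since $a(E)=i$ iff $\Xi_0^+(E)=0$ and $b(E)=i$ iff $\Xi_k^-(E)=0$, this says precisely that $c^\Z(E)=i$ iff $\Xi_k^-(E)\,\Xi_0^+(E)=0$ — finitely many points, by Proposition~\ref{pro:4} (which $\Xi_0^+$ also satisfies). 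I do not anticipate a genuine obstacle: the two things needing care are tracking the branch of $\arccos$ (equivalently of $z\mapsto\sqrt{z^2-1}$) in~\eqref{eq:8}, so that $\text{Im}\,e^{-i\arccos(E/2)}=\sqrt{1-(E/2)^2}$ comes out \emph{positive}, and spotting the identity $c^\Z=(ab-1)/(a+b)$ together with the invariance of $\C^+$ under $z\mapsto(\alpha z-1)/(z+\alpha)$; the rest is the routine compactness argument passing from $I$ to a complex neighbourhood.
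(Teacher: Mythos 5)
Your proposal is correct and follows essentially the same route as the paper: you introduce the same two half-line functions $g_0^+=a$ and $g_k^-=b$, get analyticity from the positivity of $n_0^+,n_k^-$ on $\overset{\circ}{\Sigma}_\Z$, obtain $\operatorname{Im}c^\bullet>0$ on $I$ from the branch of $e^{-i\arccos(E/2)}$ together with the fact that $(a,b)\mapsto(ab-1)/(a+b)$ preserves the upper half-plane (the paper checks this via the identity $\frac{ab-1}{a+b}=-\frac1{a+b}+\frac1{\frac1a+\frac1b}$, you via the explicit imaginary-part formula — an equivalent computation), and identify the level set $\{c^\Z=i\}$ through the same factorization $c^\Z-i=\frac{(a-i)(b-i)}{a+b}$, i.e.\ the paper's~\eqref{eq:73}. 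No gaps.
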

\begin{proof}
  On $\{$Im$\,E<0\}$, define the functions
  \begin{gather}
    \label{eq:160}
    g^-_k(E):=i+\frac{\Xi_k^-(E)}{\pi\,n_k^-(E)}=\frac1{\pi\,n_k^-(E)}
    \left(S_k^-(E)+e^{-i\arccos(E/2)}\right),  \\
    \label{eq:161}
    g^+_0(E):=i+\frac{\Xi_0^+(E)}{\pi\,n^+_0(E)}=\frac1{\pi\,n_0^+(E)}
    \left(S^+_0(E)+e^{-i\arccos(E/2)}\right).
  \end{gather}
  First, the analyticity of $g_k^-$ and $g_0^+$ is clear; indeed, all
  the functions involved are analytic and the functions $n_0^+$ and
  $n_k^-$ stay positive on $\overset{\circ}{\Sigma}_\Z$. Moreover,
  these functions can be analytically continued through
  $(-2,2)\cap\overset{\circ}{\Sigma}_Z$.  By~\eqref{eq:51}, for $E$
  real, one has Im$\,g_k^-(E)=$Im$\,g_0^+(E)=$ Im$\,e^{-i\theta(E)}$
  which is positive (see section~\ref{sec:proof-theorem}). Thus, the
  functions $E\mapsto g^-_k(E)$ and $E\mapsto g^+_0(E)$ do not vanish
  on $I$. Moreover, as
  \begin{equation}
    \label{eq:54}
    \frac{g^+_0(E)g^-_k(E)-1}{g^+_0(E)+g^-_k(E)}=-\frac1{g^+_0(E)+g^-_k(E)}+
    \frac1{\D\frac1{g^+_0(E)}+\frac1{g^-_k(E)}};
  \end{equation}
  this function has a positive imaginary part on $I$.\\
  This proves the first two properties of $c^\bullet$ stated in
  Proposition~\ref{pro:5}. By the very definition of $c^\bullet$ and
  $g_k^-$, the last property stated in Proposition~\ref{pro:5} is
  obviously satisfied in the case of the half-line; for the full line
 , i.e., if $\bullet=\Z$, the last property is a consequence of the
  following computation
  \begin{equation}
    \label{eq:73}
    \begin{split}
      c^\Z(E)-i&=\frac{g^+_0(E)g^-_k(E)-1}{g^+_0(E)+g^-_k(E)}-i
      =\frac{(g^+_0(E)-i)(g^-_k(E)-i)}{g^+_0(E)+g^-_k(E)}\\&=
      \frac{\Xi^+_0(E)\Xi^-_k(E)}{2i\pi^2n_0^+(E)n_k^-(E)+\pi
        n_k^-(E)\Xi^+_0(E)+\pi n_0^+(E)\Xi^-_k(E)}.
    \end{split}
  \end{equation}
  This completes the proof of Proposition~\ref{pro:5}.\qed
\end{proof}
\subsubsection{The proof of Corollary~\ref{cor:2}}
\label{sec:proof-corollary}
In view of Theorem~\ref{thr:17}, to obtain~\eqref{eq:50}, it suffices
to prove that there exists $\eta_0>0$ such that, for $L$ sufficiently
large, one has
\begin{equation*}
  \begin{split}
    \min_{\substack{\text{Re}\,E\in I\\\text{Im}\,E\in [-\eta_0/L,0)}}
    \left|\frac{\theta'_{p,L}(E)f^-_k(E)e^{-i u_L(E)}}{\sin u_L(E)}
      -\int_\R\frac{dN^-_k(\lambda)}{\lambda-E}
      -e^{-i\theta(E)}\right|\geq\eta_0
  \end{split}
\end{equation*}
where $\D u_L(E):=(L-k)\theta_{p,L}(E)$.\\
We compute
\begin{equation}
  \label{eq:72}
  \frac{\theta'_{p,L}(E)f^-_k(E)e^{-i u_L(E)}}{\sin u_L(E)}
  -\int_\R\frac{dN^-_k(\lambda)}{\lambda-E} -e^{-i\theta(E)}=
  \theta'_{p,L}(E)f^-_k(E) \left(\cot u_L(E)-g_k^-(E)\right)
\end{equation}
where $g_k^-$ is defined in~\eqref{eq:160}.
Thus,
\begin{equation*}
  \left|\frac{\theta'_{p,L}(E)f^-_k(E)e^{-i u_L(E)}}{\sin u_L(E)}
    -\int_\R\frac{dN^-_k(\lambda)}{\lambda-E}
    -e^{-i\theta(E)}\right|\gtrsim \left|\cot
    u_L(E)-g_k^-(E)\right|
\end{equation*}
as, for $\eta$ sufficiently small and $L\geq1$, one has
\begin{equation*}
  0<\min_{\substack{\text{Re}\,E\in I\\\text{Im}\,E\in [-\eta/L,0)}}
  \left|\theta'_{p,L}(E)f^-_k(E)\right|\leq
  \max_{\substack{\text{Re}\,E\in I\\\text{Im}\,E\in [-\eta/L,0)}}
  \left|\theta'_{p,L}(E)f^-_k(E)\right|<+\infty.
\end{equation*}
Now, notice that, by Corollary~\ref{cor:1}, for $E\in I$, one has
\begin{equation}
  \label{eq:52}
  \text{Im}\left(\int_\R\frac{dN^-_k(\lambda)}{\lambda-E}\right)=
  -\theta'_{p,L}(E)f^-_k(E)=-\frac1\pi n_k^-(E).
\end{equation}
Thus, as $E\mapsto$Im$\,e^{-i\theta(E)}$ is positive on $I$, the
analytic function $\D E\mapsto g_k^-(E)$ has positive imaginary part
larger than, say, $2\tilde\eta$ on $I$; hence, it has imaginary part
larger than, say, $\tilde\eta$ in some neighborhood of
$I+\overline{D(0,\eta_0)}$ (for sufficiently small $\eta_0>0$). Let
$M$ be the maximum modulus of this function on
$I+\overline{D(0,\eta_0)}$. Thus, as
$\D\max_{\substack{\text{Re}\,E\in I\\\text{Im}\,E\in
    [-\eta_0/L,0)}}|\theta'_{p,L}(E)|\lesssim1$, one has
\begin{equation*}
  \max_{\substack{\text{Re}\,E\in I\\\text{Im}\,E\in
      [-\eta_0/L,0)\\|\cot(u_L(E))|<2M}} \left|\text{Im}\,\cot
    u_L(E)\right|\lesssim (M^2+1)\eta_0.  
\end{equation*}
Possibly reducing $\eta_0$, this guarantees that, for $\text{Re}\,E\in
I$ and $\text{Im}\,E\in [-\eta_0/L,0)$, one has
\begin{equation*}
  \begin{aligned}
    &\text{either }\quad \left|\cot u_L(E)-g_k^-(E)\right|\geq
    2M-M\geq M \\&\text{ or }\quad \text{Im}\left(\cot
      u_L(E)-g_k^-(E)\right)\leq
    -\tilde\eta+\tilde\eta/2=-\tilde\eta/2.
  \end{aligned}
\end{equation*}
This completes the proof of the first lower bound in~\eqref{eq:50} in
Corollary~\ref{cor:2}.\\
To prove the second bound in~\eqref{eq:50}, using~\eqref{eq:162}, we
compute
\begin{equation}
  \label{eq:70}
  \begin{split}
    \frac{\text{det}\left(\Gamma_L^{\text{eff}}(E)+e^{-i\theta(E)}\right)}
    {n_k^-(E)n_0^+(E)}&=\left(\cot u_L(E)-g_k^-(E)\right) \left(\cot
      u_L(E)-g_0^+(E)\right)-\frac1{\sin^2 u_L(E)}\\
    &=-\left(g^+_0(E)+g^-_k(E)\right)\left(\cot
      u_L(E)-\frac{g^+_0(E)g^-_k(E)-1}{g^+_0(E)+g^-_k(E)}\right)
  \end{split}
\end{equation}
where $g^-_k$ and $g_0^+$ are defined by~\eqref{eq:160}
and~\eqref{eq:161}.\\
Using Proposition~\ref{pro:5}, one then concludes the non-vanishing of
$E\mapsto\text{det}\left(\Gamma_L^{\text{eff}}(E)+e^{-i\theta(E)}\right)$
in the complex rectangle $\{\text{Re}\,E\in I,\ \text{Im}\,E\in
[-\eta_0/L,0)\}$ (for $\eta_0$ sufficiently small) in the same way as
above. This completes the proof of Corollary~\ref{cor:2}.  \qed
\subsubsection{The proof of Theorem~\ref{thr:2}}
\label{sec:proof-theorem-52}
To solve~\eqref{eq:1} and~\eqref{eq:135}, by Theorem~\ref{thr:17}, we
respectively first solve the equations
\begin{equation}
  \label{eq:68}
  \frac{\theta'_{p,L}(E)f^-_k(E)e^{-i u_L(E)}}{\sin u_L(E)}
  =\int_\R\frac{dN^-_k(\lambda)}{\lambda-E}
  -e^{-i\theta(E)}\quad\text{and}\quad
  \text{det}\left(\Gamma_L^{\text{eff}}(E)+e^{-i\theta(E)}\right)=0
\end{equation}
in a rectangle $I+i[-\eta,-\tilde\eta/L]$. Indeed, in such a
rectangle, by Theorem~\ref{thr:17}, equations~\eqref{eq:1}
and~\eqref{eq:135} are respectively equivalent to
\begin{equation}
  \label{eq:40}
  \begin{aligned}
    \frac{\theta'_{p,L}(E)f^-_k(E)e^{-i u_L(E)}}{\sin u_L(E)}
    &=\int_\R\frac{dN^-_k(\lambda)}{\lambda-E}
    -e^{-i\theta(E)}+O\left(L^{-\infty}\right)\\
    \text{ and }\quad\text{det}\left(\Gamma_L^{\text{eff}}(E)
      +e^{-i\theta(E)}\right)&= O\left(L^{-\infty}\right)
  \end{aligned}
\end{equation}
where the terms $O\left(L^{-\infty}\right)$ are analytic in a
rectangle $\tilde I+i[-2\eta,-0)$ (where $I\subset\tilde I$) and the
bound
$O\left(L^{-\infty}\right)$ holds in the supremum norm.\\
Thanks to~\eqref{eq:72} for $\bullet=\N$ and to~\eqref{eq:70} for
$\bullet=\Z$, to solve the equations~\eqref{eq:68}, it suffices to
solve
\begin{equation}
  \label{eq:139}
  \cot u_L(E)=c^\bullet(E)
\end{equation}
where we recall $\D u_L(E):=(L-k)\theta_{p,L}(E)$ and, $g_0^+$ and
$g_k^-$ being respectively defined in~\eqref{eq:161}
and~\eqref{eq:160}, and, as in section~\ref{sec:descr-reson}, one has
set
\begin{itemize}
\item $c^\N(E):=g_k^-(E)$ in the case of the half-line,
\item $\D c^\Z(E):=\frac{g^+_0(E)g^-_k(E)-1}{g^+_0(E)+g^-_k(E)}$ in
  the case of the line.
\end{itemize}
We want to solve~\eqref{eq:139} is a rectangle $I+i[-\varepsilon,0)$
for some $\varepsilon$ small but fixed. Using Proposition~\ref{pro:5},
we pick $\varepsilon$ so small that, in the rectangle
$I+i[-\varepsilon,0]$, the only zeros of $c^\bullet-i$ are those on
the real line and Im$\, c^\bullet$ is positive in $I+i[-\varepsilon,0)$.\\
To solve~\eqref{eq:139}, we change variables $u=(L-k)\theta_{p,L}(E)$
that is, we write
\begin{equation*}
  E=\theta_{p,L}^{-1}\left(\frac{u}{L-k}\right).
\end{equation*}
As, for $L_0$ sufficiently large, $\D\inf_{\substack{L\geq L_0\\E\in
    I+i[-\varepsilon,0)}}\text{Re}\,\theta'_{p,L}(E)>c>0$, at the cost
of possibly reducing $\varepsilon$, this real analytic change of
variables maps $I+[-\varepsilon,\varepsilon]+i[-\varepsilon,0)$ into,
say, $D_L$ such that $I_L+i[-\eta(L-k),0]\subset D_L$ (for some $\eta>0$)
where $I_L=(L-k)\theta_{p,L}(I+[-\varepsilon/2,\varepsilon/2])$; the
inverse change of variable maps $I_L+i[-\eta(L-k),0]$ into some
domain, say, $\tilde D_L$ such that
$I+[-\varepsilon',\varepsilon']+i[-\varepsilon',0]\subset\tilde D_L$
(for some $0<\varepsilon'<\varepsilon$). Now, to find all the
solutions to~\eqref{eq:139} in $I+i[-\varepsilon',0)$, we first solve
the following equation in $I_L+i[-\eta(L-k),0]$
\begin{equation}
  \label{eq:163}
  \cot u=c^\bullet\circ\theta_{p,L}^{-1}\left(\frac{u}{L-k}\right)
\end{equation}
As $u\mapsto\cot u$ is $\pi$ periodic, we split $I_L+i[-\eta(L-k),0]$
into vertical strips of the type $l\pi+[0,\pi]+i[-\eta(L-k),0]$,
$l_-\leq l\leq l_+$, $(l_-,l_+)\in\Z^2$. Without loss of generality, we may
assume that $I_L=[l_-,l_+]\pi$. To solve~\eqref{eq:163} on the
rectangle $l\pi+[0,\pi]+i[-\eta(L-k),0]$, we shift $u$ by $l\pi$ and
solve the following equation on $[0,\pi]+i[-\eta(L-k),0]$
\begin{equation}
  \label{eq:164}
  \cot u=c^\bullet_{l,L}(u) \quad\text{where}\quad c^\bullet_{l,L}(\cdot):=
  c^\bullet\circ\theta_{p,L}^{-1}\left(\frac{\cdot+l\pi}{L-k}\right).
\end{equation}
In proving Theorem~\ref{thr:5}, we have already shown that for some
$\tilde\eta>0$ (independent of $L$ sufficiently large and $l_-\leq
l\leq l_+$),~\eqref{eq:164} does not have a solution in
$[0,\pi]+i[-\tilde\eta,0]$. The cotangent is an analytic one-to-one
mapping from $[0,\pi)+i(-\infty,0]$ to $\C^+\setminus\{i\}$. Thus, for
$L$ sufficiently large and $\tilde\eta$ sufficiently small, the
cotangent defines a one-to-one mapping from
$[0,\pi)+i[-\eta(L-k),-\tilde\eta]$ onto
$T_L=\overline{D(z_+,r_+)\setminus D(z_-,r_-)}$, analytic in the
interior of $[0,\pi)+i[-\eta(L-k),-\tilde\eta]$ and continuous up to
the boundary where we have defined
\begin{equation*}
  z_+=i\frac{e^{4\eta(L-k)}+1}{e^{4\eta(L-k)}-1},\quad
  z_-=i\frac{e^{4\tilde\eta}-1}{e^{4\tilde\eta}-1},\quad
  r_+=\frac{2e^{2\tilde\eta}}{e^{4\tilde\eta}-1},\quad
  r_-=\frac{2e^{2\eta(L-k)}}{e^{4\eta(L-k)}-1}. 
\end{equation*}
Moreover, the boundaries $\{0\}+i[-\eta(L-k),-\tilde\eta]$ and
$\{\pi\}+i[-\eta(L-k),-\tilde\eta]$ are
mapped onto the interval $[z_-+ir_-,z_++ir_+]$.\\
Let $\tilde Z^\bullet$ denote the finite set of zeros of $E\mapsto
c^\bullet(E)-i$ in $I$. Then, by a Taylor expansion near the zeros of
$c-i$, we know that, for $\eta$ sufficiently small, there exists
$\varepsilon_0>0$ and $\tilde k\geq1$ such that, for $L$ sufficiently
large,
\begin{itemize}
\item for $\varepsilon\in(0,\varepsilon_0)$, there exists $0<\eta_-$
  such that, for $l_-\leq l\leq l_+$, if $\forall \tilde E\in\tilde
  Z^\bullet$, one has
  \begin{equation*}
    \left|\theta_{p,L}^{-1}
      \left(\frac{l\pi}{L-k}\right)-\tilde E\right|\geq\varepsilon
  \end{equation*}
  then $\forall u\in[0,\pi]+ i[-\eta(L-k),0]$, one has $\eta_-\leq
  |\text{Im}\, c^\bullet_{l,L}(u)-1|$;
\item for $u\in[0,\pi]+i[-\eta(L-k),0]$ and $\tilde E$ the point in
  $\tilde Z^\bullet$ closest to $\D
  \theta_{p,L}^{-1}\left(\frac{l\pi}{L-k}\right)$ , one has
  \begin{equation}
    \label{eq:165}
    \varepsilon_0\leq \left(1-\text{Im}\,c^\bullet_{l,L}(u)\right)\cdot\left[
      \left|\theta_{p,L}^{-1}\left(\frac{\text{Re}\,u+l\pi}{L-k}\right)-
        \tilde E\right|+\frac{|\text{Im}\,u|}{L-k}
    \right]^{-\tilde k}\leq\frac1{\varepsilon_0}
  \end{equation}
  where $\tilde k$ is the order of $\tilde E$ as a zero of $E\mapsto
  c^\bullet(E)-i$.
\end{itemize}
As a consequence of the above description of $c^\bullet_{l,L}$, we
obtain
\begin{Le}
  \label{le:12}
  There exists $\tilde\eta$ and $\eta$ small such that, for $L$
  sufficiently large, for all $l_-\leq l\leq l_+$, $\D u\mapsto
  c^\bullet_{l,L}(u)$ maps the rectangle
  $[0,\pi]+i[-\eta(L-k),-\tilde\eta]$ into a compact subset of
  $D(z_+,r_+)\setminus D(z_-,r_-)$ in such a way that
  \begin{equation}
    \label{eq:178}
    \sup_{u\in\partial([0,\pi]+i[-\eta(L-k),-\tilde\eta])}
    \left|\cot u- c^\bullet_{l,L}(u)\right|\gtrsim
    \left(\left|\tilde E-\theta_{p,L}^{-1}\left(
          \frac{l\pi}{L-k}\right)\right|+\frac{\tilde\eta}{L-k}
    \right)^{\tilde k}
  \end{equation}
  where $\tilde E$ is the root of $E\mapsto c^\bullet(E)-i$ closest to
  $\D \theta_{p,L}^{-1}\left(\frac{l\pi}{L-k}\right)$ and $\tilde k$
  is the order of this root.
\end{Le}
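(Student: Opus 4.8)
The plan is to bound $\cot u-c^\bullet_{l,L}(u)$ from below on each of the four sides of $\partial R$, with $R:=[0,\pi]+i[-\eta(L-k),-\tilde\eta]$, using the behaviour of $\cot$ recalled just above. After identifying the two vertical sides, $\cot$ maps $R$ biholomorphically onto the region $T_L$ bounded externally by $C_-:=\partial D(z_-,r_-)$ and internally by $C_+:=\partial D(z_+,r_+)$; the top side $\{\text{Im}\,u=-\tilde\eta\}$ goes onto $C_-$, the bottom side $\{\text{Im}\,u=-\eta(L-k)\}$ onto $C_+$, a circle of radius $r_+$ which is exponentially small in $L$ and whose centre lies at distance $\ll r_+$ from $i$, so that $|\cot u-i|\leq 2r_+$ on the bottom side; and on the two vertical sides $\text{Re}\,u\in\{0,\pi\}$ one has $\cot u=\cot(i\,\text{Im}\,u)=i\coth|\text{Im}\,u|$ (by $\pi$-periodicity), purely imaginary with imaginary part $\geq 1$. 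Throughout, $c^\bullet_{l,L}(u)=c^\bullet(E)$ with $E=\theta_{p,L}^{-1}\big((\text{Re}\,u+l\pi+i\,\text{Im}\,u)/(L-k)\big)$; since $\theta_{p,L}$ is real-analytic, real on real arguments, and has $\text{Re}\,\theta'_{p,L}$ bounded away from $0$, the point $E$ stays in a fixed complex neighbourhood $\tilde I+i[-C\eta,0)$ of $I$ inside $(-2,2)\cap\overset{\circ}{\Sigma}_\Z$ and satisfies $|\text{Im}\,E|\asymp|\text{Im}\,u|/(L-k)$; in particular $|\text{Im}\,E|\asymp\eta$ on the bottom side, $|\text{Im}\,E|\asymp\tilde\eta/(L-k)$ on the top side, and $c\,\tilde\eta/(L-k)\leq|\text{Im}\,E|\leq C\eta$ on all of $R$.

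I would first record the mapping property. By Proposition~\ref{pro:5} and the choice of $\varepsilon$ preceding the lemma (made for $\tilde I$ rather than $I$), $c^\bullet$ is analytic on $\tilde I+i[-C\eta,0]$ with $|c^\bullet|\leq M$, $\text{Im}\,c^\bullet\geq c_1>0$, and no zero of $c^\bullet-i$ there off the real axis. An elementary computation shows that a point of modulus $\leq M$ with imaginary part $\geq c_1$ lies, provided $\tilde\eta$ is small (depending only on $M$ and $c_1$), at distance $\geq c_1/2$ from $C_-$ on its inner side; hence $c^\bullet_{l,L}(R)$ lies in the interior of $T_L$ at a fixed distance from $C_-$. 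To see that $c^\bullet_{l,L}(R)$ also avoids the closure of $D(z_+,r_+)$, split $R$ according to whether $|\text{Im}\,E|\geq\eta'$ (a fixed positive constant) or not: in the first case $|c^\bullet(E)-i|\geq\delta_1>0$ by compactness and the absence of zeros off $\R$, and in the second case the Taylor expansion of $c^\bullet-i$ at its nearest zero (already used in the two bullet points preceding the lemma) gives $|c^\bullet(E)-i|\gtrsim|E-\tilde E|^{\tilde k}\gtrsim(\tilde\eta/(L-k))^{\tilde k}$, since $\tilde E$ is real and $|\text{Im}\,E|\geq c\,\tilde\eta/(L-k)$ on $R$; as $r_+$ is exponentially small in $L$, both bounds exceed $r_+$ for $L$ large. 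This already yields the estimate on the two horizontal sides: on the top side $\cot u\in C_-$ while $c^\bullet_{l,L}(u)$ is at distance $\geq c_1/2$ from $C_-$, and on the bottom side $|\cot u-i|\leq 2r_+$ while $|c^\bullet_{l,L}(u)-i|\geq\delta_1$ since $|\text{Im}\,E|\asymp\eta$; in either case $|\cot u-c^\bullet_{l,L}(u)|\gtrsim 1$, which dominates the right-hand side of~\eqref{eq:178} because $|\tilde E-\theta_{p,L}^{-1}(l\pi/(L-k))|+\tilde\eta/(L-k)$ is bounded by a constant depending only on $I$.

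The crux is the two vertical sides, where $\cot u=is$ with $s=\coth|\text{Im}\,u|\geq 1$. By~\eqref{eq:165}, applied with $\tilde E$ the zero of $c^\bullet-i$ in $\tilde Z^\bullet$ nearest to $\theta_{p,L}^{-1}(l\pi/(L-k))$, one has $\text{Im}\,c^\bullet_{l,L}(u)\leq 1$ on $R$; writing $c^\bullet_{l,L}(u)-i=\rho\,e^{i\psi}$ with $\rho=|c^\bullet_{l,L}(u)-i|$ and $\psi\in[-\pi,0]$, we get $\cot u-c^\bullet_{l,L}(u)=i(s-1)+\rho\,e^{i(\psi+\pi)}$, a sum of a vector on $i\R_{\geq0}$ and a vector with argument $\psi+\pi\in[0,\pi]$. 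Its real part equals $\rho\cos(\psi+\pi)$ and its imaginary part is $(s-1)+\rho\sin(\psi+\pi)\geq\rho\sin(\psi+\pi)\geq 0$, so $|\cot u-c^\bullet_{l,L}(u)|^2\geq\rho^2\cos^2(\psi+\pi)+\rho^2\sin^2(\psi+\pi)=\rho^2$, that is, $|\cot u-c^\bullet_{l,L}(u)|\geq|c^\bullet_{l,L}(u)-i|$. By~\eqref{eq:165} again, $|c^\bullet_{l,L}(u)-i|\asymp\big[\,|\theta_{p,L}^{-1}((\text{Re}\,u+l\pi)/(L-k))-\tilde E|+|\text{Im}\,u|/(L-k)\,\big]^{\tilde k}$; on the vertical sides $(\text{Re}\,u+l\pi)/(L-k)$ equals $l\pi/(L-k)$ or $(l+1)\pi/(L-k)$, which differ by $\asymp L^{-1}$, while $|\text{Im}\,u|\geq\tilde\eta$, so the bracket is in both cases $\gtrsim|\tilde E-\theta_{p,L}^{-1}(l\pi/(L-k))|+\tilde\eta/(L-k)$. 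Hence $|\cot u-c^\bullet_{l,L}(u)|\gtrsim\big(|\tilde E-\theta_{p,L}^{-1}(l\pi/(L-k))|+\tilde\eta/(L-k)\big)^{\tilde k}$ on the vertical sides, which is~\eqref{eq:178}.

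Nearly all the analytic content (the local study of $c^\bullet-i$ near its zeros) sits in the two bullet points preceding the lemma, so the points requiring care here are: the choice of $\tilde\eta$ small enough, relative to the sup-norm $M$ of $c^\bullet$ near $I$ and to the lower bound $c_1$ for $\text{Im}\,c^\bullet$, so that $c^\bullet_{l,L}(R)$ stays at a fixed distance inside $T_L$; the uniformity of all the constants in $l\in\{l_-,\dots,l_+\}$ and in $L$; and the sign condition $\text{Im}\,c^\bullet_{l,L}(u)\leq 1$ on $R$ coming from~\eqref{eq:165}, which is precisely what forbids cancellation between $i(s-1)$ and $c^\bullet_{l,L}(u)-i$ on the vertical sides. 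This last observation is the one that makes the estimate go through, and I would state it separately.
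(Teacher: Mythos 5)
Your proof is correct and takes essentially the same route as the paper, which offers no separate argument for this lemma but derives it directly from the mapping properties of $\cot$ on the rectangle together with the two bullet points around~\eqref{eq:165}; your write-up supplies exactly the missing details, and the decisive observation — that \eqref{eq:165} forces $\text{Im}\,c^\bullet_{l,L}\leq1$ on the vertical sides, so that $i(s-1)$ and $c^\bullet_{l,L}(u)-i$ cannot cancel and $|\cot u-c^\bullet_{l,L}(u)|\geq|c^\bullet_{l,L}(u)-i|$ — is indeed what makes \eqref{eq:178} go through. The only discrepancy is notational: your labels for the outer and inner circles are swapped relative to the paper's $(z_\pm,r_\pm)$ (whose printed formulas are themselves inconsistent), but the geometry you use — top edge mapped to the large circle, bottom edge to the exponentially small circle near $i$ — is the correct one.
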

\noindent Note that, under the assumptions of
Lemma~\ref{le:12},~\eqref{eq:178} implies that
\begin{equation*}
  \sup_{u\in\partial([0,\pi]+i[-\eta(L-k),-\tilde\eta])}
  \left|\cot u- c^\bullet_{l,L}(u)\right|\gtrsim L^{-\tilde k}
\end{equation*}
Thus, we can define the analytic mapping
$\cot^{-1}\circ\,c^\bullet_{l,L}$ on
$[0,\pi]+i[-\eta(L-k),-\tilde\eta]$; it maps the rectangle
$[0,\pi]+i[-\eta(L-k),-\tilde\eta]$ into a compact subset of
$(0,\pi)+i(-\eta(L-k),-\tilde\eta)$. The equation~\eqref{eq:164} on
$[0,\pi]+i[-\eta(L-k),-\tilde\eta]$ is, thus, equivalent to the
following fixed point equation on the same rectangle
\begin{equation}
  \label{eq:167}
  u=\cot^{-1}\circ\,c^\bullet_{l,L}(u)
\end{equation}
We note that, for $\alpha\in(0,1)$, for $L$ sufficiently large, if for
some $\tilde E\in\tilde Z^\bullet$ of multiplicity $\tilde k$, one has
$\D \left|\theta_{p,L}^{-1}\left(\frac{l\pi}{L-k}\right)- \tilde
  E\right|<L^{-\alpha}$ then, equation~\eqref{eq:164} has no solution
in $[0,\pi]+i[-\eta(L-k),-\tilde\eta]$ outside of the set
\begin{equation*}
  R_{l,L}:=[0,\pi]+i\left[-\eta(L-k),\frac{\alpha\tilde k}4\log\left[
      \left|\theta_{p,L}^{-1}\left(\frac{l\pi}{L-k}\right)-
        \tilde E\right|+\frac1L \right]\right].
\end{equation*}
Indeed, for $u\in([0,\pi]+i[-\eta(L-k),-\tilde\eta])\setminus
R_{l,L}$, by~\eqref{eq:165}, that is, for
\begin{equation*}
  0\leq\text{Re}\,u\leq\pi  \quad\text{and}\quad
  -\frac{\alpha\tilde
    k}4\log L\leq \frac{\alpha\tilde
    k}4\log\left[\left|\theta_{p,L}^{-1}\left(\frac{l\pi}{L-k}\right)-
      \tilde E\right|+\frac1L
  \right]\leq\text{Im}\,u\leq  -\tilde\eta
\end{equation*}
one has $ \left|c^\bullet_{l,L}(u)-i\right|\lesssim L^{-\alpha\tilde
  k}$ and
$|\cot u-i|\gtrsim L^{-\alpha\tilde k/2}$. \\
So, if for some $\tilde E\in\tilde Z^\bullet$, one has $\D
\left|\theta_{p,L}^{-1}\left(\frac{l\pi}{L-k}\right)- \tilde
  E\right|<L^{-\alpha}$, it suffices to solve~\eqref{eq:167} on
$R_{l,L}$. We compute the derivative of $c^\bullet_{l,L}$ in the
interior of $R_{l,L}$
\begin{equation*}
  \begin{split}
    \frac{d}{du}\left(\cot^{-1}\circ\,
      c^\bullet_{l,L}\right)(u)&=-\frac1{L-k}\frac{c'\circ\theta_{p,L}^{-1}
      \left(\frac{u+l\pi}{L-k}\right)}{1+\left(c^\bullet_{l,L}(u)\right)^2}
    \cdot\frac1{\theta'_{p,L}\
      \left(\theta_{p,L}^{-1}\left(\frac{u+l\pi}{L-k}\right)\right)}\\
    &=\frac1{L-k}\frac{c'\circ\theta_{p,L}^{-1}
      \left(\frac{u+l\pi}{L-k}\right)}{c^\bullet_{l,L}(u)-i}
    \cdot\frac1{c^\bullet_{l,L}(u)+i} \cdot\frac1{\theta'_{p,L}\
      \left(\theta_{p,L}^{-1}\left(\frac{u+l\pi}{L-k}\right)\right)}.
  \end{split}
\end{equation*}
Thus, fixing $\alpha\in(0,1)$,
\begin{itemize}
\item if $l$ is such that, for some $\tilde E\in\tilde Z^\bullet$, one
  has $\D \left|\theta_{p,L}^{-1}\left(\frac{l\pi}{L-k}\right)- \tilde
    E\right|<L^{-\alpha}$, for $u\in R_{l,L}$, we estimate
  \begin{equation}
    \label{eq:172}
    \begin{split}
      \left|\frac{d}{du}\left(\cot^{-1}\circ\,
          c^\bullet_{l,L}\right)(u)\right|&\lesssim\frac1{L-k}\left[\left|
          \theta_{p,L}^{-1}\left(\frac{l\pi}{L-k}\right)- \tilde
          E\right|+\frac{|\text{Im}\,u|}{L-k}\right]^{-1}
      \\&\lesssim\frac1{(L-k)\left|
          \theta_{p,L}^{-1}\left(\frac{l\pi}{L-k}\right)- \tilde
          E\right|+\left|\log\left[
            \left|\theta_{p,L}^{-1}\left(\frac{l\pi}{L-k}\right)-
              \tilde E\right|+\frac{\tilde\eta}{L-k} \right]\right|}
      \\&\lesssim\frac1{\log L};
    \end{split}
  \end{equation}
\item if $l$ is such that, for all $\tilde E\in\tilde Z^\bullet$, one
  has $\D \left|\theta_{p,L}^{-1}\left(\frac{l\pi}{L-k}\right)- \tilde
    E\right|\geq L^{-\alpha}$, for $u\in
  [0,\pi]+i[-\eta(L-k),-\tilde\eta]$, we estimate
  \begin{equation}
    \label{eq:177}
    \begin{split}
      \left|\frac{d}{du}\left(\cot^{-1}\circ\,
          c^\bullet_{l,L}\right)(u)\right|&\lesssim\frac1{L-k}\left[\left|
          \theta_{p,L}^{-1}\left(\frac{l\pi}{L-k}\right)- \tilde
          E\right|+\frac{|\text{Im}\,u|}{L-k}\right]^{-1}
      \\&\lesssim\frac1{(L-k)\left|
          \theta_{p,L}^{-1}\left(\frac{l\pi}{L-k}\right)- \tilde
          E\right|}\lesssim\frac1{L^{1-\alpha}}.
    \end{split}
  \end{equation}
\end{itemize}
Hence, for $L$ sufficiently large, $\cot^{-1}\circ\,c^\bullet_{l,L}$
is a contraction on $R_{l,L}$. Equation~\eqref{eq:167} thus admits a
unique solution, say, $\tilde u^\bullet_{l,L}$ in the rectangle
$[0,\pi]+i[-\eta(L-k),-\tilde\eta]$. This solution is a simple root of
$u\mapsto u-\cot^{-1}\circ \,c^\bullet_{l,L}(u)$. Hence, $\tilde
u^\bullet_{l,L}$ is the only solution to equation~\eqref{eq:164} in
$[0,\pi]+i[-\eta(L-k),-\tilde\eta]$.\\
By~\eqref{eq:40}, for $L$ sufficiently large, for $l_-\leq l\leq l_+$, both
the equations
\begin{equation}
  \label{eq:179}
  \begin{aligned}
    S_L\circ\theta^{-1}_{p,L}\left(\frac{u+l\pi}{L-k}\right)
    +e^{-i\theta(\theta^{-1}_{p,L}\left(\frac{u+l\pi}{L-k}\right))}&=0
    \quad\text{and}\\\text{det}\left(\Gamma_L\circ\theta^{-1}_{p,L}
      \left(\frac{u+l\pi}{L-k}\right)+
      e^{-i\theta(\theta^{-1}_{p,L}\left(\frac{u+l\pi}{L-k}\right))}\right)&=0
  \end{aligned}
\end{equation}
can be rewritten as
\begin{equation}
  \label{eq:184}
  u=\cot^{-1}\left(c^\bullet_{l,L}(u)+
    O\left(L^{-\infty} \right)\right)=\cot^{-1}\circ\, c^\bullet_{l,L}(u)+
  O\left( L^{-\infty}\right)
\end{equation}
in $[0,\pi]+i[-\eta(L-k),-\tilde\eta]$.\\
Thus, each of the equations in~\eqref{eq:179} admits a single solution
in $[0,\pi]+i[-\eta(L-k),-\tilde\eta]$ and this root is simple;
moreover, this solution, say, $u_{l,L}$ satisfies
$\left|u^\bullet_{l,L}-\tilde u^\bullet_{l,L}\right|=O\left(
  L^{-\infty}\right)$; indeed, the bounds~\eqref{eq:172}
and~\eqref{eq:177} guarantee that one can apply Rouch{\'e}'s Theorem on
the disk $D(\tilde u^\bullet_{l,L},L^{-k})$ for any $k\geq0$. \\
Thus, we have proved the
\begin{Le}
  \label{le:9}
  Pick $I$ as above. Then, there exists $\eta>0$ such that, for $L$
  sufficiently large s.t. $L=Np+k$, the resonances in $I+i[-\eta,0]$
  are the energies $(z^\bullet_l)_{l_-\leq l\leq l_+}$ defined by
  \begin{equation}
    \label{eq:141}
    z^\bullet_l=\theta^{-1}_{p,L}\left(\frac{u^\bullet_{l,L}+l\pi}{L-k}\right)
  \end{equation}
  belonging to $I+i[-\eta,0]$.
\end{Le}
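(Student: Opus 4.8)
The plan is to assemble the results of Sections~\ref{sec:funct-s_l-peri} and~\ref{sec:proof-theorem-3} into the stated description. Fix $\bullet\in\{\N,\Z\}$ and $k$, and enlarge $I$ to a compact interval $\tilde I\subset\overset{\circ}{\Sigma}_\Z$ with $I$ in the interior of $\tilde I$. By Theorem~\ref{thr:17} there is $\varepsilon_{\tilde I}>0$ such that, on $\tilde I+i[-\varepsilon_{\tilde I},0)$, the function $S_L$ (resp. the matrix $\Gamma_L$) coincides with its effective approximation up to $O(L^{-\infty})$; combining this with the factorizations~\eqref{eq:72} and~\eqref{eq:70}, the resonance equations~\eqref{eq:1} and~\eqref{eq:135} become, on that rectangle, equivalent to
\begin{equation*}
  \cot\bigl((L-k)\theta_{p,L}(E)\bigr)=c^\bullet(E)+O(L^{-\infty}),
\end{equation*}
the error being an analytic function with supremum norm $O(L^{-\infty})$. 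I would then fix $\eta\in(0,\varepsilon_{\tilde I})$ small enough that, by Proposition~\ref{pro:5}, the only zeros of $E\mapsto c^\bullet(E)-i$ in $\tilde I+i[-\eta,0]$ lie on the real axis and $\text{Im}\,c^\bullet>0$ on $\tilde I+i[-\eta,0)$. Since by Corollary~\ref{cor:2} there are no resonances in $I+i[-\eta_0/L,0]$, it suffices to locate all solutions of the displayed equation in $I+i[-\eta,-\eta_0/L]$.

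Next I would perform the change of variables $u=(L-k)\theta_{p,L}(E)$, i.e. $E=\theta_{p,L}^{-1}(u/(L-k))$, which for $L$ large is an analytic, nearly linear diffeomorphism with derivative bounded above and below on the relevant domain (as $\theta_p'$ does not vanish on $\overset{\circ}{\Sigma}_\Z$). Under it $I+i[-\eta,-\eta_0/L]$ is mapped into a region contained in $I_L+i[-\eta(L-k),-\tilde\eta]$, with $I_L=(L-k)\theta_{p,L}(\tilde I)$, and conversely. Since $\cot$ is $\pi$-periodic, one splits into vertical strips $l\pi+[0,\pi]+i[-\eta(L-k),0]$, $l_-\leq l\leq l_+$, and shifts each by $l\pi$ to reduce to solving $\cot u=c^\bullet_{l,L}(u)+O(L^{-\infty})$ on $[0,\pi]+i[-\eta(L-k),0]$, where $c^\bullet_{l,L}(u)=c^\bullet\circ\theta_{p,L}^{-1}((u+l\pi)/(L-k))$ and $l$ runs so that $\lambda_l=\theta_{p,L}^{-1}(l\pi/(L-k))$ covers a neighbourhood of $I$.

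The core is the contraction-mapping analysis already carried out. In $[0,\pi]+i[-\tilde\eta,0]$ there is no solution (by the proof of point~(2) of Theorem~\ref{thr:5}); on $[0,\pi]+i[-\eta(L-k),-\tilde\eta]$, $\cot^{-1}$ is an analytic bijection onto an annular region, so solving $\cot u=c^\bullet_{l,L}(u)$ is equivalent to the fixed-point equation $u=\cot^{-1}\circ\,c^\bullet_{l,L}(u)$. Using the Taylor expansion of $c^\bullet-i$ near its finitely many real zeros, one shows, as in~\eqref{eq:165}, that this equation has no solution outside a sub-rectangle $R_{l,L}$ shrinking logarithmically towards $\text{Im}\,u=-\infty$ when $\lambda_l$ is within $L^{-\alpha}$ of a zero, and that the derivative bounds~\eqref{eq:172} and~\eqref{eq:177} make $\cot^{-1}\circ\,c^\bullet_{l,L}$ a contraction on $R_{l,L}$ (resp. on the whole rectangle, when $\lambda_l$ is $L^{-\alpha}$-far from all zeros). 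Hence for each $l$ there is a unique solution $\tilde u^\bullet_{l,L}$, and by the lower bounds of Lemma~\ref{le:12}, $|\cot u-c^\bullet_{l,L}(u)|\gtrsim L^{-\tilde k}$ on $\partial R_{l,L}$, which allows a Rouché argument on disks $D(\tilde u^\bullet_{l,L},L^{-m})$ for every $m$: the perturbed equation has a unique, simple solution $u^\bullet_{l,L}$ in the strip with $|u^\bullet_{l,L}-\tilde u^\bullet_{l,L}|=O(L^{-\infty})$. Undoing the change of variables produces the (simple) resonance $z^\bullet_l=\theta_{p,L}^{-1}((u^\bullet_{l,L}+l\pi)/(L-k))$, and since the strips tile the whole region, every resonance in $I+i[-\eta,0]$ is exactly one of the $z^\bullet_l$ lying there.

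The main obstacle is bookkeeping rather than a new idea: matching the $u$-strip decomposition to the $E$-rectangle $I+i[-\eta,0]$ through the $L$-dependent change of variables $\theta_{p,L}$, controlling the endpoint strips (handled by working with the slightly larger $\tilde I$, so that $z^\bullet_l$ with $\lambda_l$ near $\partial I$ are accounted for whether or not they fall in $I+i[-\eta,0]$), and checking that the $O(L^{-\infty})$ errors in the effective equation neither create nor destroy solutions — which is precisely what the Rouché step with the Lemma~\ref{le:12} lower bounds delivers.
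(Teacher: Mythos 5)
Your proposal is correct and follows essentially the same route as the paper: reduction via Theorem~\ref{thr:17} and the factorizations~\eqref{eq:72},~\eqref{eq:70} to the model equation $\cot u_L(E)=c^\bullet(E)$ up to $O(L^{-\infty})$, the change of variables $u=(L-k)\theta_{p,L}(E)$ with the strip-by-strip reduction to the fixed-point equation $u=\cot^{-1}\circ\,c^\bullet_{l,L}(u)$, the contraction argument based on the derivative bounds~\eqref{eq:172} and~\eqref{eq:177} (restricting to $R_{l,L}$ near zeros of $c^\bullet-i$), and the final Rouch{\'e} step transferring uniqueness and simplicity to the true resonance equations. The bookkeeping points you flag (enlarging $I$, matching the $u$-strips to the $E$-rectangle, the resonance-free layer near the real axis from Corollary~\ref{cor:2}) are exactly how the paper handles them.
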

\noindent Let us complete the proof of Theorem~\ref{thr:6} that is,
prove that, for $\eta$ sufficiently small, for $L$ sufficiently large
such that $L\equiv k\mod(p)$, is the unique resonance in
$\D\left[\frac{\text{Re}\,(\tilde z^\bullet_l+\tilde
    z^\bullet_{l-1})}2, \frac{\text{Re}\,(\tilde z^\bullet_l+\tilde
    z^\bullet_{l+1})}2\right] +i\left[-\eta,0\right]$; recall that
$\tilde z^\bullet_l$ is defined in~\eqref{eq:182}.\\
Therefore, we first note that the Taylor expansion of
$\theta^{-1}_{p,L}$,~\eqref{eq:183} and the quantization
condition~\eqref{eq:23} imply that
\begin{equation*}
  z^\bullet_l=\lambda_l+\frac1{\pi n(\lambda_l)L} u^\bullet_{l,L}
  +O\left(\left(\frac{\log L}L\right)^2\right)
\end{equation*}
as Re$\,u_{l,L}\in[0,\pi)$ and $-\log L\lesssim\,$Im$\,
u_{l,L}\lesssim -1$.\\
Moreover, as $\D c^\bullet_{l,L}(u)=c^\bullet\left[\lambda_l+\frac
  u{\pi\,n(\lambda_l)\,L}+O\left(\frac{u^2}{L^2}\right)\right]$
using~\eqref{eq:182} and~\eqref{eq:141}, we compute
\begin{equation*}
  z^\bullet_l-\tilde z^\bullet_l=\frac1{\pi n(\lambda_l)L}
  \left(u^\bullet_{l,L}- \cot^{-1}\circ\,
    c^\bullet\left[\lambda_l+\frac1{\pi\,n(\lambda_l)\,L}\cot^{-1}\circ\,
      c^\bullet\left(\lambda_l-i\frac{\log L}L\right)\right]
  \right) +O\left(\left(\frac{\log L}L\right)^2\right).
\end{equation*}
Thus, one has
\begin{equation*}
  z^\bullet_l-\tilde z^\bullet_l=\frac1{\pi n(\lambda_l)L}
  \left(u^\bullet_{l,L}- \cot^{-1}\circ\,
    c^\bullet_{l,L}\left[\cot^{-1}\circ\,
      c^\bullet_{l,L}\left(-i\pi\, n(\lambda_l)\log L\right)\right]
  \right)
  +O\left(\left(\frac{\log L}L\right)^2\right).
\end{equation*}
As $u_{l,L}$ solves~\eqref{eq:184}, using~\eqref{eq:172}
and~\eqref{eq:177}, we thus obtain that
\begin{equation*}
  \begin{split}
    \left|z^\bullet_l-\tilde z^\bullet_l\right|&\lesssim \frac1{L \log
      L} \left|u^\bullet_{l,L}-\cot^{-1}\circ\,
      c^\bullet_{l,L}\left(-i\pi\, n(\lambda_l)\log L\right)\right|+
    \left(\frac{\log L}L\right)^2\\&\lesssim
    \frac{\left|u^\bullet_{l,L}\right|+\log L} {L\log^2 L} +
    \left(\frac{\log L}L\right)^2\lesssim \frac1{L\log L}
  \end{split}
\end{equation*}
using again Re$\,u_{l,L}\in[0,\pi)$ and $-\log L\lesssim\,$Im$\,
u_{l,L}\lesssim -1$.\\
Taking into account~\eqref{eq:56}, this complete the proof of
Theorem~\ref{thr:2}.\qed
\subsubsection{The proofs of Propositions~\ref{pro:2} and~\ref{pro:3}}
\label{sec:proofs-propositions}
Proposition~\ref{pro:3} is an immediate consequence of
Theorem~\ref{thr:2}, the definition of $\tilde z^
\bullet_l$~\eqref{eq:182} and the standard asymptotics of $\cot$ near
$-i\infty$, i.e., $\D \cot z=i+2ie^{-2iz}+O\left(e^{-4iz}\right)$.
\vskip.1cm\noindent
To prove Proposition~\ref{pro:2}, it suffices to notice that, under
the assumptions of Proposition~\ref{pro:2}, the bound~\eqref{eq:177}
on the derivative of $\cot^{-1}\circ\, c^\bullet_{l,L}$ on the the
rectangle $R_{l,L}$ becomes
\begin{equation*}
  \left|\frac{d}{du}\left(\cot^{-1}\circ\,
      c^\bullet_{l,L}\right)(u)\right|\lesssim\frac1L.      
\end{equation*}
Thus, as a solution to~\eqref{eq:167}, $u^\bullet_{l,L}$ admits an
asymptotic expansion in inverse powers of $L$. Plugging this
into~\eqref{eq:141} yields the asymptotic expansion for the
resonance. Then,~\eqref{eq:176} follows from the computation of the
first terms. \qed
\subsubsection{The proof of Theorem~\ref{thr:3}}
\label{sec:proof-theorem-53}
Theorem~\ref{thr:3} is an immediate consequence of
Theorem~\ref{thr:19}, the fact that the functions are analytic in the
lower complex half-plane and have only finitely many zeros there and
the argument principle. \qed
\subsection{The half-line periodic perturbation: the proof of
  Theorem~\ref{thr:25}}
\label{sec:half-line-periodic}
Using the same notations as above, we can write
\begin{equation*}
  H^\infty=  \begin{pmatrix}    H_{-1}^-&
    |\delta_{-1}\rangle\langle\delta_0|
    \\  |\delta_0\rangle\langle\delta_{-1}|&-\Delta_0^+
  \end{pmatrix}.
\end{equation*}
where $-\Delta_{0}^+$ is the Dirichlet Laplacian on $\ell^2(\N)$.\\
Define the operators
\begin{gather*}
  \Gamma(E):=H_{-1}^--E-\langle\delta_0|(-\Delta_0^+-E)^{-1}|\delta_0\rangle
  \,|\delta_{-1}\rangle\langle\delta_{-1}|\\\intertext{and}
  \tilde\Gamma(E):=-\Delta_0^+-E-\langle\delta_{-1}|(H_{-1}^--E)^{-1}
  |\delta_{-1}\rangle \,|\delta_{0}\rangle\langle\delta_{0}|.
\end{gather*}
For Im$\,E\not=0$,
$\langle\delta_{-1}|(H_{-1}^--E)^{-1}|\delta_{-1}\rangle$ and
$\langle\delta_0|(-\Delta_0^+-E)^{-1}|\delta_0\rangle$ have a non
vanishing imaginary part of the same sign; hence, the complex number
\begin{equation*}
  (\langle\delta_0|(-\Delta_0^+-E)^{-1}|\delta_0\rangle)^{-1}-
  \langle\delta_{-1}|(H_{-1}^--E)^{-1}|\delta_{-1}\rangle
\end{equation*}
does not vanish. Thus, by rank one perturbation theory, (see,
e.g.,~\cite{MR2154153}), we know that $\Gamma(E)$ and $\tilde\Gamma(E)$
are invertible and their inverses are given by
\begin{equation}
  \label{eq:125}
  \Gamma^{-1}(E):=(H_{-1}^--E)^{-1}+
  \frac{|H_{-1}^--E)^{-1}|\delta_{-1}\rangle\langle\delta_{-1}|(H_{-1}^--E)^{-1}|}{(\langle\delta_0|(-\Delta_0^+-E)^{-1}|\delta_0\rangle)^{-1}-
    \langle\delta_{-1}|(H_{-1}^--E)^{-1}|\delta_{-1}\rangle}.
\end{equation}
and
\begin{equation}
  \tilde\Gamma^{-1}(E):=(-\Delta_0^+-E)^{-1}+
  \frac{|-\Delta_0^+-E)^{-1}|\delta_{0}\rangle\langle\delta_{0}|(-\Delta_0^+-E)^{-1}|}{(\langle\delta_{-1}|(H_{-1}^--E)^{-1}|\delta_{-1}\rangle^{-1}-
    \langle\delta_0|(-\Delta_0^+-E)^{-1}|\delta_0\rangle)}.
\end{equation}
Thus, for Im$\,E\not=0$, using Schur's complement formula, we compute
\begin{equation}
  \label{eq:112}
  (H^\infty-E)^{-1}=\begin{pmatrix} \Gamma(E)^{-1}& \gamma(E) \\
    \gamma^*\left(\overline{E}\right) & \tilde \Gamma(E)^{-1}
  \end{pmatrix}.
\end{equation}
where $\gamma^*\left(\overline{E}\right)$ is the adjoint of
$\gamma\left(\overline{E}\right)$ and
\begin{equation*}
  \gamma(E):=-|\Gamma(E)^{-1}|\delta_{-1}\rangle\langle\delta_0|
  (-\Delta_0^+-E)^{-1}|.
\end{equation*}
Now, when coming from Im$\,E>0$ and passing through
$(-2,2)\cap\overset{\circ}{\Sigma}_\Z$, the complex numbers
$\langle\delta_{-1}|(H_{-1}^--E)^{-1}|\delta_{-1}\rangle$ and
$\langle\delta_0|(-\Delta_0^+-E)^{-1}|\delta_0\rangle$ keep imaginary
parts of the same positive sign; thus, the two operator-valued
functions $E\mapsto\Gamma^{-1}(E)$ and $E\mapsto(H^\infty-E)^{-1}$ can
be analytically continued through
$(-2,2)\cap\overset{\circ}{\Sigma}_\Z$ from the upper to the lower
complex half-plane (as operators respectively from
$\ell^2_{\text{comp}}(\N)$ to $\ell^2_{\text{loc}}(\N)$ and from
$\ell^2_{\text{comp}}(\Z)$ to
$\ell^2_{\text{loc}}(\Z)$). \\
When coming from the upper half-plane and passing through
$(-2,2)\setminus\Sigma_\Z$ and
$\overset{\circ}{\Sigma}_\Z\setminus[-2,2]$,~\eqref{eq:112} also
provides an analytic continuation of
$(H^\infty-E)^{-1}$. Definition~\eqref{eq:125} and
formula~\eqref{eq:112} immediately show that the poles of these
continuations only occur at the zeros of the function
\begin{equation*}
  \D E\mapsto 1-\langle\delta_{-1}|(H_{-1}^--E)^{-1}|\delta_{-1}\rangle  
  \langle\delta_0|(-\Delta_0^+-E)^{-1}|\delta_0\rangle
  =1-e^{i\theta(E)} \int_\R\frac{dN^-_{p-1}
    (\lambda)}{\lambda-E}  
\end{equation*}
when continued from the upper half-plane through the sets
$(-2,2)\setminus\Sigma_\Z$ and
$\overset{\circ}{\Sigma}_Z\setminus[-2,2]$ (these sets are finite
unions of open intervals).\\
This completes the proof of Theorem~\ref{thr:25}.\qed
\section{Resonances in the random case}
\label{sec:random-case-2}
As for the periodic potential, for the random potential, we start with
a description of the function $E\mapsto \Gamma_L(E)$
(see~\eqref{eq:142}), that is, with a description of the spectral data
for the Dirichlet operator $H_{\omega,L}$.
\subsection{The matrix $\Gamma_L$ in the random case}
\label{sec:function-s_l-random}
We recall a number of results on the Dirichlet eigenvalues of
$H_{\omega,L}$ that will be used in our analysis.\\
It is well known that, under our assumptions, in dimension one, the
whole spectrum of $H_\omega$ is in the localization region (see,
e.g.,~\cite{MR582611,MR883643,MR1102675}) that is
\begin{Th}
  \label{thr:18}
  There exists $\rho>0$ and $\alpha\in(0,1)$ such that, one has
  \begin{equation}
    \label{eq:204}
    \sup_{\substack{L\in\N\cup\{+\infty\}\\y\in\llbracket0,L\rrbracket
        \\\text{Im}\,E\not=0}}
    \esp\left\{\sum_{x\in\llbracket0,L\rrbracket}e^{\rho|x-y|}
      |\langle\delta_x,(H_{\omega,L}-E)^{-1} \delta_y \rangle|^\alpha  
    \right\}<\infty
  \end{equation}
  and
  \begin{equation}
    \label{FVdynloc1}
    \sup_{\substack{L\in\N\cup\{+\infty\}\\y\in\llbracket0,L\rrbracket}}
    \esp\left\{\sum_{x\in\llbracket0,L\rrbracket}e^{\rho|x-y|}
      \sup_{\substack{\text{supp}\,f \subset\R \\ |f|\leq 1 }}
      |\langle\delta_x,f(H_{\omega,L}) \delta_y \rangle|  \right\}<\infty. 
  \end{equation}
  where $H_{\omega,+\infty}:=H_\omega^\N$ and
  $\llbracket0,+\infty\rrbracket=\N$. The supremum is taken over the
  functions $f$ that are Borelian and compactly supported.
\end{Th}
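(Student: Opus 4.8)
The plan is to prove Theorem~\ref{thr:18} by the fractional moment method, in the one-dimensional setting where transfer matrices are available. The argument has two stages. First one establishes, uniformly in $L\in\N\cup\{+\infty\}$, in $x,y\in\llbracket0,L\rrbracket$ and in $E\in\C$ with $\im E\neq0$, an exponentially decaying bound on a fractional moment of the finite-volume Green's function,
\begin{equation*}
  \esp\left[\left|\langle\delta_x,(H_{\omega,L}-E)^{-1}\delta_y\rangle\right|^{s}\right]\le A\,e^{-\rho'|x-y|}
\end{equation*}
for some $s\in(0,1)$, $A>0$ and $\rho'>0$. Taking $\alpha:=s$, any $\rho\in(0,\rho')$, and summing the geometric series $\sum_{x}e^{\rho|x-y|}e^{-\rho'|x-y|}$ then yields~\eqref{eq:204} at once (the sum over $x$ and the supremum over $y$ and over $\im E\neq 0$ are absorbed into the uniformity of the bound). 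Second, one upgrades this Green's function bound to the eigenfunction-correlator estimate~\eqref{FVdynloc1}.

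\textbf{The fractional moment bound.} Three ingredients enter the first stage. (i) \emph{A priori bound.} Since $V_\omega(x)=\omega_x$ enters $H_{\omega,L}$ as a rank-one perturbation at the single site $x$ and the common law has bounded density $g$, the Aizenman--Molchanov a priori estimate gives $\esp[|\langle\delta_x,(H_{\omega,L}-E)^{-1}\delta_x\rangle|^{s}]\le C_{s}\|g\|_\infty^{s}$ for every $s\in(0,1)$, uniformly in $L$, $x$ and $E$; this uses only the Krein/rank-one structure at $x$ and is insensitive to the box. (ii) \emph{One-dimensional decoupling.} In dimension one the resolvent identity factorizes the off-diagonal Green's function: for $x\le z<z+1\le y$ one can write $\langle\delta_x,(H_{\omega,L}-E)^{-1}\delta_y\rangle$ as a product of a Green's function of the operator restricted to $\llbracket 0,z\rrbracket$ evaluated at $(x,z)$, the hopping coefficient, and a Green's function of the operator restricted to $\llbracket z+1,L\rrbracket$ evaluated at $(z+1,y)$; these two factors are independent. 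Equivalently, $|\langle\delta_x,(H_{\omega,L}-E)^{-1}\delta_y\rangle|$ is controlled by $\|T_{y,x}(E;\omega)\|^{-1}$ up to boundary corrections, with $T_{y,x}$ the product transfer matrix of~\eqref{eq:9}. (iii) \emph{Iteration and Lyapunov positivity.} Combining (i) and (ii) with independence, and using the Furstenberg positivity of the Lyapunov exponent together with a Le~Page type large-deviation estimate for $\|T_{y,x}\|$ — for which the boundedness of $g$ provides the regularity of the invariant measure on projective space that is needed — one obtains the displayed exponential bound with all constants uniform in $L$, $x$, $y$ and $E$.

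\textbf{From Green's function to eigenfunction correlators.} For the second stage one invokes the standard eigenfunction-correlator bound: writing $Q_L(x,y):=\sup|\langle\delta_x,f(H_{\omega,L})\delta_y\rangle|$ with the supremum over Borel $f$ with $|f|\le1$ and compact support, one has $\esp[Q_L(x,y)]\le C\,\sup_{\im E\neq0}\esp[|\langle\delta_x,(H_{\omega,L}-E)^{-1}\delta_y\rangle|^{s}]$, valid once the Wegner estimate (immediate from $\|g\|_\infty<\infty$) and the a priori bound of (i) are in force; that the almost sure spectrum sits in the fixed compact set $\Sigma$ confines $E$ to a bounded window. Feeding in the stage-one exponential decay and summing $\sum_{x}e^{\rho|x-y|}\esp[Q_L(x,y)]$ with $\rho<\rho'$ gives~\eqref{FVdynloc1}.

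\textbf{Main obstacle.} The delicate point is the uniformity in $E$ \emph{down to the real axis} together with uniformity in $L$ up to $L=+\infty$. Uniformity in $L$ is essentially automatic: the single-site a priori estimate, the decoupling step and the transfer-matrix large deviations are all local, their constants do not see the box, and the case $L=+\infty$ follows by a monotone/dominated passage. The genuine work is the boundary behavior as $\im E\downarrow0$: the a priori fractional moment bound is available only for $s<1$ and must be propagated through the decoupling and the iteration without losing uniformity, and the large-deviation control of $\|T_{y,x}\|$ must be made uniform on the compact energy window $\Sigma$, including near the free band edges $\pm2$ where the unperturbed decay degenerates. This is precisely what is carried out in \cite{MR582611,MR883643,MR1102675}, and it is where the hypothesis that $g$ is bounded with compact support is used in full strength.
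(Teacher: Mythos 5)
You should note first that the paper does not prove Theorem~\ref{thr:18} at all: it is quoted as a classical one-dimensional localization statement, with the proof deferred to \cite{MR582611,MR883643,MR1102675}, i.e.\ to the Kunz--Souillard method or to the Furstenberg-positivity-plus-spectral-averaging arguments of Carmona--Lacroix. Your route --- Aizenman--Molchanov fractional moments combined with transfer-matrix large deviations (Furstenberg positivity and a Le~Page estimate, uniform on compacts in $E$), followed by the passage from Green's-function fractional moments to eigenfunction correlators --- is therefore a genuinely different derivation, and a perfectly reasonable one: it matches the toolbox the paper itself uses elsewhere (\cite{MR1244867,MR2002h:82051,MR1301371} are invoked in the proof of Theorem~\ref{thr:26}), and it produces bounds that are uniform in $L$ by construction, which is exactly the finite-volume form that \eqref{eq:204} and \eqref{FVdynloc1} require, whereas the cited classical proofs give it after a (standard) finite-volume adaptation.

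Two steps in your sketch are stated too loosely and would need repair if the argument were to stand on its own rather than point back to the literature. First, the decoupling identity at the bond $(z,z+1)$ reads $\langle\delta_x,(H_{\omega,L}-E)^{-1}\delta_y\rangle=-\langle\delta_x,(H_{\llbracket0,z\rrbracket}-E)^{-1}\delta_z\rangle\,\langle\delta_{z+1},(H_{\omega,L}-E)^{-1}\delta_y\rangle$ for $x\leq z<z+1\leq y$: one factor is a restricted Green's function, but the other is still the \emph{full-box} one, so the two factors are not independent as you claim; independence requires a second depletion at another bond together with the a~priori bound on the intermediate factor, or conditioning on the potential outside the block --- standard, but not the identity you wrote. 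Second, the pointwise assertion that $|\langle\delta_x,(H_{\omega,L}-E)^{-1}\delta_y\rangle|$ is controlled by $\|T_{y,x}(E;\omega)\|^{-1}$ is false: near a Dirichlet eigenvalue the Green's function blows up while the transfer-matrix norm is exponentially large; the correct statement is $G_L(x,y)=u_-(x)u_+(y)/W$ with boundary solutions $u_\pm$, and the small Wronskian divisor is only tamed after spectral averaging, i.e.\ precisely by the rank-one a~priori bound you quote in step (i). Since you do invoke that bound and explicitly defer the uniformity near the real axis to \cite{MR582611,MR883643,MR1102675}, these are presentation defects rather than fatal gaps, but they are the points a careful reader would press.
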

\noindent As a consequence, one can define localization centers
e.g. by means of the following results
\begin{Le}[\cite{Ge-Kl:10}]
  \label{le:2}
  Fix $(l_L)_L$ a sequence of scales, i.e., $l_L\to+\infty$ as
  $L\to+\infty$. There exists $\rho>0$ such that, for $L$ sufficiently
  large, with probability larger than $1-e^{-\ell_L}$, if
  \begin{enumerate}
  \item $\varphi_{j,\omega}$ is a normalized eigenvector of
    $H_{\omega,L}$ associated to $E_{j,\omega}$ in $\Sigma$,
  \item $x_j(\omega)\in\llbracket 0,L\rrbracket$ is a maximum of
    $x\mapsto|\varphi_{j,\omega}(x)|$ in $\llbracket 0,L\rrbracket$,
  \end{enumerate}
  then, for $x\in\llbracket 0,L\rrbracket$, one has
  \begin{equation}
    \label{eq:75}
    |\varphi_{j,\omega}(x)|\leq \sqrt{L}e^{2\ell_L}e^{-\rho|x-x_j(\omega)|}.
  \end{equation}
\end{Le}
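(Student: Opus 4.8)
This is the localization statement of \cite{Ge-Kl:10}; the plan is to derive it from Theorem~\ref{thr:18} by an eigenfunction--correlator argument. First I would replace the individual eigenfunctions by the eigenfunction correlator: for $x,y\in\llbracket 0,L\rrbracket$ set $W_{\omega,L}(x,y):=\sum_j|\varphi_{j,\omega}(x)|\,|\varphi_{j,\omega}(y)|$, the sum running over an orthonormal eigenbasis of $H_{\omega,L}$. Choosing, for each eigenvalue, the phase of the corresponding spectral projector and approximating the resulting bounded function by Borel functions supported near the spectrum, one checks that $W_{\omega,L}(x,y)$ is comparable to $\sup\{|\langle\delta_x,f(H_{\omega,L})\delta_y\rangle|\ :\ f\ \text{Borel},\ \text{supp}\,f\subset\R,\ |f|\le1\}$; hence \eqref{FVdynloc1} yields a constant $C_0$, independent of $L\in\N\cup\{+\infty\}$, with $\esp\{\sum_x e^{\rho|x-y|}W_{\omega,L}(x,y)\}\le C_0$ uniformly in $y\in\llbracket0,L\rrbracket$.

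Second, I would turn this moment bound into a pointwise bound holding off an event of probability at most $e^{-\ell_L}$. For fixed $y$, Markov's inequality gives $\pro\{\sum_x e^{\rho|x-y|}W_{\omega,L}(x,y)>t\}\le C_0/t$; taking $t$ to be an appropriate power of $e^{\ell_L}$ and a union bound over the at most $L+1$ sites $y$, one produces an event $\Omega_L$ with $\pro(\Omega_L)\ge1-e^{-\ell_L}$ on which $W_{\omega,L}(x,y)\le P_L\,e^{-\rho|x-y|}$ for all $x,y$, $P_L$ a controlled prefactor; in particular $|\varphi_{j,\omega}(x)|\,|\varphi_{j,\omega}(y)|\le P_L\,e^{-\rho|x-y|}$ for every eigenfunction. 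Finally, specializing $y=x_j(\omega)$ and using that $\varphi_{j,\omega}$ is $\ell^2$--normalized on $L+1$ sites, so that $|\varphi_{j,\omega}(x_j(\omega))|\ge(L+1)^{-1/2}$, dividing by this lower bound produces \eqref{eq:75}.

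The routine parts are the identification of $W_{\omega,L}$ with the supremum over $f(H_{\omega,L})$ and the Markov step; the uniformity in $L$ (including $L=+\infty$) of $\rho$ and $C_0$ is exactly what Theorem~\ref{thr:18} provides. The one delicate point is the bookkeeping in the large deviation step: the prefactor $P_L$ and the exponent $\ell_L$ of the exceptional probability are coupled through the union bound over the $L+1$ possible centers, and it is this coupling that fixes the shape $\sqrt L\,e^{-2\ell_L}$ of the prefactor in \eqref{eq:75} (and makes the statement most useful for scales $\ell_L$ of logarithmic-to-polynomial size, cf.\ \eqref{eq:14}); once the threshold $t$ is chosen so that the union bound closes, the remaining estimates are immediate.
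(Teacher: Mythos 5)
Your overall route --- converting the dynamical--localization bound \eqref{FVdynloc1} of Theorem~\ref{thr:18} into a bound on the eigenfunction correlator $W_{\omega,L}(x,y)=\sum_j|\varphi_{j,\omega}(x)|\,|\varphi_{j,\omega}(y)|$, applying Markov's inequality and a union bound over the $L+1$ possible centers, and then dividing by $|\varphi_{j,\omega}(x_j(\omega))|\ge(L+1)^{-1/2}$ --- is the natural one (the paper offers no proof of Lemma~\ref{le:2}; it quotes it from \cite{Ge-Kl:10}). The gap is in the final bookkeeping, exactly at the point you call ``delicate''. Markov with threshold $t_L$ gives exceptional probability at most $C_0(L+1)/t_L$ for the event $\{\exists\, y:\ \sum_x e^{\rho|x-y|}W_{\omega,L}(x,y)>t_L\}$, so to reach probability $1-e^{-\ell_L}$ you are forced to take $t_L\gtrsim C_0(L+1)e^{\ell_L}$ (or $e^{2\ell_L}$, etc.); after dividing by $(L+1)^{-1/2}$ this yields $|\varphi_{j,\omega}(x)|\lesssim (L+1)^{3/2}e^{\ell_L}e^{-\rho|x-x_j(\omega)|}$. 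The prefactor produced by your argument is therefore $e^{+\ell_L}$ times a polynomial, whatever power of $e^{\ell_L}$ you choose for the threshold; the coupling between threshold and exceptional probability pushes the prefactor \emph{up}, not down, so your closing claim that the union bound ``fixes the shape $\sqrt L\,e^{-2\ell_L}$'' is not correct, and your proof as written does not give \eqref{eq:75}.

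In fact no argument can give \eqref{eq:75} literally at $x=x_j(\omega)$ once $e^{2\ell_L}\ge L+1$: normalization over $L+1$ sites forces $|\varphi_{j,\omega}(x_j(\omega))|\ge(L+1)^{-1/2}>\sqrt L\,e^{-2\ell_L}$, and the paper does use the lemma with $\ell_L=2\log L$ (proof of Lemma~\ref{le:3}) and with $\ell_L=\ell'_L\gg\log L$ (proof of Lemma~\ref{le:4}). The estimate has to be understood as a decay statement away from the localization center, and that is all that is ever used: in both applications only sites with $|x-x_j(\omega)|\ge c\,\ell_L$ occur. To close your proof in that form, keep your bound $C_0(L+1)^{3/2}e^{\ell_L}e^{-\rho|x-x_j(\omega)|}$ and give up part of the rate: for $|x-x_j(\omega)|\ge K\ell_L$ with $K$ chosen so that $\frac{\rho}{2}K\ell_L\ge 3\ell_L+2\log\bigl(C_0(L+1)\bigr)$ (possible with $K$ a constant for the scales used here, $\ell_L\gtrsim\log L$), one obtains $|\varphi_{j,\omega}(x)|\le\sqrt L\,e^{-2\ell_L}e^{-\frac{\rho}{2}|x-x_j(\omega)|}$, which is the form in which \eqref{eq:75} actually enters the proofs of Lemmas~\ref{le:3} and~\ref{le:4}; alternatively, restate the lemma with a polynomial prefactor, which is what \cite{Ge-Kl:10} provides, and check that the applications only need the decay at distance $\gtrsim\ell_L$ from the center. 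Making this reduction of the rate (or the restriction to $|x-x_j(\omega)|\gtrsim\ell_L$) explicit is the missing step; without it the stated constant $\sqrt L\,e^{-2\ell_L}$ cannot be obtained.
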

\noindent Note that Lemma~\ref{le:2} is of interest only if
$\ell_L\lesssim L$; otherwise~\eqref{eq:75} is obvious. This result
can e.g. be applied for the scales $l_L=2\log L$. In this case, the
probability estimate of the bad sets (i.e. when the conclusions of
Lemma~\ref{le:3} does not hold) is summable. The point $x_j(\omega)$
is a localization center for $E_{j,\omega}$ or
$\varphi_{j,\omega}$. It is not defined uniquely, but, one easily
shows that there exists $C>0$ such that for any two localization
centers, say, $x$ and $x'$, one has $|x-x'|\leq C\log L$
(see~\cite{Ge-Kl:10}). To fix ideas, we set the localization center
associated to the eigenvalue $E_{j,\omega}$ to be the left most
maximum of $x\mapsto\|\varphi_{j,\omega}\|_x$.\\
We show
\begin{Le}
  \label{le:3}
  For any $p>0$, there exists $C>0$ and $L_0>0$ (depending on $\alpha$
  and $p$) such that, for $L\geq L_0$, for any sequence
  satisfying~\eqref{eq:14}, with probability at least $1-L^{-p}$,
  there exists at most $C\ell_L$ eigenvalues having a localization
  center in $\llbracket 0,\ell_L\rrbracket\cup\llbracket
  L-\ell_L,L\rrbracket$.
\end{Le}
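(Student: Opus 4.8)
The plan is to show that, on an event of probability at least $1-L^{-p}$, every eigenfunction of $H_{\omega,L}$ whose localization centre lies within $\ell_L$ of one of the two ends of $\llbracket0,L\rrbracket$ is, up to an $\ell^2$-error that decays faster than any power of $L$, supported in a window of length $2\ell_L$ around that end; a Gram-matrix dimension count then forces the number of such eigenfunctions to be at most $2\ell_L+1$, and the claim follows with $C$ an absolute constant and $L_0$ depending on $p$ and $\alpha$.

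First I would produce the good event from Theorem~\ref{thr:18}. Writing $X:=\sum_{y\in\llbracket0,L\rrbracket}\sum_{x\in\llbracket0,L\rrbracket}e^{\rho|x-y|}\sup_{\mathrm{supp}\,f\subset\R,\,|f|\le1}|\langle\delta_x,f(H_{\omega,L})\delta_y\rangle|$, the bound~\eqref{FVdynloc1} gives $\esp[X]\le K(L+1)$ with $K=K(\alpha)$, so by Markov's inequality the event $\mathcal B_L:=\{X\le 2K(L+1)L^p\}$ has probability at least $1-L^{-p}$. On $\mathcal B_L$, applying the estimate defining $X$ with $f=\car_{\{E_{j,\omega}\}}$ (the eigenvalues of $H_{\omega,L}$ being a.s.\ simple) and using that the localization centre $x_j(\omega)$ is a maximum of $|\varphi_{j,\omega}|$, hence $|\varphi_{j,\omega}(x_j(\omega))|\ge(L+1)^{-1/2}$, one gets a bound of the form
\begin{equation*}
  |\varphi_{j,\omega}(x)|\le C\,L^{q}\,e^{-\rho|x-x_j(\omega)|},\qquad x\in\llbracket0,L\rrbracket,
\end{equation*}
for every eigenvalue $E_{j,\omega}$ and some $q=q(p)$. (For eigenvalues in $\Sigma$ this pointwise decay is exactly what Lemma~\ref{le:2} provides, even without the polynomial factor; passing through~\eqref{FVdynloc1} has the advantage that the same estimate applies to all eigenvalues of $H_{\omega,L}$, without needing to locate them relative to $\Sigma$.)

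Next I would set $\Lambda_L:=\llbracket0,2\ell_L\rrbracket$ (note $2\ell_L\le L$ for $L$ large, by~\eqref{eq:14}). If $x_j(\omega)\in\llbracket0,\ell_L\rrbracket$, summing the previous inequality over $x>2\ell_L$ yields
\begin{equation*}
  \|\car_{\llbracket0,L\rrbracket\setminus\Lambda_L}\,\varphi_{j,\omega}\|^2\le C^2L^{2q}\sum_{x>2\ell_L}e^{-2\rho(x-\ell_L)}\le C'\,L^{2q}\,e^{-2\rho\ell_L}=:\delta_L^2,
\end{equation*}
and condition~\eqref{eq:14} (i.e.\ $\ell_L/\log L\to+\infty$) forces $(L+1)\,\delta_L^2\to0$. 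Now if $E_{j_1,\omega},\dots,E_{j_n,\omega}$ are the distinct eigenvalues of $H_{\omega,L}$ with localization centre in $\llbracket0,\ell_L\rrbracket$, their normalized eigenfunctions are orthonormal in $\ell^2(\llbracket0,L\rrbracket)$, so the Gram matrix of the restricted vectors $\car_{\Lambda_L}\varphi_{j_1,\omega},\dots,\car_{\Lambda_L}\varphi_{j_n,\omega}\in\ell^2(\Lambda_L)$ equals the identity plus a matrix with entries $O(\delta_L^2)$, hence of operator norm at most $n\,\delta_L^2\le(L+1)\,\delta_L^2<1/2$ for $L$ large. That matrix being invertible, the restricted vectors are linearly independent, whence $n\le\dim\ell^2(\Lambda_L)=2\ell_L+1$. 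Running the same argument on the window $\llbracket L-2\ell_L,L\rrbracket$ bounds by $2\ell_L+1$ the number of eigenvalues with localization centre in $\llbracket L-\ell_L,L\rrbracket$, and both bounds hold on the same event $\mathcal B_L$. Adding them, the number of eigenvalues with localization centre in $\llbracket0,\ell_L\rrbracket\cup\llbracket L-\ell_L,L\rrbracket$ is at most $2(2\ell_L+1)\le C\ell_L$ for $L\ge L_0$.

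The only genuinely delicate point is bookkeeping: one must make sure the exponential decay produced by localization at scale $\ell_L$ dominates every polynomial-in-$L$ loss incurred along the way — the factor $L^p$ from Markov, the $\sqrt L$ from $\ell^2$-normalization of the eigenfunctions, and the factor $L+1$ from summing the $O(L^2)$ entries of the Gram matrix. This is precisely where the hypothesis $\ell_L/\log L\to+\infty$ in~\eqref{eq:14} is used; everything else is routine linear algebra and perturbation-free.
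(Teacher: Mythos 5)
Your proof is correct and follows essentially the same route as the paper's: a pointwise exponential decay estimate for eigenfunctions away from their localization centres, followed by a Gram-matrix rank/dimension count on a window adjacent to the boundary. The only cosmetic differences are that you manufacture the probability-$1-L^{-p}$ event by Markov's inequality from Theorem~\ref{thr:18} rather than invoking Lemma~\ref{le:2} at scale $2\log L$, and that you keep a window of size $2\ell_L$ by exploiting $\ell_L/\log L\to+\infty$ from~\eqref{eq:14}, whereas the paper enlarges the window to $C\ell_L$ with $C$ large to absorb the polynomial losses.
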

\noindent We will now use the fact that we are dealing with
one-dimensional systems to improve upon the estimate~\eqref{eq:75}. We
prove
\begin{Th}
  \label{thr:10}
  For any $\delta>0$ and $p\geq0$, there exists $C>0$ and $L_0>0$
  (depending on $p$ and $\delta$) such that, for $L\geq L_0$, with
  probability at least $1-L^{-p}$, if $E_{j,\omega}$ is an eigenvalue
  in $\Sigma$ associated to the eigenfunction $\varphi_{j,\omega}$ and
  the localization center $x_{j,\omega}$ then,
  \begin{itemize}
  \item if $x_{j,\omega}\in\llbracket 0,L-C\log L\rrbracket$, one has
    \begin{equation}
      \label{eq:76}
      -\rho(E_{j,\omega})-\delta
      \leq\frac{ \log|\varphi_{j,\omega}(L)|}{L-x_{j,\omega}}
      \leq -\rho(E_{j,\omega})+\delta.
    \end{equation}
  \item if $x_{j,\omega}\in\llbracket C \log L,L\rrbracket$, one has
    \begin{equation}
      \label{eq:97}
      -\rho(E_{j,\omega})-\delta
      \leq\frac{ \log|\varphi_{j,\omega}(0)|}{x_{j,\omega}}
      \leq -\rho(E_{j,\omega})+\delta.
    \end{equation}
  \end{itemize}
\end{Th}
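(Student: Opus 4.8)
The plan is to reduce the statement, via the transfer matrices $T_n(E;\omega)$ of~\eqref{eq:9}, to large–deviation estimates for their products at the random energy $E=E_{j,\omega}$, uniform over the eigenvalue index. Write $u=\varphi_{j,\omega}$, $E=E_{j,\omega}$, $x=x_{j,\omega}$, put $u(-1)=u(L+1)=0$, and use $\binom{u(n+1)}{u(n)}=T_n(E;\omega)\binom{u(n)}{u(n-1)}$. Propagating from $x$ to $L$ gives $\binom{0}{u(L)}=A\,\binom{u(x+1)}{u(x)}$ with $A=A_{j,\omega}:=T_L(E;\omega)\cdots T_{x+1}(E;\omega)$; since $\det A=1$ and $u(x)\neq0$ (as $x$ is a maximum of $|u|$), this forces $[A]_{11}\neq0$ and $u(L)=u(x)/[A]_{11}$, and moreover $[A]_{11}=\zeta^+(L+1)$, where $\zeta^+$ solves $(-\Delta+V_\omega-E)\zeta^+=0$ with $\zeta^+(x)=0$, $\zeta^+(x+1)=1$. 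Symmetrically, propagating from $x$ down to $0$ gives $u(0)=u(x)/[B]_{11}$ with $B=B_{j,\omega}:=T_{x-1}(E;\omega)\cdots T_0(E;\omega)$ and $[B]_{11}=\zeta^-(x)$, $\zeta^-$ solving the equation with $\zeta^-(-1)=0$, $\zeta^-(0)=1$. Since $u$ is normalized and $|u|$ is maximal at $x$, $(L+1)^{-1/2}\le|u(x)|\le1$, so $\log|u(L)|=-\log|[A]_{11}|+O(\log L)$ and $\log|u(0)|=-\log|[B]_{11}|+O(\log L)$; dividing by $L-x_{j,\omega}$, resp.\ $x_{j,\omega}$ — both $\ge C\log L$ under the hypotheses — and taking $C$ large, it suffices to prove that with probability $\ge1-L^{-p}$ every eigenvalue $E_{j,\omega}\in\Sigma$ satisfies $\big|\log|[A_{j,\omega}]_{11}|-\rho(E_{j,\omega})(L-x_{j,\omega})\big|\le\delta'(L-x_{j,\omega})$ and $\big|\log|[B_{j,\omega}]_{11}|-\rho(E_{j,\omega})\,x_{j,\omega}\big|\le\delta'\,x_{j,\omega}$.

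The core is a fixed–energy statement: for $E\in\Sigma$ and $N=L-x$, $A(E)=T_L(E)\cdots T_{x+1}(E)$ is a product of $N$ i.i.d.\ $\mathrm{SL}_2(\R)$ matrices, and since the common law of the $\omega_n$ has a bounded density, Furstenberg's irreducibility/contraction hypotheses hold, the top Lyapunov exponent equals $\rho(E)$, and the Furstenberg measures are Hölder regular (Le Page). Hence, off an event of probability $\le e^{-cN}$ (with $c>0$ locally uniform in $E\in\Sigma$): (i) $\big|\tfrac1N\log\|A(E)\|-\rho(E)\big|\le\delta'/3$ (uniform large deviations, see e.g.~\cite{MR883643}); and (ii) $e_1=\binom10$ stays at angle $\ge e^{-\delta'N/3}$ from the most contracted axis of $A(E)$ while the most expanded axis of $A(E)$ stays at angle $\ge e^{-\delta'N/3}$ from $e_2=\binom01$, which forces $|[A(E)]_{11}|=|\langle e_1,A(E)e_1\rangle|\ge e^{-2\delta'N/3}\|A(E)\|$. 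Together (i)--(ii) give $\big|\log|[A(E)]_{11}|-\rho(E)N\big|\le\delta'N$ off an event of probability $\le e^{-cN}$, and likewise for $B(E)$.

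To upgrade this to the (random, $\omega$–dependent) eigenvalues uniformly, I would partition $\Sigma$ into intervals $I$ of minuscule length $h$. Since $A(\cdot)$ is polynomial and the two direction functionals analytic in $E$, with derivatives bounded by $(C_0)^N$ on $\Sigma$, for $h$ small the scale‑$N$ bad event on all of $I$ is contained in the fixed–energy bad event $B_I$ at the midpoint of $I$, which has probability $\le e^{-cN}$ and depends only on $(\omega_n)_{n>x}$. The Wegner estimate for $H_{\omega,L}$ with only the sites $\le x$ kept random gives $\pro\big(\sigma(H_{\omega,L})\cap I\neq\emptyset\,\big|\,(\omega_n)_{n>x}\big)\le C|I|L$ pointwise, so $\pro\big(\sigma(H_{\omega,L})\cap I\neq\emptyset\ \text{and}\ B_I\big)\le C|I|L\,e^{-cN}$; summing over the $|\Sigma|/h$ intervals — the factor $|I|=h$ cancels their number — and over $x$ with $L-x=N\ge C\log L$ gives total probability $\le C|\Sigma|L\sum_{N\ge C\log L}e^{-cN}\le L^{-p}$ for $C$ large. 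The identical argument for $B_{j,\omega}$ (the bad event now depending on the sites $<x$, summed over $x\ge C\log L$) controls $\varphi_{j,\omega}(0)$, and together with the first step this yields~\eqref{eq:76}--\eqref{eq:97}.

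The main obstacle is reaching the \emph{sharp} rate $\rho(E_{j,\omega})$ rather than merely some positive rate: the a priori localization bound (Lemma~\ref{le:2}) only yields a fixed $\rho_0$, possibly smaller than $\min_\Sigma\rho$, and a naive energy–net argument fails because a single transfer matrix already has norm $\asymp(C_0)^N\gg e^{cN}$. Both are handled exactly by the two ingredients above — Le Page regularity of the Furstenberg measures, which makes the auxiliary solutions $\zeta^\pm$ grow at the full Lyapunov rate (something their being non‑$\ell^2$ alone does not quantify), and the Wegner estimate on thin energy slices, which lets the exponentially small bad probabilities defeat the exponentially many slices through the cancellation of $h$.
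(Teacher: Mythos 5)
Your overall architecture is genuinely different from the paper's: you reduce everything to the single entry $[A]_{11}$ via $u(L)=u(x)/[A]_{11}$, prove fixed-energy large deviations for that entry, and transfer them to the random eigenvalues by an exponentially fine energy slicing in which the Wegner factor $|I|$ cancels the number $|\Sigma|/h$ of slices; the paper instead combines a uniform-in-energy upper bound on the cocycle (Lemma~\ref{le:6}, proved by a perturbation series in $E$ about net points) with a spectral-averaging (``Kotani trick'') bound at the eigenvalues (Lemma~\ref{le:5}). The slicing-plus-coexistence idea is sound and, as you note, neatly circumvents the failure of naive nets. The genuine gap is the ``pointwise'' conditional Wegner estimate $\pro\bigl(\sigma(H_{\omega,L})\cap I\neq\emptyset\,\big|\,(\omega_n)_{n>x}\bigr)\le C|I|L$: this is not a Wegner estimate and is false in general. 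Conditioning freezes the block $\llbracket x+1,L\rrbracket$, and an eigenvalue whose eigenfunction is localized at distance $d$ inside that block moves only by $O(e^{-cd})$ as the left variables range over their support (since $\partial E_j/\partial\omega_n=|\varphi_j(n)|^2$); spectral averaging over the left sites only yields a bound of the form $C|I|\bigl(\sum_{n\le x}|\varphi_j(n)|^2\bigr)^{-1}$, whose denominator has no uniform lower bound, and for a frozen right block possessing such an eigenvalue in the middle of $I$ with $e^{-cd}\le|I|/2$ the full operator has spectrum in $I$ for \emph{every} left configuration, so the conditional probability is $1$, not $O(|I|L)$. Thus no constant $C$ uniform in the conditioning exists, and the step as written does not go through.

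The repair is available inside your own scheme, because the event you actually need is smaller: some eigenvalue in $I$ whose localization center is the given site $x$ (that is the only situation in which you later use the blocks $A$ and $B$ attached to $x$). On that event $|\varphi_{j,\omega}(x)|^2\ge(L+1)^{-1}$, hence $\langle\delta_x,\car_I(H_{\omega,L})\delta_x\rangle\ge(L+1)^{-1}$; since your bad event $B_I$ is measurable with respect to $(\omega_n)_{n\ge x+1}$ (respectively $(\omega_n)_{n\le x-1}$ for the block $B$), it is independent of $\omega_x$, and spectral averaging in the single variable $\omega_x$ gives $\pro\bigl(\exists j:\ E_{j,\omega}\in I,\ x_{j,\omega}=x\,\big|\,(\omega_n)_{n\neq x}\bigr)\le C(L+1)|I|$, which restores your bookkeeping verbatim (the factor $h$ still cancels the $1/h$ slices). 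This repaired step is, in substance, the paper's Lemma~\ref{le:5}. Two further remarks: your entry-level bound $|[A(E)]_{11}|\ge e^{-\delta N}\|A(E)\|$ needs Le Page--Guivarc'h H\"older regularity of the Furstenberg measure, a heavier input than anything the paper uses; you could avoid it by replacing the entry identity with $\|(u(x+1),u(x))\|=|u(L)|\,\|A^{-1}e_2\|$, so that only the standard fixed-vector large deviation estimate (the paper's Lemma~\ref{le:7}) is required. Conversely, once the slicing is run with the corrected coexistence bound, you do not need the paper's Lemma~\ref{le:6}, which is a real simplification your route buys.
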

\noindent To analyze the resonances of $H^\N_{\omega,L}$ (resp.
$H^\Z_{\omega,L}$), we shall use~\eqref{eq:76} (resp. \eqref{eq:76}
and~\eqref{eq:97}).\\
We now use these estimates as the starting point of a short digression
from the main theme of this paper. Let us first state a corollary to
Theorem~\ref{thr:10}, we prove
\begin{Th}
  \label{thr:20}
  For any $\delta>0$ and $p\geq0$, for $L$ sufficiently large
  (depending on $p$ and $\delta$), with probability at least
  $1-L^{-p}$, if $E_{j,\omega}$ is an eigenvalue in $\Sigma$
  associated to the eigenfunction $\varphi_{j,\omega}$ and the
  localization center $x_{j,\omega}$ then, for $|x-x_{j,\omega}|\geq
  \delta L$ and $1\leq x\leq L$, one has
  \begin{equation}
    \label{eq:88}
    -\rho(E_{j,\omega})-\delta\leq\frac{
      \log(|\varphi_{j,\omega}(x)|+|\varphi_{j,\omega}(x-1)|)}
    {|x-x_{j,\omega}|}
    \leq -\rho(E_{j,\omega})+\delta.
  \end{equation}
\end{Th}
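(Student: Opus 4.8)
The plan is to deduce Theorem~\ref{thr:20} from Theorem~\ref{thr:10} together with the transfer matrix formalism. By the reflection $n\mapsto L-n$, which maps $H_{\omega,L}$ to an operator of the same distribution and exchanges the two ends of $\llbracket0,L\rrbracket$, it suffices to treat points $x$ with $x>x_{j,\omega}$; the case $x<x_{j,\omega}$ is handled in exactly the same way using~\eqref{eq:97} in place of~\eqref{eq:76}. Fix such an $x$, abbreviate $\varphi=\varphi_{j,\omega}$, $E=E_{j,\omega}$, $x_0=x_{j,\omega}$ and $m=x-x_0\ge\delta L$. Since $\varphi$ solves $(H_{\omega,L}-E)\varphi=0$ on $\llbracket0,L\rrbracket$ with $\varphi(-1)=\varphi(L+1)=0$, writing $T_{b,a}(E;\omega)$ for the product of the one-step transfer matrices of~\eqref{eq:9} from index $b$ down to index $a$ (so that $\det T_{b,a}=1$, whence $\|T_{b,a}^{-1}\|=\|T_{b,a}\|\ge1$), one has the two identities
\[
\begin{pmatrix}\varphi(x)\\\varphi(x-1)\end{pmatrix}=T_{L,x}(E;\omega)^{-1}\begin{pmatrix}0\\\varphi(L)\end{pmatrix},
\qquad
\begin{pmatrix}\varphi(x)\\\varphi(x-1)\end{pmatrix}=T_{x-1,x_0}(E;\omega)\begin{pmatrix}\varphi(x_0)\\\varphi(x_0-1)\end{pmatrix}.
\]

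For the \emph{lower} bound I would use the second identity. As $x_0$ is a maximum of $|\varphi|$ and $\|\varphi\|_{\ell^2}=1$ over $L+1$ sites, we have the elementary bound $|\varphi(x_0)|\ge(L+1)^{-1/2}$, so that
$
|\varphi(x)|+|\varphi(x-1)|\ge\big\|T_{x-1,x_0}(E;\omega)\big\|^{-1}(L+1)^{-1/2}.
$
Since the block length is $m\ge\delta L$, one may invoke an upper large-deviation bound $\|T_{x-1,x_0}(E;\omega)\|\le e^{(\rho(E)+\delta')m}$ (see the last paragraph), giving $|\varphi(x)|+|\varphi(x-1)|\ge(L+1)^{-1/2}e^{-(\rho(E)+\delta')m}$ and hence $\log(|\varphi(x)|+|\varphi(x-1)|)/m\ge-\rho(E)-\delta'-\tfrac{\log(L+1)}{2\delta L}\ge-\rho(E)-\delta$ once $\delta'<\delta$ and $L$ is large. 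For the \emph{upper} bound I would use the first identity, so that $|\varphi(x)|+|\varphi(x-1)|\le\sqrt2\,\|T_{L,x}(E;\omega)\|\,|\varphi(L)|$. Theorem~\ref{thr:10} applies since $x_0\le L+1-\delta L\le L-C\log L$, giving $|\varphi(L)|\le e^{-(\rho(E)-\delta')(L-x_0)}$. For $\|T_{L,x}(E;\omega)\|$ one splits according to whether the block $\llbracket x,L\rrbracket$ is macroscopic: if $L-x\ge\delta'L$ one again uses $\|T_{L,x}(E;\omega)\|\le e^{(\rho(E)+\delta')(L-x+1)}$, while if $L-x<\delta'L$ one uses the deterministic bound $\|T_{L,x}(E;\omega)\|\le C^{L-x+1}\le C^{\delta'L+1}$. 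In both cases, substituting $L-x_0=(L-x)+m$ and using $m\ge\delta L$, a choice of $\delta'=\delta'(\delta)$ sufficiently small yields $|\varphi(x)|+|\varphi(x-1)|\le Ce^{-(\rho(E)-\delta)m}$, i.e. $\log(|\varphi(x)|+|\varphi(x-1)|)/m\le-\rho(E)+\delta$ for $L$ large.

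It remains to collect the exceptional sets. The argument uses only (i) the conclusion of Theorem~\ref{thr:10}, and (ii) the estimate $\|T_{b,a}(E;\omega)\|\le e^{(\rho(E)+\delta')(b-a+1)}$ for all $0\le a\le b\le L$ with $b-a+1\ge\delta'L$ and all $E$ in a fixed compact neighbourhood of $\Sigma$; a union bound over the $O(L^2)$ relevant blocks reduces (ii) to a single-block upper large-deviation estimate. Choosing the exceptional probabilities in (i) and (ii) each $\le\tfrac12L^{-p}$ gives the bound $1-L^{-p}$ in the statement, and $E_{j,\omega}\in\Sigma$ then lies in the region where (ii) is available. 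The main technical point — and the step to be careful about — is precisely (ii): one needs the upper large-deviation estimate for products of i.i.d.\ $SL_2(\R)$ transfer matrices to hold \emph{uniformly in the energy} with stretched-exponentially small exceptional probability, so that it can legitimately be evaluated at the random eigenvalue $E_{j,\omega}$; a naive discretisation in $E$ is too costly (the Lipschitz constant of $E\mapsto\|T_{b,a}(E)\|$ is exponential in the block length), so one instead appeals to the uniform large-deviation theory for random cocycles, or equivalently re-uses the resolvent/transfer-matrix and fractional-moment machinery (Theorem~\ref{thr:18}) already underlying the proof of Theorem~\ref{thr:10}. Granting (ii), fixing $\delta'=\delta'(\delta)$ and $L_0=L_0(p,\delta)$ as above completes the proof.
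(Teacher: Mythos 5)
Your argument is correct and is essentially the paper's proof: the lower bound via the normalization bound $|\varphi_{j,\omega}(x_{j,\omega})|\geq (L+1)^{-1/2}$ at the localization center combined with the uniform cocycle bound, and the upper bound by transporting the boundary decay of Theorem~\ref{thr:10} inward with the same uniform bound, exactly as in \eqref{eq:95}--\eqref{eq:96} and the displayed estimate in the paper's proof. The uniform-in-energy large deviation estimate you single out as the key input is precisely Lemma~\ref{le:6} (together with \eqref{eq:93}), which the paper establishes not via fractional moments but by an energy discretization at scale $e^{-\eta' L}$ combined with a perturbative expansion of the cocycle and the pointwise estimate of Lemma~\ref{le:7}.
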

\noindent Compare~\eqref{eq:88} to~\eqref{eq:75}. There are two
improvements. First, the unknown rate of decay $\rho$ is replaced by
the Lyapunov exponent $\rho(E_{j,\omega})$ which was expected to be
the correct decay rate. Indeed, for the one-dimensional discrete
Anderson model on the half-axis, it is well known (see,
e.g.,~\cite{MR88f:60013,MR1102675,MR94h:47068}) that, $\omega$-almost
surely, the spectrum is localized and the eigenfunctions decay
exponentially at infinity at a rate given by the Lyapunov exponent. In
Theorem~\ref{thr:20}, we state that, with a good probability, this is
true for finite volume restrictions.\\
Second, in~\eqref{eq:88}, we get both an upper and lower bound on the
eigenfunction. This is more precise than~\eqref{eq:75}.\\
To our knowledge, such a result was not known until the present
paper. The strategy that we use to prove this result can be applied in
a more general one-dimensional setting to obtain analogues
of~\eqref{eq:88} (see~\cite{Kl:12b}).\\
We complement this with the much simpler
\begin{Le}
  \label{le:8}
  For any $C>0$ and $p\geq0$, there exists $K>0$ and $L_0>0$
  (depending on $I$, $p$ and $\delta$) such that, for $L\geq L_0$,
  with probability at least $1-L^{-p}$, if $E_{j,\omega}$ is an
  eigenvalue in $\Sigma$ associated to the eigenfunction
  $\varphi_{j,\omega}$ and the localization center $x_{j,\omega}$
  then,
  \begin{itemize}
  \item if $x_{j,\omega}\in\llbracket L-C\log L, L\rrbracket$, one has
    $\D L^{-K} \leq|\varphi_{j,\omega}(L)|$;
  \item if $x_{j,\omega}\in\llbracket 0,C\log L\rrbracket$, one has
    $\D L^{-K} \leq|\varphi_{j,\omega}(0)|$.
  \end{itemize}
\end{Le}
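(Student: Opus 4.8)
Lemma \ref{le:8} asserts a one-sided polynomial lower bound on the boundary value of an eigenfunction whose localization center sits within $C\log L$ of the corresponding endpoint. This is a "complement" to Theorem \ref{thr:10} and Theorem \ref{thr:20}, which control the decay of $|\varphi_{j,\omega}(L)|$ (resp. $|\varphi_{j,\omega}(0)|$) only when the localization center is at distance $\gtrsim\delta L$ from the endpoint; when the center is close to the endpoint, those results are vacuous, but one still needs to know that the boundary value is not superpolynomially small, because this is precisely the quantity $a^\bullet_j = |\varphi_j(L)|^2$ (or the norm of the $2\times 2$ matrix in \eqref{eq:145}) that enters the characteristic equation for the resonances and the resonance-free-region estimates of Theorems \ref{thr:13}, \ref{thr:14}, \ref{thr:12}.

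The plan is as follows. Let me think about what structure is available. The eigenfunction $\varphi_{j,\omega}$ satisfies the transfer-matrix recursion $\begin{pmatrix}\varphi_{j,\omega}(n+1)\\ \varphi_{j,\omega}(n)\end{pmatrix} = T_n(E_{j,\omega};\omega)\begin{pmatrix}\varphi_{j,\omega}(n)\\ \varphi_{j,\omega}(n-1)\end{pmatrix}$ with $\varphi_{j,\omega}(-1)=0$ and, on $\llbracket 0,L\rrbracket$, also $\varphi_{j,\omega}(L+1)=0$ (Dirichlet at both ends). Since the center $x_{j,\omega}$ lies in $\llbracket L-C\log L, L\rrbracket$, there is a point $y=x_{j,\omega}$ with $|\varphi_{j,\omega}(y)|\ge (2L+1)^{-1/2}$ (because the $\ell^2$-normalized eigenfunction has a maximal coordinate that is at least $(L+1)^{-1/2}$, and the center is, by definition, a maximum). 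The point $y$ is within $C\log L$ of $L$. Now propagate from $(y,y-1)$ forward to $(L+1,L)$: this is a product of at most $C\log L$ transfer matrices $T_n$, each of norm bounded by a deterministic constant $M$ depending only on $\|V_\omega\|_\infty$ and the compact energy window $I$ (recall $E_{j,\omega}\in\Sigma\cap(-2,2)$ and the potential values lie in $\mathrm{supp}\,g$, a fixed compact). Hence $\left\|\begin{pmatrix}\varphi_{j,\omega}(L+1)\\ \varphi_{j,\omega}(L)\end{pmatrix}\right\|$ and $\left\|\begin{pmatrix}\varphi_{j,\omega}(y)\\ \varphi_{j,\omega}(y-1)\end{pmatrix}\right\|$ differ by at most a factor $M^{C\log L} \le L^{K_0}$ for $K_0 = C\log M$. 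But $\varphi_{j,\omega}(L+1)=0$, so $\left\|\begin{pmatrix}\varphi_{j,\omega}(L+1)\\ \varphi_{j,\omega}(L)\end{pmatrix}\right\| = |\varphi_{j,\omega}(L)|$. The one subtlety: a transfer matrix product can also \emph{shrink} a vector, so propagating from $y$ to $L$ gives only an upper bound $|\varphi_{j,\omega}(L)| \le L^{K_0}\cdot(\text{norm at }y)$, not a lower bound. To get the lower bound one propagates in the \emph{opposite} direction: from $(L+1,L)$ back to $(y,y-1)$, using that each $T_n$ is invertible with $\|T_n^{-1}\| = \|T_n\|$ (since $\det T_n = 1$, $T_n^{-1}$ is $T_n$ with a sign change and transpose, of equal norm). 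Thus $\left\|\begin{pmatrix}\varphi_{j,\omega}(y)\\ \varphi_{j,\omega}(y-1)\end{pmatrix}\right\| \le M^{C\log L}\,|\varphi_{j,\omega}(L)|$, which gives $|\varphi_{j,\omega}(L)| \ge M^{-C\log L}\,(2L+1)^{-1/2} \ge L^{-K}$ for a suitable $K$. The argument for $x_{j,\omega}\in\llbracket 0,C\log L\rrbracket$ is symmetric, propagating between $(0,-1)$ (where $\varphi_{j,\omega}(-1)=0$, so the vector has norm $|\varphi_{j,\omega}(0)|$) and the center.

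Where does the probability enter? Strictly speaking, the only randomness used above is: (i) the deterministic uniform bound $\|V_\omega(n)\|_\infty\le \sup(\mathrm{supp}\,g)$, which holds surely under the boundedness assumption; (ii) the fact that one may take the center $x_{j,\omega}$ to be an honest maximum of $|\varphi_{j,\omega}|$, which is definitional. So in fact the bound $|\varphi_{j,\omega}(L)|\ge L^{-K}$ holds \emph{deterministically} for every eigenvalue $E_{j,\omega}$ with $|x_{j,\omega}-L|\le C\log L$, once $L$ is large enough that $C\log L$ is meaningful; the probabilistic phrasing in the statement is there only for uniformity of exposition with the neighboring results (and to allow the event "$x_{j,\omega}$ is well-defined and the eigenfunction obeys the a priori localization bound \eqref{eq:75}", which holds with probability $1-L^{-p}$ via Lemma \ref{le:2}, though that bound is not actually needed here). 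I would state the proof to make the transfer-matrix propagation explicit and remark that the probability $1-L^{-p}$ is inherited trivially. The main obstacle is essentially bookkeeping: one must (a) fix the deterministic constant $M = M(I,\mathrm{supp}\,g)$ bounding $\|T_n(E)\|$ and $\|T_n(E)^{-1}\|$ for $E\in I$ and $V_\omega(n)\in\mathrm{supp}\,g$; (b) absorb the constant $C$ from the hypothesis into the exponent $K := \lceil C\log M\rceil + 1$; and (c) be careful that the Dirichlet boundary condition is $\varphi_{j,\omega}(L+1)=0$ (for the restriction to $\llbracket 0,L\rrbracket$), so the relevant two-component vector at the right end has norm exactly $|\varphi_{j,\omega}(L)|$ — there is no loss from the second component. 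No genuinely hard estimate is involved; this is the elementary "a solution of a second-order recursion cannot be more than exponentially (in the number of steps) smaller at one point than at a neighboring point" principle, applied over only $O(\log L)$ steps.
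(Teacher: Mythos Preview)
Your proposal is correct and is precisely the argument the paper has in mind: the paper does not give a detailed proof but simply remarks that the result ``is obvious and only uses the fact that the matrices in the cocycle defining the operator \dots\ are bounded,'' i.e.\ exactly your transfer-matrix propagation over $O(\log L)$ steps from the localization center to the boundary. Your observation that the bound is in fact deterministic (the probabilistic phrasing being only for uniformity with the surrounding statements) is also accurate.
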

\noindent The proof of this result is obvious and only uses the fact
that the matrices in the cocycle defining the operator (see
section~\ref{sec:estim-growth-eigenf}) are bounded that is,
equivalently, that the solutions to the Schr{\"o}dinger equation grow at
most exponentially at a rate controlled by the potential.
\vskip.2cm\noindent Let us return to the resonances in the random case
and the description of the function $S_L$. Recall that
in~\eqref{eq:1}, the values $(\lambda_j)_j$ are the eigenvalues
$(E_{j,\omega})_{0\leq j\leq L}$ of $H_{\omega,L}$ and the coefficients
$(a_j^\bullet)_j$ are defined in Theorem~\ref{thr:11} and
by~\eqref{eq:145}. Thus, Theorem~\ref{thr:10} describes the
coefficients $(a^\bullet_j)_j$ coming into $S_L$ and $\Gamma_L$
(see~\eqref{eq:1} and~\eqref{eq:135}). Let us now state a
few consequences of Theorem~\ref{thr:10}.\\
Fix $I$ a compact interval in $\Sigma$ the almost sure spectrum of
$H_\omega$. For $\bullet\in\{\N,\Z\}$, define
\begin{equation}
  \label{eq:188}
  d^\bullet_{j,\omega}=
  \begin{cases}
    L-x_{j,\omega}&\text{for }\bullet=\N,\\
    \min(x_{j,\omega},L-x_{j,\omega})&\text{for }\bullet=\Z.
  \end{cases}
\end{equation}
Taking $p>2$ in Theorem~\ref{thr:10} and using Borel-Cantelli
argument, we obtain that
\begin{equation}
  \label{eq:104}
  \begin{split}
    \omega\text{ almost surely, for }\delta>0 \text{ and }L\text{
      sufficiently large, if }\lambda_j=E_{j,\omega}\in I\\ \text{ and }
    d^\bullet_{j,\omega}\geq C\log L\text{ then }
    -2\rho(\lambda_j)-\delta \leq\frac{\log
      a^\bullet_j}{d^\bullet_{j,\omega}} \leq
    -2\rho(\lambda_j)+\delta.
  \end{split}
\end{equation}
This and the continuity of the Lyapunov exponent (see,
e.g.,~\cite{MR88f:60013,MR1102675,MR94h:47068}) guarantees that
\begin{equation}
  \label{eq:106}
  \omega\text{ almost surely, for any }\delta>0 \text{ and }L\text{
    large, one has }-2\eta_\bullet\sup_{E\in I}\rho(E)(1+\delta)L
  \leq\inf_{\lambda_j\in I}\log a^\bullet_j
\end{equation}
where $\eta_\bullet$ is defined in Theorem~\ref{thr:4}.\\
To use the analysis performed in section~\ref{sec:char-reson}, we also
need a description for the $(\lambda_j)_j$, i.e., the Dirichlet
eigenvalues of $H_{\omega,L}$. Therefore, we will use the results
of~\cite{Ge-Kl:10},~\cite{MR2775121} and~\cite{Kl:10a} (see
also~\cite{MR2885251}).\\
We first recall the Minami estimate satisfied by $H_{\omega,L}$ (see,
e.g.,~\cite{MR2505733} and references therein): there exists $C>0$ such
that, for $I\subset\R$, one has
\begin{equation*}
  \begin{split}
    \pro\left(\tr(\car_I(H_{\omega,L}))\right)\geq2)&\leq
    \esp\left(\tr(\car_I(H_{\omega,L}))[\tr(\car_I(H_{\omega,L}))-1])\right)
    \\&\leq C|I|^2(L+1)^2.
  \end{split}
\end{equation*}
Here, $\car_I(H)$ denotes the spectral projector for the self-adjoint
operator $H$ onto the energy interval $I$.\\
By a simple covering argument, this entails the following estimate
\begin{equation*}
  \pro\left(\exists i\not=j\text{ s.t. }|\lambda_i-\lambda_j|\leq
    L^{-q}\right)\leq C L^{-q+2}.
\end{equation*}
Thus, for $q>3$, a Borel-Cantelli argument yields, that
\begin{equation}
  \label{eq:105}
  \omega\text{ almost surely,   for }L\text{ sufficiently large,
  }\min_{i\not=j}|\lambda_i-\lambda_j|\geq L^{-q}.  
\end{equation}
\subsection{The proofs of the main results in the random case}
\label{sec:proofs-theorems}
We are now going to prove the results stated in
section~\ref{sec:random-case}.
\subsubsection{The proof of Theorem~\ref{thr:4}}
\label{sec:proof-theor-refthr:4}
As for Theorem~\ref{thr:5}, this result follows from
Theorem~\ref{thr:13}. The point (1) is proved exactly as the point (1)
in Theorem~\ref{thr:5}. Point (2) follows immediately from
Theorem~\ref{thr:13} and~\eqref{eq:106}. This completes the proof of
Theorem~\ref{thr:4}.
\subsubsection{The proof of Theorem~\ref{thr:6}}
\label{sec:proof-theor-refthr:6}
Recall that $\kappa\in(0,1)$. To prove (1) we proceed as follows. The
standard result guaranteeing the existence of the density of states
$N$ (see, e.g.,~\cite{MR88f:60013,MR1102675,MR94h:47068}) imply that,
$\omega$ almost surely, one has
\begin{equation}
  \label{eq:107}
  \frac{\#\{\lambda_j\in I\}}{L+1}\to\int_IdN(E).
\end{equation}
This, in particular, shows that, if $I\subset\overset{\circ}{\Sigma}$
is a compact interval, then, $\omega$ almost surely, for $L$
sufficiently large, $I$ is covered by intervals of the form
$[\lambda_j,\lambda_{j+1}]$ and their number is of size $\asymp L$
(actually this holds for $\lambda_j\in I+[-\varepsilon,\varepsilon]$ if
$\varepsilon>0$ is chosen small enough). Moreover, the
estimate~\eqref{eq:105} guarantees that $d_j\geq L^{-q}$ (for any
$q>3$ fixed) for all $\lambda_j\in I$. Thus,
Theorems~\ref{thr:13},~\ref{thr:14} and~\ref{thr:12} and the
estimate~\eqref{eq:104} guarantee that, $\omega$ almost surely, all
the resonances in the strip $I-i[e^{-L^\kappa},0)$ are described by
Theorem~\ref{thr:12}. Indeed, for such a resonance the imaginary part
must be larger than $-e^{-L^\kappa}$; thus, by Theorem~\ref{thr:13},
for every rectangle $[(\lambda_j+\lambda_{j-1})/2,(\lambda_j
+\lambda_{j+1})/2]-i[e^{-L^\kappa},0)$ containing a resonance, one has
$a_j\lesssim e^{-L^\kappa}L^{2q}$ Thus, $a_j\ll d_j^2$ and one can
apply Theorem~\ref{thr:12} to compute the resonance.\\
Let us count the number of those resonances. Therefore, let
$\ell_L=\tau L^\kappa$ where $\tau$ is to be chosen. By~\eqref{eq:104}
and~\eqref{eq:105}, $\omega$ almost surely, one has $a_j \ll d^2_j$
for all $j$ such that $\lambda_j\in I$ as long as the Dirichlet
eigenvalue $\lambda_j$ is associated to a localization center in
$\llbracket 0, L-\ell_L\rrbracket$ (actually it holds for
$\lambda_j\in I+[-\varepsilon,\varepsilon]$ if $\varepsilon>0$ is chosen
small enough); thus, we can apply Theorems~\ref{thr:12}
and~\ref{thr:14} to each of the $(\lambda_j)_j$ that are associated to
a localization center in $\llbracket 0, L-\ell_L\rrbracket$. By
formula~\eqref{eq:31}, each of these eigenvalues gives rise to a
single simple resonance the imaginary part of which is of size $\asymp
a_j$; it lies above the line $\{$Im$z\geq
e^{-\rho\ell_L}=e^{-L^\kappa}\}$ for $\tau\rho=1$.  Actually, the
estimate~\eqref{eq:105} guarantees that $d_j\geq L^{-q}$ (for any
$q>3$ fixed) and Theorem~\ref{thr:14} shows that these resonances are
the only ones above a line Im$z\geq-L^{-q}$. Moreover, by
Lemma~\ref{le:3}, we know there at most $C\ell_L$ eigenvalues
$\lambda_j$ that do not have their localization center in $\llbracket
0, L-\ell_L\rrbracket$. Thus, we obtain, $\omega$ almost surely,
\begin{equation*}
  \lim_{L\to+\infty}
  \frac1L\#\left\{z\text{ resonance of }H_{\omega,L}\text{
      s.t. Re}\,z\in I,\ \text{Im}\,z\geq -e^{-L^\kappa}\right\}=
  \int_IdN(E).
\end{equation*}
Point (2) is proved in the same way. Pick $\lambda\in(0,1)$. In
addition to what was used above, one uses the continuity of the
density of states $E\mapsto n(E)$ and Lyapunov exponent
$E\mapsto\rho(E)$. Assume $E$ is as in point (2). Then, $\omega$
almost surely, the reasoning done above shows that, for any $\eta>0$,
there exists $\varepsilon_0>0$ such that, for
$\varepsilon\in(0,\varepsilon_0)$ and $\delta\in(0,\delta_0)$, for $L$
sufficiently large one has,
\begin{multline*}
  \#\left\{
    \begin{aligned}
      \lambda_l \text{ e.v of }H^\N_{\omega,L}\text{ in }
      E+\frac{\varepsilon}{2\,n(E)}\left[-1+\eta, 1-\eta\right]\text{
        such }\\\text{ that } - e^{\eta_\bullet\rho(E)\delta L}
      \lesssim e^{2\eta_\bullet\rho(E)\lambda\,L} a_l\lesssim
      -e^{-\eta_\bullet\rho(E)\delta L}
    \end{aligned}
  \right\}\\
  \leq \#\left\{z\text{ resonance of }H^\bullet_{\omega,L}\text{ in }
    R^\bullet(E,\lambda,L,\varepsilon,\delta) \right\}\\\leq \#\left\{
    \begin{aligned}
      \lambda_l \text{ e.v of }H^\N_{\omega,L}\text{ in }
      E+\frac{\varepsilon}{2\,n(E)}\left[-1-\eta, 1+\eta\right]\text{
        such }\\\text{ that } -e^{\eta_\bullet\rho(E)\delta L}
      \lesssim e^{2\eta_\bullet\rho(E)\lambda\,L} a_l\lesssim
      -e^{-\eta_\bullet\rho(E)\delta L}
    \end{aligned}
  \right\}
\end{multline*}
Using Theorem~\ref{thr:10} and the continuity of the Lyapunov exponent
in conjunction with the definition of $a_j$ (see~\eqref{eq:1}
and~\eqref{eq:145}), we obtain that, $\omega$ almost surely, for any
$\eta>0$, there exists $\varepsilon_0>0$ such that, for
$\varepsilon\in(0,\varepsilon_0)$ and $\delta\in(0,\delta_0)$, for $L$
sufficiently large one has,
\begin{multline*}
  \#\left\{
    \begin{aligned}
      \text{e.v of }H^\N_{\omega,L}\text{ in }
      E+\frac{\varepsilon}{2\,n(E)}\left[-1+\eta, 1-\eta\right]
      \\\text{ with localization center in } I^\bullet(L,\delta,-\eta)
    \end{aligned}
  \right\}\\
  \leq \#\left\{z\text{ resonance of }H^\bullet_{\omega,L}\text{ in }
    R^\bullet(E,\lambda,L,\varepsilon,\delta) \right\}\\\leq \#\left\{
    \begin{aligned}
      \text{e.v of }H^\N_{\omega,L}\text{ in }
      E+\frac{\varepsilon}{2\,n(E)}\left[-1-\eta,
        1+\eta\right]\\\text{ with localization center in }
      I^\bullet(L,\delta,\eta)
    \end{aligned}
  \right\}
\end{multline*}
where $I^\N(L,\lambda,\delta,\eta)$ is the interval (here $[r]$
denotes the integer part of $r\in\R$)
\begin{gather*}
  I^\N(L,\lambda,\delta,\eta)= [L\lambda]+\llbracket -L\delta(1+\eta),
  L\delta(1+\eta)\rrbracket\\\intertext{and,
    $I^\Z(L,\lambda,\delta,\eta)$ is the union of intervals}
  \begin{aligned}
    I^\Z(L,\lambda,\delta,\eta)&=
    \left(\left[\frac{L\lambda}2\right]+\llbracket
      -L\delta(1+\eta),L\delta(1+\eta))\rrbracket\right)\\
    &\hskip1.5cm\cup \left(
      \left[L\left(1-\frac{\lambda}2\right)\right]+
      \llbracket-L\delta(1+\eta)),L\delta(1+\eta))\rrbracket\right).
  \end{aligned}
\end{gather*}
Now, using the exponential localization of the eigenfunctions, one has
that, $\omega$ almost surely, for any $\eta>0$, there exists
$\varepsilon_0>0$ such that, for $\varepsilon\in(0,\varepsilon_0)$ and
$\delta\in(0,\delta_0)$, for $L$ sufficiently large, one has
\begin{multline}
  \label{eq:191}
  \#\left\{\text{e.v of
    }H^\N_{\omega,L,\lambda,\delta,-2\eta,\bullet}\text{ in }
    E+\frac{\varepsilon}{2\,n(E)}\left[-1+2\eta, 1-2\eta\right]
  \right\}\\
  \leq \#\left\{z\text{ resonance of }H^\bullet_{\omega,L}\text{ in }
    R^\bullet(E,\lambda,L,\varepsilon,\delta) \right\}\\\leq \#\left\{
    \text{e.v of }H^\N_{\omega,L,\lambda,\delta,2\eta,\bullet}\text{
      in } E+\frac{\varepsilon}{2\,n(E)}\left[-1-2\eta, 1+2\eta\right]
  \right\}
\end{multline}
where $\D H^\N_{\omega,L,\lambda,\delta,\eta,\bullet}=
\left(H^\N_{\omega,L}\right)_{|I^\bullet(L,\lambda,\delta,\eta)}$ with
Dirichlet boundary conditions at the edges of the interval
$I^\bullet(L,\lambda,\delta,\eta)$.\\
This immediately yields point (2) for $\lambda\in(0,1)$
using~\eqref{eq:107} for the operators
$H^\N_{\omega,L,\lambda,\delta,\eta,\bullet}$. The
case $\lambda=1$ is dealt with in the same way. \\
As already said, point (3) is an ``integrated'' version of point (2).
Using the same ideas as above, partitioning $I=\cup_{p=0}^P I_p$
s.t. $|I_p|\sim\varepsilon$ centered in $E_p$, one proves
\begin{multline*}
  \sum_{p=0}^P\#\left\{\text{e.v of }H^-_{\omega,p,L,\bullet}\text{ in
    } E_p+\frac{\varepsilon}{2\,n(E_p)}\left[-1+2\eta, 1-2\eta\right]
  \right\}\\
  \leq \#\left\{z\text{ resonance of }H^\bullet_{\omega,L}\text{ in }
    I+\left[-e^{-L^\kappa},-e^{-cL} \right]\right\}\\\leq \sum_{p=0}^P
  \#\left\{ \text{e.v of }H^+_{\omega,p,L,\bullet}\text{ in }
    E_p+\frac{\varepsilon}{2\,n(E_p)}\left[-1-2\eta, 1+2\eta\right]
  \right\}
\end{multline*}
where
\begin{itemize}
\item $H^-_{\omega,p,L,\bullet}$ is the operator $H^\N_\omega$
  restricted to
  \begin{itemize}
  \item $\llbracket 2L^\kappa,
    (\inf(c\rho^{-1}(E_p),1)-\eta)L\rrbracket$ if $\bullet=\N$,
  \item to $\llbracket 2L^\kappa,(\inf(c\rho^{-1}(E_p),1)/2-\eta)L
    \rrbracket\cup \llbracket (1-\inf(c\rho^{-1}(E_p),1)/2+\eta)L,
    L-2L^\kappa \rrbracket$ if $\bullet=\Z$;
  \end{itemize}
\item $H^+_{\omega,p,L,\bullet}$ is the operator $H^\N_\omega$
  restricted to
  \begin{itemize}
  \item $\llbracket L^\kappa/2,
    (\inf(c\rho^{-1}(E_p),1)+\eta)L\rrbracket$ if $\bullet=\N$,
  \item to $\llbracket L^\kappa/2,(\inf(c\rho^{-1}(E_p),1)/2+\eta)L
    \rrbracket\cup \llbracket (1-\inf(c\rho^{-1}(E_p),1)/2-\eta)L,
    L-L^\kappa/2 \rrbracket$ if $\bullet=\Z$;
  \end{itemize}
\end{itemize}
In the computation above, we used the continuity of both, the density
of states $E\mapsto n(E)$ and Lyapunov exponent
$E\mapsto\rho(E)$. Thus, we obtain
\begin{multline*}
  \#\left\{z\text{ resonance of }H^\bullet_{\omega,L}\text{ in }
    I+\left(-\infty,e^{-cL} \right]\right\}\\=
  L\left(\sum_{p=0}^P\inf(c\rho^{-1}(E_p),1)n(E_p)|I_p|+
    o(1)\right)\\+\#\left\{z\text{ resonance of
    }H^\bullet_{\omega,L}\text{ in }
    I+\left(-\infty,e^{-L^\kappa}\right]\right\}.
\end{multline*}
The last term being controlled by Theorem~\ref{thr:27}, one obtains
point (3) as the Riemann sum in the right hand side above converges to
the integral in the right hand side of~\eqref{eq:192} as
$\varepsilon\to0$. This completes the proof of
Theorem~\ref{thr:6}.\qed
\subsubsection{The proof of Theorem~\ref{thr:7}}
\label{sec:proof-theor-refthr:7}
The proof of Theorem~\ref{thr:7} relies on~\cite[Theorem
1.13]{Ge-Kl:10} which describes the local distribution of the
eigenvalues and localization centers $(E_{j,\omega},x_{j,\omega})$:
namely, one has
\begin{equation}
  \label{eq:108}
  \lim_{L\to+\infty}
  \pro\left(\left\{\omega;\
      \begin{aligned}
        &\#\left\{n;
          \begin{aligned}
            E_{j,\omega}&\in E+L^{-1}I_1 \\ x_{j,\omega}&\in L\,
            C_1 \end{aligned}
        \right\}=k_1\\&\hskip1cm\vdots\hskip2cm\vdots\\
        &\#\left\{n;
          \begin{aligned}
            E_{j,\omega}&\in E+L^{-1} I_p \\ x_{j,\omega}&\in L\, C_p
          \end{aligned}
        \right\}=k_p
      \end{aligned}
    \right\}\right)=
  \prod_{n=1}^pe^{-\tilde\mu_n}
  \frac{(\tilde\mu_n)^{k_n}}{k_n!}
\end{equation}
where $\tilde\mu_n:=n(E)|I_n||C_n|$ for $1\leq n\leq p$.\\
Recall that $(z_j^L(\omega))_j$ are the resonances of
$H_{\omega,L}$. By the argument used in the proof of
Theorem~\ref{thr:6}, we know that, $\omega$ almost surely, all the
resonances in $\D K_L:=[E-\varepsilon,E+\varepsilon]+
i\left[-e^{-L^\kappa},0\right]$ are constructed from the
$(\lambda^\bullet_j,a^\bullet_j)$ by formula~\eqref{eq:31}.  Thus, up
to renumbering, the rescaled real and imaginary parts
(see~\eqref{eq:10}) become
\begin{equation*}
  \begin{aligned}
    x_j&=(\text{Re}\,z_{l,L}^\bullet(\omega)-E)L=(\lambda_j-E)L+O(L
    a_j)=(E_{j,\omega}-E)L+O(Le^{-L^\kappa}) \\
    y_j&=-\frac1{2L}\log|\text{Im}\,z_{l,L}^\bullet(\omega)|
    =-\frac{\log a^\bullet_j}{2L}+O(1/L)=\rho(E)
    \frac{d^\bullet_{j,\omega}}{L}+o(1).
  \end{aligned}
\end{equation*}
where $\lambda_j=E_{j,\omega}$ and $x_{j,\omega}$ is the associated
localization center; here we used the continuity of
$E\mapsto\rho(E)$. \\
On the other hand, for the resonances below the line in
$\{$Im$\,z=-e^{-L^\kappa}\}$, one has $y_j\lesssim L^{\kappa-1}$. So
all these resonances are ``pushed upwards'' towards the upper
half-plane. Hence, the statement of Theorem~\ref{thr:7} is an
immediate consequence of~\eqref{eq:108}.\qed
\subsubsection{The proof of Theorem~\ref{thr:8}}
\label{sec:proof-theor-refthr:8}
Using the computations of the previous section, as $E\not=E'$,
Theorem~\ref{thr:8} is a direct consequence of~\cite[Theorem
1.2]{MR2775121} (see also~\cite[Theorem 1.11]{Ge-Kl:10}).
\subsubsection{The proof of Theorem~\ref{thr:27}}
\label{sec:proof-theor-refthr:27}
Consider equations~\eqref{eq:1} and~\eqref{eq:135}. By
Theorem~\ref{thr:10} and Lemma~\ref{le:3}, $\omega$ almost surely, for
$L$ large, the number of $(a_j^\bullet)_j$ larger than $e^{-10\ell_L}$
is bounded by $C\ell_L$. Solving~\eqref{eq:1} and~\eqref{eq:135} in
the strip $\{$Re$\,E\in I,\ $Im$\,E<-e^{-\ell_L}\}$, we can write
$S_L(E)=S^-_L(E)+S^+_L(E)$ where
\begin{equation*}
  S^-_L(E):=\sum_{a^\N_j\leq e^{-10\ell_L}}
  \frac{a^\N_j}{\lambda_j-E}\quad\text{and}\quad 
  S^+_L(E):= \sum_{a^\N_j>e^{-10\ell_L}}\frac{a^\N_j}{\lambda_j-E}
\end{equation*}
and similarly decompose
$\D\Gamma_L(E)=\Gamma^-_L(E)+\Gamma^+_L(E)$. For $L$ large, one then
has
\begin{equation}
  \label{eq:199}
  \sup_{\text{Im}\,E<-e^{-\ell_L}}\|S^-_L(E)\|+\|\Gamma^-_L(E)\|\leq
  e^{-8\ell_L}.
\end{equation}
\begin{wrapfigure}{r}{.35\textwidth}
  \centering
  \includegraphics[width=.30\textwidth]{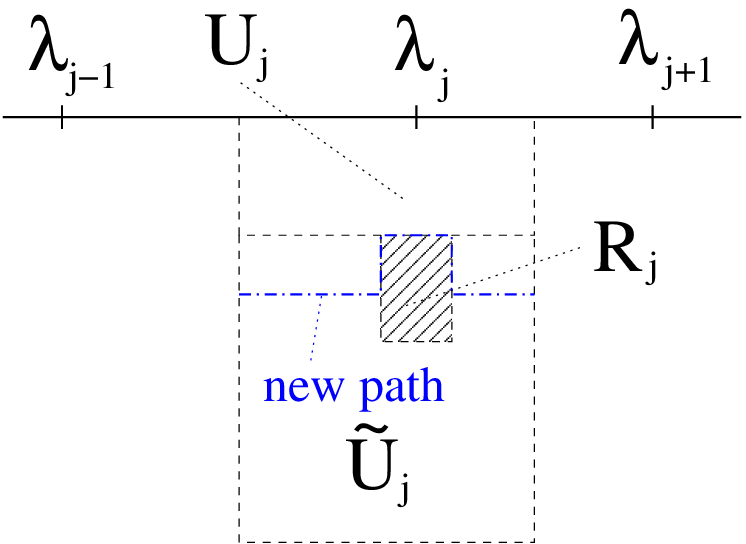}
  \caption{The new path}
  \label{fig:7}
\end{wrapfigure}
\noindent The count of the number of resonances given by the proof of
Theorems~\ref{thr:11} and~\ref{thr:24} then shows that the
equations~\eqref{eq:1} and~\eqref{eq:135} where $S_L$ and $\Gamma_L$
are respectively replaced by $S^+_L$ and $\Gamma^+_L$ have at most
$C\ell_L$ solutions in the lower half plane. The equations where $S_L$
and $\Gamma_L$ are replaced by $S^+_L$ and $\Gamma^+_L$ we will call
the $+$-equations.  The analogue of
Theorems~\ref{thr:13},~\ref{thr:14} and~\ref{thr:12} for the
$+$-equations and Theorem~\ref{thr:10} show that the only solutions to
the $+$-equations in the strip $\{$Re$\,E\in I,\
-e^{-4\ell_L/5}<$Im$\,E<-e^{-3\ell_L/4}\}$ are given by
formulas~\eqref{eq:31} and~\eqref{eq:149} for the eigenvalues of the
Dirichlet problem associated to a localization center in $\llbracket
L-2\ell_L,L-\ell_L/2\rrbracket$ if $\bullet=\N$ and in $\llbracket
\ell_L/2,2\ell_L\rrbracket\cup\llbracket
L-2\ell_L,L-\ell_L/2\rrbracket$ if $\bullet=\Z$. Thus, these zeros are
simple and separated by a distance at least $L^{-4}$ from each other
(recall~\eqref{eq:105}). Moreover, we can cover the interval $I$ by
intervals of the type $[(\lambda_j+\lambda_{j-1})/2,
(\lambda_j+\lambda_{j+1})/2]$, that is, one can write
\begin{equation}
  \label{eq:198}
  I\subset\bigcup_{j=j^-}^{j^+}
  \left[\frac{\lambda_j+\lambda_{j-1}}2,
    \frac{\lambda_j+\lambda_{j+1}}2\right]
\end{equation}
where $\lambda_{j^--1}\not\in I$, $\lambda_{1+j^+}\not\in I$,
$\lambda_{j^-}\in I$ and $\lambda_{j^+}\in I$.
Consider now the line $\{$Im$\,E=-e^{-\ell_L}\}$ and its intersection
with the vertical strip $[(\lambda_j+\lambda_{j-1})/2,
(\lambda_j+\lambda_{j+1})/2]-i\R^+$. Three things may occur:
\begin{enumerate}
\item either $e^{-\ell_L}<a_jdj^2|\sin\theta(\lambda_j)|/C$ (the
  constant $C$ is defined in Theorem~\ref{thr:13}), then, on the
  interval
  $[(\lambda_j+\lambda_{j-1})/2,(\lambda_j+\lambda_{j+1})/2]-ie^{-\ell_L}$,
  one has
  \begin{equation}
    \label{eq:195}
    \left|S^+_L(E)+e^{-i\theta(E)}\right|\gtrsim 1\quad\text{ and }\quad
    \D\left|\text{det}\left(\Gamma^+_L(E)+e^{-i\theta(E)}\right)
    \right|\gtrsim 1 ; 
  \end{equation}
  this follows from the proof of Theorem~\ref{thr:13} (see in
  particular~\eqref{eq:117}, \eqref{eq:146},~\eqref{eq:147}
  and~\eqref{eq:194}) for some fixed $c>0$; recall that, on the
  interval $I+ie^{-\ell_L}$, one has $|\sin\theta(E)|\gtrsim 1$;
\item either $e^{-\ell_L}>C a_j$ (the constant $C$ is defined in
  Theorem~\ref{thr:14}), then, on the interval
  $[(\lambda_j+\lambda_{j-1})/2,(\lambda_j+\lambda_{j+1})/2]-ie^{-\ell_L}$,
  one has again~\eqref{eq:195} for a possibly different constant; this
  follows from the proof of Theorem~\ref{thr:14} (see in
  particular~\eqref{eq:196} and~\eqref{eq:197});
\item if we are neither in case (1) nor in case (2), then the line
  $\{$Im$\,E=-e^{-\ell_L}\}$ may cross $R_j$ (defined in
  Theorem~\ref{thr:12}; see also Fig.~\ref{fig:6}); we change the
  contour $\{$Im$\,E=-e^{-\ell_L}\}$ so as to enter $\tilde U_j$ until
  we reach the boundary of $R_j$ and then follow this boundary getting
  closer to the real axis, turning around $R_j$ and finally reaching
  the line $\{$Im$\,E=-e^{-\ell_L}\}$ again on the other side of $R_j$
  and following it up to the boundary of $\tilde U_j$ (see
  Figure~\ref{fig:7}); on this new line, the bound~\eqref{eq:195}
  again holds; moreover, this new line is closer to the real axis than
  the line $\{$Im$\,E=-e^{-\ell_L}\}$.
\end{enumerate}
\vskip.2cm\noindent Let us call $\mathcal{C}_\ell$ the path obtained
by gluing together the paths constructed in points (1)-(3) for
$j^-\leq j\leq j^+$ and the half-lines $\frac{\lambda_{j^-}+\lambda_{j^--1}}2
-i[e^{-\ell_L},+\infty)$ and $\frac{\lambda_{j^+}+\lambda_{j^++1}}2
-i[e^{-\ell_L},+\infty)$ (see~\eqref{eq:198}).  One can then apply
Rouch{\'e}'s Theorem to compare the $+$ equations to the
equations~\eqref{eq:1} and~\eqref{eq:135}: by~\eqref{eq:199}
and~\eqref{eq:195}, on the line $\mathcal{C}_\ell$, one has
$\D\left|S_L^-\right|<\left|S_L^++e^{-i\theta}\right|$ and
\begin{equation*}
  \left|\text{det}\left(\Gamma_L(E)+e^{-i\theta(E)}\right)
    \text{det}\left(\Gamma^+_L(E)+e^{-i\theta(E)}\right)
  \right|\leq\frac12\left| \text{det}\left(\Gamma_L(E)+e^{-i\theta(E)}
    \right)\right|.    
\end{equation*}
Thus, the number of solutions to equations~\eqref{eq:1}
and~\eqref{eq:135} below the line $\mathcal{C}_\ell$ is bounded by
$C\ell_L$; as $\mathcal{C}_\ell$ lies above
$\{$Im$\,E=-e^{-\ell_L}\}$, in the half-plane
$\{$Im$\,E<-e^{-\ell_L}\}$, the equations~\eqref{eq:1}
and~\eqref{eq:135} have at most $C\ell_L$ solutions.  We have proved
Theorem~\ref{thr:27}.\qed
\subsubsection{The proof of Theorem~\ref{thr:9}}
\label{sec:proof-theor-refthr:9}
The first point in Theorem~\ref{thr:9} is proved in the same way as
point (2) in Theorem~\ref{thr:6} up to the change of scales, $L$ being
replaced by $\ell_L$. Pick scales $(\ell'_L)_L$
satisfying~\eqref{eq:14} such that $\ell'_L\ll\ell_L$. One has
\begin{Le}
  \label{le:4}
  Fix two sequences $(a_L)_L$ and $(b_L)_L$ such that $a_L<b_L$. With
  probability one, for $L$ sufficiently large,
  \begin{multline*}
    \#\left\{\text{e.v. of }H_{\omega, \ell_L-2\ell'_L/\rho}\text{ in
      }\left[a_L+e^{-\ell'_L},
        b_L-e^{-\ell'_L}\right]\right\}\\\leq\#\left\{ \text{e.v. of
      }H_{\omega,L}\text{ in }[a_L,b_L] \text{ with loc. cent. in
      }\llbracket0,\ell_L\rrbracket \right\}\\\leq
    \#\left\{\text{e.v. of }H_{\omega, \ell_L+2\ell'_L/\rho}\text{ in
      }\left[a_L-e^{-\ell'_L}, b_L+e^{-\ell'_L}\right]\right\}
  \end{multline*}
  where $\rho$ is given by Lemma~\ref{le:2}.
\end{Le}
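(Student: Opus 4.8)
The plan is to set $m_L:=\ell_L-\lceil2\ell'_L/\rho\rceil$ and $M_L:=\ell_L+\lceil2\ell'_L/\rho\rceil$ (so that, by~\eqref{eq:14}, $m_L\to+\infty$, $m_L<\ell_L<M_L$ and $M_L\leq L$ for $L$ large), and to compare $H_{\omega,L}$ with the smaller boxes $H_{\omega,m_L}$ and $H_{\omega,M_L}$ by moving eigenfunctions from one box to another: by truncation $\varphi\mapsto\car_{\llbracket0,M_L\rrbracket}\varphi$ for the upper bound, and by extension by zero $\psi\mapsto\tilde\psi$ for the lower bound. In both cases the residual $(H_{\omega,\bullet}-E)(\text{transferred vector})$ is supported at a single site adjacent to the right endpoint of the smaller box, so its norm is controlled by one value of the eigenfunction there; I would bound this value using the exponential decay provided by Lemma~\ref{le:2} applied at the scale $\ell'_L$ (here the buffer $2\ell'_L/\rho$ is what guarantees the decay has produced a factor $e^{-2\ell'_L}$ by the time one reaches the cut). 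Since $\ell'_L/\log L\to+\infty$, the probability $e^{-\ell'_L}$ that this decay fails for a given box is $\leq L^{-2}$ for $L$ large; applying Lemma~\ref{le:2} to the boxes of sizes $m_L$, $M_L$ and $L$, invoking Borel--Cantelli together with~\eqref{eq:105}, and recalling that a one-dimensional Dirichlet Jacobi matrix automatically has simple spectrum, I would restrict attention to an event of full probability on which, for $L$ large, all of this holds; note also that every prefactor arising below is of the form $L^{O(1)}e^{-2\ell'_L}$, hence $\ll e^{-\ell'_L}$ and $\ll L^{-N}$ for every $N$.

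For the upper bound, let $(\varphi_j)_{j\in\mathcal J}$ be the normalized eigenfunctions of $H_{\omega,L}$ with eigenvalue $\lambda_j\in[a_L,b_L]$ and localization center $x_j\in\llbracket0,\ell_L\rrbracket$, and set $\tilde\varphi_j:=\car_{\llbracket0,M_L\rrbracket}\varphi_j$. A direct computation gives $(H_{\omega,M_L}-\lambda_j)\tilde\varphi_j=-\varphi_j(M_L+1)\,\delta_{M_L}$, and $|M_L+1-x_j|\geq2\ell'_L/\rho$ together with Lemma~\ref{le:2} yields $\|(H_{\omega,M_L}-\lambda_j)\tilde\varphi_j\|\leq\sqrt L\,e^{-4\ell'_L}$; moreover the family $(\tilde\varphi_j)_{j\in\mathcal J}$ is orthonormal up to errors $O(L\,e^{-4\ell'_L})$, so it spans a subspace $V'$ of $\ell^2(\llbracket0,M_L\rrbracket)$ of dimension $\#\mathcal J$. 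I would then observe that on $V'$ the operator $H_{\omega,M_L}$ is within $L^{O(1)}e^{-2\ell'_L}=o(e^{-\ell'_L})$, in operator norm, of a self-adjoint operator whose spectrum lies in $[a_L,b_L]$; a variance estimate shows the spectral projection $\car_{[a_L-e^{-\ell'_L},\,b_L+e^{-\ell'_L}]}(H_{\omega,M_L})$ is injective on $V'$, whence $H_{\omega,M_L}$ has at least $\#\mathcal J$ eigenvalues in $[a_L-e^{-\ell'_L},b_L+e^{-\ell'_L}]$. This is the first inequality.

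For the lower bound, let $E$ be an eigenvalue of $H_{\omega,m_L}$ in $[a_L+e^{-\ell'_L},b_L-e^{-\ell'_L}]$ with normalized eigenfunction $\psi_E$, and let $\tilde\psi_E\in\ell^2(\llbracket0,L\rrbracket)$ be its extension by zero. Then $(H_{\omega,L}-E)\tilde\psi_E=\psi_E(m_L)\,\delta_{m_L+1}$ and $|\psi_E(m_L)|\leq\sqrt{m_L}\,e^{-2\ell'_L}$ by Lemma~\ref{le:2} (no gain beyond the prefactor is needed). Hence $\mathrm{dist}(E,\sigma(H_{\omega,L}))\ll L^{-q}$ for the $q>3$ of~\eqref{eq:105}, so there is a unique eigenvalue $\lambda(E)$ of $H_{\omega,L}$ with $|\lambda(E)-E|\leq L^{-q}/3$; expanding $\tilde\psi_E$ in the eigenbasis of $H_{\omega,L}$ and using the spectral gap~\eqref{eq:105}, the corresponding normalized eigenfunction $\varphi(E)$ satisfies $\min_{\pm}\|\varphi(E)\mp\tilde\psi_E\|=o(L^{-1})$. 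From this I would deduce three facts: $\lambda(E)\in[a_L,b_L]$; the localization center of $\varphi(E)$ lies in $\llbracket0,m_L\rrbracket\subset\llbracket0,\ell_L\rrbracket$, because $\varphi(E)$ is $o(L^{-1})$-close to a vector supported in $\llbracket0,m_L\rrbracket$ while $\max_{\llbracket0,m_L\rrbracket}|\varphi(E)|\gtrsim L^{-1/2}$; and the map $E\mapsto\lambda(E)$ is injective, since the $\tilde\psi_E$ form an orthonormal family ($H_{\omega,m_L}$ having simple spectrum) and two $E\neq E'$ with $\lambda(E)=\lambda(E')$ would give $\varphi(E)=\pm\varphi(E')$, hence $\|\tilde\psi_E\mp\tilde\psi_{E'}\|=o(1)$. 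Composing, $E\mapsto\lambda(E)$ injects the eigenvalues of $H_{\omega,m_L}$ in $[a_L+e^{-\ell'_L},b_L-e^{-\ell'_L}]$ into the eigenvalues of $H_{\omega,L}$ in $[a_L,b_L]$ with localization center in $\llbracket0,\ell_L\rrbracket$, which is the second inequality.

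The step I expect to be the main obstacle is not any single estimate --- all the exponential inputs are comfortable because the prefactors already beat every negative power of $L$ --- but the bookkeeping of localization centers through the correspondence: one must not merely count eigenvalues (which the min-max principle handles) but track the transferred eigenfunctions and their supports, which is precisely why the quantitative level spacing~\eqref{eq:105} of $H_{\omega,L}$ and the orthonormality of the finite-box eigenfunctions enter in an essential way.
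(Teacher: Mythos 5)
Your construction is in substance the paper's own: the truncation/zero-extension quasimode transfer between the boxes of sizes $\ell_L\pm2\ell'_L/\rho$ and $L$, the localization estimate of Lemma~\ref{le:2} at scale $\ell'_L$ to control the residual created at the cut, and the spacing~\eqref{eq:105} to make the correspondence injective. The upper-bound half (truncating the eigenfunctions of $H_{\omega,L}$ with localization centre in $\llbracket0,\ell_L\rrbracket$ into the box of size $\ell_L+2\ell'_L/\rho$) is sound: there the centre is, by hypothesis, at distance at least $2\ell'_L/\rho$ from the cut, so \eqref{eq:75} genuinely produces the extra decay factor, and the near-orthonormality and projection arguments that follow are routine.

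The lower-bound half has a genuine gap at the single estimate $|\psi_E(m_L)|\leq\sqrt{m_L}\,e^{-2\ell'_L}$, which you justify by Lemma~\ref{le:2} with ``no gain beyond the prefactor''. Lemma~\ref{le:2} only bounds an eigenfunction \emph{away from its localization centre}, and nothing constrains the centre of an eigenfunction of $H_{\omega,m_L}$ (here $m_L:=\ell_L-2\ell'_L/\rho$) to be far from the endpoint $m_L$; for an eigenfunction centred at or next to $m_L$ the centre-free bound you invoke is incompatible with normalization, since $\max_x|\psi_E(x)|\geq(m_L+1)^{-1/2}\gg\sqrt{m_L}\,e^{-2\ell'_L}$ when $\ell'_L\gg\log L$. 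For such cut-localized states --- and there are typically of order $\ell'_L$ of them, some with energies in $\left[a_L+e^{-\ell'_L},b_L-e^{-\ell'_L}\right]$ --- the residual $\psi_E(m_L)$ is not small, so neither the $O(L^{-q})$ distance to $\sigma(H_{\omega,L})$, nor the eigenvector comparison, nor the localization of the matched eigenfunction's centre in $\llbracket0,\ell_L\rrbracket$ follows; note that the buffer $2\ell'_L/\rho$ sits on the other side of $\ell_L$ and gives no separation between $m_L$ and an arbitrary centre in $\llbracket0,m_L\rrbracket$. Thus your argument establishes the first inequality only for small-box eigenvalues whose centre lies at distance $\gtrsim\ell'_L/\rho$ from the cut; the boundary-localized ones --- exactly the delicate population this lemma is designed to handle --- require a separate idea (either matching them, injectively, to eigenvalues of $H_{\omega,L}$ in $[a_L,b_L]$ with centre in $\llbracket0,\ell_L\rrbracket$, or controlling their number), and the proposal does not supply it. (The paper's own one-line proof of this direction rests on the same zero-extension claim, so you have reproduced its route; but as written, your justification of this step is the one that fails.)
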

\begin{proof}
  To prove Lemma~\ref{le:4}, we apply Lemma~\ref{le:2} to
  $L=\ell_L+\ell'_L$ (i.e. for the operator $H_\omega$ restricted to
  the interval $\llbracket0,\ell_L+\ell'_L\rrbracket$) and
  $l_L=\ell'_L$. The probability of the bad set is the
  $O\left(L^{-\infty}\right)$, thus, summable in $L$. Using the
  localization estimate~\eqref{eq:75}, one proves that
  \begin{itemize}
  \item each eigenvalue of $H_{\omega, \ell_L-2\ell'_L/\rho}$ is at a
    distance of at most $e^{-\ell'_L}$ of an eigenvalue of
    $H_{\omega,L}$ with loc. cent. in $\llbracket0,\ell_L\rrbracket$;
  \item each eigenvalue of $H_{\omega,L}$ with loc. cent. in
    $\llbracket0,\ell_L\rrbracket$ is at a distance of at most
    $e^{-\ell'_L}$ of an eigenvalue of $H_{\omega,
      \ell_L+2\ell'_L/\rho}$.
  \end{itemize}
  Lemma~\ref{le:4} follows.
\end{proof}
\noindent The first point in Theorem~\ref{thr:9} is then point (2) of
Theorem~\ref{thr:6} for the operator $H_{\omega,
  \ell_L-2\ell'_L/\rho}$ and $H_{\omega, \ell_L+2\ell'_L/\rho}$ and
the fact that $\ell'_L\ll\ell_L$.\vskip.1cm\noindent
The proof of the second statement in Theorem~\ref{thr:9} is very
similar to that of Theorem~\ref{thr:7}.  Fix $I$ a compact interval in
$\overset{\circ}{\Sigma}$. As $\ell_L$ satisfies~\eqref{eq:14}, one
can find $\ell'_L<\ell''_L$ also satisfying~\eqref{eq:14} such that
$\D e^{-\ell''_L}\ll e^{-\ell_L}\ll e^{-\ell'_L}$. For the same
reasons as in the proof of Theorem~\ref{thr:7}, after rescaling, all
the resonances in $\D I-i(-\infty,0)$ outside the strip $\D
I-i\left[e^{-\ell'_L}, e^{-\ell''_L}\right)$ are then pushed to either
$0$ or $i\infty$ as $L\to+\infty$.\\
On the other hand, the resonances in the strip $\D
I-i\left[e^{-\ell'_L},e^{-\ell''_L}\right)$ are described
by~\eqref{eq:31}. The rescaled real and imaginary parts of the
resonances (see~\eqref{eq:12}) now become
$x_j=(E_{j,\omega}-E)\ell_L+o(1)$ and
$y_j=\rho(E)\frac{d_{j,\omega}}{\ell_L}+o(1)$.\\
Now, to compute the limit of $\pro(\#\{j;\ x_j\in I,\ y_j\in J\}=k)$, using
the exponential decay property~\eqref{eq:75}, it suffices to
use~\cite[Theorem 1.14]{Ge-Kl:10}. Let us note here
that~\cite[condition (1.50)]{Ge-Kl:10} on the scales $(\ell_L)_L$ is
slightly stronger than~\eqref{eq:14}. That condition~\eqref{eq:14}
suffices is a consequence of the stronger localization property known
in the present case (compare Theorem~\ref{thr:10} to~\cite[Assumption
(Loc)]{Ge-Kl:10}). This completes the proof of the second point in
Theorem~\ref{thr:9}. The final statement in~\ref{thr:9} is proved in
exactly the same way as Theorem~\ref{thr:8}.\\
The proof of Theorem~\ref{thr:9} is complete.\qed
\subsubsection{The proofs of Proposition~\ref{pro:6} and
  Theorem~\ref{thr:21}}
\label{sec:proof-theor-refthr:21}
Localization for the operator $H^\N_\omega$ can be described by the
following
\begin{Le}
  \label{le:10}
  There exists $\rho>0$ and $q>0$ such that, $\omega$ almost surely,
  there exists $C_\omega>0$ s.t. for $L$ sufficiently large, if
  \begin{enumerate}
  \item $\varphi_{j,\omega}$ is a normalized eigenvector of
    $H_{\omega,L}$ associated to $E_{j,\omega}$ in $\Sigma$,
  \item $x_j(\omega)\in\N$ is a maximum of
    $x\mapsto|\varphi_{j,\omega}(x)|$ in $\N$,
  \end{enumerate}
  then, for $x\in\N$, one has
  \begin{equation}
    \label{eq:190}
    |\varphi_{j,\omega}(x)|\leq C_\omega (1+|x_j(\omega)|^2)^{q/2}
    e^{-\rho|x-x_j(\omega)|}.
  \end{equation}
  Moreover, the mapping $\omega\mapsto C_\omega$ is measurable and
  $\esp(C_\omega)<+\infty$.
\end{Le}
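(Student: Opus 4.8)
The plan is to deduce \eqref{eq:190} from the eigenfunction correlator estimate \eqref{FVdynloc1} of Theorem~\ref{thr:18}, the polynomial prefactor $(1+|x_j(\omega)|^2)^{q/2}$ being there to absorb both the summation over the possible positions of the localisation centre and a deterministic lower bound for the height of $\varphi_{j,\omega}$ at its maximum. Throughout, $\rho>0$ will be a small enough multiple of the rate appearing in Theorem~\ref{thr:18}. First (Step~1), by the Minami estimate recalled in section~\ref{sec:random-case-2}, $\omega$-almost surely every eigenvalue $E_{j,\omega}$ of $H_{\omega,L}$ is simple, so its spectral projection is $\varphi_{j,\omega}\otimes\varphi_{j,\omega}$ and, for all $x,y$,
\[
|\varphi_{j,\omega}(x)|\,|\varphi_{j,\omega}(y)|
=\left|\langle\delta_x,\car_{\{E_{j,\omega}\}}(H_{\omega,L})\delta_y\rangle\right|
\le\Phi_L(x,y;\omega):=\sup_{\substack{\mathrm{supp}\,f\subset\R\\|f|\le1}}
\left|\langle\delta_x,f(H_{\omega,L})\delta_y\rangle\right|\le1 .
\]
Put $Y_{L,n}(\omega):=\sum_{x}e^{\rho|x-n|}\Phi_L(x,n;\omega)$; then \eqref{FVdynloc1} gives $\esp\{Y_{L,n}\}\le C_0$ uniformly in $L\in\N\cup\{+\infty\}$ and $n$, and, since $0\le\Phi_L\le1$ and $\esp\{\Phi_L(x,n)\}\le C_0e^{-\rho_0|x-n|}$ (the rate $\rho_0$ being that of \eqref{FVdynloc1}), a Cauchy--Schwarz bound on $\Phi_L(x,n)\Phi_L(x',n)$ yields $\esp\{Y_{L,n}^2\}\le C_0'$ uniformly as well, provided $\rho<\rho_0/2$. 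From $\Phi_L(x,n;\omega)\le e^{-\rho|x-n|}Y_{L,n}(\omega)$, choosing $y=x_j(\omega)=:x_j$ gives
\[
|\varphi_{j,\omega}(x)|\le e^{-\rho|x-x_j|}\,\frac{Y_{L,x_j}(\omega)}{|\varphi_{j,\omega}(x_j)|},\qquad x\in\llbracket0,L\rrbracket .
\]

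Step~2 is a deterministic lower bound on $\epsilon:=|\varphi_{j,\omega}(x_j)|$ that uses only that $x_j$ maximises $|\varphi_{j,\omega}|$ and that $\|\varphi_{j,\omega}\|=1$; this is what breaks the dependence on the volume. Writing $Y:=Y_{L,x_j}(\omega)\ge\Phi_L(x_j,x_j;\omega)=1$, the bound of Step~1 gives, for every $R\ge0$,
\[
1=\sum_{x}|\varphi_{j,\omega}(x)|^2
\le(2R+1)\,\epsilon^2+\epsilon^{-2}Y^2\sum_{|x-x_j|>R}e^{-2\rho|x-x_j|}
\le(2R+1)\,\epsilon^2+C_1\,\epsilon^{-2}Y^2e^{-2\rho R}.
\]
Taking $R$ of order $\rho^{-1}\log(Y^2/\epsilon^2)$ makes the last term $O(\epsilon^2)$, whence $\epsilon^2\,(1+\log Y+\log\epsilon^{-1})\gtrsim1$; a short case analysis then yields $\epsilon\ge c_\rho\,(1+\log Y)^{-1/2}$, so that $Y_{L,x_j}(\omega)/|\varphi_{j,\omega}(x_j)|\le c_\rho^{-1}\bigl(1+Y_{L,x_j}(\omega)\bigr)^{2}$ and hence $|\varphi_{j,\omega}(x)|\le c_\rho^{-1}\bigl(1+Y_{L,x_j}(\omega)\bigr)^{2}e^{-\rho|x-x_j|}$.

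Step~3 produces the random constant. I would control $\bar Y_n(\omega):=\sup_{L\ge n}Y_{L,n}(\omega)$ uniformly in $n$: for $n\le L<2n$ bound the supremum over the at most $n$ values of $L$ by a sum and use the uniform second moment of Step~1, getting $\esp\{\max_{n\le L<2n}Y_{L,n}^2\}\lesssim n^2$; for $L\ge2n$ compare $\Phi_L(x,n;\omega)$ with $\Phi_{+\infty}(x,n;\omega)$ for the half-line operator $H^\N_\omega=H_{\omega,+\infty}$ via the geometric resolvent identity and the Combes--Thomas estimate, implemented through the Helffer--Sjöstrand formula exactly as in the proof of Lemma~\ref{le:1}, which gives $\sup_{L\ge2n}Y_{L,n}(\omega)\le Y_{+\infty,n}(\omega)+O(1)$ and therefore $\esp\{\bar Y_n^2\}\lesssim n^2$ uniformly in $n$. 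One may then set
\[
C_\omega:=2c_\rho^{-1}\sum_{n\ge0}(1+n^2)^{-q_0/2}\,\bigl(1+\bar Y_n(\omega)\bigr)^{2},
\]
which is measurable and, for $q_0>4$, satisfies $\esp(C_\omega)\le2c_\rho^{-1}\sum_n(1+n^2)^{-q_0/2}\esp\{(1+\bar Y_n)^2\}<\infty$; since $\bigl(1+Y_{L,x_j}(\omega)\bigr)^2\le\tfrac12 c_\rho\,(1+x_j^2)^{q_0/2}C_\omega$, Step~2 yields \eqref{eq:190} with $q:=q_0$.

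The delicate point — the main obstacle — is the $L$-uniform control of $\bar Y_n$ in Step~3, since the eigenfunctions of $H_{\omega,L}$ vary with $L$ and one must nonetheless exhibit a single constant $C_\omega$. An alternative, slightly cleaner route is to first prove the analogue of \eqref{eq:190} for the fixed operator $H^\N_\omega$ by the $n$-summation argument above (there is no volume there), and then transfer it to $H_{\omega,L}$ for $L\ge L_\omega$ using the approximation of the bulk eigenfunctions of $H_{\omega,L}$ by those of $H^\N_\omega$ as in Lemmas~\ref{le:2} and \ref{le:4}; the at most finitely many centres with $n\le N_\omega$ are absorbed into $C_\omega$, and for the rare modes with localisation centre $x_j(\omega)$ in the right half of the box (so that $L\le 2x_j(\omega)$ and $(1+x_j^2)^{q/2}\gtrsim L^{q/2}$) the crude bound $|\varphi_{j,\omega}(x_j)|\ge(L+1)^{-1/2}$ already suffices in place of Step~2.
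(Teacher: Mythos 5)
Your overall strategy is exactly what the paper has in mind: the paper disposes of this lemma in one line, as a consequence of Theorem~\ref{thr:18} and \cite[Theorem 6.1]{Ge-Kl:10}, and what you have written out --- the correlator bound $|\varphi_{j,\omega}(x)|\,|\varphi_{j,\omega}(y)|\le\Phi_L(x,y;\omega)$ from simplicity of the spectrum, the off-diagonal decay of $\Phi_L$ extracted from \eqref{FVdynloc1}, a deterministic lower bound on the peak of a normalized exponentially decaying vector, and a weighted summation over the possible localization centres to manufacture a $C_\omega$ with finite expectation --- is precisely the content of that cited theorem. Steps 1 and 2 are correct: the case analysis giving $|\varphi_{j,\omega}(x_j(\omega))|\gtrsim(1+\log Y)^{-1/2}$ does close (the regime $\log(1/\epsilon)>\log Y$ is excluded because $\epsilon^2\log(1/\epsilon)$ is bounded away from $1$ for $\epsilon$ small), and $q_0>4$ is more than enough for the final summation.

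The one step that does not survive scrutiny as written is the comparison $\sup_{L\ge2n}Y_{L,n}\le Y_{+\infty,n}+O(1)$ in Step 3. The Helffer--Sj{\"o}strand formula \eqref{hesj0} combined with Combes--Thomas, as used in the proof of Lemma~\ref{le:1}, compares $f(H_{\omega,L})$ with $f(H^\N_\omega)$ only for $f$ admitting an almost analytic extension, i.e.\ smooth compactly supported $f$; the eigenfunction correlator $\Phi_L(x,n;\omega)$ is a supremum over \emph{all} bounded Borel $f$, and for such $f$ (for instance the indicator of a single eigenvalue of $H_{\omega,L}$ that is not an eigenvalue of $H^\N_\omega$) the two operators are not close for fixed $\omega$, so no pointwise bound of the claimed form follows from this argument. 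Likewise, your uniform-in-$L$ second moment $\esp\{Y_{L,n}^2\}\le C_0'$ does not by itself control $\esp\{\sup_L Y_{L,n}^2\}$. Fortunately this is a non-issue for the lemma as it is actually used: the sentence introducing it (``Localization for the operator $H^\N_\omega$ can be described by the following'') and the representation \eqref{eq:110} immediately after it show that the statement concerns the fixed infinite-volume operator $H^\N_\omega=H_{\omega,+\infty}$ of Theorem~\ref{thr:18}. There is then no supremum over $L$ to control, the uniform-in-$n$ moment bound on $Y_{+\infty,n}$ feeds directly into your summation defining $C_\omega$, and the argument goes through verbatim. You should therefore promote your own ``alternative route'' to the main one and drop the $\bar Y_n$ construction.
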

\noindent This result for our model is a consequence of
Theorem~\ref{thr:18} (see, e.g.,~\cite{MR582611,MR883643,MR1102675})
and~\cite[Theorem 6.1]{Ge-Kl:10}.\\
We thus obtain the following representation for the function
$\Xi_\omega$
\begin{equation}
  \label{eq:110}
  \Xi_\omega(E)=\sum_{j}\frac{|\varphi_{j,\omega}(0)|^2}
  {E_{j,\omega}-E}+e^{-i\arccos(E/2)}
\end{equation}
As $H_\omega^\N$ satisfies a Dirichlet boundary condition at $-1$, one
has
\begin{equation}
  \label{eq:189}
  \forall j,\quad |\varphi_{j,\omega}(0)|>0\quad\text{and}\quad
  \sum_j|\varphi_{j,\omega}(0)|^2=1.
\end{equation}
As $E\to -i\infty$, the representation~\eqref{eq:110} yields
\begin{equation*}
  \begin{split}
    \Xi_\omega(E)&=-E^{-2}\sum_{j}|\varphi_{j,\omega}(0)|^2E_{j,\omega}
    +O\left(E^{-3}\right)=-E^{-2}\langle\delta_0,H_\omega^\N\delta_0\rangle
    +O\left(E^{-3}\right)\\&=-\omega_0\,E^{-2}+O\left(E^{-3}\right).
  \end{split}
\end{equation*}
This proves the first point in Proposition~\ref{pro:6}.\\
As a direct consequence of Theorem~\ref{thr:18} and the computation
leading to Theorem~\ref{thr:19} (see
section~\ref{sec:proof-theorem-7}), we obtain that there exists
$\tilde c>0$ s.t. for $L$ sufficiently large, with probability at
least $1-e^{-\tilde c L}$, one has
\begin{equation}
  \label{eq:143}
  \sup_{\text{Im}\,E\leq -e^{-\tilde cL}}
  \left|\int_\R\frac{dN_\omega(\lambda)}{\lambda-E}
    -\langle\delta_0,(H_{\omega,L}-E)^{-1}\delta_0\rangle
  \right|\leq e^{-\tilde c L}.
\end{equation}
Taking
\begin{equation}
  \label{eq:109}
  L=L_\varepsilon\sim c^{-1}|\log\varepsilon| 
\end{equation}
for some sufficiently small $c>0$, this and Rouch{\'e}'s Theorem implies
that, with probability $1-\varepsilon^3$, the number of zeros of
$\Xi_\omega$ (counted with multiplicity) in $I+i(-\infty,\varepsilon]$
is bounded
\begin{itemize}
\item from above by the number of resonances of
  $H_{\omega,L_\varepsilon}$ in
  $I_\varepsilon^++i(-\infty,-\varepsilon-\varepsilon^2]$;
\item from below by the number of resonances of
  $H_{\omega,L_\varepsilon}$ in
  $I_\varepsilon^-+i(-\infty,-\varepsilon+\varepsilon^2]$.
\end{itemize}
where $I_\varepsilon^+=[a-\varepsilon,b+\varepsilon]$ and
$I_\varepsilon^+=[a+\varepsilon,b-\varepsilon]$ if $I=[a,b]$.\\
Here, to apply Rouch{\'e}'s Theorem, we apply the same strategy as in the
proof of Theorem~\ref{thr:27} and construct a path bounding a region
larger (resp. smaller) than $I_\varepsilon^++i(-\infty,
-\varepsilon-\varepsilon^2]$ (resp. $I_\varepsilon^-+i(-\infty,
-\varepsilon+\varepsilon^2]$) on which one can guarantee
$\D\left|S_L(E)+e^{-i\theta(E)}\right|\gtrsim 1$. \\
Now, we choose the constant $c$ (see~\eqref{eq:109}) to be so small
that $\D c<\min_{E\in I}\rho(E)$. Applying point (3) of
Theorem~\ref{thr:6} to $H_{\omega,L_\varepsilon}$ with this constant
$c$, we obtain that the number of resonances of
$H_{\omega,L_\varepsilon}$ in
$I_\varepsilon^++i(-\infty,\varepsilon-\varepsilon^2]$
(resp. $I_\varepsilon^-+i(-\infty,\varepsilon+\varepsilon^2]$) is
upper bounded (resp lower bounded) by
\begin{equation*}
  \begin{split}
    L_\varepsilon\int_I\min\left(\frac{c}{\rho(E)},1\right)n(E)dE\,(1+O(1))
    &=\frac{|\log\varepsilon|}{c}\int_I\frac{c}{\rho(E)}n(E)dE\,(1+O(1))\\
    =&|\log\varepsilon|\int_I\frac{n(E)}{\rho(E)}dE\,(1+O(1)).
  \end{split}
\end{equation*}
Hence, we obtain the second point of Proposition~\ref{pro:6}. The last
point of this proposition is then an immediate consequence of the
arguments developed to obtain the second point if one takes into
account the following facts:
\begin{itemize}
\item the minimal distance between the Dirichlet eigenvalues of
  $H^\N_{\omega,L}$ is bounded from below by $L^{-4}$
  (see~\eqref{eq:105}),
\item the growth of the function $E\mapsto S_L(E)+e^{-i\theta(E)}$
  near the resonances (i.e. its zeros) is well controlled by
  Proposition~\ref{pro:7}.
\end{itemize}
Indeed, this implies that the resonances of $H^\N_{\omega,L}$ are
simple in $I+i[-e^{-\sqrt{L}},0)$ (one can choose larger rectangles)
and that near each resonance one can apply Rouch{\'e}'s Theorem to control
the zero of $\Xi_\omega$. Note that this also yields $\omega$-almost
surely, there exists $c_\omega$ such that
\begin{equation}
  \label{eq:203}
  \min_{\substack{z\text{ zero of }\Xi_\omega\\z\in
      I+i(-\varepsilon_\omega,0)}}
  \inf_{0<r<\varepsilon_\omega(\text{Im}\,z)^{3/2}}\min_{|E-z|=r}
  \frac{|\Xi_\omega(E)|}r\gtrsim 1.
\end{equation}
This completes the proof of
Proposition~\ref{pro:6}.\qed\vskip.2cm\noindent
Theorem~\ref{thr:21} is a consequence of the following
\begin{Th}
  \label{thr:28}
  There exists $\tilde c>0$ such that, $\omega$ almost surely, for
  $L\geq1$ sufficiently large one has
  \begin{equation*}
    \sup_{\substack{\text{Re}\,E\in I\\\text{Im}\,E<-e ^{-\tilde c L}}}
    \left|\Gamma_{L,\omega,\tilde\omega}(E)-
      \begin{pmatrix}\D \int_\R\frac{dN_{\tilde\omega}(\lambda)}{\lambda-E}&0\\
        0&\D \int_\R\frac{dN_\omega(\lambda)}{\lambda-E}
      \end{pmatrix}\right|+\left|S_{L,\omega}(E)-\int_\R
      \frac{dN_\omega(\lambda)}{\lambda-E}\right|\leq e^{-\tilde c L}
  \end{equation*}
  where $\Gamma_{L,\omega,\tilde\omega}(E)$ (resp.  $S_{L,\omega}(E)$)
  is the matrix $\Gamma_L(E)$ (resp. the function $S_L(E)$)
  (see~\eqref{eq:142}) constructed from the Dirichlet data on
  $\llbracket0,L\rrbracket$ of $-\Delta+V^\Z_{\omega,\tilde\omega,L}$
  (resp. $-\Delta+V^\N_{\omega,L}$) (see~\eqref{eq:200}) using
  formula~\eqref{eq:142} (resp.~\eqref{eq:1}).
\end{Th}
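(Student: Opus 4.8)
The plan is to run the argument behind Theorem~\ref{thr:19} (section~\ref{sec:proof-theorem-7}), the only genuinely new ingredient being that the Combes--Thomas estimates, which control resolvents only in half-planes $\{\text{Im}\,E<-\varepsilon\}$ with $\varepsilon$ fixed, must be replaced by the \emph{energy-uniform} localization bounds of Theorem~\ref{thr:18}; these are what let one come as close as $|\text{Im}\,E|\sim e^{-\tilde cL}$ to the real axis. First I would reduce everything to resolvent kernels. Write $H_L$ for the Dirichlet restriction to $\llbracket 0,L\rrbracket$ (with Dirichlet conditions at $0$ and $L$) of $-\Delta+\tilde V^\N_{\omega,L}$ (resp. $-\Delta+\tilde V^\Z_{\omega,\tilde\omega,L}$). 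Since the eigenfunctions of $H_L$ may be chosen real, \eqref{eq:142} and~\eqref{eq:1} give
\[
  S_{L,\omega}(E)=\langle\delta_L,(H_L-E)^{-1}\delta_L\rangle,\qquad
  \Gamma_{L,\omega,\tilde\omega}(E)=
  \begin{pmatrix}
    \langle\delta_L,(H_L-E)^{-1}\delta_L\rangle & \langle\delta_L,(H_L-E)^{-1}\delta_0\rangle\\
    \langle\delta_0,(H_L-E)^{-1}\delta_L\rangle & \langle\delta_0,(H_L-E)^{-1}\delta_0\rangle
  \end{pmatrix}.
\]
The anti-diagonal entry is a resolvent element between sites at distance $L$ and is dealt with directly below. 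For the $\delta_L$--anchored entry I would use the spatial reflection $R:(R\psi)(n)=\psi(L-n)$ on $\ell^2(\llbracket 0,L\rrbracket)$: the order reversal built into the definition~\eqref{eq:200} of $\tilde V^\bullet$ near the site $L$ is precisely designed so that $RH_LR^*$ coincides, near the site $0$, with the Dirichlet restriction to $\llbracket 0,M\rrbracket$ of the genuine Anderson half-line operator $H^\N_\omega=-\Delta+V_\omega$, where $M=L$ in the $\N$--case and $M\sim L/2$ in the $\Z$--case (in the latter $RH_LR^*$ carries the $\tilde\omega$--potential on the remaining sites). Since $R\delta_L=\delta_0$, this gives $\langle\delta_L,(H_L-E)^{-1}\delta_L\rangle=\langle\delta_0,((H^\N_\omega)_{|\llbracket 0,M\rrbracket}-E)^{-1}\delta_0\rangle$, and in the $\Z$--case the $\delta_0$--anchored entry equals $\langle\delta_0,((H^\N_{\tilde\omega})_{|\llbracket 0,[L/2]\rrbracket}-E)^{-1}\delta_0\rangle$.

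Next I would decouple to infinite volume exactly as in~\eqref{eq:157}: severing the bond between $M$ and $M+1$ replaces $\langle\delta_0,((H^\N_\omega)_{|\llbracket 0,M\rrbracket}-E)^{-1}\delta_0\rangle$ by $\langle\delta_0,(H^\N_\omega-E)^{-1}\delta_0\rangle=\int_\R\frac{dN_\omega(\lambda)}{\lambda-E}$ (the definition~\eqref{eq:114} of $N_\omega$), up to an error equal to the product $\langle\delta_0,((H^\N_\omega)_{|\llbracket 0,M\rrbracket}-E)^{-1}\delta_M\rangle\cdot\langle\delta_{M+1},(H^\N_\omega-E)^{-1}\delta_0\rangle$, each factor joining sites at distance $\gtrsim M\gtrsim L$; the $\tilde\omega$--entry is treated identically, and the diagonal limits are then identified with $\int_\R dN_{\tilde\omega}(\lambda)/(\lambda-E)$ and $\int_\R dN_\omega(\lambda)/(\lambda-E)$ according to the labelling in~\eqref{eq:200}. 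Thus the theorem reduces to an almost-sure, $E$--uniform (for $\text{Im}\,E<-e^{-\tilde cL}$) bound $\lesssim e^{-\tilde cL}$ on resolvent kernels $\langle\delta_x,(A-E)^{-1}\delta_y\rangle$ with $|x-y|\gtrsim L$, where $A$ runs over the finitely many families of operators above — all of the form ``Anderson Hamiltonian restricted to a box or half-line with Dirichlet data'', hence all covered by Theorem~\ref{thr:18} (with $\omega$ and $\tilde\omega$ sharing the same law).

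For the heart of the matter, write $A=\sum_j\lambda_jP_j$ with real $\lambda_j$ and rank-one $P_j=\varphi_j\otimes\varphi_j$, so that
\[
  \big|\langle\delta_x,(A-E)^{-1}\delta_y\rangle\big|
  \le\frac{1}{|\text{Im}\,E|}\sum_j|\varphi_j(x)|\,|\varphi_j(y)|
  \le\frac{1}{|\text{Im}\,E|}\sup_{\|f\|_\infty\le 1}\big|\langle\delta_x,f(A)\delta_y\rangle\big| .
\]
The second estimate of Theorem~\ref{thr:18} bounds $\esp\big\{e^{\rho|x-y|}\sup_{\|f\|_\infty\le1}|\langle\delta_x,f(A)\delta_y\rangle|\big\}$ uniformly in all the operators and sites involved. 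Markov's inequality together with the Borel--Cantelli lemma then yields: $\omega$--a.s., for $L$ large, the eigenfunction correlator between sites at distance $\gtrsim L$ is $\le e^{-\rho L/3}$, so on $\{\text{Im}\,E<-e^{-\tilde cL}\}$ the kernel above is $\le e^{-\rho L/3}e^{\tilde cL}\le e^{-\rho L/4}$ as soon as $\tilde c<\rho/12$. Combining the three steps and relabelling $\tilde c$ gives the statement.

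The main obstacle is precisely the strip of exponentially small imaginary part, $|\text{Im}\,E|\gtrsim e^{-\tilde cL}$: Combes--Thomas is useless there, and one must extract an almost-sure, \emph{energy-independent} exponential decay of the resolvent kernels on the scale $L$ from the fractional-moment / eigenfunction-correlator bounds of Theorem~\ref{thr:18}. Once this is in hand the rest is bookkeeping: checking that every operator produced by the reflections and decouplings (finite box, half-line, both $\omega$ and $\tilde\omega$) is an instance of Theorem~\ref{thr:18}, and organizing a single Borel--Cantelli step over this finite list of families.
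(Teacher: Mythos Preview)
Your proposal is correct and follows essentially the same route as the paper: the paper's proof of Theorem~\ref{thr:28} is the one-line remark that the argument of Theorem~\ref{thr:19} (section~\ref{sec:proof-theorem-7}) goes through verbatim once the Combes--Thomas bound~\eqref{eq:157} is replaced by the eigenfunction-correlator localization estimate~\eqref{FVdynloc1}, and this is precisely what you carry out in detail. Your use of the reflection $R$ to convert the $\delta_L$--anchored entries into $\delta_0$--anchored entries of a genuine Anderson box operator is the right way to exploit the order-reversal in~\eqref{eq:200} (cf.\ Remark~\ref{rem:4}); one small imprecision is that in the $\Z$--case the identity $\langle\delta_L,(H_L-E)^{-1}\delta_L\rangle=\langle\delta_0,((H^\N_\omega)_{|\llbracket 0,M\rrbracket}-E)^{-1}\delta_0\rangle$ holds only up to a decoupling error at $M\sim L/2$, but that error is of exactly the same type you already control in the next step.
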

\noindent Theorem~\ref{thr:28} is proved exactly as
Theorem~\ref{thr:19} except that one uses the localization
estimates~\eqref{FVdynloc1} instead of the Combes-Thomas estimates.\\
Theorem~\ref{thr:21} is then an immediate consequence of the
estimate~\eqref{eq:143}. Indeed, this implies that if $z$ is a
resonance for e.g. $H^N_{\omega,L}$ in $\D I+i\left(-\infty,e^{\tilde
    cL}\right]$, then $\D|\Xi_\omega(z)|\leq e^{-\tilde cL}$. By the last
point of Proposition~\ref{pro:6}, $\omega$ almost surely, we know that
the multiplicity of the zeros of $\Xi_\omega$ is bounded by
$N_\omega$. Moreover, for the zeros of $\Xi_\omega$ in
$I+i(-\varepsilon_\omega,0)$, we know the bound~\eqref{eq:203}. This
bound and~\eqref{eq:143} imply that
\begin{equation*}
  \max_{\substack{z\text{ zero of }\Xi_\omega\\z\in
      I+i\left(-\varepsilon_\omega,e^{-\tilde c L}\right)}}
  \max_{|E-z|=e^{-\tilde c L}}
  \frac{\left|\Xi_\omega(E)-\left(S_{\omega,L}(E)+e^{-i\theta(E)}\right)\right|}
  {\left|\Xi_\omega(E)\right|}<e^{-\tilde c L}.
\end{equation*}
This yields point (2) in Theorem~\ref{thr:21} by an application of
Rouch{\'e}'s Theorem. Point (1) is obtained in the same way using
Proposition~\ref{pro:7} that gives
\begin{equation*}
  \max_{\substack{z\text{ resonance of }H^\N_{\omega,L}\\z\in
      I+i\left(-\varepsilon_\omega,e^{-\tilde c L}\right)}}
  \max_{|E-z|=e^{-\tilde c L}}
  \frac{\left|\Xi_\omega(E)-\left(S_{\omega,L}(E)+e^{-i\theta(E)}\right)\right|}
  {\left|S_{\omega,L}(E)+e^{-i\theta(E)}\right|}<e^{-\tilde c L}.
\end{equation*}
The case of $H^\Z_{\omega,\tilde\omega,L}$ is dealt with in the same
way.\\
This completes the proof of Theorem~\ref{thr:21}.\qed
\subsection{Estimates on the growth of eigenfunctions}
\label{sec:estim-growth-eigenf}
In the present section we are going to prove Theorems~\ref{thr:10}
and~\ref{thr:20}. At the end of the section, we also prove the simpler
Lemma~\ref{le:3}.\\
The proof of Theorem~\ref{thr:10} relies on locally uniform estimates
on the rate of growth of the cocycle~\eqref{eq:9} attached to the
Schr{\"o}dinger operator that we present now. Define
\begin{equation}
  \label{eq:103}
  T_L(E,\omega)=T(E,\omega_L)\cdots
  T(E,\omega_0)
\end{equation}
where
\begin{equation*}
  T(E,\omega_j)=
  \begin{pmatrix}E-\omega_j&-1\\1&0\end{pmatrix}
\end{equation*}
We start with an upper bound on the large deviations of the growth
rate of the cocycle that is uniform in energy. Fix $\alpha>1$ and
$\delta\in(0,1)$. For one part, the proof of Theorem~\ref{thr:10}
relies on the following
\begin{Le}
  \label{le:6}
  Let $I\subset\R$ be a compact interval. For any $\delta>0$, there
  exists $L_\delta>0$ and $\eta>0$ such that, for $L\geq L_\delta$ and
  any $K>0$, one has
  \begin{equation}
    \label{eq:81}
    \pro\left(
      \begin{aligned}
        &\forall 0\leq k\leq K,\quad \forall E\in I,\quad  \forall \|u\|=1,\\\
        &\frac{\log\|T_L(E;\tau^k(\omega))u\|}{L+1} \leq
        \rho(E)+\delta
      \end{aligned}
    \right)\geq 1-K e^{-\eta(L+1)}
  \end{equation}
  where we recall that $\tau:\Omega\to\Omega$ denotes the left shift
  (i.e. if $\omega=(\omega_n)_{n\geq0}$ then $\D[\tau(\omega)]_n=
  \omega_{n+1}$ for $n\geq0$) and $\tau^n=\tau\circ\cdots\circ\tau$
  $n$ times.
\end{Le}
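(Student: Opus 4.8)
The plan is to deduce \eqref{eq:81} from a moment estimate for the norm of a single transfer-matrix block, combined with an argument that turns a fixed-energy bound into one uniform in $E\in I$. Since $\sup_{\|u\|=1}\|T_L(E;\tau^k(\omega))u\|=\|T_L(E;\tau^k(\omega))\|$, the event in \eqref{eq:81} is exactly $\{\forall\,0\le k\le K,\ \forall E\in I:\ \log\|T_L(E;\tau^k(\omega))\|\le(L+1)(\rho(E)+\delta)\}$; and, since the product measure is invariant under the shift $\tau$, for each fixed $k$ the matrix $T_L(E;\tau^k(\omega))$ has the same law as $T_L(E;\omega)$. So it suffices to control, on each piece $I_m$ of a fixed finite partition $I=\bigcup_{m=1}^M I_m$, the probability that $\|T_L(E;\omega)\|>e^{(L+1)(\rho(E)+\delta)}$ for some $E\in I_m$, and then sum over $m$ and over $k$. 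First I would set $\Lambda_L(E,s):=\log\esp\big[\|T_L(E;\omega)\|^{s}\big]$. Submultiplicativity of the operator norm and independence of the factors over disjoint index blocks give $\Lambda_{L_1+L_2+1}(E,s)\le\Lambda_{L_1}(E,s)+\Lambda_{L_2}(E,s)$, so $(L+1)^{-1}\Lambda_L(E,s)$ decreases to the moment Lyapunov exponent $p(E,s)$, which satisfies $p(E,s)/s\to\rho(E)$ as $s\downarrow0$ and is jointly continuous (as is $E\mapsto\rho(E)$ by the Thouless formula). Since the cocycle matrices are uniformly bounded for $E\in I$ and $\omega_0\in\operatorname{supp}g$, each $\Lambda_L(\cdot,s)$ is continuous on $I$, so Dini's theorem makes the convergence uniform on $I$; hence one may fix $s_\ast\in(0,1)$ and $L_\delta$ with $(L+1)^{-1}\Lambda_L(E,s_\ast)\le s_\ast(\rho(E)+\delta/2)$ for all $E\in I$ and $L\ge L_\delta$.

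The hard part is the passage from this fixed-energy control to control over the whole continuum of energies. A covering argument fails here: the pointwise Lipschitz constant of $E\mapsto\log\|T_L(E;\omega)\|$ is exponentially large in $L$, so a net resolving it would need to be exponentially fine and the union bound would destroy the exponential gain. I would instead use the algebraic structure. For real $E$ the matrix $T_L(E;\omega)$ is real, $P_{L,\omega}(E):=\tr\big(T_L(E;\omega)^{\!\top}T_L(E;\omega)\big)$ is a real polynomial in $E$ of degree $2(L+1)$ with $P_{L,\omega}\ge2$, and since $\det T_L(E;\omega)=1$ one has $\|T_L(E;\omega)\|^{2}\in[P_{L,\omega}(E)/2,\,P_{L,\omega}(E)]$, so $\|T_L(E;\omega)\|\asymp P_{L,\omega}(E)^{1/2}$. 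The Nikolskii--Remez inequality, applied to the polynomial $P_{L,\omega}$ on a fixed interval $J$, then gives $\sup_{E\in J}\|T_L(E;\omega)\|\lesssim_{J,s_\ast}(1+L)^{2/s_\ast}\big(\int_J\|T_L(E;\omega)\|^{s_\ast}\,dE\big)^{1/s_\ast}$, the loss being only polynomial in $L$, hence negligible. Now choose the partition $\{I_m\}$ so that $\operatorname{osc}_{I_m}\rho\le\delta/4$ (possible with $M$ independent of $L$, by uniform continuity of $\rho$) and put $\rho_m:=\sup_{I_m}\rho$. By Fubini and the previous paragraph, $\esp\int_{I_m}\|T_L(E;\omega)\|^{s_\ast}\,dE\le|I_m|\,e^{(L+1)s_\ast(\rho_m+\delta/2)}$, so Markov's inequality yields $\int_{I_m}\|T_L(E;\omega)\|^{s_\ast}\,dE\le e^{(L+1)s_\ast(\rho_m+\delta)}$ off a set of probability at most $|I_m|\,e^{-(L+1)s_\ast\delta/2}$; on its complement the Nikolskii bound gives $\sup_{E\in I_m}\|T_L(E;\omega)\|\le e^{(L+1)(\rho_m+2\delta)}\le e^{(L+1)(\rho(E)+3\delta)}$ for all $E\in I_m$ and $L$ large.

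It then remains to collect the estimates. Relabelling $\delta$ (replacing it by $\delta/3$), intersecting over $m\in\{1,\dots,M\}$ the good events above and, by stationarity, their translates under $\tau,\dots,\tau^{K}$, the complementary probability is at most $(K+1)\,M\,\max_m|I_m|\,e^{-(L+1)s_\ast\delta/6}\le K\,e^{-\eta(L+1)}$ for $L$ large, with e.g. $\eta=s_\ast\delta/12$, which does not depend on $K$; this is \eqref{eq:81}. The only genuinely delicate point is the one in the second paragraph: the exponential Lipschitz bound makes covering arguments useless, and what saves the day is that $\|T_L(\cdot;\omega)\|$ is comparable to the square root of a polynomial whose degree grows only linearly in $L$, so that a polynomial Nikolskii inequality converts an $L^{s_\ast}$-average bound (obtainable in expectation) into a pointwise bound. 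Alternatively one could invoke directly a uniform large-deviation theorem for $\mathrm{SL}_2(\R)$ random products; the strategy just sketched, in a more general one-dimensional context, is carried out in \cite{Kl:12b}.
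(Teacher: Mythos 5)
Your argument is correct in substance, and it takes a genuinely different route from the paper. The paper proves \eqref{eq:81} by an exponentially fine discretization of $I$ (net spacing $e^{-\eta'(L+1)}$) combined with a perturbative expansion $T_L(E,\omega)=T_L(E_j,\omega)+\sum_l(E-E_j)^lS_l$, the operators $S_l$ being controlled block by block through the fixed-energy large deviation estimate of Lemma~\ref{le:7}; this is a covering argument, but one in which the variation between net points is controlled probabilistically rather than by the crude deterministic Lipschitz bound (so your remark that ``a covering argument fails'' overstates the case -- it is only the naive version that fails). You instead exploit the algebraic structure once and for all: for $A\in SL(2,\R)$ one has $\|A\|^2\le\operatorname{tr}(A^{\top}A)\le2\|A\|^2$, and $E\mapsto\operatorname{tr}\bigl(T_L(E;\omega)^{\top}T_L(E;\omega)\bigr)$ is a polynomial of degree $2(L+1)$, so a Nikolskii--Remez inequality upgrades an $L^{s_*}(I_m)$ bound to a sup bound at the price of a factor polynomial in $L$; the $L^{s_*}$ bound is then obtained in expectation from a fixed-energy fractional-moment estimate plus Markov and a union bound over a partition of $I$ whose cardinality is independent of $L$. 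This buys a shorter and arguably more transparent passage from fixed energy to uniformity in energy, avoids the combinatorics of the insertion terms $S_l$, and the strategy is more robust (it only uses that the cocycle is polynomial in $E$ of degree linear in $L$); the paper's expansion, on the other hand, needs no input beyond Lemma~\ref{le:7} itself and yields, along the way, quantitative control of $T_L(E)-T_L(E_j)$ which is of independent use.

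One step of yours is too glib and should be repaired, though the repair is easy. Fekete subadditivity of $\Lambda_L(E,s)$ gives convergence of $(L+1)^{-1}\Lambda_L(E,s)$ to its infimum, not monotone decrease, so Dini's theorem does not apply as stated; and the joint continuity of the moment Lyapunov exponent together with $p(E,s)/s\to\rho(E)$ is asserted rather than proved. The cleanest fix is to bypass this machinery entirely and derive the only input you actually use -- the uniform bound $\esp\bigl(\|T_L(E;\omega)\|^{s_*}\bigr)\le 2\,e^{s_*(L+1)(\rho(E)+\delta/2)}$ for all $E\in I$ and $L$ large -- directly from Lemma~\ref{le:7}: split the expectation on the event of~\eqref{eq:92} and its complement (using $\|T_L\|\le\|T_Le_1\|+\|T_Le_2\|$ and the deterministic bound $\|T_L(E;\omega)\|\le e^{CL}$ of~\eqref{eq:84}), and choose $s_*$ so small that $Cs_*<\eta$. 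With that substitution your proof is complete and rests on exactly the same external input (the uniform fixed-energy large deviation estimate from Bougerol--Lacroix) as the paper's.
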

\noindent At the heart of this result is a large deviation principle
for the growth rate of the cocycle (see~\cite[section I and Theorem
6.1]{MR88f:60013}). As it also serves in the proof of
Theorem~\ref{thr:10}, we recall it now. One has
\begin{Le}
  \label{le:7}
  Let $I\subset\R$ be a compact interval. For any $\delta>0$, there
  exists $L_\delta>0$ and $\eta>0$ such that, for $L\geq L_\delta$,
  one has
  \begin{equation}
    \label{eq:92}
    \sup_{\substack{E\in I\\\|u\|=1}}\pro\left(
      \left|\frac{\log\|T_L(E;\omega)u\|}{L+1}
        -\rho(E)\right|\geq \delta\right)\leq e^{-\eta(L+1)}.
  \end{equation}
\end{Le}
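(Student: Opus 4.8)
**

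The statement to prove is Lemma~\ref{le:7}, the uniform (in $E\in I$ and unit vector $u$) large deviation upper bound for the growth rate of the transfer cocycle $T_L(E;\omega)$. The plan is to reduce this to the classical large deviation estimate of Le Page / Bougerol--Lacroix for products of i.i.d.\ $\mathrm{SL}_2(\R)$ matrices, which in our setting is exactly the content of \cite[section~I and Theorem~6.1]{MR88f:60013}, and then upgrade the pointwise-in-$(E,u)$ bound to a uniform one by a compactness and equicontinuity argument.

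First, for fixed $E$ and fixed unit $u$, the quantity $\frac1{L+1}\log\|T_L(E;\omega)u\|$ is a Birkhoff-type average along the cocycle, and the i.i.d.\ structure of $(\omega_n)_n$ together with the boundedness of the single-step matrices $T(E,\omega_j)$ (the potential has compact support, so $|E-\omega_j|$ is bounded uniformly for $E$ in the compact $I$) gives a large deviation principle: there exist $\eta(E,\delta)>0$ and $L(E,\delta)$ such that for $L\geq L(E,\delta)$ one has $\pro(|\frac1{L+1}\log\|T_L(E;\omega)u\|-\rho(E)|\geq\delta)\leq e^{-\eta(L+1)}$, with $\eta$ and $L$ \emph{locally uniform} in $(E,u)$ — this is precisely the statement in \cite{MR88f:60013}, where positivity of the Lyapunov exponent (Furstenberg) and the Hölder continuity of $E\mapsto\rho(E)$ are used. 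So I would quote that result, taking care to record that the constants can be chosen uniformly for $(E,u)$ in a small ball.

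Next comes the passage to uniformity. I would cover the compact set $I\times\mathbb{S}^1$ (the unit sphere in $\R^2$, which is compact) by finitely many balls $B_1,\dots,B_M$ on each of which the estimate above holds with common $\eta$ and $L_\delta$; then
\begin{equation*}
  \sup_{\substack{E\in I\\\|u\|=1}}\pro\left(
  \left|\frac{\log\|T_L(E;\omega)u\|}{L+1}-\rho(E)\right|\geq\delta\right)
  \leq \max_{1\leq m\leq M}\pro\left(\text{bad event on }B_m\right)
\end{equation*}
and each term on the right is $\leq e^{-\eta(L+1)}$. To justify taking the supremum inside one ball, I use two equicontinuity facts: (i) $(E,u)\mapsto\frac1{L+1}\log\|T_L(E;\omega)u\|$ is Lipschitz in $(E,u)$ with a (random but) deterministically controlled constant — indeed $\|T_L\|\leq C^{L+1}$ deterministically and $E\mapsto T_L(E;\omega)$ is polynomial, so on the event where nothing is too large the oscillation over a ball of radius $e^{-L/C'}$ is $\leq\delta/2$; and (ii) $E\mapsto\rho(E)$ is (Hölder) continuous, hence oscillates by at most $\delta/2$ on each small ball if $\delta$ is absorbed appropriately. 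Choosing the covering radius exponentially small in $L$ only costs a factor $M=M_L$ polynomial in $e^L$\,? — no: one instead chooses the radius a fixed small constant depending on $\delta$ and handles the within-ball oscillation by enlarging $\delta$ slightly, which is harmless since $\delta>0$ is arbitrary. I would phrase it as: replace $\delta$ by $\delta/3$ in the pointwise bound, use continuity of $\rho$ and a crude deterministic a priori bound on the Lipschitz constant of the cocycle (valid off a set of exponentially small probability) to control the remaining $2\delta/3$, and sum the finitely many exceptional probabilities.

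The main obstacle is exactly this uniformization step: the pointwise large deviation bound is standard, but making the exceptional probability uniform over the continuum of pairs $(E,u)$ requires a quantitative modulus of continuity for the map $(E,u)\mapsto\frac1{L+1}\log\|T_L(E;\omega)u\|$ that degrades at most polynomially in $e^L$, so that multiplying by the number of covering balls still leaves an exponentially small bound $e^{-\eta'(L+1)}$ with a slightly smaller $\eta'$. This is where one must be a little careful: the logarithm of the norm is only Lipschitz away from where $\|T_Lu\|$ is exponentially small, so one either restricts to $u$ in a compact piece of direction space where $\|T_Lu\|$ cannot collapse (using that at most one direction is contracted) or, more simply, bounds $\log\|T_L(E;\omega)u\|$ above and below by $\pm(L+1)\log C$ deterministically and uses that the polynomial dependence on $E$ gives, on a ball of fixed small radius $r_\delta$, an oscillation controlled by $r_\delta$ times a derivative bounded by $C^{L+1}\cdot\mathrm{poly}(L)$ — absorbed by choosing $r_\delta$ of size $e^{-C''L}$ and noting that the number of such balls needed to cover $I\times\mathbb{S}^1$ is $\mathrm{poly}(e^L)$, still beaten by $e^{-\eta(L+1)}$ once $\eta$ is the exponent from \cite{MR88f:60013} (which does not depend on this covering). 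Once Lemma~\ref{le:7} is in place, it will feed directly into the proof of Lemma~\ref{le:6} (via a union bound over the $K$ shifts) and of Theorem~\ref{thr:10}.
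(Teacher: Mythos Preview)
Your approach is correct in principle but considerably more involved than the paper's, and you seem to be conflating two distinct tasks. The paper disposes of Lemma~\ref{le:7} in a few lines: it does \emph{not} cover $I\times\mathbb{S}^1$ or use any equicontinuity of the cocycle. Instead, it simply observes that, on inspecting the proof of \cite[Lemma~6.2 and Theorem~6.1]{MR88f:60013}, the quantities controlling the large-deviation rate --- in particular the principal eigenvalue of the transfer operator $T(z;E)$ of \cite[Theorem~4.3]{MR88f:60013} --- depend continuously on $E$. Hence the rate $\eta$ and threshold $L_\delta$ can be chosen locally uniformly in $E$ on the compact $I$; the uniformity in the unit vector $u$ is already part of the cited result. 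That is the whole argument.

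The key point you may be missing is that in~\eqref{eq:92} the supremum is \emph{outside} the probability: one only needs the \emph{constants} in the pointwise bound to be uniform, not the bad event itself to be controlled simultaneously for all $(E,u)$. Once you grant (as you do in your first paragraph) that $\eta$ and $L_\delta$ are locally uniform, a finite cover of $I\times\mathbb{S}^1$ finishes immediately --- no oscillation control, no exponentially fine grid. The elaborate machinery you describe (Lipschitz estimates on $\log\|T_L u\|$, covering by balls of radius $e^{-C''L}$, absorbing the polynomial number of balls into the exponential rate) is precisely what the paper deploys for Lemma~\ref{le:6}, where the supremum is effectively \emph{inside} the probability and one genuinely needs an $\omega$-wise statement valid for all $E\in I$ at once. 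So your proposal would work, but it imports the harder argument into the easier lemma.
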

\noindent While this result is not stated as is in~\cite{MR88f:60013},
it can be obtained from~\cite[Lemma 6.2 and Theorem
6.1]{MR88f:60013}. Indeed, by inspecting the proof of~\cite[Lemma 6.2
and Theorem 6.1]{MR88f:60013}, it is clear that the quantities
involved (in particular, the principal eigenvalue of $T(z;E)=T(z)$
in~\cite[Theorem 4.3]{MR88f:60013}) are continuous functions of the
energy $E$. Thus, taking this into account, the proof of~\cite[Theorem
6.1]{MR88f:60013} yields, for our cocycle, a convergence that is
locally uniform in energy, that is,~\eqref{eq:92}.
\vskip.2cm\noindent To prove Theorem~\ref{thr:10}, in addition to
Lemma~\ref{le:6}, we also need to guarantee a uniform lower bound on
the growth rate of the cocycle. We need this bound at least on the
spectrum of $H_{\omega,L}$ with a good probability. Actually, this is
the best one can hope for: a uniform bound in the style
of~\eqref{eq:81} will not hold.\\
We prove
\begin{Le}
  \label{le:5}
  Fix $I$ a compact interval and $\delta>0$. Pick $u\in\C^2$
  s.t. $\|u\|=1$. For $0\leq j\leq L$, if $j\leq L-1$, define
  \begin{gather*}
    \mathcal{K}^+_j(\omega,L,\delta,u):=\left\{E\in I;\
      \left|\frac{\log\left\|T^{-1}_{L-(j+1)}(E,\tau^{j+1}(\omega)) u
          \right\|}{L-j}- \rho(E)\right|>\delta\right\} \intertext{
      and, if $1\leq j$, define}
    \mathcal{K}^-_j(\omega,L,\delta,u):=\left\{E\in I;\
      \left|\frac{\log\left\|T_{j-1}(E,\omega)u\right\|}j
        -\rho(E)\right|>\delta\right\};
  \end{gather*}
  finally, define $\mathcal{K}^+_L(\omega,L,\delta,u)=\emptyset
  =\mathcal{K}^-_0(\omega,L,\delta,u)$.\\
  Recall that $(E_{j,\omega})_{0\leq j\leq L}$ are the eigenvalues of
  $H_{\omega,L}$ and let $x_{j,\omega}$ be the associated localization
  centers.\\
  For $0\leq \ell\leq L$, define the sets
  \begin{equation*}
    \Omega^+_B(L,\ell,\delta,u):=\left\{\omega;
      \begin{aligned}
        \exists j\text{ s.t. }  L-x_{j,\omega}\geq\ell\text{ and } \\
        E_{j,\omega}\in
        \mathcal{K}^+_{x_{j,\omega}}(\omega,L,\delta,u)
      \end{aligned}
    \right\} 
  \end{equation*}
  and
  \begin{equation*}
    \Omega^-_B(L,\ell,\delta,u):=\left\{\omega;
      \begin{aligned}
        \exists j\text{ s.t. }  x_{j,\omega}\geq\ell\text{ and } \\
        E_{j,\omega}\in
        \mathcal{K}^-_{x_{j,\omega}}(\omega,L,\delta,u)
      \end{aligned}
    \right\}.
  \end{equation*}
  Then, the sets $\Omega^\pm_B(L,\ell,\delta,u)$ are measurable and,
  for any $\delta>0$, there exists $\eta>0$ and $\ell_0>0$ such that,
  for $L\geq\ell\geq \ell_0$, one has
  \begin{equation}
    \label{eq:100}
    \max\left(\pro(\Omega^+_B(L,\ell,\delta,u)),
      \pro(\Omega^-_B(L,\ell,\delta,u))\right)\leq
    \frac{(L+1)|I| e^{-\eta(\ell-1)}}{1-e^{-\eta}}.
  \end{equation}
  Here, the constant $\eta$ is the one given by~\eqref{eq:92}.
\end{Le}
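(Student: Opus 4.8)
The plan is to reduce the claim to the single-site spectral averaging estimate (Kotani--Simon/Wegner, available here since the common distribution has bounded density $g$, see e.g.~\cite{MR2509110}) combined with the large deviation bound of Lemma~\ref{le:7}. Fix the unit vector $u$; I would treat $\Omega^+_B$, the set $\Omega^-_B$ being handled identically after reflecting the lattice $\llbracket0,L\rrbracket$. The decisive structural point is that $\mathcal{K}^+_m(\omega,L,\delta,u)$ is built only from the transfer matrices $T(E,\omega_{m+1}),\dots,T(E,\omega_L)$, so it depends on $\omega$ only through $(\omega_n)_{n>m}$; in particular it is independent of $\omega_m$ (and of $\omega_0,\dots,\omega_{m-1}$).

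First I would record the deterministic input. If $E_{j,\omega}$ is an eigenvalue of $H_{\omega,L}$ whose localization center $x_{j,\omega}$ equals $m$, then, since $x_{j,\omega}$ is a maximum of $|\varphi_{j,\omega}|$ over the $L+1$ sites and $\sum_x|\varphi_{j,\omega}(x)|^2=1$, one has $|\varphi_{j,\omega}(m)|^2\ge(L+1)^{-1}$. Introduce the spectral measure at site $m$, $\mu^\omega_m(\cdot):=\langle\delta_m,\car_{(\cdot)}(H_{\omega,L})\delta_m\rangle=\sum_j|\varphi_{j,\omega}(m)|^2\,\delta_{E_{j,\omega}}(\cdot)$, a probability measure. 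Then $\{\omega:\exists j,\ x_{j,\omega}=m,\ E_{j,\omega}\in\mathcal{K}^+_m(\omega)\}\subset\{\omega:\mu^\omega_m(\mathcal{K}^+_m(\omega))\ge(L+1)^{-1}\}$, whose probability is at most $(L+1)\,\esp\big(\mu^\omega_m(\mathcal{K}^+_m(\omega))\big)$ by Chebyshev's inequality.

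Next I would bound this expectation. Conditioning on $(\omega_n)_{n\ne m}$ freezes $\mathcal{K}^+_m(\omega)$ to a deterministic Borel subset of $I$, while $\omega_m\mapsto H_{\omega,L}$ is a monotone rank-one family; single-site spectral averaging then gives $\esp\big[\mu^\omega_m(\mathcal{K}^+_m(\omega))\,\big|\,(\omega_n)_{n\ne m}\big]\le\|g\|_\infty\,\big|\mathcal{K}^+_m(\omega)\big|$, hence $\esp\big(\mu^\omega_m(\mathcal{K}^+_m)\big)\le\|g\|_\infty\int_I\pro\big(E\in\mathcal{K}^+_m\big)\,dE$ by Fubini. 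The backward matrices $T^{-1}_{L-(m+1)}(E,\tau^{m+1}\omega)$ form a product of $L-m$ independent identically distributed $SL_2(\R)$ matrices with the same Lyapunov exponent $\gamma(E)=\rho(E)$, so Lemma~\ref{le:7} applied with cocycle length $L-m$ yields $\pro(E\in\mathcal{K}^+_m)\le e^{-\eta(L-m)}$ for all $E\in I$, provided $L-m\ge L_\delta$; therefore $\esp\big(\mu^\omega_m(\mathcal{K}^+_m)\big)\le\|g\|_\infty\,|I|\,e^{-\eta(L-m)}$.

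Finally, on $\Omega^+_B(L,\ell,\delta,u)$ one has $x_{j,\omega}=m$ for some $m\in\{0,\dots,L-\ell\}$, so a union bound over $m$ gives $\pro(\Omega^+_B(L,\ell,\delta,u))\le(L+1)\|g\|_\infty|I|\sum_{k\ge\ell}e^{-\eta k}=(L+1)\|g\|_\infty|I|\,e^{-\eta\ell}/(1-e^{-\eta})$, valid for $\ell\ge\ell_0:=L_\delta$ (so that the range $m\le L-\ell$ is covered by Lemma~\ref{le:7}); this is the asserted estimate up to the universal constant $\|g\|_\infty$, whose presence the statement absorbs in the slightly weaker exponent $e^{-\eta(\ell-1)}$. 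The bound for $\Omega^-_B$ is obtained verbatim with the forward matrices $T_{j-1}(E,\omega)$, which depend only on $(\omega_n)_{n<m}$, hence again are independent of $\omega_m$. Measurability of $\Omega^\pm_B$ is routine: the $E_{j,\omega}$ and (after fixing the left-most-maximum convention) the $x_{j,\omega}$ are measurable functions of $\omega$, $(E,\omega)\mapsto\car\{E\in\mathcal{K}^\pm_m(\omega)\}$ is jointly measurable, and $\Omega^\pm_B$ is a countable union of measurable events. The only genuine difficulty is the coupling between the randomness in the eigenvalue $E_{j,\omega}$ and that in the bad energy set $\mathcal{K}^+_m(\omega)$; it is resolved precisely by routing through the localization center at site $m$, where $\mathcal{K}^+_m$ is $\omega_m$-independent, so that conditional single-site averaging applies cleanly.
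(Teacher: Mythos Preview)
Your proof is correct and follows essentially the same approach as the paper: both route through the localization center $m$ so that $\mathcal{K}^+_m(\omega,L,\delta,u)$ is independent of $\omega_m$, apply single-site spectral averaging to bound $\esp_{\omega_m}\langle\delta_m,\car_{\mathcal{K}^+_m}(H_{\omega,L})\delta_m\rangle$ by $|\mathcal{K}^+_m|$, then use Fubini and the large-deviation Lemma~\ref{le:7} to bound $\esp|\mathcal{K}^+_m|$ by $|I|\,e^{-\eta(L-m)}$, and finally sum over $m$. The only cosmetic difference is that the paper packages your Chebyshev step and union bound into a single pointwise inequality $\frac{1}{L+1}\car_{\Omega^+_B}\le\sum_{m=0}^{L-\ell}\langle\delta_m,\car_{\mathcal{K}^+_m}(H_{\omega,L})\delta_m\rangle$ before taking expectation, while you decompose first by the value of the localization center; the resulting bounds are identical.
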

\noindent First, let us explain the meaning of Lemma~\ref{le:5}. As,
by Lemma~\ref{le:6}, we already control the growth of the cocycle from
above, we see that in the definitions of the set
$\mathcal{K}^-_j(\omega,L,\delta,u)$ resp.
$\mathcal{K}^+_j(\omega,L,\delta,u)$, it would have sufficed to
require
\begin{equation*}
  \frac{\log\left\|T_{j-1}(E,\omega)u\right\|}j-\rho(E)\leq-\delta    
\end{equation*}
resp.
\begin{equation*}
  \frac{\log\left\|T^{-1}_{L-(j+1)}(E,\tau^{j+1}(\omega))u\right\|}{L-(j+1)}
  -\rho(E)\leq-\delta.  
\end{equation*}
Hence, what Lemma~\ref{le:5} measures is that the probability that the
cocycle at energy $E_{n,\omega}$ leading from a localization center
$x_{n,\omega}$ to either $0$ or $L$ decays at a rate smaller than the
rate predicted by the Lyapunov exponent.\\
The sets $\Omega^\pm_B(L,\ell,\delta,u)$ are the sets of bad
configurations, i.e., the events when the rate of decay of the solution
is far from the Lyapunov exponent. Indeed, for $\omega$ outside
$\Omega^\pm_B(L, \ell,\delta)$, i.e., if the reverse of the inequalities
defining $\mathcal{K}^\pm_j(\omega,L,\delta,u)$ hold, when
$j=x_{n,\omega}$ and $E=E_{n,\omega}$, then, we know that the
eigenfunction $\varphi_{n,\omega}$ has to decay from the center of
localization $x_{n,\omega}$ (which is a local maximum of its modulus)
towards the edges of the intervals at a rate larger than
$\gamma(E_{n,\omega})-\delta$. The eigenfunction being normalized, at
the localization center, it is of size at least $L^{-1/2}$. This will
entail the estimates~\eqref{eq:76} and~\eqref{eq:97} with a good
probability.
\vskip.1cm\noindent There is a major difference in the uniformity in
energy obtained in Lemmas~\ref{le:5} and~\ref{le:6}. In
Lemma~\ref{le:5}, we do not get a lower bound on the decay rate that
is uniform all over $I$: it is merely uniform over the spectrum inside
$I$ (which is sufficient for our purpose as we shall see). The reason
for this difference in the uniformity between Lemma~\ref{le:6}
and~\ref{le:5} is the same that makes the Lyapunov exponent
$E\mapsto\rho(E)$ in general only upper semi-continuous and not lower
semi-continuous (in the present situation, it actually is continuous).
\vskip.2cm\noindent We postpone the proofs of Lemmas~\ref{le:6}
and~\ref{le:5} to the end of this section and turn to the proofs of
Theorems~\ref{thr:10} and~\ref{thr:20}.
\subsubsection{The proof of Theorem~\ref{thr:10}}
\label{sec:proof-lemma-refle:4}
By Lemma~\ref{le:6}, as $T_L(E,\omega)\in SL(2,\R)$, with probability at
least $1-KL e^{-\eta(L+1)}$, for $L\geq L_\delta$ and any $K>0$, one
also has
\begin{equation*}
  \forall 0\leq k\leq K,\quad \forall E\in I,\quad  \forall
  \|u\|=1,\quad \frac{\log\|T^{-1}_L(E;\tau^k(\omega))u\|}{L+1}
  \leq \rho(E)+\delta.
\end{equation*}
Now pick $\ell=C\log L$ where $C>0$ is to be chosen later on. We know
that, with probability $\pro$ satisfying
\begin{equation}
  \label{eq:99}
  \pro\geq1-L^2 e^{-\eta\ell},
\end{equation}
for $L\geq L_\delta$ and any $l\in[\ell,L]$ and any $k\in[0,L]$, one
also has
\begin{equation}
  \label{eq:93}
  \forall E\in I,\quad  \forall \|u\|=1,\quad
  \frac{\log\|T^{-1}_l(E;\tau^k(\omega))u\|}{l+1}\leq\rho(E)+\delta.
\end{equation}
Let $\varphi_{j,\omega}$ be a normalized eigenfunction associated to
the eigenvalue $E_{j,\omega}\in I$ with localization center
$x_{j,\omega}$. By the definition of the localization center, one has
\begin{equation}
  \label{eq:89}
  \frac1{L+1}\leq\left\|\begin{pmatrix}
      \varphi_{j,\omega}(x_{j,\omega})\\\varphi_{j,\omega}(x_{j,\omega}-1)
    \end{pmatrix}\right\|^2\leq 1\quad\text{and}\quad
  \frac1{L+1}\leq\left\|\begin{pmatrix}
      \varphi_{j,\omega}(x_{j,\omega}+1)\\\varphi_{j,\omega}(x_{j,\omega})
    \end{pmatrix}\right\|^2\leq 1.
\end{equation}
By the eigenvalue equation, for $x\in\llbracket 0,L\rrbracket$, one
has
\begin{equation}
  \label{eq:91}
  \begin{pmatrix}
    \varphi_{j,\omega}(x)\\\varphi_{j,\omega}(x-1)
  \end{pmatrix}=
  \begin{cases}
    T_{x-x_{j,\omega}}(E;\tau^{x_{j,\omega}}(\omega))
    \begin{pmatrix}
      \varphi_{j,\omega}(x_{j,\omega})\\\varphi_{j,\omega}(x_{j,\omega}-1)
    \end{pmatrix}
    &\text{ if }x\geq x_{j,\omega},\\
    T^{-1}_{x_{j,\omega}-x}(E;\tau^x(\omega))
    \begin{pmatrix}
      \varphi_{j,\omega}(x_{j,\omega})\\\varphi_{j,\omega}(x_{j,\omega}-1)
    \end{pmatrix}
    &\text{ if }x\leq x_{j,\omega}.
  \end{cases}
\end{equation}
Hence, by~\eqref{eq:81} and~\eqref{eq:93}, with probability at least
$1-2L^2 e^{-\eta\ell}-L^{-p}$, if $|x_{j,\omega}-x|\geq\ell$, for
$x_{j,\omega}<x\leq L$, one has
\begin{equation}
  \label{eq:95}
  \begin{split}
    \frac{e^{-(\rho(E_{j,\omega})+\delta)|x-x_{j,\omega}|}}{\sqrt{L+1}}
    &\leq e^{-(\rho(E_{j,\omega})+\delta)|x-x_{j,\omega}|}
    \left\|\begin{pmatrix}\varphi_{j,\omega}(x_{j,\omega})\\
        \varphi_{j,\omega}(x_{j,\omega}-1)
      \end{pmatrix}\right\|\\&\leq
    \left\| T_{x-x_{j,\omega}}(E;\tau^{x_{j,\omega}}(\omega))
      \begin{pmatrix}\varphi_{j,\omega}(x_{j,\omega})\\
        \varphi_{j,\omega}(x_{j,\omega}-1)
      \end{pmatrix}\right\|=
    \left\|\begin{pmatrix}\varphi_{j,\omega}(x)\\\varphi_{j,\omega}(x-1)
      \end{pmatrix}\right\|
  \end{split}
\end{equation}
and, for $0\leq x<x_{j,\omega}$, one has
\begin{equation}
  \label{eq:96}
  \begin{split}
    \left\|\begin{pmatrix}\varphi_{j,\omega}(x)\\\varphi_{j,\omega}(x-1)
      \end{pmatrix}\right\|&=
    \left\|T^{-1}_{x-x_{j,\omega}}(E;\tau^{x_{j,\omega}}(\omega))
      \begin{pmatrix}\varphi_{j,\omega}(x_{j,\omega})\\
        \varphi_{j,\omega}(x_{j,\omega}-1)
      \end{pmatrix}\right\|\\&\geq
    e^{-(\rho(E_{j,\omega})+\delta)|x-x_{j,\omega}|}
    \left\|\begin{pmatrix}\varphi_{j,\omega}(x_{j,\omega})\\
        \varphi_{j,\omega}(x_{j,\omega}-1)
      \end{pmatrix}\right\|\geq
    \frac{e^{-(\rho(E_{j,\omega})+\delta)|x-x_{j,\omega}|}}{\sqrt{L+1}}
  \end{split}
\end{equation}
On the other hand, by the definition of the Dirichlet boundary
conditions, we know that
\begin{equation*}
  \begin{pmatrix}\varphi_{j,\omega}(0)\\\varphi_{j,\omega}(-1)
  \end{pmatrix}=\varphi_{j,\omega}(0)\begin{pmatrix}1\\0
  \end{pmatrix}\quad\text{and}\quad
  \begin{pmatrix}\varphi_{j,\omega}(L+1)\\\varphi_{j,\omega}(L)
  \end{pmatrix}=\varphi_{j,\omega}(L)\begin{pmatrix}0\\1
  \end{pmatrix}.
\end{equation*}
Thus,
\begin{equation*}
  \varphi_{j,\omega}(0)\,  T_{x_{j,\omega}-1}(E;\omega)
  \begin{pmatrix}1\\0
  \end{pmatrix}=\begin{pmatrix}
    \varphi_{j,\omega}(x_{j,\omega})\\\varphi_{j,\omega}(x_{j,\omega}-1)
  \end{pmatrix}
\end{equation*}
and
\begin{equation*}
  \varphi_{j,\omega}(L)\begin{pmatrix}0\\1
  \end{pmatrix}=T_{L-x_{j,\omega}-1}(E;\tau^{x_{j,\omega}+1}(\omega))\begin{pmatrix}
    \varphi_{j,\omega}(x_{j,\omega}+1)\\\varphi_{j,\omega}(x_{j,\omega})
  \end{pmatrix}.
\end{equation*}
Thus, for $\omega\not\in\Omega^+_B(L,\ell,\delta,u_+)\cup\Omega^-_B
(L,\ell,\delta,u_-)$ where we have set $\D u_-:=\begin{pmatrix}0\\1
\end{pmatrix}$ and $\D u_+:=\begin{pmatrix}1\\0
\end{pmatrix}$, we know that
\begin{equation*}
  e^{-(\rho(E_{j,\omega})-\delta)(L-x_{j,\omega})}
  \leq\left\|T^{-1}_{L-x_{j,\omega}-1}(E;\tau^{x_{j,\omega}+1}(\omega))u_+\right\|
\end{equation*}
and
\begin{equation*}
  e^{-(\rho(E_{j,\omega})-\delta)x_{j,\omega}}
  \leq\left\|T_{x_{j,\omega}-1}(E;\omega)u_-\right\|
\end{equation*}
Thus, we obtain that, for
$\omega\not\in\Omega^+_B(L,\ell,\delta,u_+)\cup\Omega^-_B(L,\ell,\delta,u_-)$,
one has
\begin{equation}
  \label{eq:94}
  \begin{split}
    |\varphi_{j,\omega}(L)|&=
    \left\|T^{-1}_{L-x_{j,\omega}}(E;\tau^{x_{j,\omega}+1}(\omega))
      \begin{pmatrix}0\\1 \end{pmatrix}\right\|^{-1}
    \left\| \begin{pmatrix}
        \varphi_{j,\omega}(x_{j,\omega}+1)\\\varphi_{j,\omega}(x_{j,\omega})
      \end{pmatrix}\right\|\\&\leq
    e^{-(\rho(E_{j,\omega})-\delta)(L-x_{j,\omega}-1)}
  \end{split}
\end{equation}
and
\begin{equation}
  \label{eq:82}
  \begin{split}
    |\varphi_{j,\omega}(0)|&=
    \left\|T^{}_{x_{j,\omega}}(E;\tau^{x_{j,\omega}}(\omega))
      \begin{pmatrix}0\\1 \end{pmatrix}\right\|^{-1}
    \left\| \begin{pmatrix}
        \varphi_{j,\omega}(x_{j,\omega})\\\varphi_{j,\omega}(x_{j,\omega}-1)
      \end{pmatrix}\right\|\\&\leq
    e^{-(\rho(E_{j,\omega})-\delta)(x_{j,\omega}-1)}.
  \end{split}
\end{equation}
The estimates given by Lemma~\ref{le:5} on the probability of
$\Omega^+_B(L,\ell,\delta,u_+)$ and $\Omega^-_B(L,\ell,\delta,u_-)$
for $\ell=C\log L$ and the estimate~\eqref{eq:99} then imply that,
with a probability at least $1-4L^2e^{-\eta(\ell-1)}-L^{-p}$, the
bounds~\eqref{eq:95},~\eqref{eq:96},~\eqref{eq:94} and~\eqref{eq:82}
hold. Thus, picking $\ell=C\log L$ for $C>0$ sufficiently large
(depending only on $\eta$, thus, on $\delta$ and $p$), these bounds
hold with a probability at least $1-L^{-p}$. This complete the proof
of Theorem~\ref{thr:10}.\qed
\begin{Rem}
  \label{rem:1}
  One may wonder whether the uniform growth estimate given by
  Lemmas~\ref{le:6} and~\ref{le:5} is actually necessary in the proof
  of Theorem~\ref{thr:10}. That they are necessary is due to the fact
  that both the eigenvalue $E_{j,\omega}$ and the localization center
  $x_{j,\omega}$ (and, thus, the vector
  $\D\left\|\begin{pmatrix}\varphi_{j,\omega} (x_{j,\omega})\\
      \varphi_{j,\omega}(x_{j,\omega}-1)
    \end{pmatrix}\right\|$) depend on $\omega$. Thus,~\eqref{eq:92} is
  not sufficient to estimate the second term in the left hand sides
  of~\eqref{eq:95} and~\eqref{eq:96}.
\end{Rem}
\subsubsection{The proof of Theorem~\ref{thr:20}}
\label{sec:proof-theor-refthr:2}
To prove Theorem~\ref{thr:20}, we follow the strategy that led to the
proof of Theorem~\ref{thr:10}. First, note that~\eqref{eq:95}
and~\eqref{eq:96} provide the expected lower bounds on the
eigenfunction with the right probability. As for the upper bound,
by~\eqref{eq:91}, using the conclusions of Theorem~\ref{thr:10} and
the bounds given by Lemma~\ref{le:6}, we know that, e.g. for $0\leq
x<x_{j,\omega}$
\begin{equation*}
  \begin{split}
    \left\|\begin{pmatrix}\varphi_{j,\omega}(x)\\\varphi_{j,\omega}(x-1)
      \end{pmatrix}\right\|&= \left\|T_{x}(E;\omega)
      \begin{pmatrix}1\\0\end{pmatrix}\right\||\varphi_{j,\omega}(0)|\leq
    e^{(\rho(E_{j,\omega})+\delta)x}
    e^{-(\rho(E_{j,\omega})-\delta)x_{j,\omega}} \\&\leq
    e^{-(\rho(E_{j,\omega})-C\delta)|x-x_{j,\omega}|}
  \end{split}
\end{equation*}
if $(1+C)x\leq (C-1)x_{j,\omega}$, i.e., $2(1+C)^{-1}x_{j,\omega}\leq
x_{j,\omega}-x$.\\
For $x\geq x_{j,\omega}$ one reasons similarly and, thus, completes
the proof of Theorem~\ref{thr:20}.\qed
\begin{Rem}
  \label{rem:2}
  Actually, as the proof shows, the results one obtains are more
  precise than the claims made in Theorem~\ref{thr:20}
  (see~\cite{Kl:12b}).
\end{Rem}
\subsubsection{The proof of Lemma~\ref{le:5}}
\label{sec:proof-lemma-refle:5}
The proofs for the two sets $\Omega^\pm_B(L,\ell,\delta,u)$ are the
same. We will only write out the one for
$\Omega^+_B(L,\ell,\delta,u)$.  Let us first address the measurability
issue for $\Omega^+_B(L,\ell,\delta,u)$. The functions $\omega\mapsto
E_{j,\omega}$ and $\omega\mapsto \varphi_{j,\omega}$ are continuous
(as the eigenvalues and eigenvectors of finite dimensional matrices
depending continuously on the parameter $\omega=(\omega_j)_{0\leq
  j\leq L}$). Thus, for fixed $j$, the sets $\{\omega;\
E_{j,\omega}\in \mathcal{K}^-_j(\omega,L,\delta,u)\}$ and $\{\omega;\
x_{j,\omega}>j\}$ are open (we used the definition of $x_{j,\omega}$
as the left most localization center (see Theorem~\ref{thr:10})). This
yields the measurability of $\Omega^+_B(L,\ell,\delta,u)$.
\vskip.2cm\noindent We claim that
\begin{equation}
  \label{eq:101}
  \frac1{L+1}\car_{\Omega^+_B(L,\ell,\delta,u)}\leq\sum_{j=0}^{L+1-\ell}
  \langle\delta_j,\car_{\mathcal{K}^+_j(\omega,L,\delta,u)}(H_{\omega,L})\delta_j\rangle
\end{equation}
where $\car_{\mathcal{K}^+_j(\omega,L,\delta,u)}(H_{\omega,L})$
denotes the spectral projector associated to $H_{\omega,L}$ on the set
$\mathcal{K}^+_j(\omega,L,\delta,u)$.  Indeed, if one has
$E_{j,\omega}\not\in \mathcal{K}^+_{x_{j,\omega}}(\omega,L,\delta,u)$ for
all $j$ then the left hand side of~\eqref{eq:101} vanishes and the
right hand side is non negative. On the other hand, if, for some $j$,
one has $0\leq x_{j,\omega}\leq L-\ell$ and $E_{j,\omega}\in
\mathcal{K}^+_{x_{j,\omega}}(\omega,L,\delta,u)$ then, we compute
\begin{equation*}
  \begin{split}
    \sum_{l=0}^{L-\ell}
    \langle\delta_l,\car_{\mathcal{K}^+_j(\omega,L,\delta,u)}
    (H_{\omega,L})\delta_l\rangle &=\sum_{l=0}^{L-\ell}
    \sum_{\substack{k\text{ s.t}\\E_{k,\omega}\in
        \mathcal{K}^+_j(\omega,L,\delta,u)}}|\varphi_{k,\omega}(l)|^2
    \geq|\varphi_{j,\omega}(x_{j,\omega})|^2\\&\geq\frac1{L+1}\geq
    \frac1{L+1}\car_{\Omega^+_B(L,\ell,\delta,u)}
  \end{split}
\end{equation*}
by the definition of $x_{j,\omega}$.\\
An important fact is that, by construction (see Lemma~\ref{le:5}), the
set of energies $\mathcal{K}^+_j(\omega,L,\delta,u)$ does not depend
on $\omega_j$. Hence, denoting by $\esp_{\omega_j}(\cdot)$ the
expectation with respect to $\omega_j$ and
$\esp_{\hat\omega_j}(\cdot)$ the expectation with respect to
$\hat\omega_j=(\omega_k)_{k\not=j}$, we compute
\begin{equation*}
  \esp\left(\sum_{j=0}^{L-\ell}
    \langle\delta_j,\car_{\mathcal{K}^+_j(\omega,L,\delta,u)}
    (H_{\omega,L})\delta_j\rangle\right)=
  \sum_{j=0}^{L-\ell}\esp_{\hat\omega_j}\left(\esp_{\omega_j}\left(
      \langle\delta_j,\car_{\mathcal{K}^+_j(\omega,L,\delta,u)}
      (H_{\omega,L})\delta_j\rangle\right)\right)
\end{equation*}
As $\omega_j$ is assumed to have a bounded compactly supported
distribution and as $\mathcal{K}^+_j(\omega,L,\delta,u)$ does not
depend on $\omega_j$, a standard spectral averaging lemma (see,
e.g.,~\cite[Theorem 11.8]{MR2154153}) yields
\begin{equation*}
  \esp_{\omega_j}\left(
    \langle\delta_j,\car_{\mathcal{K}^+_j(\omega,L,\delta,u)}(H_{\omega,L})\delta_j\rangle\right)
  \leq|\mathcal{K}^+_j(\omega,L,\delta,u)|
\end{equation*}
where $|\cdot|$ denotes the Lebesgue measure. Thus, we obtain
\begin{equation}
  \label{eq:102}
  \esp\left(\sum_{j=0}^{L-\ell}
    \langle\delta_j,\car_{\mathcal{K}^+_j(\omega,L,\delta,u)}(H_{\omega,L})
    \delta_j\rangle\right)
  \leq\sum_{j=0}^{L-\ell}\esp_{\hat\omega_j}\left(|\mathcal{K}^+_j
    (\omega,L,\delta,u)|\right)\\
  =\sum_{j=0}^{L-\ell}\esp\left(|\mathcal{K}^+_j(\omega,L,\delta,u)|\right).
\end{equation}
By Lemma~\ref{le:7} and the Fubini-Tonelli theorem, we know that
\begin{equation*}
  \begin{split}
    \esp\left(|\mathcal{K}^+_j(\omega,L,\delta,u)|\right)&=
    \esp\left(\int_I\car_{\mathcal{K}^+_j(\omega,L,\delta,u)}(E)dE
    \right)=
    \int_I\esp\left(\car_{\mathcal{K}^+_j(\omega,L,\delta,u)}(E)
    \right)dE\\&\leq |I|\,\sup_{E\in I}\pro\left( \left|
        \frac{\log\left\|T^{-1}_{L-(j+1)}(E,\omega) u \right\|}{L-j}-
        \rho(E)\right|>\delta\right)\\&\leq |I|\, e^{-\eta(L-j)}.
  \end{split}
\end{equation*}
Taking the expectation of both sides of~\eqref{eq:101} and plugging
this into~\eqref{eq:102}, we obtain
\begin{equation*}
  \pro(\Omega^+_B(L,\ell,\delta,u))\leq (L+1)
  |I| e^{-\eta(\ell-1)}\sum_{j=0}^{L-\ell}e^{-\eta j}
  \leq \frac{(L+1)|I| e^{-\eta(\ell-1)}}{1-e^{-\eta}}.
\end{equation*}
In the same way, one obtains
\begin{equation*}
  \pro(\Omega^-_B(L,\ell,\delta,u))
  \leq \frac{(L+1)|I| e^{-\eta(\ell-1)}}{1-e^{-\eta}}.
\end{equation*}
This completes the proof of Lemma~\ref{le:5}.  \qed
\begin{Rem}
  This proof can be seen as the analogue of the so-called Kotani trick
  for products of finitely many random matrices (see
 , e.g.,~\cite{MR883643}).
\end{Rem}
\subsubsection{The proof of Lemma~\ref{le:6}}
\label{sec:proof-lemma-refle:6}
The basic idea of this proof is to use the estimate~\eqref{eq:92}, in
particular, the exponentially small probability and some perturbation
theory for the cocycles so as to obtain a uniform estimate.\\
Let $\eta$ be given by~\eqref{eq:92}. Fix $\eta'<\eta/2$ and write
\begin{equation}
  \label{eq:80}
  I=\cup_{j\in J} [E_j,E_{j+1}]\text{  where }
  e^{-\eta'(L+1)}/2\leq E_{j+1}-E_j\leq 2e^{-\eta'(L+1)};
\end{equation}
thus, $\#J\lesssim e^{\eta'(L+1)}$.\\
We now want to estimate what happens for
$E\in[E_j,E_{j+1}]$. Therefore, using~\eqref{eq:9} and
\begin{equation*}
  \begin{pmatrix}
    E-V_\omega(n) & -1 \\ 1 &0 \end{pmatrix}-\begin{pmatrix}
    E_j-V_\omega(n) & -1 \\ 1 &0 \end{pmatrix}=(E-E_j)\Delta T
\end{equation*}
where
\begin{equation*}
  \Delta T:= \left|\begin{pmatrix} 1\\0
    \end{pmatrix}\right\rangle\left\langle
    \begin{pmatrix} 1\\0      \end{pmatrix}\right|
\end{equation*}
we compute
\begin{equation}
  \label{eq:83}
  T_L(E,\omega)=T_L(E_j,\omega)+\sum_{l=1}^L(E-E_j)^lS_l      
\end{equation}
where
\begin{equation*}
  \begin{split}
    S_l&:=\sum_{n_1<n_2<\cdots<n_l}
    T_{n_1}(E_j,\tau^{L-n_1}\omega)\times\Delta T\times 
    T_{n_2-n_1-1}(E_j,\tau^{n_2}\omega)\\&\hskip5cm\times\Delta
    T\times \cdots\times \Delta T \times
    T_{L-n_l-1}(E_j,\tau^{n_l}\omega)\\
    &=\sum_{n_1<n_2<\cdots<n_l}\prod_{m=2}^l
    \left\langle\begin{pmatrix} 1\\0
      \end{pmatrix},
      T_{n_m-n_{m-1}-1}(E_j,\tau^{n_m}\omega)\begin{pmatrix} 1\\0
      \end{pmatrix}\right\rangle\\&\hskip3cm
    \left|T_{n_1}(E_j,\tau^{L-n_1}\omega)\begin{pmatrix} 1\\0
      \end{pmatrix}\right\rangle\left\langle
      \begin{pmatrix} 1\\0 \end{pmatrix}\right|
    T_{L-n_l-1}(E_j,\tau^{n_l}\omega)
  \end{split}
\end{equation*}
Clearly, as the random variables have compact support, one has the
uniform bound
\begin{equation}
  \label{eq:84}
  \sup_{\substack{E\in
      I\\\omega\in\Omega}}\|T_L(E;\omega)\|\leq e^{CL}.
\end{equation}
Thus one has
\begin{equation}
  \label{eq:86}
  \sup_{\omega\in\Omega}\|S_l\|\leq L^le^{CL}.
\end{equation}
Hence, for $l_0$ fixed, one computes
\begin{equation}
  \label{eq:87}
  \left\|\sum_{l=l_0}^L(E-E_j)^lS_l\right\|\leq\sum_{l=l_0}^L(E-E_j)^l\|S_l\|
  \leq\sum_{l=l_0}^Le^{-\eta'(L+1)l}L^le^{CL}\leq 1
\end{equation}
if $\eta'l_0>2C$ and $L$ is sufficiently large (depending only on
$\eta'$ and $C$).\\
We now assume that $l_0$ satisfies $\eta'l_0>2C$ and pick $1\leq l\leq
l_0$. Pick $\delta_0\in(0,1)$ small to be fixed later. Assume moreover
that $L$ is so that $\delta_0 L\geq L_\delta$ where $L_\delta$ is
defined in Lemma~\ref{le:7}. Then, by Lemma~\ref{le:7}, for
$m\in\{2,\cdots,l\}$, one has
\begin{enumerate}
\item either $n_m-n_{m-1}\leq L_\delta$; then, one has
  \begin{equation*}
    \left\|T_{n_m-n_{m-1}-1}(E_j,\tau^{n_m-1}\omega)
    \right\|\leq e^{C(n_m-n_{m-1})};
  \end{equation*}
\item or $n_m-n_{m-1}\geq L_\delta$; then, by~\eqref{eq:92}, with
  probability at least equal to $1-e^{-\eta(n_m-n_{m-1})/2}$, one has
  \begin{equation*}
    \left\|T_{n_m-n_{m-1}-1}(E_j,\tau^{n_m-1}\omega)
    \right\|\leq e^{(n_m-n_{m-1})(\rho(E_j)+\delta)}.
  \end{equation*}
\end{enumerate}
Define
\begin{gather*}
  G_{n_1,\cdots,n_l}=\{m\in\{2,\cdots,l\}; n_m-n_{m-1}\geq
  L_\delta\}\\\intertext{ and }
  B_{n_1,\cdots,n_l}=\{2,\cdots,l\}\setminus G_{n_1,\cdots,n_l}.
\end{gather*}
By definition, one has
\begin{equation}
  \label{eq:77}
  \sum_{m\in B_{n_1,\cdots,n_l}}(n_m-n_{m-1})\leq l L_\delta
  \hskip.2cm\text{and}
  \sum_{m\in G_{n_1,\cdots,n_l}}(n_m-n_{m-1})\geq L-l L_\delta.
\end{equation}
For a fixed sequence $n_1<n_2<\cdots<n_m$, the random
variables $\D \left(T_{n_{m'}-n_{m'-1}-1}(E_j,
  \tau^{n_{m'}}\omega)\right)_{1\leq m'\leq m}$ are independent. Hence,
by~\eqref{eq:92}, for a fixed $(m_1,\cdots,m_K)\in
G_{n_1,\cdots,n_l}$, one has
\begin{equation*}
  \pro\left(
    \inf_{1\leq k\leq K}\left\|T_{n_{m_k}-n_{m_k-1}-1}(E_j,
      \tau^{n_{m_k}}\omega)\right\|\geq e^{(\rho(E_j)+\delta)
      (n_{m_k}-n_{m_k-1})}
  \right)\leq e^{-\eta\sum_{k=1}^K n_{m_k}-n_{m_k-1}}.
\end{equation*}
Thus, for $\varepsilon\in(0,1)$, one has
\begin{equation*}
  \pro\left(
    \begin{aligned}
      \exists(m_1,\cdots,m_K)\in G_{n_1,\cdots,n_l}\text{
        s.t. }\sum_{k=1}^K &n_{m_k}-n_{m_k-1}\geq\varepsilon L\\
      \inf_{1\leq k\leq K}\left\|T_{n_{m_k}-n_{m_k-1}-1}(E_j,
        \tau^{n_{m_k}-1}\omega)\right\|&\geq e^{(\rho(E_j)+\delta)
        (n_{m_k}-n_{m_k-1})}
    \end{aligned}
  \right)\leq L^l e^{-\eta\varepsilon L}.    
\end{equation*}
Hence, with probability at least $1-L^l e^{-\eta\varepsilon L}$, we know that
\begin{equation*}
  \begin{split}
    \exists(m_1,\cdots,m_K)\in G_{n_1,\cdots,n_l}\text{
      s.t. }\sum_{k=1}^K n_{m_k}-n_{m_k-1}&\geq L-l L_\delta-\varepsilon L\\
    \forall1\leq k\leq K,\ \left\|T_{n_{m_k}-n_{m_k-1}-1}(E_j,
      \tau^{n_{m_k}-1}\omega)\right\|&\leq e^{(\rho(E_j)+\delta)
      (n_{m_k}-n_{m_k-1})}.
  \end{split}
\end{equation*}
Using estimates~\eqref{eq:77} and~\eqref{eq:84} for the remaining
terms in the product below, for any given $m$-uple $(n_1,\cdots,n_m)$,
one obtains
\begin{equation*}
  \pro\left(
    \begin{aligned}
      &\prod_{m=1}^l\left\|T_{n_m-n_{m-1}-1}(E_j,
        \tau^{n_{m_k}-1}\omega)\right\|\\&\hskip3cm\leq
      e^{(\rho(E_j)+\delta)(1-\varepsilon)(L-l L_\delta) +C(\varepsilon L+l
        L_\delta)}
    \end{aligned}
  \right)\geq 1-L^l e^{-\eta\varepsilon L}.
\end{equation*}
Hence, with probability at least $1- l_0 L^{l_0}\,e^{-\eta\varepsilon
  L}$, for $1\leq l\leq l_0$, we estimate
\begin{equation}
  \label{eq:79}
  \begin{split}
    \|S_l\|&\leq \sum_{n_1<n_2<\cdots<n_l}
    \prod_{m=1}^l\left\|T_{n_m-n_{m-1}-1}(E_j,
      \tau^{n_{m_k}}\omega)\right\| \\&\leq
    L^le^{(\rho(E_j)+\delta)(1-\varepsilon)L +C\varepsilon
      L+(C-(\rho(E_j)+\delta)(1-\varepsilon))l L_\delta} \\&\leq
    L^le^{[\rho(E_j)+\delta+(C-\rho(E_j)-\delta)\varepsilon]L
      +[C-(\rho(E_j)+\delta)(1-\varepsilon)]L_\delta l}\\&\leq
    L^{l_0}e^{[\rho(E_j)+\delta+(C-\rho(E_j)-\delta)\varepsilon]L
      +[C-(\rho(E_j)+\delta)(1-\varepsilon)]L_\delta l_0}.
  \end{split}
\end{equation}
It remains now to choose the quantities $\eta'$, $l_0$ and
$\varepsilon$ so that the following requirements be satisfied
\begin{equation}
  \label{eq:78}
  \begin{split}
    \eta'l_0>2C,\quad (C-\rho(E_j)-\delta)\varepsilon\leq
    \frac\delta2,& \quad l_0
    L^{l_0}\,e^{-\eta\varepsilon L} e^{\eta'(L+1)}\ll 1 \\
    \text{and}\quad\frac{[C-(\rho(E_j)+\delta)(1-\varepsilon)]L_\delta
      l_0}{L+1}&\leq\frac{\delta}{2(\rho(E_j)+\delta)}.
  \end{split}
\end{equation}
Fixing $\varepsilon$ small, picking $0<\eta'<\eta\varepsilon/3$ and
setting $l_0=L^\alpha$ where $\alpha\in(0,1)$, we see that all the
conditions in~\eqref{eq:78} are satisfied for $L$ sufficiently
large. Moreover, one has
\begin{equation*}
  l_0 L^{l_0}\,e^{-\eta\varepsilon L}
  e^{\eta'(L+1)}\leq e^{-\eta\varepsilon L/2}.
\end{equation*}
Plugging this and the last estimate in~\eqref{eq:79}
into~\eqref{eq:83}, we obtain that, with probability at least
$1-e^{-\eta\varepsilon L/2}$, for any $j\in J$ (see~\eqref{eq:80}), for
$E\in[E_j,E_{j+1}]$, one has
\begin{equation}
  \label{eq:90}
  \begin{split}
    \left\|T_L(E,\omega)-T_L(E_j,\omega)\right\|&\leq 1+
    \sum_{l=1}^{l_0}e^{-\eta'l (L+1)} L^le^{(\rho(E_j)+2\delta)L}
    \\&\leq 1+e^{(\rho(E_j)+2\delta)(L+1)}
  \end{split}
\end{equation}
As $\rho$ is continuous (see, e.g.,~\cite{MR88f:60013}), one gets that,
for any $\delta>0$, for $L$ sufficiently large, with probability at
least $1-e^{-\eta\varepsilon L/2}$, one has, for any $E\in I$,
\begin{equation*}
  \left\|T_L(E,\omega)\right\|\lesssim e^{(\rho(E)+2\delta)(L+1)}.
\end{equation*}
Hence, as $T_L(E,\omega)\in SL(2,\R)$, one has
$\D\left\|T^{-1}_L(E,\omega)\right\|\lesssim
e^{(\rho(E)+2\delta)(L+1)}$.\\
Using the fact that the probability measure on $\Omega$ is invariant
under the shift (it is a product measure), we
obtain~\eqref{eq:81}. This completes the proof of
Lemma~\ref{le:6}.\qed
\subsubsection{The proof of Lemma~\ref{le:3}}
\label{sec:proof-lemma6.2}
Assume the realization $\omega$ is such that the conclusions of
Lemma~\ref{le:2} hold in $I$ for the scales $l_L=2\log L$. Fix
$\alpha>0$ and let $\mathcal{E}_{L,\omega}$ be the set of indices of
the eigenvalues $(E_{j,\omega})_{0\leq j\leq L}$ of $H_{\omega,L}$ having a
localization center in $\llbracket L-\ell_L,L\rrbracket$. Fix
$C>\alpha>0$ and consider the projector on the sites in $\llbracket
L-C\ell_L,L\rrbracket$, i.e., $\Pi_C:=\car_{\llbracket
  L-C\ell_L,L\rrbracket}$.\\
Consider the following Gram matrices
\begin{equation*}
  G(\mathcal{E}_{L,\omega})=((\langle\varphi_{j,\omega},
  \varphi_{j,\omega}\rangle))_{(n,m)\in\mathcal{E}_{L,\omega}\times
    \mathcal{E}_{L,\omega}}=Id_N
\end{equation*}
where $N=\#\mathcal{E}_{L,\omega}$ and
\begin{equation*}
  G_\pi(\mathcal{E}_{L,\omega})=((\langle\Pi_C\varphi_{j,\omega},\Pi_C
  \varphi_{j,\omega}\rangle))_{(n,m)\in
    \mathcal{E}_{L,\omega}\times \mathcal{E}_{L,\omega}}.
\end{equation*}
By definition, the rank of $G_\pi(\mathcal{E}_{L,\omega})$ is bounded
by the rank of $\Pi_C$, i.e., by $C\ell_L$. Moreover, as
by~\eqref{eq:75} one has $\|(1-\Pi_C)\varphi_{j,\omega}\|\leq
L^{q}e^{-\rho\eta C\ell_L}$, one has
\begin{equation*}
  \|Id_N-G_\pi(\mathcal{E}_{L,\omega})\|\leq L^{2+q}e^{-\rho\eta C\ell_L}\leq
  L^{2+q-C\rho\eta}.
\end{equation*}
Thus, picking $C\eta\rho>q+2$ yields that, for $L$ sufficiently large,
$G_\pi(\mathcal{E}_{L,\omega})$ is invertible and its rank is
$N$. This yields $\#\mathcal{E}_{L,\omega}=N\leq C\ell_L$ and the proof of
Lemma~\ref{le:3} is complete.\qed
\subsection{The half-line random perturbation: the proof of
  Theorem~\ref{thr:26}}
\label{sec:half-line-rand}
Using the same notations as in section~\ref{sec:half-line-periodic},
we can write
\begin{equation*}
  H^\infty=  \begin{pmatrix}    H_{\omega,-1}^-&
    |\delta_{-1}\rangle\langle\delta_0|
    \\  |\delta_0\rangle\langle\delta_{-1}|&-\Delta_0^+
  \end{pmatrix}
\end{equation*}
where
\begin{itemize}
\item $-\Delta_{0}^+$ is the Dirichlet Laplacian on $\ell^2(\N)$,
\item $H_{\omega,-1}^-=-\Delta+V_\omega$ on $\ell^2(\{n\leq -1\})$ with
  Dirichlet boundary conditions at $0$.
\end{itemize}
Define the operators
\begin{gather*}
  \Gamma_\omega(E):=-\Delta_0^+-E-\langle\delta_{-1}|
  (H_{\omega,-1}^--E)^{-1}|\delta_{-1}\rangle
  \,|\delta_0\rangle\langle\delta_0|,\\
  \tilde\Gamma_\omega(E):=H_{\omega,-1}^--E-\langle\delta_0|
  (-\Delta_0^+-E)^{-1}|\delta_0\rangle
  \,|\delta_{-1}\rangle\langle\delta_{-1}|.
\end{gather*}
For Im$\,E\not=0$, the numbers
$\langle\delta_{-1}|(H_{\omega,-1}^--E)^{-1}|\delta_{-1}\rangle$ and
$\langle\delta_0|(-\Delta_0^+-E)^{-1}|\delta_0\rangle$ have non
vanishing imaginary parts of the same sign; hence, the complex number
$(\langle\delta_{-1}|(H_{\omega,-1}^--E)^{-1}|\delta_{-1}\rangle)^{-1}-
\langle\delta_0|(-\Delta_0^+-E)^{-1}|\delta_0\rangle$ does not
vanish. Thus, by rank one perturbation theory, (see,
e.g.,~\cite{MR2154153}), we thus know that $\Gamma_\omega(E)$ and
$\tilde\Gamma_\omega(E)$ are invertible for Im$\,E\not=0$ and that
\begin{gather}
  \label{eq:55}
  \begin{split}
    \Gamma_\omega^{-1}(E)&=(-\Delta_0^+-E)^{-1}\\&+
    \frac{|(-\Delta_0^+-E)^{-1}|\delta_0\rangle\langle\delta_0|(-\Delta_0^+-E)^{-1}|}{
      (\langle\delta_{-1}|(H_{\omega,-1}^--E)^{-1}|\delta_{-1}\rangle)^{-1}-
      \langle\delta_0|(-\Delta_0^+-E)^{-1}|\delta_0\rangle}
  \end{split}
  \\
  \label{eq:213}
  \begin{split}
    \tilde\Gamma_\omega^{-1}(E)&=(H_{\omega,-1}^--E)^{-1}\\&+
    \frac{|(H_{\omega,-1}^--E)^{-1}|\delta_{-1}\rangle\langle\delta_{-1}|(H_{\omega,-1}^--E)^{-1}|}{
      (\langle\delta_0|(-\Delta_0^+-E)^{-1}|\delta_0\rangle)^{-1}-
      \langle\delta_{-1}|(H_{\omega,-1}^--E)^{-1}|\delta_{-1}\rangle
    }.
  \end{split}
\end{gather}
Thus, for Im$\,E\not=0$, using Schur's complement formula, we compute
\begin{equation}
  \label{eq:58}
  (H_\omega^\infty-E)^{-1}=\begin{pmatrix}
    \tilde\Gamma_\omega^{-1}(E) &\gamma(E) \\\gamma^*\left(\overline{E}\right) &
    \Gamma^{-1}_\omega(E)
  \end{pmatrix}.
\end{equation}
where $\gamma^*\left(\overline{E}\right)$ is the adjoint of
$\gamma\left(\overline{E}\right)$ and
\begin{equation*}
  \gamma(E):=-|(H_{\omega,-1}^--E)^{-1}|\delta_{-1}\rangle
  \langle\delta_0|\Gamma^{-1}_\omega(E)|
\end{equation*}
\subsubsection{The continuation through $(-2,2)\setminus\Sigma$}
\label{sec:continuation-through}
Let us start with the analytic continuation through
$(-2,2)\setminus\Sigma$.\\
One easily checks that the function $E\mapsto
\langle\delta_{-1}|(H_{\omega,-1}^--E)^{-1}|\delta_{-1}\rangle^{-1}$
is analytic outside $\Sigma$, the essential spectrum of
$H_{\omega,-1}^-$ and has simple zeros at the isolated eigenvalues of
$H_{\omega,-1}^-$.  Hence, $E\mapsto \Gamma_\omega^{-1}(E)$ can be
analytically continued near an isolated eigenvalue of
$H_{\omega,-1}^-$ different from $-2$
and $2$.\\
As for $\tilde\Gamma_\omega^{-1}$, using the spectral decomposition of
of $H_{\omega,-1}^--E)^{-1}$, as for any eigenvector of
$H_{\omega,-1}^-$, say, $\varphi$, one has
$\langle\delta_{-1},\varphi\rangle\not=0$, for $E_0$, an isolated
eigenvalue of $H_{\omega,-1}^-$ different from $-2$ and $2$, doing a
polar decomposition of $\tilde\Gamma_\omega^{-1}$ near $E_0$, one
checks that $E\mapsto\tilde\Gamma_\omega^{-1}(E)$ can be analytically
continued to a neighborhood of $E_0$.\\
Finally let us check what happens with $\gamma$. We compute
\begin{equation*}
  \gamma(E)=-\langle\delta_{-1}|(H_{\omega,-1}^--E)^{-1}|\delta_{-1}\rangle^{-1}
  |(H_{\omega,-1}^--E)^{-1}|\delta_{-1}\rangle
  \langle\delta_0|(-\Delta_0^+-E)^{-1}|.
\end{equation*}
As $E\mapsto
\langle\delta_{-1}|(H_{\omega,-1}^--E)^{-1}|\delta_{-1}\rangle^{-1}
(H_{\omega,-1}^--E)^{-1}$ is analytic near any isolated eigenvalue of
$(H_{\omega,-1}^-$, we see that $E\mapsto\gamma(E)$ can be can be
analytically continued to a
neighborhood of an isolated eigenvalue of $H_{\omega,-1}^-$. \\
Hence, the representation~\eqref{eq:58} immediately shows that the
resolvent $(H_\omega^\infty-E)^{-1}$ can be continued through
$(-2,2)\setminus\Sigma$, the poles of the continuation being given by
the zeros of the function
\begin{equation*}
  E\mapsto 1-\langle\delta_0|(-\Delta_0^+-E)^{-1}|\delta_0\rangle
  \langle\delta_{-1}|(H_{\omega,-1}^--E)^{-1}|\delta_{-1}\rangle
  =1-e^{i\theta(E)}\int_\R\frac{dN_\omega(\lambda)}{\lambda-E}.
\end{equation*}
\subsubsection{No continuation through
  $(-2,2)\cap\overset{\circ}{\Sigma}$}
\label{sec:no-continuation}
Let us study the analytic continuation through
$(-2,2)\cap\overset{\circ}{\Sigma}$. Considering the lower right
coefficient of this matrix, we see that, when coming from upper
half-plane through $(-2,2)\cap\overset{\circ}{\Sigma}$,
$E\mapsto(H_\omega^\infty-E)^{-1}$ can be continued meromorphically to
the lower half plane (as an operator from $\ell^2_{\text{comp}}(\Z)$
to $\ell^2_{\text{loc}}(\Z)$) only if $E\mapsto\Gamma_\omega^{-1}(E)$
can be meromorphically (as an operator
from $\ell^2_{\text{comp}}(\N)$ to $\ell^2_{\text{loc}}(\N)$).\\
As $E\mapsto(-\Delta_0^+-E)^{-1}$ can be analytically continued (see
section~\ref{sec:proof-theorem}), by~\eqref{eq:55}, the meromorphic
continuation of $E\mapsto\Gamma_\omega^{-1}(E)$ will exist if and only
if the complex valued map
\begin{equation*}
  E\mapsto g_\omega(E):=\frac1{(\langle\delta_{-1}|(H_{\omega,-1}^--E)^{-1}
    |\delta_{-1}\rangle)^{-1}-
    \langle\delta_0|(-\Delta_0^+-E)^{-1}|\delta_0\rangle}
\end{equation*}
can be meromorphically continued from the upper half-plane through
$(-2,2)\cap\overset{\circ}{\Sigma}$. Fix $\omega$ s.t. the spectrum of
$H_{\omega,-1}^-$ be equal to $\Sigma$ and pure point (this is almost
sure (see, e.g.,~\cite{MR1102675,MR94h:47068}). As $\delta_{-1}$ is a
cyclic vector for $H_{\omega,-1}^-$, for $E$ an eigenvalue of
$H_{\omega,-1}^-$, one then has
\begin{equation}
  \label{eq:57}
  \lim_{\varepsilon\to0^+}
  (\langle\delta_{-1}|(H_{\omega,-1}^--E-i\varepsilon)^{-1}
  |\delta_{-1}\rangle)^{-1}=0.
\end{equation}
Hence, if the analytic continuation of $g_\omega$ would exist, on
$(-2,2)\cap\overset{\circ}{\Sigma}$, it would be equal to
\begin{equation}
  \label{eq:217}
  g_\omega(E+i0)=-\frac1{\langle\delta_0|(-\Delta_0^+-E-i0)^{-1}
    |\delta_0\rangle}.
\end{equation}
By analyticity of both sides, this in turn would imply
that~\eqref{eq:217} holds on the whole upper half-plane, thus, in view
of the definition of $g_\omega$, that~\eqref{eq:57} holds on the whole
upper half plane: this is absurd! Thus, we have proved that, $\omega$
almost surely, $E\mapsto(H_\omega^\infty-E)^{-1}$ does not admit a
meromorphic continuation through $(-2,2)\cap\overset{\circ}{\Sigma}$.
\subsubsection{Absolutely continuity of the spectrum of
  $H_\omega^\infty$ in $(-2,2)\cap\overset{\circ}{\Sigma}$}
\label{sec:absol-cont-spectr}
Let us now prove that the spectral measure of $H_\omega^\infty$ in
$(-2,2)\cap\overset{\circ}{\Sigma}$ is purely absolutely
continuous. Therefore, it suffices (see, e.g.,~\cite[section
2.5]{MR1711536} and~\cite[Theorem 11.6]{MR2154153}) to prove that, for
all $E\in(-2,2)\cap\overset{\circ}{\Sigma}$, one has
\begin{equation*}
  \limsup_{\varepsilon\to0^+}\left|\langle\delta_0,
    (H_\omega^\infty-E-i\varepsilon)^{-1}\delta_0\rangle\right|
  +\left|\langle\delta_{-1},
    (H_\omega^\infty-E-i\varepsilon)^{-1}\delta_{-1}\rangle\right|<+\infty.
\end{equation*}
Using~\eqref{eq:55},~\eqref{eq:213} and~\eqref{eq:58}, for
Im$\,E\not=0$, we compute
\begin{equation}
  \label{eq:206}
  \langle\delta_{-1},(H_\omega^\infty-E)^{-1}\delta_{-1}\rangle
  =\frac{\langle\delta_{-1}|(H_{\omega,-1}^--E)^{-1}|\delta_{-1}\rangle}{
    1-\langle\delta_0|(-\Delta_0^+-E)^{-1}|\delta_0\rangle\cdot
    \langle\delta_{-1}|(H_{\omega,-1}^--E)^{-1}|\delta_{-1}\rangle},
\end{equation}
for $n\geq1$, $m\leq0$,
\begin{equation}
  \label{eq:207}
  \langle\delta_{-n},(H_\omega^\infty-E)^{-1}\delta_m\rangle
  =\frac{-\langle\delta_{-n}|(H_{\omega,-1}^--E)^{-1}|\delta_{-1}\rangle
    \langle\delta_0|(-\Delta_0^+-E)^{-1}|\delta_m\rangle}{
    1-\langle\delta_0|(-\Delta_0^+-E)^{-1}|\delta_0\rangle\cdot
    \langle\delta_{-1}|(H_{\omega,-1}^--E)^{-1}|\delta_{-1}\rangle}
\end{equation}
and
\begin{equation}
  \label{eq:208}
  \langle\delta_0,(H_\omega^\infty-E)^{-1}\delta_0\rangle
  =\frac{\langle\delta_0|(-\Delta_0^+-E)^{-1}|\delta_0\rangle}{
    1-\langle\delta_0|(-\Delta_0^+-E)^{-1}|\delta_0\rangle\cdot
    \langle\delta_{-1}|(H_{\omega,-1}^--E)^{-1}|\delta_{-1}\rangle}.
\end{equation}
Thus, to prove the absolute continuity of the spectral measure of
$H_\omega^\infty$ in $(-2,2)\cap\overset{\circ}{\Sigma}$, it suffices
to prove that, for $E\in(-2,2)\cap\overset{\circ}{\Sigma}$, one has
\begin{multline*}
  \limsup_{\varepsilon\to0^+}\left(
    \left|\frac1{(\langle\delta_{-1}|(H_{\omega,-1}^--E-i\varepsilon)^{-1}|\delta_{-1}\rangle)^{-1}-\langle\delta_0|(-\Delta_0^+-E-i\varepsilon)^{-1}|\delta_0\rangle}\right|\right.\\\left.+\left|
      \frac1{(\langle\delta_0|(-\Delta_0^+-E-i\varepsilon)^{-1}|\delta_0\rangle)^{-1}-\langle\delta_{-1}|(H_{\omega,-1}^--E-i\varepsilon)^{-1}|\delta_{-1}\rangle}
    \right|\right)<\infty.
\end{multline*}
This is the case as
\begin{itemize}
\item the signs of the imaginary parts of
  $-(\langle\delta_{-1}|(H_{\omega,-1}^--E-i\varepsilon)^{-1}|\delta_{-1}\rangle)^{-1}$
  and
  $\langle\delta_0|(-\Delta_0^+-E-i\varepsilon)^{-1}|\delta_0\rangle$
  are the same (negative if Im$\,E<0$ and positive if Im$\,E>0$),
\item for $E\in(-2,2)$,
  $\langle\delta_0|(-\Delta_0^+-E-i\varepsilon)^{-1}|\delta_0\rangle$
  has a finite limit when $\varepsilon\to0^+$,
\item for$E\in(-2,2)$, the imaginary part of
  $\langle\delta_0|(-\Delta_0^+-E-i\varepsilon)^{-1}|\delta_0\rangle$
  does not vanish in the limit $\varepsilon\to0^+$.
\end{itemize}
So, we have proved the part of Theorem~\ref{thr:26} concerning the
absence of analytic conti\-nuation of the resolvent of
$H_\omega^\infty$ through $(-2,2)\cap\overset{\circ}{\Sigma}$ and the
nature of its spectrum in this set.
\subsubsection{The spectrum of $H_\omega^\infty$ is pure point in
  $\overset{\circ}{\Sigma}\setminus[-2,2]$}
\label{sec:spectr-h_om-pure}
Let us now prove the last part of Theorem~\ref{thr:26}. The proof
relies again on~\eqref{eq:58}.  We pick $\beta\in(0,\alpha/2)$ where
$\alpha$ is determined by Theorem~\ref{thr:18} for
$H_{\omega,-1}^-$. Then, for $n\geq1$ and $m\leq0$, using the
Cauchy-Schwartz inequality, for Im$\,E\not=0$, we compute
\begin{multline}
  \label{eq:205}
  \esp\left(\left|\langle\delta_{-n},(H_\omega^\infty-E)^{-1}\delta_m\rangle
    \right|^\beta\right)^2\\\leq
  |\langle\delta_0|(-\Delta_0^+-E)^{-1}|\delta_m\rangle|^2
  \cdot\esp\left(\left|\langle\delta_{-n}|(H_{\omega,-1}^--E)^{-1}|\delta_{-1}\rangle\right|^{2\beta}\right)\\
  \cdot\esp\left(\left|\frac{1}{
        1-\langle\delta_0|(-\Delta_0^+-E)^{-1}|\delta_0\rangle\cdot
        \langle\delta_{-1}|(H_{\omega,-1}^--E)^{-1}|\delta_{-1}\rangle}\right|^{2\beta}\right)
\end{multline}
For $J\subset (-2,2)\setminus\Sigma$ a compact interval, we know that, for
$n\geq1$ and $m\leq0$,
\begin{itemize}
\item $\D\sup_{\text{Im}\,E\not=0}
  |\langle\delta_0|(-\Delta_0^+-E)^{-1} |\delta_m\rangle|\lesssim
  e^{-c m}$ by the Combes-Thomas estimates;
\item $\D\sup_{\text{Im}\,E\not=0}
  \esp\left(\left|\langle\delta_{-n}|(H_{\omega,-1}^--E)^{-1}|
      \delta_{-1}\rangle\right|^{2\beta}\right)\lesssim e^{-2\beta\rho
    n}$ by the characte\-rization~\eqref{eq:204} of localization in
  $\Sigma$ for $H_{\omega,-1}^-$.
\end{itemize}
It suffices now to estimate the last term in~\eqref{eq:205} using a
standard decomposition of rank one perturbations (see,
e.g.,~\cite{MR2154153,MR1244867}), one writes
\begin{equation*}
  \frac{1}{1-\langle\delta_0|(-\Delta_0^+-E)^{-1}|\delta_0\rangle\cdot
    \langle\delta_{-1}|(H_{\omega,-1}^--E)^{-1}|\delta_{-1}\rangle}
  =\frac{\omega_{-1}-b}{\omega_{-1}-a}
\end{equation*}
where $a$ and $b$ only depend on $(\omega_{-n})_{n\geq2}$. Thus, as
$(\omega_{-n})_{n\geq1}$ have a bounded density, for Im$\,E\not=0$,
one has
\begin{equation*}
  \begin{split}
    \esp\left(\left|\frac{1}{
          1-\langle\delta_0|(-\Delta_0^+-E)^{-1}|\delta_0\rangle\cdot
          \langle\delta_{-1}|(H_{\omega,-1}^--E)^{-1}|\delta_{-1}\rangle}
      \right|^{2\beta}\right)&\leq\esp_{(\omega_{-n})_{n\geq2}}
    \esp_{\omega_{-1}} \left(\left|
          \frac{\omega_{-1}-b}{\omega_{-1}-a}\right|^{2\beta}
      \right)\\&\leq C_\beta<+\infty.
  \end{split}
\end{equation*}
Thus, we have proved that, for $J\subset\Sigma\setminus[-2,2]$ a
compact interval, for $\beta\in(0,\alpha/2)$ and some $\tilde\rho>0$,
for $n\geq1$ and $m\leq0$, one has
\begin{equation*}
  \sup_{\substack{\text{Im}\,E\not=0\\ \text{Re}\,E\in I}}
  \esp\left(\left|\langle\delta_{-n},(H_\omega^\infty-E)^{-1}\delta_m\rangle
    \right|^\beta\right)<C_\beta e^{-\tilde\rho(m-n)} .
\end{equation*}
In the same way, using~\eqref{eq:206} and~\eqref{eq:208}, one proves
that
\begin{equation*}
  \sup_{\substack{\text{Im}\,E\not=0\\ \text{Re}\,E\in I}}
  \esp\left(\left|\langle\delta_{0},(H_\omega^\infty-E)^{-1}\delta_0\rangle
    \right|^\beta+
    \left|\langle\delta_{-1},(H_\omega^\infty-E)^{-1}\delta_{-1}\rangle
    \right|^\beta\right)<+\infty
\end{equation*}
Thus, we have proved that, for some $\tilde\rho>0$, one has
\begin{equation*}
  \sup_{\substack{\text{Im}\,E\not=0\\ \text{Re}\,E\in I}} \sup_{m\in\Z}
  \esp\left(\sum_{n\in\Z}e^{\tilde\rho(m-n)}
    \left|\langle\delta_{-n},(H_\omega^\infty-E)^{-1}\delta_m\rangle
    \right|^\beta\right)<+\infty.
\end{equation*}
Hence, we know that the spectrum of $H_\omega^\infty$ in
$\Sigma\setminus[-2,2]$ (as $J$ can be taken arbitrary contained in
this set) is pure point associated to exponentially decaying
eigenfunctions (see,
e.g.,~\cite{MR1244867,MR1301371,MR2002h:82051}). This completes the
proof of Theorem~\ref{thr:26}.
\section{Appendix}
\label{sec:appendix}
In this section we study the eigenvalues and eigenvectors of $H_L$
(see Remark~\ref{rem:6}) near an energy $E'$ that is an eigenvalue of
both $H_0^+$ and $H_k^-$ (see the ends of
sections~\ref{sec:dirichl-eigenv} and~\ref{sec:dirichl-eigenf}). We
keep the notations of
sections~\ref{sec:dirichl-eigenv} and~\ref{sec:dirichl-eigenf}.\\
Let $\varphi^+\in\ell^2(\N)$ (resp. $\varphi^-\in\ell^2(\Z_-)$) be
normalized eigenvectors of $H_0^+$ (resp. $H^-_k$) associated to
$E_-$. Thus, by~\eqref{eq:127} and~\eqref{eq:126}, we can pick, for
$n\geq0$ and $l\in\{0,\cdots,p-1\}$,
\begin{equation}
  \label{eq:129}
  \varphi^+_{np+l}=c a_l(E')\rho^n(E')\text{ and }
  \varphi^-_{-np-l}=c^- b_l(E')\rho^n(E').
\end{equation}
Assume $L=Np+k$ and, for $l\in\{0,\cdots,L\}$, define
$\varphi^{\pm,L}\in\ell^2(\llbracket0,L\rrbracket)$ by
\begin{equation}
  \label{eq:130}
  \begin{aligned}
    \varphi^{+,L}_l:=\varphi^+_l,\quad \varphi^{+,L}_{-1}=
    \varphi^{+,L}_{L+1}:=\varphi^+_{-1}=0\quad\text{and}\\
    \varphi^{-,L}_l:= \varphi^-_{l-L},\quad
    \varphi^{-,L}_{-1}=\varphi^{-,L}_{L+1}:=\varphi^-_0=0.
  \end{aligned}
\end{equation}
Thus, one has
\begin{equation}
  \label{eq:131}
  \begin{aligned}
    H_L\varphi^{+,L}=E'\varphi^{+,L}+\varphi^+_{L+1}\delta_L,&\quad
    H_L\varphi^{-,L}=E'\varphi^{-,L}+\varphi^-_{-L-1}\delta_0
    \\\text{and}\quad
    \langle\varphi^{+,L},\varphi^{-,L}\rangle&=O(N\rho^N(E)).
  \end{aligned}
\end{equation}
Recall that $a_k(E')\not=0\not=b_k(E')$ (see
sections~\ref{sec:dirichl-eigenv} and~\ref{sec:dirichl-eigenf}); thus,
by~\eqref{eq:129}, one has
\begin{equation}
  \label{eq:132}
  |\varphi^-_{-L-1}|\asymp |\rho(E')|^n\asymp |\varphi^+_{L+1}|.
\end{equation}
Moreover, as $H_L$ converges to $H_0^+$ in strong resolvent sense, for
$\varepsilon>0$ sufficiently small, for $L$ sufficiently large, $ H_L
$ has no spectrum in the compact
$E'+[-2\varepsilon,\varepsilon/2]\cup[\varepsilon/2,2\varepsilon]$. Let
$\Pi_L$ be the spectral projector onto the interval
$[\varepsilon/2,\varepsilon/2]$ that is $\D
\Pi_L:=\frac1{2i\pi}\int_{|z-E'|=\varepsilon}( H_L
-z)^{-1}dz$. By~\eqref{eq:131}, one computes
\begin{equation*}
  (1-\Pi_L)\varphi^{+,L}=\frac{\varphi^+_{L+1}}
  {2i\pi}\int_{|z-E'|=\varepsilon}(E'-z)^{-1}(H_L-z)^{-1}\delta_0dz
\end{equation*}
Thus, one gets
\begin{equation}
  \label{eq:133}
  \|(1-\Pi_L)\varphi^{+,L}\|+\|(1-\Pi_L)\varphi^{-,L}\|\lesssim|\rho(E')|^N.
\end{equation}
Define
\begin{equation*}
  \tilde\chi^{+,L}=\frac1{\|\Pi_L\varphi^{+,L}\|}\Pi_L\varphi^{+,L}\quad\text{and}\quad
  \tilde\chi^{-,L}=\frac1{\|\Pi_L\varphi^{-,L}\|}\Pi_L\varphi^{-,L}.
\end{equation*}
The Gram matrix of $(\tilde\chi^{+,L},\tilde\chi^{-,L})$ then reads
Id$+O(N\rho^N(E))$. Orthonormalizing
$(\tilde\chi^{+,L},\tilde\chi^{-,L})$ into $(\chi^{+,L},\chi^{-,L})$
and, computing the matrix elements of $\Pi_L(H_L-E')$ in this basis,
we obtain
\begin{equation*}
  \begin{pmatrix}
    \varphi^+_{L+1}\langle\delta_L,\varphi^{+,L}\rangle&
    \varphi^+_{L+1}\langle\delta_0,\varphi^{+,L}\rangle\\
    \varphi^-_{-L-1}\langle\delta_L,\varphi^{-,L}\rangle&
    \varphi^-_{-L-1}\langle\delta_0,\varphi^{-,L}\rangle
  \end{pmatrix}+O(N^2\rho^{2N}(E))= \alpha\,\rho^N(E)\begin{pmatrix}
    0& 1\\ 1& 0\end{pmatrix}+O(N^2\rho^{2N}(E))
\end{equation*}
Thus, we obtain that the eigenvalues of $H_L$ near $E'$ are given by
$E'\pm\alpha\rho^N(E)+O(N^2\rho^{2N}(E))$ and the eigenvectors by
$\frac1{\sqrt{2}}(\varphi^{+,L}\pm\varphi^{-,L})+O(\rho^{N}(E))$. In
particular, their components at $0$ and $L$ are asymptotic to non
vanishing constants.


\def\cydot{\leavevmode\raise.4ex\hbox{.}}  \def\cprime{$'$}

\end{document}